\newtheorem{theorem}{Theorem}
\newcolumntype{H}{>{\setbox0=\hbox\bgroup}c<{\egroup}@{}}
 \def\shadeB{\cellcolor{blue!5}}
\def\shadeR{\cellcolor{red!5}}
\definecolor{rust}{rgb}{0.8,0.2,0.2}
\def\FH#1{{\color{cyan}{ #1}}}
\newcommand{\Ad}{\text{ad}}
\newcommand{\Dut}{\SF{\mathcal{D}}}
\newcommand{\z}{\mathbf{z}}
\newcommand{\zsf}{\SF{\z}}
\newcommand{\Adj}{\mathfrak{w}_{\,\smallT}}
\newcommand{\uA}{\breve{\mu}}
\newcommand{\uB}{\breve{\nu}}
\newcommand{\uC}{\breve{\rho}}
\newcommand{\uD}{\breve{\lambda}}
\newcommand{\Dwv}{\SF{\mathfrak{D}}}
\newcommand{\Cwv}{\SF{\mathscr{C}}}
\newcommand{\Cwvt}{\widetilde{\mathscr{C}}}
\newcommand{\gpsi}{\SF{\mathfrak{g}}^{(\psi)}}
\newcommand{\gpsib}{\SF{\mathfrak{g}}^{(\psib)}}
 \def\shadeB{\cellcolor{blue!5}}
\def\shadeR{\cellcolor{red!5}}
\definecolor{rust}{rgb}{0.8,0.2,0.2}
\def\FH#1{{\color{cyan}{ #1}}}
\newcommand{\prn}[1]{\left ( #1 \right )}
\newcommand{\brk}[1]{\left [ #1 \right ]}
\newcommand{\bigbr}[1]{\Bigl\{ #1 \Bigr\} }
\newcommand{\dbrk}[1]{(#1)_\Kref}
\newcommand{\half}{\frac{1}{2}}
\newcommand{\quarter}{\frac{1}{4}}
\newcommand{\Tr}[1]{\hbox{Tr}\left(#1\right)}
\newcommand{\vev}[1]{\langle #1 \rangle}
\newcommand{\rhoi}{\hat{\rho}_{\text{initial}}}
\newcommand{\rhoT}{\hat{\rho}_{_T}}
\newcommand{\QSK}{\mathcal{Q}_{_{SK}}}
\newcommand{\QSKb}{\overline{\mathcal{Q}}_{_{SK}}}
\newcommand{\QKMS}{\mathcal{Q}_{_{KMS}}}
\newcommand{\QKMSb}{\overline{\mathcal{Q}}_{_{KMS}}}
\newcommand{\dSK}{\mathbb{d}_{_\text{SK}}}
\newcommand{\dSKb}{\bar{\mathbb{d}}_{_\text{SK}}}
\newcommand{\Qzero}{\mathcal{Q}^0_{_{KMS}}}
\newcommand{\Qbeta}{\mathscr{L}_{_{KMS}}}
\newcommand{\IKMS}{\SF{\cal I}^{\text{\tiny{KMS}}}}
\newcommand{\LKMS}{\SF{\cal L}^{\text{\tiny{KMS}}}}
\newcommand{\IKMSb}{\SF{\overline{\cal I}}{}^{^{\text{\tiny{KMS}}}}}
\newcommand{\IKMSzero}{\SF{\cal I}^{\text{\tiny{KMS}}}_0\,}
\newcommand{\iKMS}{{\iota}^{\text{\tiny{KMS}}}}
\newcommand{\iKMSb}{{\overline{\iota}^{\text{\tiny{KMS}}}}}
\newcommand{\iKMSzero}{{\iota}^{\text{\tiny{KMS}}}_0\,}
\newcommand{\lKMS}{{\ell}^{\text{\tiny{KMS}}}}
\newcommand{\Q}{\mathcal{Q}}
\newcommand{\Qb}{\overline{\mathcal{Q}}}
\newcommand{\QC}{\mathbb{d}_{_{\text{\tiny C}}}}
\newcommand{\QCb}{\overline{\mathbb{d}}_{_{\text{\tiny C}}}}
\newcommand{\QW}{\mathbb{d}_{_{\text{\tiny W}}}}
\newcommand{\QWb}{\overline{\mathbb{d}}_{_{\text{\tiny W}}}}
\newcommand{\DSK}{\mathcal{D}_{_\text{SK}}}
\newcommand{\DSKb}{\overline{\mathcal{D}}_{_\text{SK}}}
\newcommand{\Op}[1]{\mathbb{#1}}
\newcommand{\OpH}[1]{\widehat{\mathbb{#1}}}
\newcommand{\SKR}[1]{\mathbb{#1}_{\skR}}
\newcommand{\SKL}[1]{\mathbb{#1}_{\skL}}
\newcommand{\SKAv}[1]{\mathbb{#1}_{{av}}}
\newcommand{\SKDif}[1]{\mathbb{#1}_{{dif}}}
\newcommand{\SKG}[1]{\mathbb{#1}_{_G}}
\newcommand{\SKGb}[1]{\mathbb{#1}_{_{\overline{G}}}}
\newcommand{\gh}[1]{\text{gh}(#1)}
\newcommand{\gradcomm}[2]{ \brk{ #1, #2 }_{\scriptscriptstyle \pm} }
\newcommand{\comm}[2]{ \brk{ #1, #2 }}
\newcommand{\thb}{{\bar{\theta}} }
\newcommand{\thetab}{\bar{\theta} }
\newcommand{\SF}[1]{\mathring{#1}}
\newcommand{\Ibar}{\overline{\mathcal{I}}}
\newcommand{\LamS}{\SF{\Lambda}}
\newcommand{\As}{\SF{\mathscr{A}}}
\newcommand{\Ascr}{\mathscr{A}}
\newcommand{\Ath}{\mathscr{A}_{\theta}}
\newcommand{\Athb}{\mathscr{A}_{\thetab}}
\newcommand{\Fs}{\SF{\mathscr{F}}}
\newcommand{\Bs}{\SF{\mathscr{B}}}
\newcommand{\SLref}{\SF{\Lambda}_\Kref}
\newcommand{\psib}{\overline{\psi}}
\newcommand{\phiT}{\phi_{_{\,\smallT}}}
\newcommand{\phibT}{\overline{\phi}_{_{\,\smallT}}}
\newcommand{\phizT}{\phi^0_{_{\,\smallT}}}
\newcommand{\GT}{G_{_{\,\smallT}}}
\newcommand{\GbT}{\overline{G}_{_{\,\smallT}}}
\newcommand{\GhT}{G_{_{\,\smallT}}}
\newcommand{\GhbT}{\overline{G}_{_{\,\smallT}}}
\newcommand{\BT}{B_{_{\,\smallT}}}
\newcommand{\etaT}{\eta_{_{\,\smallT}}}
\newcommand{\etabT}{\overline{\eta}_{_{\,\smallT}}}
\newcommand{\etab}{\overline{\eta}}
\newcommand{\tx}{\tilde{X}}
\newcommand{\xpsi}{X_\psi}
\newcommand{\xpsib}{X_{\psib}}
\definecolor{arsenic}{rgb}{0.28, 0.28, 0.6}
\def\source#1{{\textcolor{arsenic}{ #1}}}
\def\sBdel{\source{{\sf B}_\Delta}}
\def\sAt#1{\source{\mathcal{F}_{#1}}}
\newcommand{\Kref}{{\bm \beta}}
\newcommand{\Lref}{\Lambda_\Kref}
\newcommand{\gref}{{\sf g}}
\newcommand{\Aref}{{\sf A}}
\newcommand{\shref}{{\sf h}}
\newcommand{\Lagref}{\mathscr L}
\newcommand{\etaref}{\bm \eta}
\newcommand{\Gref}{{\sf \mathbf G}}
\newcommand{\Nref}{{\sf \mathbf N}}
  \newcommand\Ttiny{\@setfontsize\Ttiny{1pt}{2}}
\newcommand{\TEMref}{{\sf\mathbf T}}
\newcommand{\Cref}{{\sf C}}
\newcommand{\Dref}{\mathbbm d}
\newcommand{\lieD}{\pounds}
\newcommand{\acc}{{\mathfrak a}}
\newcommand{\Lag}{{\mathcal L}}
\newcommand{\K}{\mathrm{K}}
\newcommand{\skR}{\text{\tiny R}}
\newcommand{\skL}{\text{\tiny L}}
\newcommand{\smallT}{{\sf \!{\scriptscriptstyle{T}}}}
\newcommand{\UT}{U(1)_{\scriptstyle{\sf T}}}
\newcommand{\PS}{{\rm H}_S}
\newcommand{\PV}{{\rm H}_V}
\newcommand{\PF}{{\rm H}_F}
\newcommand{\LS}{{\overline{\rm H}}_S}
\newcommand{\GV}{{\overline{\rm H}}_V}
\newcommand{\LT}{{\rm L}_{\,\smallT}}
\title{Effective Action for Relativistic Hydrodynamics: Fluctuations, Dissipation, and Entropy Inflow}
\author[a]{Felix M. Haehl}
\author[b]{\!, R.\ Loganayagam}
\author[c]{\!, Mukund Rangamani}
\affiliation[\,a]{Department of Physics and Astronomy, University of British Columbia,\\
6224 Agricultural Road, Vancouver, B.C.\ V6T 1Z1, Canada.}
\affiliation[\,b]{International Centre for Theoretical Sciences (ICTS-TIFR), \\
Shivakote, Hesaraghatta Hobli, Bengaluru 560089, India.}
\affiliation[\,c]{
Center for Quantum Mathematics and Physics (QMAP)  \\
Department of Physics, University of California, Davis, CA 95616 USA.}
\emailAdd{f.m.haehl@gmail.com}
\emailAdd{nayagam@gmail.com}
\emailAdd{mukund@physics.ucdavis.edu}
\abstract{We present a detailed  and self-contained analysis of the universal Schwinger-Keldysh effective field theory which describes macroscopic thermal fluctuations of a relativistic field theory, elaborating on our earlier construction \cite{Haehl:2015uoc}. We write an effective action for appropriate hydrodynamic Goldstone modes and fluctuation fields, and discuss the symmetries to be imposed. The constraints imposed by fluctuation-dissipation theorem are manifest in our formalism. Consequently, the action reproduces hydrodynamic constitutive relations consistent with the local second law at all orders in the derivative expansion, and captures the essential elements of the eightfold classification of hydrodynamic transport of \cite{Haehl:2015pja}. We demonstrate how to recover the hydrodynamic entropy and give predictions for the non-Gaussian hydrodynamic fluctuations.
 
The basic ingredients of our construction involve (i) doubling of degrees of freedom \emph{a la} Schwinger-Keldysh, 
(ii) an emergent gauge $\UT$ symmetry associated with entropy which is encapsulated in a Noether current \emph{a la} Wald, and (iii) a BRST/topological supersymmetry imposing the fluctuation-dissipation theorem \emph{a la} Parisi-Sourlas. The overarching mathematical framework for our construction is provided by the  balanced equivariant cohomology of thermal translations, which captures the basic constraints arising from the Schwinger-Keldysh doubling, and the thermal Kubo-Martin-Schwinger relations. All these features are conveniently implemented in a covariant superspace formalism. An added benefit is that the second law can be understood as being due to entropy inflow from the Grassmann-odd directions of superspace.}
\begin{document}
\maketitle


\newpage 

\newpage 
\part{Introduction \& Background }
\section{Introduction}
\label{sec:intro}

The dynamics of quantum field theories out-of-equilibrium encompasses many interesting physical phenomena which are readily observable in nature. Often one is interested in the macroscopic behaviour of the system after transient effects have settled down. In a wide variety of examples we know empirically  that the collective dynamics of the low energy degrees of freedom leads to new dramatic effects including effective non-unitarity, entropy production and dissipation. One would like to have a theoretical framework to address these issues and isolate potential universal characteristics that are insensitive to the specific microscopic details. In the case of equilibrium (typically near ground state) dynamics, the Wilsonian paradigm makes clear that one has to isolate the relevant macroscopic degrees of freedom and ascertain the generic dynamics for them subject to various symmetry considerations. An analogous framework for non-equilibrium dynamics is the goal one would like to aspire to.

Whilst the question for generic out-of-equilibrium dynamics remains as yet unclear, in recent years progress has been made on understanding the situation in a near-equilibrium regime where hydrodynamic effective field theories operate. Inspired by the structure of the Schwinger-Keldysh functional integral  there have been several works \cite{Haehl:2015foa,Haehl:2015pja,Crossley:2015evo,Haehl:2015uoc,Haehl:2016pec,Haehl:2016uah,Glorioso:2016gsa,Jensen:2017kzi,Gao:2017bqf,Glorioso:2017fpd} dedicated to constructing a framework to capture near thermal effective field theories in the hydrodynamic regime.\footnote{ An  earlier attempt to construct dissipative hydrodynamic effective actions was made in \cite{Kovtun:2014hpa} which took its inspiration from the Martin-Siggia-Rose (MSR) construction \cite{Martin:1973zz}.  We will give a more complete discussion of some of the earlier attempts later in the introduction, \S\ref{sec:related}.}  Our aim is to elaborate on these constructions and set out a comprehensive framework for studying such effective field theories. We build on recent work and provide further details underlying the construction of topological sigma models capturing dissipative hydrodynamics as described in \cite{Haehl:2015uoc}. 

The central problem in this regard is to decide on  the basic symmetry principles/symmetry breaking pattern behind the effective field theory in the hydrodynamic regime. 
The core challenge for these symmetry principles is to automatically explain both the emergence of a  macroscopic arrow of time and the existence of a local entropy current, along with the non-conservation of the latter. These requirements necessitate a novel form of effective field theory very different from existing paradigms. In our previous work \cite{Haehl:2015foa,Haehl:2015uoc}, we posited a 
three-fold symmetry structure whose interplay successfully reproduces the hydrodynamic effective theory. Our proposal consists of: 
\begin{enumerate}
\item A set of `twisted' super-symmetries   emerging from the Schwinger Keldysh doubling.
\item An emergent thermal or `entropic' gauge symmetry (denoted as $U(1)_T$) emerging from the near thermal structure. Its gauge current is the entropy current.
\item A particular superspace component of $U(1)_T$ field strength acts as an order parameter for CPT breaking. Its expectation value then leads to the emergence of arrow of time.
\end{enumerate}
In this work, we will substantially add to the explicit computations which support the above conjecture. In particular, we will show that a non-trivial statement required for the self-consistency of our proposal does hold: \emph{for the entropy current to be a gauge current, the apparent non-conservation of entropy in fluid dynamics should lift to an appropriate conservation statement within our framework}. We will see that this is indeed true; the physical hydrodynamic entropy is only a part of larger conserved super-current. Our companion paper \cite{Haehl:2018uqv} summarizes the salient features of our construction, especially the fact that \emph{entropy production can be understood as a superspace inflow mechanism}.

In the rest of this introduction, we will summarize various features of our proposal and the resulting framework. While the entire discussion is framed in terms of hydrodynamics, the 
reader should note that holography (more specifically fluid-gravity correspondence \cite{Bhattacharyya:2008jc,Hubeny:2011hd}) requires that these statements also be true in  gravity. If the above set of symmetries are indeed the correct framework for fluid dynamics, it follows that the same symmetries should also underlie black hole physics including the emergence of arrow of time
via a superspace field strength as well as superspace inflow of entropy. The hydrodynamic computations in this work  when combined with fluid-gravity correspondence, force upon us this somewhat radical conclusion.

\subsection{Preview of the general framework}
\label{sec:preview}

The basic philosophy behind our construction, first detailed in \cite{Haehl:2015foa,Haehl:2015uoc}, can be understood as follows (see \cite{Haehl:2016pec,Haehl:2016uah} for detailed reviews). Hydrodynamics is supposed to capture the causal, non-linear response of a physical system perturbed away from equilibrium, in a long-wavelength, low-energy regime. In the microscopic presentation of the theory such response functions are  computed using the Schwinger-Keldysh formalism \cite{Schwinger:1960qe,Keldysh:1964ud} which involves a complex time contour (or a doubled contour with forward (R) and backward (L) evolution) in the functional integral. A key feature is that the response functions are the first non-trivial observables; we insert a sequence of  identical sources to disturb the system followed by a mis-aligned source to facilitate a response/measurement, see Fig.~\ref{fig:hydrosk} for an illustration.

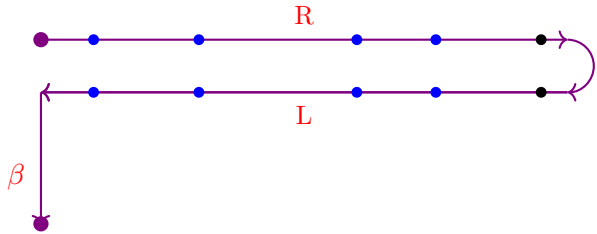
\begin{figure}[t!]
\centering
\begin{tikzpicture}[scale=0.7]
\draw[thick,color=violet,->] (-5,1)  -- (5,1);
\draw[thick,color=violet,->] (5,0) -- (-5,0);
\draw[thick,color=violet,->] (5,0) -- (-5,0);
\draw[thick,color=violet,->] (-5,0) -- (-5,-2.5);
\draw[thick,color=violet,fill=violet] (-5,1) circle (0.75ex);
\draw[thick,color=violet,fill=violet] (-5,-2.5) circle (0.75ex);
\draw[thick,color=violet,->] (5,1) arc (90:-90:0.5);
\draw[thick, color=red]
{ (0,1.1) node [above] {\small{R}}
(0,-0.8) node [above] {\small{L}}
(-5.1,-1.6) node [left] {$\beta$}
 };
\draw[thick,color=blue,fill=blue] (-4,1) circle (0.5ex);
\draw[thick,color=blue,fill=blue] (-4,0) circle (0.5ex);
\draw[thick,color=blue,fill=blue] (-2,1) circle (0.5ex);
\draw[thick,color=blue,fill=blue] (-2,0) circle (0.5ex);
\draw[thick,color=blue,fill=blue] (1,1) circle (0.5ex);
\draw[thick,color=blue,fill=blue] (1,0) circle (0.5ex);
\draw[thick,color=blue,fill=blue] (2.5,1) circle (0.5ex);
\draw[thick,color=blue,fill=blue] (2.5,0) circle (0.5ex);
\draw[thick,color=black,fill=black] (4.5,1) circle (0.5ex);
\draw[thick,color=black,fill=black] (4.5,0) circle (0.5ex);
\end{tikzpicture}
\caption{An illustration of the hydrodynamic observables in the Schwinger-Keldysh functional integral phrased in the Keldysh average-difference basis. Response functions are correlators of a sequence of difference operators (the external disturbances, denoted as blue dots) followed by an average operator (the measurement, denoted by the black dots) in the future.}
\label{fig:hydrosk}
\end{figure}

The canonical way to view the Schwinger-Keldysh path integral is in terms of an action on an extended Hilbert space $\mathcal{H}_\skR \otimes \mathcal{H}_\skL^*$ corresponding to the top and bottom legs of the contour, respectively.  One correspondingly doubles the operator algebra of the theory for we can independently insert operators on either leg of the contour. The key insight of the recent discussions was to interpret some of the well known identities for correlation functions which are, for example, well reviewed in \cite{Chou:1984es,Weldon:2005nr}, as consequences of  topological BRST symmetries inherent in the Schwinger-Keldysh construction (see \cite{Geracie:2017uku} for a Hilbert space perspective). 

The primary contention is that a Schwinger-Keldysh effective theory is characterized by a doubling of fields and a degeneration to a topological theory when the  sources of two copies are appropriately aligned  \cite{Haehl:2015foa,Crossley:2015evo}.  As explained in  \cite{Haehl:2016pec} the BRST symmetries may be understood as a consequence of the redundancy built into the Schwinger-Keldysh doubling. This line of reasoning led us to argue that the natural language for Schwinger-Keldysh is to work with a quartet of operators corresponding to the conventional doubled operators $\Op{O}_\skL, \Op{O}_\skR$, and their Grassmann odd counterparts $\SKG{O}, \SKGb{O}$ obtained by the action of the BRST charges.  One can succinctly capture this by working in a superspace with two Grassmann odd scalar directions $\{\theta,\thb \}$ which take care of the doubling and the topological limit aspects of a Schwinger-Keldysh construction. Altogether the most natural way to view the Schwinger-Keldysh construction is to work with a super-operator algebra on which the BRST charges dubbed $\QSK$ and $\QSKb$ act as super-derivations $\QSK \sim \partial_{\thb}$ and  $\QSKb \sim \partial_\theta$. The topological invariance is then just the Grassmann translation invariance. A detailed account of how this translational invariance can be used to give superspace rules for computing the microscopic field theory correlation functions can be found in \cite{Haehl:2016pec,Geracie:2017uku}.

It is important to  emphasize that the Schwinger-Keldysh BRST symmetries essentially capture the constraints imposed by unitary evolution of the microscopic degrees of freedom. Furthermore, by virtue of being topological one expects these symmetries to be robust under the renormalization group flow. Consequentially, they provide useful guideposts on how to organize the low energy dynamics.

\subsection{$\UT$ gauge invariance and thermal equivariance}
\label{sec:related}

There is additional structure for near-thermal dynamics owing to the Gibbs structure of the density matrix. Thermal field theory in its Euclidean avatar implements thermality by imposing periodicity in imaginary time. If we wanted to study real-time dynamics of thermal  systems, this Euclidean description should then be analytically continued into a Schwinger-Keldysh path integral.  The periodicity in the Euclidean description then translates to a set of non-local conditions called  Kubo-Martin-Schwinger (KMS) conditions \cite{Kubo:1957mj,Martin:1959jp} satisfied by the real time correlators. We expect that these conditions should deform appropriately to describe real time correlators in a generic fluid dynamical state. 
A fundamental question in any effective theory of thermal systems is how to implement a sensible deformation of these  (temporal) non-local conditions in a local effective theory.

Recently an answer to this question has emerged, mainly in the context of treating fluid dynamics as an effective theory \cite{Haehl:2014zda,Haehl:2015pja}. It was conjectured the correct local principle to enforce is to demand an emergent `entropic' gauge symmetry dubbed $\UT$, which in real time provides the  correct analytic continuation of the Euclidean periodicity. Its gauge current is  the entropy current (this statement can be thought of as a generalization of Wald's idea that equilibrium entropy is a Noether charge \cite{Wald:1993nt}). This emergent  KMS gauge symmetry can be understood in terms of the topological structure of the Schwinger-Keldysh construction. In particular, it was argued in \cite{Haehl:2016pec} that the additional constraints coming from KMS invariance lead to a quartet of operations that act on the Schwinger-Keldysh super-operator algebra. Two of these are Grassmann odd, thermal counterparts of the BRST charges,  $\QKMS, \QKMSb$ and the two others are Grassmann even generators $\Qzero, \Qbeta$.\footnote{ More precisely, $\{\QKMS,\QKMSb,\Qzero\}$ are interior contractions in the language of extended equivariant cohomology, while $\Qbeta$ is a Lie derivation. A closely related superalgebra was posited in \cite{Crossley:2015evo} where the authors convolve the KMS symmetry with a discrete {\sf CPT} action. This latter algebra can be viewed in the statistical (high temperature) limit as a restriction of the equivariant algebra with the symmetry left  ungauged.}

One can intuitively understand the thermal generators in the following fashion. For a thermal system one can view real time dynamics as occurring on a background spacetime that admits a fibration by a thermal circle. Recall that we are used to analyzing equilibrium dynamics in the Euclidean framework as a statistical field theory in a  geometry which is a thermal circle fibration over a spatial background. We argue that this perspective continues to be useful in the dynamical context. Given a notion of local temperature and a local choice of inertial frame measuring it (as is usual in hydrodynamics), our contention is that the background spacetime geometry on which the quantum dynamics occurs, should be viewed as living on a thermal bundle over a Lorentzian base. The local fibres being given by the thermal vector $\Kref^\mu(t,x)$, whose norm gives a measure of the local temperature; it picks out the inertial frame for local equilibrium.

The generator $\Qbeta$ implements  translations along the thermal vector $\Kref^\mu$. Owing to the KMS conditions, we can equivalently say that $\Qbeta$ implements gauge transformations around the thermal circle. An operator $\Op{O}(t,x)$ when acted upon by this generator gets Lie dragged along the local fibre, viz.,  $\Op{O} \mapsto \Op{O} + \lieD_\Kref \Op{O}$. In general this is a non-local, discrete operation since one compares an operator with its thermal counterpart. The latter is separated by an imaginary amount set by the local temperature. We will eventually postulate a continuum version, but before doing so let us intuit the rationale for the other KMS charges.

Owing to the underlying Schwinger-Keldysh BRST symmetries inherent in real time dynamics, it follows that $\Qbeta$ cannot act in isolation. Given that the symmetries act on the operator superalgebra, it also follows that there ought to be quartet of KMS operations that are interlinked by the Schwinger-Keldysh BRST charges. Said differently, it does not suffice for there to be a single KMS operation $\Qbeta$ since the superspace structure demands that it uplift to an appropriate superspace operation. The explicit action of these charges on the super-operator algebra can then be constructed directly. We will review the resulting SK-KMS algebra generated by the  six charges $\{\QSK,\QSKb, \QKMS, \QKMSb, \Qzero,\Qbeta\}$ below.  As noted in the references cited above and explained in \cite{Haehl:2016uah} this algebra exemplifies an extended equivariant cohomology algebra.

The complete structure of this algebra and its implications for generic thermal systems have not yet been fully understood. However, for the analysis of low energy dynamics in near-thermal situations, as in the hydrodynamic context, we can make some useful simplifications. Insofar as the low energy hydrodynamic regime is concerned, one can effectively work in the high temperature limit, where the local thermal circle becomes infinitesimal. This has the salubrious effect of allowing us to both make the thermal translations local, and pass into the continuum limit.  We then view the low energy theory as being equivariant with respect to the thermal gauge symmetry translating operators around the thermal circle, which is nothing but the $\UT$ KMS-gauge symmetry.

The basic framework for viewing such thermal equivariant cohomology algebras was outlined in
\cite{Haehl:2016uah}.  Our aim there was to explain the general structures and explicate the origins of the thermal $\UT$ gauge symmetry. We also argued that this framework could be used to understand the simplest Schwinger-Keldysh effective theory in the thermal regime. The system under question is the worldline description of a thermal particle, often called Brownian particle.  The resultant macroscopic
dynamics  is the one given by the Langevin equation. We kept the discussion there simple by focusing on the particle motion in one-dimension, which amounts to studying a worldline sigma model with a one dimensional target space constrained by the thermal equivariant cohomology algebra. The natural generalization is to extend the discussion to Brownian branes \cite{Haehl:2015foa} which can be viewed as reparameterization invariant worldvolume sigma models. The target space is the spacetime in which these branes lie embedded. The particular theory arising out of our considerations is a natural generalization of twisted supersymmetric quantum mechanics studied by Witten in the context of Morse theory 
 \cite{Witten:1982im}. Amongst the Brownian branes, the space-filling one, captures, upon imposition of target space diffeomorphisms, the hydrodynamic effective field theory \cite{Haehl:2015uoc}.

The object of the current analysis  is to examine deeply the underlying mathematical structure of the thermal gauge theory and build up the necessary machinery to construct the sigma models of interest. Whilst we view the current discussion as a necessary elaboration of \cite{Haehl:2015uoc}, it is the first step we need to take to check whether these sigma models with their attendant KMS gauge symmetry match against expectations from  thermal field theory. In particular, we would want to show based on the formalism we are about to explicate that the eightfold classification of hydrodynamic transport described in \cite{Haehl:2015pja} is indeed comprehensive and can be recovered from an effective action.
A related  objective  is to develop necessary mathematical machinery for near-equilibrium effective theory which will capture well the thermal correlations.

In this work, we will focus on the topological or aligned limit, where one needs to write down a theory with doubled topological invariance along with the above mentioned $\UT$ gauge invariance.
This naturally brings us into the remit of equivariant cohomology algebras and one effectively desires a superspace construction that encodes the relevant constraints. We will work in the aforementioned superspace with two Grassmann odd directions and $\{\partial_\theta,\partial_{\thetab} \}$ providing the necessary topological charges. Thus, our problem involves studying $\UT$ gauge invariance associated with thermal translations in the context of a superspace. This interplay between thermal translations and Schwinger-Keldysh superspace results in a rich \emph{thermal supergeometry}
which forms the central subject of this work.

\subsection{A brief history of hydrodynamic effective field theories}
\label{sec:related}

To put our construction in perspective, we give a brief history of hydrodynamic effective actions. For the case of ideal fluids, efforts to construct an action principle date back several decades with works by Taub \cite{Taub:1954zz} and Carter \cite{Carter:1973fk,Carter:1987qr}. 

Recent interest in understanding effective desciption of fluids was rejuvenated in an interesting paper \cite{Nickel:2010pr}, where the authors proposed a useful strategy for identifying the low energy dynamical degrees of freedom in terms of Goldstone modes for broken symmetries.  At the same time working with the local fluid element variables (the Lagrangian description),  \cite{Dubovsky:2011sj} gave a general framework to describe non-dissipative fluid dynamics, which was employed to understand anomalous transport for Abelian flavour anomalies in $1+1$ dimensions in 
\cite{Dubovsky:2011sk}. Building on these works, \cite{Bhattacharya:2012zx} demonstrated how these could be used to understand non-linear dissipative fluids, while \cite{Saremi:2011ab,Haehl:2013kra,Geracie:2014iva} explored the specific features of Hall viscosity in $2+1$ dimensions.

In our first attempt to understand the generality of the formalism, we constructed an action principle for anomalous transport in general in \cite{Haehl:2013hoa}. A curious feature of this construction was that despite the anomalous transport being non-dissipative or adiabatic, one nevertheless had to resort to a Schwinger-Keldysh type doubling of degrees of freedom to construct an effective action. 

While efforts were being expended to understand hydrodynamic effective actions, progress was being made on understanding the constraints on transport from viewing hydrodynamics as a long-wavelength effective field theory, constrained by the requirement that a local form of the second law is upheld on-shell in every fluid configuration (i.e., there exists an entropy current with locally non-negative divergence). This problem was initially explored in \cite{Romatschke:2009kr} and a complete solution to neutral fluids at second order was finally obtained by Sayantani Bhattacharyya in \cite{Bhattacharyya:2012nq}. In the latter work it was shown that there are non-trivial identities that transport coefficients need to satisfy in order to satisfy the second law. Inspired by these developments, \cite{Banerjee:2012iz,Jensen:2012jh} developed the equilibrium partition function formalism from which all the constraints on transport can be obtained. Application of this formalism to understand anomalous transport was explored in  \cite{Jensen:2012kj,Jensen:2013kka,Jensen:2013rga}. This analysis was further refined by Sayantani Bhattacharyya in \cite{Bhattacharyya:2013lha,Bhattacharyya:2014bha} who went on to prove a remarkable  theorem: apart from constraints coming from equilibrium, and the positivity requirements on lowest order dissipative terms (viz., transport coefficients such as shear viscosity, bulk viscosity, conductivity, etc., are non-negative definite), there are no constraints on higher order dissipative terms.

Our first attempt to synthesize these results into a coherent picture culminated in the \emph{eightfold classification of hydrodynamic transport} as described in \cite{Haehl:2014zda,Haehl:2015pja}. This work involved two distinct lines of development: firstly we took the axioms of fluid dynamics at face value, and constructed an explicit  parametrization of independent classes of transport consistent with the second law at all orders in the derivative expansion. This classification, which was named the Eightfold Way, indicated that apart from the obvious dissipative transport, there are 7 independent adiabatic (i.e., non-dissipative) classes. Any transport coefficient not in these classes is forbidden from appearing and belongs to the set of hydrostatic forbidden terms, which can already be inferred from the equilibrium analysis. In a parallel development we demonstrated that an action principle involving two essential ingredients, (a) Schwinger-Keldysh like doubling, and (b) an emergent thermal $\UT$ symmetry, could capture all of the 7 adiabatic classes of transport. The former requirement was on the one hand familiar from the issues encountered in the construction of anomalous transport effective actions, but was on the other surprising since we were dealing with a conservative system. The $\UT$ symmetry, however, was imperative given the doubling, to forbid terms that would be in tension with microscopic unitarity. Equivalently, this symmetry  is necessary to  implement the constraints arising from the KMS condition in equilibrium and ensures that all the Schwinger-Keldysh influence functionals are consistent. 

The structure of this action of adiabatic transport, dubbed the Class $\LT$ action,  is reminiscent of the MSR \cite{Martin:1973zz} construction. Roughly, the hydrodynamic terms arise from an action of the form $T^{ab} \, \tilde{g}_{ab}$, where $T^{ab}$ is the energy-momentum tensor, and $\tilde{g}_{ab}$ is the difference metric in the Schwinger-Keldysh doubled construction. It was independently argued by \cite{Kovtun:2014hpa} that such an MSR like  construction should be the right framework  for dissipative fluid dynamics. Other attempts to construct actions for dissipative hydrodynamics include 
\cite{Grozdanov:2013dba,Endlich:2012vt,Hayata:2015lga,Floerchinger:2016gtl}.

Seeking to understand the origins of the Schwinger-Keldysh doubling and the $\UT$ symmetry led us to unearthing the general framework of thermal equivariance, which has been explained in the sequence of papers \cite{Haehl:2015foa,Haehl:2015uoc,Haehl:2016pec,Haehl:2016uah}. The key ingredients of this construction, viz., the BRST symmetry were also independently argued for by \cite{Crossley:2015evo}. Their construction was further explored in \cite{Glorioso:2016gsa,Gao:2017bqf,Glorioso:2017fpd}. There are some overarching similarities, and some key differences between the two approaches. These have been spelled out in some detail in \cite{Haehl:2017zac} so we will refrain from providing further commentary here. The key similarity is that other approaches seems to result in the same Lagrangian with the same final symmetries in the hydrodynamic limit.

The key difference is that these works do not have an emergent $\UT$ gauge symmetry, and their implementation of KMS invariance as a discrete ${\mathbb Z}_2$ symmetry differs from our viewpoint. Further, much of their work is also done within a non-covariant framework and amplitude expansions, thus obscuring  relations to previous literature as well as our work.  As explained in \cite{Haehl:2017zac}, their $\mathbb{Z}_2$ symmetry is however consistent with our proposal. Finally, \cite{Jensen:2017kzi} provides a superspace description of this alternate construction and sketch some features of hydrodynamic actions involving  Lagrangians with mutually non-local terms. 
 A related approach with a focus on the path integral derivation of hydrodynamics was spelled out in \cite{Hongo:2016mqm,Hongo:2018nzb}.  While we have not tried to reassemble every non-covariant/non-local expressions occurring in these works into a form that would facilitate more direct comparison to our local covariant expressions, we anticipate 
agreement in the regime where the fluid dynamics is correctly reproduced.
 
\paragraph*{Note added:} While this paper was nearing completion, \cite{Jensen:2018hhx} appeared on the arXiv, which has some overlap with this work (especially elements of \S\ref{sec:topsigma}).

\subsection{Outline of the paper}
\label{sec:outline}

In \S\ref{sec:thermaleq} we briefly review the equivariant superspace construction that encodes the Schwinger-Keldysh constraints in near thermal states. 

The second part of this paper contains details on the application to hydrodynamic effective field theories. In \S\ref{sec:hydrodof} we review the sigma model degrees of freedom describing fluid dynamics, and the symmetries to be imposed on an effective action. We give an abstract discussion of the structure of the resulting hydrodynamic effective action in \S\ref{sec:topsigma}. This leads to the realization that the second law can be naturally understood as due to an entropy inflow from the Grassmann odd directions in superspace. To make the abstract discussion more accessible and provide a concrete framework for calculations, we develop the ``MMO limit'' (after Mallick-Moshe-Orland \cite{Mallick:2010su}) of our formalism in \S\ref{sec:mmo}. This limit provides a truncation of our formalism to keep track of only the fields relevant to extract certain information of interest, such as the currents and the entropy production. We provide several examples at low orders in the derivative expansion in \S\ref{sec:mmoEx}. In \S\ref{sec:SigmaModel} we demonstrate how the complete eightfold classification of hydrodynamic transport can be reproduced from our effective action. We discuss some open questions in \S\ref{sec:discussion}. 

The third and fourth parts of this paper, comprise of  a number of appendices. Here we give further details on the formalism and explicit expressions for the superspace representations of various relevant fields. The reader interested in the general formalism of thermal supergeometry is invited to consult these.  In Part III, the  Appendices \ref{sec:skkmsalg}-\ref{sec:formal} captures the details of the $\UT$ gauge symmetry, in particular general aspects of the representation theory. In Part IV, the Appendices \ref{sec:PositionMult} and \ref{sec:MetCurvMult} describe the matter multiplets entering into the construction, while Appendices \ref{sec:gaugesdiff} and \ref{sec:wvconnect} delve into details about various gauge fixing conditions, and construction of appropriate covariant objects which enter the hydrodynamic sigma models. Finally, Appendix~\ref{sec:MMOfurther} collects various expressions that enter into the explicit computations of \S\ref{sec:mmo}and \S\ref{sec:mmoEx}.  This split allows us to keep the main line of development in the paper free of technical complications (to the extent possible).

\section{Review of thermal equivariance}
\label{sec:thermaleq}

Given a unitary QFT in a thermal state, real time correlation functions are computed using the Schwinger-Keldysh functional integral contour; Fig.~\ref{fig:hydrosk}. Some key features of this construction are a doubling of degrees of freedom, and an attendant set of constraints induced by the redundancy thus introduced. Furthermore, in thermal equilibrium, correlation functions are required to satisfy suitable analyticity properties encoded in the KMS condition. The latter owes its origin to the special nature of the thermal density matrix, which involves Hamiltonian evolution in Euclidean time. We review some of the salient facts relating to these features, and their encoding in terms of a thermal equivariance algebra. A detailed exposition of these ideas can be found in our earlier papers \cite{Haehl:2016pec,Haehl:2016uah}.

\subsection{The SK-KMS  superalgebra}
\label{sec:skkms}

The Schwinger-Keldysh generating functional starts with an initial state $\rhoi$ (w.l.o.g.\ a mixed state) of a QFT, and implements a source-deformed forward (R)/backward (L) evolution so as to be agnostic of the future state of the system, viz., 
\begin{equation}\label{eq:SKdef}
 \mathcal{Z}_{SK}[J_\skR,J_\skL] \equiv \text{Tr}\bigbr{ U[J_\skR]\ \rhoi\  (U[J_\skL])^\dag } \,.
\end{equation} 
A key feature of this construction is the localization of  the generating functional on to the initial state when sources are aligned: 
\begin{equation}
\mathcal{Z}_{SK}[J_\skR=J_\skL\equiv J] = \text{Tr}\bigbr{  \rhoi  }. 
\label{eq:eqsource}
\end{equation} 
We take this statement as encoding unitarity in the microscopic formulation, and it implies that correlation functions vanish if their future-most insertion is a `difference operator' of the form $\SKR{O} - \SKL{O}$ (we refer to this as the {\it largest time equation}). In particular, correlators of only difference operators generically vanish. We argued that this would naturally be enforced by a topological BRST symmetry and introduced a pair of supercharges $\{\QSK, \QSKb\}$. The two topological charges are CPT conjugates of each other,\footnote{ We note here that \cite{Crossley:2015evo} also argue for a BRST symmetry, with the difference that they require only one BRST charge to ensure the correct Schwinger-Keldysh functional integral localization upon source alignment. See \cite{Haehl:2017zac} for further comments.} and ensure that the largest time equation is upheld. We refer the reader to  \cite{Geracie:2017uku} where the  Hilbert description of this construction is discussed in detail.

When the initial state is furthermore thermal, $\rhoi = \rhoT = e^{-\beta\, \OpH{H}}$, the correlators can be obtained  (in equilibrium) by analytically continuing Euclidean thermal correlators. Euclidean thermal periodicity translates then into a set of non-local KMS conditions \cite{Kubo:1957mj,Martin:1959jp,Haag:1967sg}. The schematic way to understand these conditions is by conjugating any operator through the density matrix, noting that its Gibbsian nature results in shifting the temporal argument of the operator 
$t \mapsto t-i\,\beta$.\footnote{ The direction of the shift and conjugation is dictated by convergence, so all motion is into the lower-half complex time plane.}

These conditions encode the important fluctuation-dissipation relations; indeed for two-point functions a simple rearrangement of the primitive KMS condition results in the familiar relation between the commutator and the anti-commutator up to a statistical factor:\footnote{ For higher-point functions, a more elaborate analysis involving more switchbacks in time is necessary \cite{Haehl:2017eob}.}
\begin{equation}
\begin{split}
\Tr{\OpH{A}(t_{_A}) \, \OpH{B}(t_{_B}) \, \rhoT} &= \Tr{\OpH{B}(t_{_B} - i\,\beta) \, \OpH{A}(t_{_A})\,\rhoT}  \\
 \;\; \Longrightarrow \;\;
\vev{\{\OpH{A} , \OpH{B} \} } &= - \coth\left(\frac{1}{2}\,\beta\, \omega_{_B}\right) \, \vev{ [\OpH{A} , \OpH{B} ] } 
\end{split}
\label{eq:kms}
\end{equation}	
where the second line is written in momentum space.
 It is traditional in much of the literature to view the KMS condition as a ${\mathbb Z}_2$ involution,  owing to the fact that the first line of \eqref{eq:kms} involves a swap of operator order after conjugation; see \cite{Sieberer:2015hba} for a nice discussion. Indeed \cite{Crossley:2015evo} implement the KMS condition as a   ${\mathbb Z}_2$  symmetry, and argue for an emergent second topological charge to encode the KMS condition. 

 However, inspired by our previous studies of  transport in relativistic fluids that is consistent with the  second law, we argued for a different approach involving an 
emergent KMS-$\UT$ gauge invariance \cite{Haehl:2014zda,Haehl:2015pja,Haehl:2015foa}. This KMS symmetry acts on the fields by thermal translations, as it must. The actual implementation of the KMS gauge invariance works through the formalism of extended equivariant cohomological algebras. While the two approaches appear to be superficially different, at the end of the day, we will end up with a very similar superalgebra constraining low energy dynamics, modulo the following key distinctions. 
Rather than repeat the technical exposition which can be found  in \cite{Haehl:2016uah} we shall give a brief qualitative picture of our construction, pausing to note some key differences in the two approaches:
\begin{itemize}
\item In our construction, there is an emergent $\UT$ symmetry which acts via thermal diffeomorphisms. This symmetry is gauged, and its conserved charge is the net free energy. We do not however describe the topological gauge dynamics of this field; see \S\ref{sec:caveat}. 
\item If we freeze out the gauge modes, and instead treat them as non-trivial background gauge fields which enforce {\sf CPT} breaking, we leave behind an algebra that agrees with \cite{Crossley:2015evo} in the high temperature limit.\footnote{ We believe that their construction is also best understood in this limit. The high temperature limit of the superalgebra is well known in the statistical mechanics literature and has been used to good effect to derive the non-equilibrium form of the second law in \cite{Mallick:2010su}. We will have occasion to use it in our discussion below in \S\ref{sec:mmo}, \S\ref{sec:mmoEx} (we call it the MMO algebra). }
\item The advantage of having an explicit gauge symmetry is that it allows for  {\sf CPT} breaking to emerge dynamically rather than being imposed a-priori in the formalism. Demonstration of this of course requires that the gauge dynamics admits such vacua, but thus far we do not see any obstacle (as we describe in detail below) for this to be the case.
\end{itemize}

\begin{figure}[t!]
\begin{center}
\includegraphics[width=.6\textwidth]{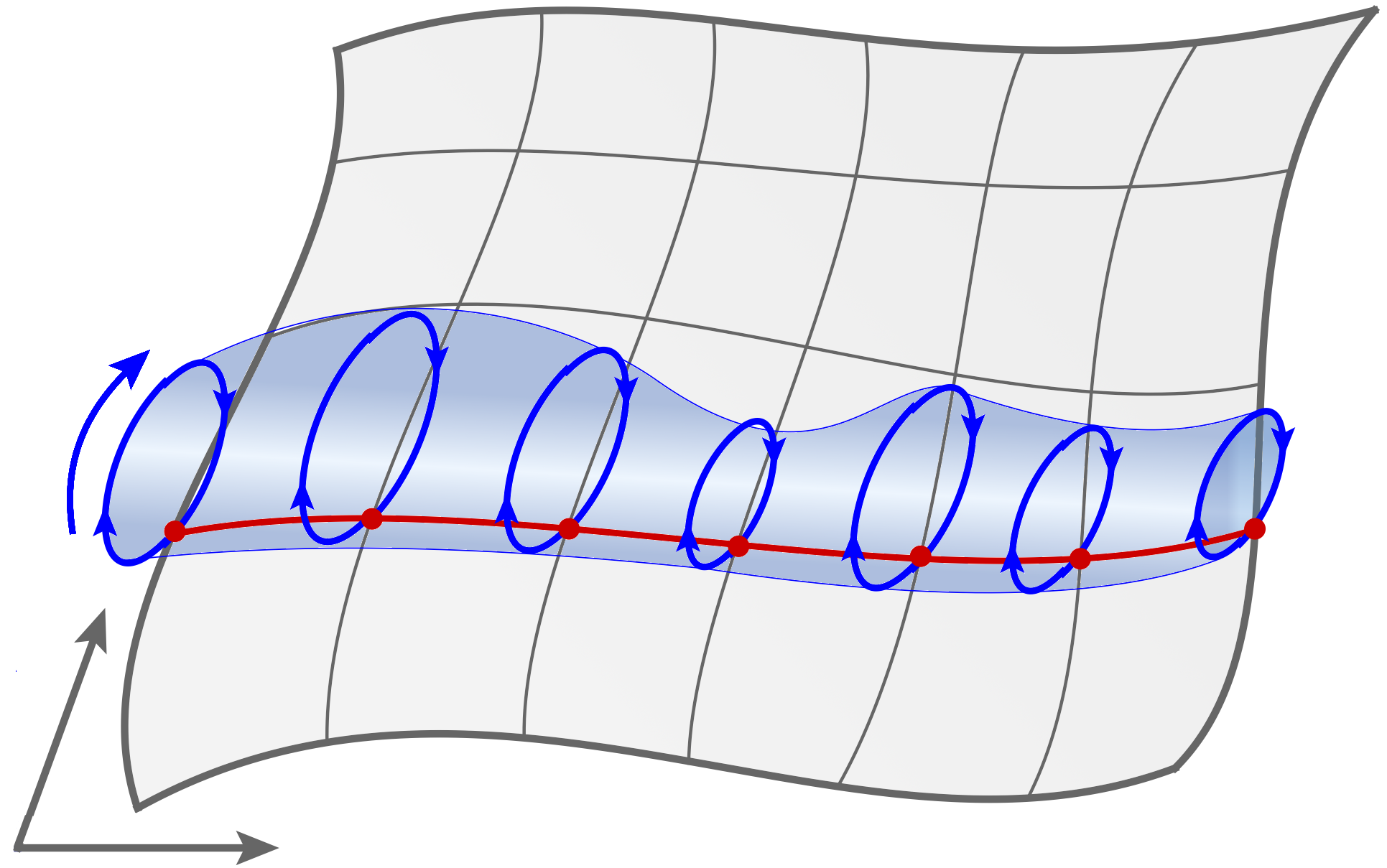}
\begin{picture}(0.3,0.4)(0,0)
\put(-265,93){\makebox(0,0){{\color{blue}Im$(t)$}}}
\put(-268,45){\makebox(0,0){{\color{black}Re$(t)$}}}
\put(-207,5){\makebox(0,0){{\color{black}$x^i$}}}
\end{picture}
\caption{Illustration of the spacetime picture associated with the proposed KMS gauge invariance, with the figure taken 
from \cite{Haehl:2016uah}. We upgrade the spacetime manifold (a Lorentzian geometry, depicted in gray,  with a typical Cauchy slice indicated in red), on which our quantum system resides to a thermal fibre bundle. We further assume  local thermal equilibrium (as in, e.g., hydrodynamics) at each spacetime point which guarantees existence of a thermal vector $\Kref^\mu$. We draw this vector field as a circle fibration with a thermal circle whose size is set by the local temperature. The KMS transformations we seek implement  equivariance with respect to thermal translations along this local imaginary time circle. Restricting to a gauge slice corresponds to picking a Lorentzian section of this fibration. Note that the size of the thermal circle is exaggerated; our arguments assume the high temperature limit where the size of the thermal circle is much smaller than the fluctuation scale.}
\label{fig:SKspacetime}
\end{center}
\end{figure}

Usually, thermal equilibrium is viewed in terms of statistical mechanics (in the Euclidean formulation of the QFT), owing to the fact that the thermal circle can be considered to be fibered non-trivially over the spatial geometry. The thermal equivariance formalism, heuristically can be viewed as extending this fibration to the physical spacetime (see Fig.\ref{fig:SKspacetime}). In this extended spacetime, we ask that the resulting physics be independent of the choice of section of the thermal circle fibration, which amounts to imposing a gauge invariance in the system corresponding to thermal translations in Euclidean time. We expect this picture to make sense in the low energy limit, when 
frequencies and momenta are low compared to the thermal scale $\omega\,T, k\,T \ll 1$, when we can approximate the discrete thermal circle translations by a continuous (and thence infinitesimal) translation. The reader will recognize that we are postulating an emergent low energy gauge symmetry; the information encoded in the KMS conditions is captured by the KMS-$\UT$ symmetry of thermal translations.\footnote{ It is an interesting question as to what this structure implies away from the low energy or higher temperature limit. While we have computed features of the SK-KMS algebra away from this limit from microscopic Ward identities, we have not been able to distill them into a useful principle (cf., \cite{Haehl:2016uah}).  It is also unclear where one should impose a global KMS condition in the hydrodynamic state which is only in local equilibrium -- our discussion only requires a local KMS condition be imposed. 
It is worth mentioning here that the analysis of \cite{Jensen:2017kzi} argues that this can be done. We are however unsure if this implementation, which involves writing down actions involving mutually non-local degrees of freedom with no direct coupling between them, is compatible with the influence functionals derived in simple models of quantum dissipation like those discussed in \cite{Feynman:1963fq,Caldeira:1982iu}.} 

The full BRST superalgebra relevant for our considerations is an extended equivariant cohomology algebra, which is a graded algebra with $4$ Grassmann odd and $2$ Grassmann even generators. The generators are categorized into 
exterior derivations $\{\QSK, \QSKb\}$, interior contractions $\{\QKMS, \QKMSb, \Qzero\}$, and a Lie derivation $\Qbeta$. The exterior derivations and interior contractions are nilpotent, and the Lie derivation actually implements translations along the thermal circle; operators are Lie dragged along the thermal vector $\Kref^\mu$ consistent with their tensorial structure. The full algebra is presented in  Appendix \ref{sec:skkmsalg}.

\subsection{Superspace and the thermal gauge multiplet}
\label{sec:superg1}

We now describe how to implement our proposed symmetries, and give references to more detailed discussions.

The practical way of implementing thermal equivariance in explicit constructions is to pass onto a superspace description as described in \cite{Haehl:2016pec,Haehl:2016uah}. We introduce two Grassmann odd coordinates  
$\{\theta,\thetab\}$, identifying  $\{\QSK,\QSKb\}\sim \{\partial_{\thetab},\partial_\theta\}$, and promote fields ${\cal Y}$ to superfields (denoted by a circle "$\ \mathring{ }\ $" accent): 
\begin{equation}
{\cal Y} \;\to\; \SF{{\cal Y}} = 
	{\cal Y} +  \theta\,  {\cal Y}_{\bar \psi} + \thetab \, {\cal Y}_\psi + \thetab\,\theta\, \tilde{{\cal Y}}
	 \equiv 
	\frac{1}{2} \left({\cal Y}_\skR +{\cal Y}_\skL\right) +  \theta\, {\cal Y}_{\bar \psi}  
	+ \thetab \, {\cal Y}_\psi  + \thetab\,\theta\, ({\cal Y}_\skR - {\cal Y}_\skL) \,.
\label{eq:xsfld}
\end{equation}	
The top ($\thetab\theta$) components of the superfields  represent the difference operators while the ${\cal Y}_\psi,
{\cal Y}_{\bar \psi} $ are the ghost super-partners carrying same spin but opposite Grassmann parity as  ${\cal Y}$. Individual components are recovered by taking suitable $\theta$, $\thb$ derivatives  and projecting onto ordinary space; we denote this projection as $\big| \equiv \big|_{\theta = \thb  =0}$.  This structure is sufficient to describe the Schwinger-Keldysh formalism in generic initial states (see \cite{Geracie:2017uku} for an example). We refer to Appendix \ref{sec:skkmsalg} for a brief review of the algebraic construction.

To describe our macroscopic gauge theory at a certain temperature we introduce a background timelike vector superfield  $\SF{\Kref}^I(z)$.\footnote{ We work in conventions where  $z^I \equiv \{\sigma^a,\theta,\thetab\}$ is the collection of superspace coordinates with lower case Latin alphabet indexing the ordinary spacetime coordinates, reserving uppercase Latin indices for superspace. The notation is adapted for sigma models of interest in our discussion later. $z^I$ will coordinatize the worldvolume directions, while physical spacetime coordinates will be denoted as $X^{\uA}=\{X^\mu,\Theta, {\bar \Theta}\}$ and indexed with lowercase Greek alphabet for ordinary indices with accented (breve) Greek lowercase reserved for  super-indices. \label{fn:indexnote}}
We will use some of the superdiffeomorphism invariance to simplify the thermal super-vector:
\begin{equation}
 \SF{\Kref}^\theta=\SF{\Kref}^{\thetab} = 0=\partial_\theta \SF{\Kref}^a=  \partial_{\thetab} \SF{\Kref}^a\,.
\label{eq:betagauge}
\end{equation}	
 We will consider below only that subset of superdiffeomorphisms which respect this gauge choice for the background thermal super-vector $\SF{\Kref}^I$.\footnote{ We can also introduce a thermal twist $\Lref(\sigma)$ which encodes the chemical potential for general ensembles with additional conserved charges. The twist is the phase entering the thermal periodicity conditions in a particular flavour symmetry gauge. For simplicity we will not elaborate further on this.} Appendix \ref{sec:gaugesdiff} contains a detailed discussion.

The \emph{$\UT$ super-gauge transformations} can be parameterized by an adjoint superfield gauge parameter $\LamS$. They  
act on a  superfield $\SF{{\cal Y}}$ by Lie dragging it along $\LamS\, \SF{\Kref}^I$ and can be represented by a 
 \emph{thermal bracket},
\begin{equation}
(\LamS,\SF{{\cal Y}})_\Kref=\LamS  \,\lieD_\Kref \SF{{\cal Y}} \,,
\label{eq:betabrkX}
\end{equation}	
where  $\lieD_\Kref$ denotes the super-Lie derivative along $\SF{\Kref}^I$. The 
infinitesimal gauge transformation is thus given by
\begin{equation}
\SF{{\cal Y}}\mapsto \SF{{\cal Y}}+ (\LamS,\SF{{\cal Y}})_\Kref \,.
\label{eq:utx}
\end{equation}	
For scalar $\SF{{\cal Y}}_\text{scalar}$ this is  just a thermal translation \begin{align}
(\LamS,\SF{{\cal Y}}_\text{scalar})_\Kref=\LamS \;\SF{\Kref}^I \partial_I \SF{{\cal Y}}_\text{scalar}\,.
\label{}
\end{align}
The Jacobi identity  then fixes the action of  thermal bracket on adjoint superfields, so that under successive $\UT$ transformation $\LamS' \mapsto \LamS' + (\LamS,\LamS')_\Kref $  with
\begin{align}
(\LamS,\LamS')_\Kref &=\LamS\lieD_\Kref \LamS'-\LamS' \lieD_\Kref \LamS \,.
\label{eq:adbetabrk}
\end{align}

We introduce a $\UT$ \emph{gauge superfield one-form} as a triplet
 $\As_I(z)\equiv\{\As_a(z),\SF{\mathscr{A}}_{\theta}(z),\SF{\mathscr{A}}_{\thetab}(z)\}$
\begin{align}
\As_I(z)\, dz^I = \As_a(z)\, d\sigma^a +\SF{\mathscr{A}}_{\theta}(z)\, d\theta + \SF{\mathscr{A}}_{\thetab}(z) \, d\bar{\theta}
\label{eq:Aform}
\end{align}
  whose gauge transformation is like an adjoint superfield except for an inhomogeneous term, viz., 
\begin{equation}
 \As_I \mapsto \As_I +(\LamS,\As_I)_\Kref  - \partial_I \LamS \,,
\label{eq:aut}
\end{equation}	
with the thermal bracket as in \eqref{eq:adbetabrk}.  One can further define as usual a 
\emph{covariant derivative}\footnote{ The covariant derivative $\Dut$ introduced in \eqref{eq:covDI} implements covariance under $\UT$ transformations on a flat superspace ${\mathbb R}^{d-1,1|2}$ in Cartesian coordinates. Later on we will encounter a super-covariant derivative $\Dwv$ that will also involve additional contributions from the background geometric connection. } 
\begin{equation}
 \Dut_I = \partial_I + (\As_I,\ \cdot\, )_\Kref\,,
\label{eq:covDI}
\end{equation}	
and an associated field strength
\begin{equation}
\Fs_{IJ}  \equiv(1-\frac{1}{2}\, \delta_{IJ}) \left(\partial_I\, \As_J - (-)^{IJ} \,\partial_J\, \As_I  + (\As_I,\As_J)_\Kref \right) ,
\label{eq:fdef}
\end{equation}	
where $(-)^{IJ}$ is the mutual Grassmann parity of the two indices involved (see below).
Given the low-energy superfields $\SF{{\cal Y}}$, the theory of macroscopic fluctuations is given as the general superspace action invariant under  $\UT$ gauge transformations. We explain some essential features of the gauge  algebra, the structure of the multiplets and the various fields involved, and the attendant representation theory in the Appendices \ref{sec:UTreps}, \ref{sec:Adjoint}, \ref{sec:GaugeMult}.  Some of the background material has already been detailed in \cite{Haehl:2016uah}, so we also refer the reader there for further details.

The final symmetry we implement is {\sf CPT}. It is important that the implementation of {\sf CPT} not conjugate the initial hydrodynamic state (which is crucial for example when chemical potentials are turned on). We also require that such a symmetry be present even when we discuss non-relativistic systems with no microscopic {\sf CPT} symmetry. As such any anti-linear involution respecting these criteria will suffice for our purposes and can justifiably called {\sf CPT}.\footnote{ The manner we implement the anti-linear involution was inspired by Veltman's diagrammatic rules for the SK path integral for the vacuum initial state \cite{tHooft:1973pz}, which acts by exchanging incoming and outgoing states in a scattering process.} We want to encode the information that the SK path integral be invariant under the combined  {\sf CPT} transformation of the initial state and the sources. The anti-unitary nature of {\sf CPT} allows us to translate these requirements  into a reality condition for the SK path integral, viz., 
\begin{equation}
\mathcal{Z}^*_{SK}[J_\skL,J_\skR] =  \mathcal{Z}_{SK}[J_\skR,J_\skL]\,.
\end{equation} 

Apart from the usual action on $\sigma^a$ coordinates, {\sf CPT} exchanges $\thetab\leftrightarrow\theta$ and hence acts as an R-parity on the superspace. This is necessitated by  our requirement that the $\thetab\theta$ component of the superfields be identified with difference operators (and the exchange of $\text{R} \leftrightarrow \text{L}$ under {\sf CPT}). 
In addition we have a conserved ghost number $\text{gh}$, which assigns $\gh{\theta} =1$, $\gh{\thb} = -1$. 
To understand these assignments, and our rationale for calling the anti-linear involution as {\sf CPT}, note that the fluid equations are not by themselves {\sf CPT} invariant. One can add an additional spurion field to make them so -- this is the role $i\, \Fs_{\theta\thb}|$ will play in our formalism. We therefore want our anti-linear involution symmetry of the SK functional integral be such that $i\, \Fs_{\theta\thb}|$ is odd under it. 
As argued in section 8 of \cite{Haehl:2016pec}, the aforementioned assignment will do the job.

\subsection{Super-index contractions}
\label{sec:dewitt}

We also are now at a point, where we should specify our super-index contraction conventions. We follow the conventions described in 
the book by DeWitt \cite{DeWitt:1992cy}, which says that when adjacent super-indices are contracted from south-west to north-east $\nearrow$, there is no sign, but one picks up a Grassmann sign when indices are contracted from 
north-west to south-east $\searrow$. 

We will shortly introduce a metric superfield $\SF{\gref}_{IJ}$ which will be used to raise and lower indices. The thermal super-vector and gauge supermultiplet of course do not rely on the existence of a metric, but all of these will enter into our constructions below. The index contraction convention is simplest to intuit from the orthogonality condition imposed on metric and its inverse. We have:
\begin{equation}
\SF{\gref}_{IJ} \, \SF{\gref}^{JK} = \delta_I^{\ K} \,, \qquad 
(-)^J\;\SF{\gref}^{IJ} \, \SF{\gref}_{JK} = \delta^I_{\ K} \,.
\label{eq:deWitt1}
\end{equation}	
Note that the index placements are all important since any swap of indices ends up leading to extra signs. In general, contraction of separated indices is carried out by checking what the relative sign would be to bring the two indices being contracted next to each other:
\begin{equation}
\SF{\mathcal{T}}^{I_1 \cdots I_m  K_1 \,\cdots K_n}{}_{L_1\,\cdots L_p  \, M_1\,\cdots M_q}   = 
(-)^{J \left(1+\sum_i K_i + \sum_j L_j\right)}
\; \SF{\mathcal{T}}^{I_1  \cdots  I_m J \, K_1 \,\cdots K_n}{}_{L_1\,\cdots L_p \, J \, M_1\,\cdots M_q}  \label{eq:dwconven}
\end{equation}	
where the sign is such that the contraction of the tensor is still a tensor. 
We will however leave this sign implicit in various formulae below, to keep them reasonably readable. The reader should exercise care in contracting indices (we will give examples in \S\ref{sec:mmo}).

\newpage
\part{Hydrodynamic effective field theories}
\section{The hydrodynamic degrees of freedom}
\label{sec:hydrodof}

Hydrodynamics is the long-wavelength effective field theory for systems in local, but not global thermal equilibrium. The natural variables in terms of which hydrodynamics is presented are the fluid velocity $u^\mu(x)$ (normalized $u^\mu\, u_\mu =-1$ ), the local temperature $T(x)$, and other intensive parameters such as chemical potentials $\mu_i(x)$ when additional conserved charges are present. It is useful to combine the basic data into a thermal vector $\Kref^\mu = u^\mu/T$ and thermal twists $\Lambda_i  = \mu_i/T - u^\mu A_\mu$ as explained in \cite{Haehl:2015pja}. We allow our fluid to be subject to arbitrary (though slowly varying) external fields $g_{\mu\nu}(x)$ and $A^{(i)}_\mu(x)$, respectively. The thermal vector  picks out the direction of the local inertial frame in which the fluid is locally at rest, and its length sets the size of the thermal circle. The choice of the thermal vector/twist was inspired by the thermal circle fibration which proved useful for the analysis of equilibrium partition functions.

Before proceeding, it is also helpful to understand qualitatively how hydrodynamics emerges from a 
microscopic perspective. As noted in the introduction \S\ref{sec:intro}, the hydrodynamic observables are response functions as those described in Fig.~\ref{fig:hydrosk}. The response functions, which schematically are observables of the form $\vev{\mathcal{T}_{SK} \SKAv{O}^1 \SKDif{O}^2 \cdots \SKDif{O}^n} $, are the leading non-vanishing observables in the Schwinger-Keldysh formalism. They  in turn are related by the KMS relations to fluctuation correlators involving multiple average operators, which are schematically combinations of correlators of the form $\vev{\mathcal{T}_{SK} \SKAv{O}^1 \SKAv{O}^2 \cdots \SKDif{O}^n}$.  

The dynamics in hydrodynamics is simply conservation of charges, which are the slow modes surviving once the transient effects die down in any system that has been perturbed away from global equilibrium. This leads to conservation of energy-momentum (and charges if any). The low energy theory which then involves conserved current operators should capture the IR limit of both the response functions and the fluctuations. The former is what classical hydrodynamic constitutive relations tries to encode, but the latter is necessary for the system to be aware of its microscopic origins. This entails that an effective field theory should have adequate degrees of freedom to go beyond the classical hydrodynamic limit, and be able to predict the structure of fluctuations.  Thus, once we identify the relevant IR modes that can lead to the correct hydrodynamic equations, we then need to figure out how to upgrade them to include degrees of freedom that capture fluctuations. Fortunately, the hard work is already done in the preceding discussion: the Schwinger-Keldysh superspace and thermal equivariance makes this a simple task.

\subsection{The pion fields of hydrodynamics}
\label{sec:pions}

Let us  start by identifying the classical variables in the hydrodynamic effective field theory.
In general, conservation laws follow from a Noether construction while dynamics is dictated by a variation principle. The latter usually implies the former, but it is not always true that conservation laws encapsulate the entirety of dynamics. This can happen only if there is an additional reparameterization symmetry, whence the reparameterization invariance of the dynamical fields essentially implements diffeomorphisms (or flavour rotations) which result in the conservation laws appearing. This is precisely the situation in hydrodynamics.

The natural framework to implement such a reparameterization invariance is to view the degrees of freedom in terms of a parameterized sigma model, much like how we describe string or brane dynamics in string theory. We pick an auxiliary space, \emph{the worldvolume}, with coordinates $\sigma^a$, and equipped with the reference thermal vector $\Kref^a$. The physical spacetime coordinates $X^\mu$ are viewed as maps from the worldvolume to the target space, $X^\mu(\sigma)$. 
The rigid worldvolume reference thermal vector pushes forward to the physical thermal vector $\Kref^\mu = \Kref^a \partial_a X^\mu$ in spacetime. The latter becomes dynamical through the push forward. At the same time the spacetime metric $g_{\mu\nu}$ (which we recall is non-dynamical) pulls back to give a worldvolume metric $\gref_{ab} = g_{\mu\nu} \,\partial_a X^\mu\, \partial_b X^\nu$. Spacetime diffeomorphisms operate as translations of the sigma model fields $X^\mu \mapsto X^\mu + \xi^\mu$.

\begin{figure}[ht!]
\begin{center}
\includegraphics[width=0.8\textwidth]{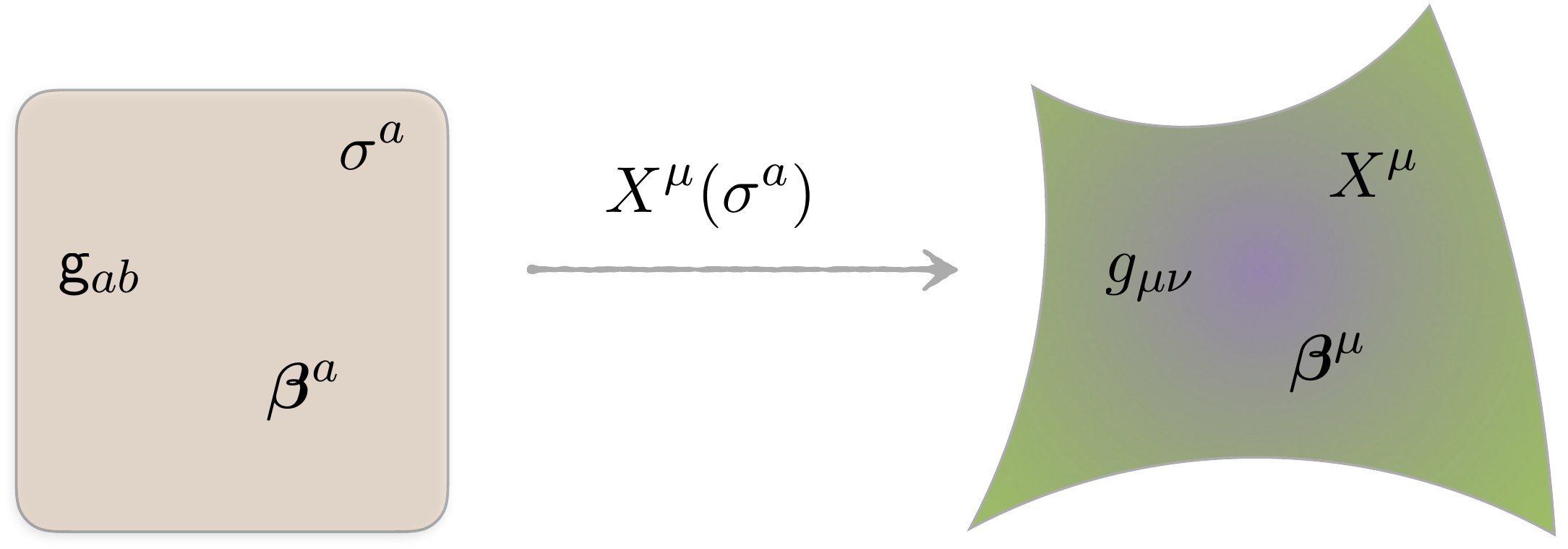}
\caption{Illustration of the data for hydrodynamic sigma models as described in \cite{Haehl:2015pja}. The physical degrees of freedom are captured in the target space maps $X^\mu(\sigma)$ which are the hydrodynamic pion fields. The worldvolume geometry is equipped with a reference thermal vector field $\Kref^a$, which pushes forward to the physical thermal vector in spacetime, while the spacetime metric  pulls back to the worldvolume metric $\gref_{ab}$.}
\label{fig:sigmamap1}
\end{center}
\end{figure}

Demanding worldvolume reparameterization invariance under such transformations we end up with the correct variational principle to single out spacetime energy momentum conservation as the dynamical equations of motion, as explained in \cite{Haehl:2015pja} (extensions to flavour charges can be found in the aforementioned reference).\footnote{ As noted in \cite{Haehl:2015pja} the Lagrangian fluid variables utilized in \cite{Dubovsky:2011sj} which are predominantly employed in the analysis of \cite{Crossley:2015evo,Glorioso:2017fpd} can be recovered by gauge fixing the thermal vector. In our experience there are some disadvantages of working with this set of variables. Not only is one sacrificing manifest covariance, but also there are some accidental symmetries (such as the volume preserving diffeomorphisms of \cite{Dubovsky:2011sj}). Furthermore, these variables are ill-adapted to circumstances where the entropy current is non-trivial, and hence we chose to move to a more covariant formalism that also adapts naturally to equilibrium analysis in \cite{Haehl:2015pja} and subsequent works (see also \cite{Haehl:2017zac}).\label{fn:thermalvec}} A worldvolume effective action for the sigma model degrees of freedom described above gives rise to a class of Landau-Ginzburg sigma models which capture a fraction of adiabatic transport in any relativistic fluid. Such sigma models capture $2$ of the $8$  admissible physical classes of transport (Classes $\PS$ and $\LS$ in the classification of \cite{Haehl:2015pja}). In order to obtain the remainder, including the dissipative Class D, we need additional structures. This is where the thermal equivariance enters the discussion.

Let us start by elaborating on the sigma model fields $X^\mu(\sigma)$. These are the classical variables encoding the information of how the fluid is behaving locally, since the physical data contained in the thermal vector is entirely captured in them. To an extent, we may simply view the reference vector $\Kref^a $ as a means to remove the local inhomogeneities (as can be seen by the gauge fixing alluded to in footnote \ref{fn:thermalvec}; see also \cite{Hongo:2016mqm}). This structure is illustrated in Fig.~\ref{fig:sigmamap1}.

As noted at the outset these classical fields should be subject to thermal/quantum fluctuations from the dissipative effects inherent in the fluid. Invoking the microscopic Schwinger-Keldysh construction, we intuit that the hydrodynamic degrees of freedom are Goldstone modes for spontaneously broken off-diagonal diffeomorphism (and flavour) present in the microscopic description \cite{Haehl:2015pja}. This philosophy was first made clear in \cite{Nickel:2010pr} and has been used in other attempts to construct hydrodynamic actions \cite{Kovtun:2014hpa}. 

As a useful intuition building exercise, consider a probe particle immersed in the fluid. Such a particle, buffeted by the fluctuations in the fluid, will undergo stochastic Brownian motion. In addition to its classical position we should also keep track of its fluctuations. In the standard discussion of the Langevin effective action one therefore introduces the classical position $X$ and the quantum/stochastic/fluctuation field $\tx$. The dynamics of the Brownian particle can then be described by a worldline BRST symmetric action \cite{Martin:1973zz,Parisi:1982ud}, which as explained in \cite{Haehl:2015foa,Haehl:2016uah} is the simplest example of the thermal equivariant sigma model. 
This led us to describe the dynamics of \emph{Brownian branes}, which are  brane like objects of various codimension which, when immersed in the fluid, undergo generalized Brownian motion. Hydrodynamics then is the theory of a space-filling Brownian brane (while Langevin dynamics corresponds to ${\sf B}0$-brane dynamics).

The proposal then is to invoke the underlying Schwinger-Keldysh intuition and view the target space maps $X^\mu(\sigma)$, the  pion fields of the sigma model, as  vector Goldstone modes arising from broken difference diffeomorphisms of the doubled construction. It is worth noting that such a description is necessary; the structural part of hydrodynamics is `universal', since it is agnostic of the microscopic constituents of the quantum system. The symmetry breaking pattern should reflect this fact. The details of the quantum system matter in determining the actual values of the hydrodynamic transport data (the analog of the pion coupling constants in the chiral Lagrangian). This perspective is also clear from the fluid/gravity correspondence (see \cite{Bhattacharyya:2008jc,Hubeny:2011hd}).  The fluctuation fields will then arise from the modes that couple to average diffeomorphisms in the doubled construction.

\subsection{Symmetries of the hydrodynamic sigma models}
\label{sec:symhydro}

Let us now take stock of the symmetries inherent in the thermal Schwinger-Keldysh construction, encapsulated within the notion of thermal equivariance as described in \S\ref{sec:thermaleq}, and upgrade the hydrodynamic Goldstone dynamics of \S\ref{sec:pions} to be cognizant of them. We have seen that the symmetries arising from the microscopic picture, viz., $\UT$ gauge invariance, together with CPT and ghost number conservation are easily encoded in superspace. 

This entails that we should first upgrade the worldvolume to superspace parameterized by $z^I = \{\sigma^a, \theta, \thb\}$. The thermal vector will be uplifted to a thermal super-vector $\SF{\Kref}^I$. Since we have worldvolume diffeomorphisms that upgrade themselves to superdiffeomorphisms, we will exploit some of the freedom to gauge fix components of the reference super-thermal vector as indicated in \eqref{eq:betagauge}.\footnote{ We will examine worldvolume symmetries more precisely  in Appendix~\ref{sec:gaugesdiff}, but note that the worldvolume superdiffeomorphisms we allow are simply $z^I \mapsto z^I + f^I(\sigma^a)$.}

\begin{figure}[ht!]
\begin{center}
\includegraphics[width=0.8\textwidth]{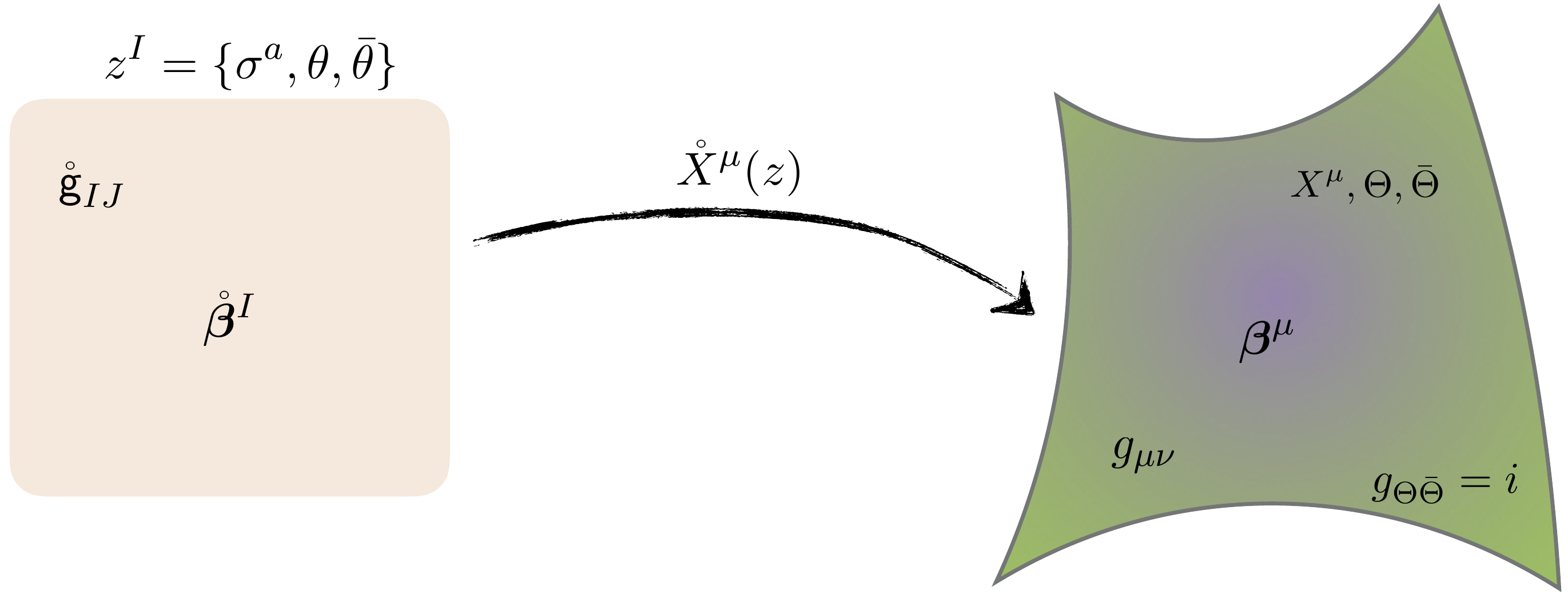}
\caption{Illustration of the data for hydrodynamic sigma models. The physical degrees of freedom are captured in the target space maps $\SF{X}^\mu(\sigma,\theta,\thb)$, along with the gauge condition which aligns the Grassmann coordinates in target space with their worldvolume counterparts $\Theta = \theta$ and ${\bar \Theta} =\thb$.  The worldvolume geometry is equipped with a reference super-vector field $\SF{\Kref}^I$, which pushes forward to the physical thermal vector in spacetime, while the spacetime metric $g_{\mu\nu}$ (with $g_{\Theta {\bar\Theta}} = i$)  pulls back to the worldvolume metric $\SF{\gref}_{IJ}$. The pullbacks and pushfowards are $\UT$ gauge covariant, since the worldvolume dynamics is constrained by this symmetry.}
\label{fig:sigmamap2}
\end{center}
\end{figure}

Furthermore, we realize that the target space maps which are sigma model fields should be upgraded to superfields following \eqref{eq:xsfld}: $X^\mu(\sigma) \mapsto \SF{X}^\mu(z)$, viz., 
\begin{equation}
\SF{X}^\mu = X^\mu + \theta \, \xpsib^\mu + \thb\, \xpsi^\mu + \thb\theta \left(\tx^\mu -\Gamma^\mu_{\rho \sigma} \,\xpsib^\rho \, \xpsi^\sigma\right)
\label{eq:xsf0}
\end{equation}	
The contribution from target space connection to the top component of the superfield (where the fluctuation field $\tx^\mu$ resides) can be understood from covariance of the pullback map and is explained for completeness in Appendix~\ref{sec:gaugesdiff}.

 However, not only do the bosonic hydrodynamic pions get upgraded to a superfield, but we also should obtain the spacetime Grassmann odd partners, leading to a spacetime triplet of superfields:\footnote{  Notational conventions for indices is summarized in footnote \ref{fn:indexnote}.}
\begin{equation}
\SF{X}^{\uA}(z) =  \{\SF{X}^\mu(z), \SF{\Theta}(z),\SF{\bar{\Theta}}(z)\} \,.
\label{eq:}
\end{equation}	
Note that this structure is enforced by the way we wish to implement the Schwinger-Keldysh construction. Even in the physical spacetime we have to allow for the superspace structure, since after all it is there that our quantum system resides (and it is the quantum operators that get uplifted to super-operators). We illustrate the superspace upgrade of the hydrodynamic pion fields, which we henceforth work with, in Fig.~\ref{fig:sigmamap2}.

The physical spacetime is equipped with a background metric $g_{\uA\uB}$ which being a background source, we are free to pick at will. We will use this freedom to demand 
\begin{equation}
g_{\mu\Theta} = g_{\mu \bar{\Theta}} =0 \,, \qquad g_{\Theta \bar{\Theta}} = -g_{\bar{\Theta} \Theta} = i \,.
\label{eq:gsptgauge}
\end{equation}	
We only turn on bottom components for the spacetime metric (thus enabling us to dispense with superfield notational contrivances) which will suffice for the rest of the discussion. As usual with sigma models we choose the target space data $g_{\uA \uB}(X^{\uC})$ and its metric compatible connection first and then upgrade the resulting expressions to functions on the worldvolume with the $z^I$ dependence induced from the embedding $\SF{X}^{\uA}(z)$.

The symmetries alluded to above, viz.,  superdiffeomorphisms, CPT, and ghost number symmetries act as usual on these. In addition the action of $\UT$ is given as in \eqref{eq:betabrkX}; for instance
\begin{equation}
(\LamS,\SF{X}^\mu)_\Kref = \LamS \, \SF{\Kref}^I \partial_I \SF{X}^\mu\,, 
 \label{eq:xut}
\end{equation}
and similarly for $\{\SF{\Theta}(z),\SF{\bar{\Theta}}(z)\}$.  We will refer to this action as the action of $\UT$ on fundamental representation (or $0$-adjoints).\footnote{ Representations of the $\UT$ thermal diffeomorphism symmetry are worked out in Appendix~\ref{sec:formal}.}

On the worldvolume, the $\UT$ gauge symmetry requires that we have in addition to the reference super-vector (which mainly picks out the reference frame) the $\As_I$ super-form superfield. Since $\{\SF{X}^\mu, \SF{\Theta}, \SF{\bar{\Theta}} \}$ carry non-trivial $\UT$ charges, gauge invariance on the worldvolume requires that we work with suitable gauge covariant objects. To this end, the pullback map onto the worldvolume will be implemented in a $\UT$ covariant manner. As a result the worldvolume metric  $\gref_{ab}$ gets uplifted to a superfield  $\SF{\gref}_{IJ}$
\begin{align}
\SF{\gref}_{IJ}(z) = g_{\mu\nu}(\SF{X}(z))\,\Dut_I \SF{X}^\mu\,\Dut_J \SF{X}^\nu  + g_{\Theta \bar{\Theta}} \left( 
\Dut_I \SF{\Theta} \, \Dut_J \, \SF{\bar{\Theta}}  - \Dut_I \SF{\bar{\Theta}} \, \Dut_J \, \SF{\Theta}\right) \,.
\label{eq:gref}
\end{align}
where we have already incorporated our gauge condition \eqref{eq:gsptgauge} to simplify this expression, and have stuck to DeWitt conventions \cite{DeWitt:1992cy} for super-index contractions as noted in \S\ref{sec:dewitt}.

Our goal will be to construct a topological sigma model governing the dynamics of the fields  $\{\SF{X}^\mu, \SF{\Theta}, \SF{\bar{\Theta}} \}$. With the symmetries at hand, such a theory has been engineered to capture the constraints arising from the Schwinger-Keldysh construction in thermal states. However, the physical fluid dynamical theory is not a topological field theory; fluids have non-trivial dynamics. To get the physical hydrodynamic data we should deform away from the topological limit. This can be easily achieved by de-aligning the sources for the left and right fields. This implies turning on the difference metric to  obtain the physical the energy-momentum tensor.

We obtain the energy-momentum tensor on the world-volume and then push it forward to the physical spacetime, and so will turn on a  difference source $\source{h_{IJ}}$ on the world-volume, i.e.,
\begin{align}
\SF{\gref}_{IJ}(z) \;\rightarrow\; \SF{\gref}_{IJ}(z) + \source{\bar{\theta} \,\theta\; h_{IJ}(\sigma)}\,.
\label{eq:grefh}
\end{align}
Given the worldvolume Lagrangian, varying it with respect to the source deformation $\source{\shref_{IJ}}$ will give us the (worldvolume) fluid dynamical stress tensor $\TEMref^{IJ}$. The dynamics for the fields will be obtained by variation with respect to the fields $\SF{X}^\mu$; for the classical field $X^\mu$, the dynamics is obtained by varying the fluctuation field $\tilde{X}^\mu$ and will end up being the conservation equation for the stress-tensor  pushed-forward to the physical target space, $T^{\mu\nu}$.

The reader may be wondering what about the spacetime Grassmann fields, which were introduced to incorporate the Schwinger-Keldysh superspace structure directly in the physical spacetime. However, here target space symmetries come to our rescue. What used to be ordinary diffeomorphisms in spacetime, now get upgraded to target space superdiffeomorphisms that act on  $\{\SF{X}^\mu,\SF{\Theta},\SF{\bar{\Theta}} \}$. Furthermore, a fluid dynamical effective field theory is required to respect this spacetime symmetry; fluids cannot have potentials in physical spacetime. Consequently, this superdiffeomorphism symmetry can be exploited to fix a form of super-static gauge. We gauge fix:
\begin{equation}
\SF{\Theta}=\theta\,, \qquad \SF{\bar{\Theta}}=\thetab \,,
\label{eq:superstatic}
\end{equation}	
to simplify our discussion. As a consequence, in many formulae we will be sloppy about indicating the full target super-tensor structure, so often the reader will encounter isolated target spacetime components (indexed by lowercase Greek).

Let us take stock, now that we have assembled all the ingredients. The data for the hydrodynamic effective field theories is captured by the following:
\emph{
\begin{itemize}
\item A space filling Brownian-brane with intrinsic coordinates $z^I = \{\sigma^a, \theta, \thb\}$ on the worldvolume.
\item A $\UT$ gauge super-multiplet captured by $\As_I$, and a reference super-vector $\SF{\Kref}^I$, which has been partially gauge fixed to have only its $\Kref^a$ component non-zero, cf., \eqref{eq:betagauge}.
\item Target space superfields $\SF{X}^{\uA} = \{\SF{X}^\mu,\SF{\Theta},\SF{\bar{\Theta}} \}$, which are the dynamical degrees of freedom in the theory and corresponding sources for conserved currents. For neutral fluids we have a spacetime super-metric, which has been gauge fixed to satisfy \eqref{eq:gsptgauge}. 
\item Target spacetime superdiffeomorphisms are exploited to set $\SF{\Theta} = \theta$ and $\SF{\bar{\Theta}} = \thb$, leaving only $\SF{X}^\mu$ as the physical degrees of freedom. They transform as in \eqref{eq:xut} under the worldvolume $\UT$ gauge symmetry. 
\end{itemize} }
We have summarized this information after taking various gauge fixings into account in a tabular form in Table~\ref{tab:multiplets}.

\begin{table}[th!]
\centering
\begin{tabular}{|| c || c | c | c ||c|| c||}
\hline\hline
  {\shadeR{ghost}} & {\shadeB{Faddeev-Popov}}& {\shadeB{Vafa-Witten ghost}} & {\shadeB{Vector}} & {\shadeB{Position}} \\
  {\shadeR{charge}} & {\shadeB{ghost triplet}}  & {\shadeB{of ghost quintet}} & {\shadeB{quartet}} & {\shadeB{multiplet}} \\
  \hline
  {\shadeR{2}} &          & $\qquad\phiT \qquad\qquad$ &  &  \\
  {\shadeR{1}} & $\GT\qquad$                          & $\qquad\etaT$      & $\lambda_a$ & $X_\psi^\mu$ \\
  \hline
  {\shadeR{0}} & \shadeR{$\qquad \BT$} & \shadeR{$\phizT\qquad$}      &
  \shadeR{ $\quad\mathscr{A}_a \qquad \sAt{a}\quad$} & \shadeR{$X^\mu \qquad \tx^\mu$} \\
  \hline
  {\shadeR{-1}}& $\GbT\qquad$               & $\qquad\bar\etaT$& $\bar{\lambda}_a$   & $X_{\psib}^\mu$    \\
  {\shadeR{-2}}&      & $\phibT \qquad$ &  &\\
\hline\hline
\end{tabular}
\caption{The set of basic fields in the $\mathcal{N}_T = 2$ superalgebra along with their respective ghost number assignments, in terms of which the hydrodynamic effective action is written.}
\label{tab:multiplets}
\end{table}

In what follows we will explain how to use this data, the target space and worldvolume symmetries (including target space CPT) to construct hydrodynamic effective field theories as we envisaged in \cite{Haehl:2015uoc}. We will carry out this exercise first somewhat abstractly, indulging in  superspace variational calculus, to extract some general lessons. We then will illustrate this with examples at low order in the gradient expansion, deriving explicit actions involving the classical and fluctuation fields. The last stage of our discussion will be to make contact with the eightfold classification of \cite{Haehl:2015pja}.

In order to keep the presentation reasonable, and to write down formulae in a succinct manner, we have relegated some of the details on how the various multiplets appearing in Table~\ref{tab:multiplets} are constructed to Appendices. The reader interested in details of how the multiplets are organized is invited to consult the following:
\begin{itemize}
\item Appendix~\ref{sec:Adjoint} for the general structure of the adjoint multiplet.
\item Appendix~\ref{sec:GaugeMult} for the general structure of the gauge multiplet.
\item Appendix~\ref{sec:PositionMult} for the multiplet containing the hydrodynamic pion fields.
\item Finally, Appendix~\ref{sec:MetCurvMult} provides details on the construction of the worldvolume metric which plays an important role in the construction of the actions.
\end{itemize}

Readers analyzing these appendices are advised to note that we first develop the structure of the multiplet on a flat worldvolume; we do not endow the worldvolume with an intrinsic metric. For this purpose it suffices to consider the $\UT$ covariant derivative $\Dut$ introduced in \eqref{eq:covDI}. Of course in the physical theory, we want to work with the pullback metric $\SF{\gref}_{IJ}$ and its associated  covariant derivative $\Dwv$. This turns out to be a bit more involved and we describe in Appendix~\ref{sec:wvconnect} how one might go about constructing a worldvolume connection, after describing the symmetries and gauge fixing constraints we impose on our construction in Appendix~\ref{sec:gaugesdiff}.

\subsection{Comments on $\UT$ gauge dynamics} 
\label{sec:caveat}

Before proceeding however we should make one important disclaimer. A complete theory would also involve us giving a prescription for $\UT$ dynamics. While we believe this is possible, and is captured by a topological $BF$-type theory, we have not yet managed to construct all the  machinery necessary to give a satisfactory presentation here. We will therefore ignore the $\UT$ dynamics, for the most part, and effectively treat $\As$ as a background gauge field. 

Furthermore, we will be so bold as to assume that  it is consistent to give thermal expectation values to all the fields of the $\UT$ gauge multiplet such that 
\begin{equation}
\langle \SF{\Ascr}_a \rangle = 0  \,, \qquad 
  \langle \SF{\Ascr}_\thb \rangle  = 0 \,, \qquad  \langle \SF{\Ascr}_\theta \rangle = \thb \, (-i)\,.
 \label{eq:Azero}
\end{equation}
after variations on the worldvolume theory. This amounts to having a non-zero flux for the super-field strength, 
\begin{equation}
\langle \mathscr{F}_{\thb\theta} \rangle = -i
\label{eq:Fththbvev}
\end{equation}	
 We refer to this limit as the MMO limit after Mallick-Moshe-Orland \cite{Mallick:2010su} and will carry out explicit computations in this setting in \S\ref{sec:mmo}. We therefore define the background gauge configuration:
\begin{equation}
\textbf{MMO limit:}  \qquad \boxed{ \SF{\Ascr}_a  = 0  \,, \qquad 
   \SF{\Ascr}_\thb   = 0 \,, \qquad   \SF{\Ascr}_\theta  = \thb \, (-i)\,.}
 \label{eq:Azero}
\end{equation}

We have argued previously \cite{Haehl:2015uoc,Haehl:2016uah} that the reason behind this component of the field strength acquiring an imaginary vev has to do with spontaneous CPT symmetry breaking in systems with dissipation. This spontaneous breaking of CPT underlies the origin of the macroscopic arrow of time, codified into the second law. This picture is inspired by the discussions of Mallick et.al.,  \cite{Mallick:2010su} and especially Gaspard \cite{Gaspard:2012la,Gaspard:2013vl}, who derive the non-equilibrium Jarzynski relation \cite{Jarzynski:1997aa} and the associated work statistics of Crooks \cite{Crooks:1999fk}, using this strategy. We leave it to future work to show that such a mechanism exists.

It is actually not hard to argue that an appropriate $BF$ theory exists, for our construction closely hews to the logic of $\mathcal{N}_\smallT =2$ balanced topological field theories discussed in \cite{Cordes:1994sd,Blau:1991bn,Dijkgraaf:1996tz,Blau:1996bx}. The prototype example, which was the inspiration for much of our work was the topologically twisted version of $\mathcal{N}= 4$ SYM  constructed in \cite{Vafa:1994tf}. The gauge dynamics we want to write down is the generalization of their analysis to thermal $\UT$ gauge symmetry (which is not complicated), and additionally extend it to arbitrary dimensions. The latter is necessary for us, since we want to describe the nature of fluids in any number of spacetime dimensions. While the $BF$ theory does exist with the requisite $\mathcal{N}_\smallT$ symmetries in arbitrary dimensions, the structure of the $\SF{B}$-multiplet changes owing to the fact that it has to capture a spacetime codimension-2 form. It should be possible to work this out in detail (in fact using some of the existing technology, see e.g., \cite{Blau:1996bx}) and demonstrate the aforementioned claim. 

Due to this assumption, we will frequently drop terms in calculations, which will eventually give rise  to expressions proportional to one of the components of the $\UT$ gauge field that we set to zero at the end of the calculation. There is no obstruction to computing all these terms, but they proliferate quickly and obscure some of the physical aspects of the construction. We do keep track of those terms involving Grassmann-odd tensor fields (or components thereof) which play a role in the physical interpretation. 

Once we gauge fix the $\UT$ gauge field as in \eqref{eq:Azero} the $\mathcal{N}_\smallT =2$ thermal equivariant algebra simplifies drastically. One can show (see Appendix~\ref{sec:EqReview}) that the topological symmetries can be captured by two supercharges (these are the Cartan charges in the equivariant construction) $\Q, \Qb$ which are nilpotent, but anti-commute to a thermal translation:
\begin{equation}
\Q^2  = \Qb^2 =0 \,, \qquad \{\Q,\Qb\} = i\, \lieD_\Kref 
\label{eq:mmoQ}
\end{equation}	
Since the Lie derivative acts on scalars as $\lieD_\Kref =  \Kref^a \partial_a$, once we align $\Kref^a = \beta\, \delta^a_0$ we see that $\{\Q, \Qb\} \sim i\, \beta \partial_{\sigma_0}$. This gauge fixed algebra appears to be  well known in the statistical mechanics literature, and is used for example in \cite{Mallick:2010su} in their derivation of the Jarzynski relation. In this form, this algebra also is the one written down in \cite{Crossley:2015evo} in the high-temperature (or as they put it, classical) limit.\footnote{ In interest of completeness, let us note that \cite{Crossley:2015evo} posit that there is a single nilpotent supercharge $\delta$ implementing the Schwinger-Keldysh alignment condition, another emergent nilpotent supercharge $\bar \delta$ arising from the KMS condition, satisfying altogether:
$$ \delta^2 = \bar{\delta}^2  =0 \,, \qquad \{\delta, \bar{\delta}\} = 2\, \tanh\left(\frac{i}{2}\, \Kref\, \partial_t \right) \approx\, i\, \Kref \partial_t \,, $$
where the last approximation holds in the high temperature limit.}  We will refer to the limit captured by 
\eqref{eq:Azero}, and the resulting superalgebra \eqref{eq:mmoQ}, as the MMO limit henceforth. In this limit, we will recover the constructions described in \cite{Haehl:2015uoc,Crossley:2015evo} (see also \cite{Glorioso:2017fpd,Jensen:2017kzi}). 

Given this, the reader may ask, why do we even care if the $\UT$ symmetry is gauged, since after all we are effectively treating it as a global symmetry, and are picking a suitable background field for our analysis in \eqref{eq:Azero}. There is an important physical distinction which drives our considerations as we explained in \cite{Haehl:2018uqv}, which we elaborate here.

If we stick to a background $\UT$ gauge field $\As_I$, restricted to ensure that the field strength component $\vev{\Fs_{\theta\thb}|} = -i$ as in \eqref{eq:Azero}, {\sf CPT} is broken explicitly in the theory. We will  a-priori have biased the theory towards picking out an arrow of time which leads to entropy production in the fluid. On physical grounds however, we expect {\sf CPT} breaking to emerge dynamically rather than being imposed by fiat from the beginning. This entails that we allow for a framework where dynamics picks out saddle points where $\vev{\Fs_{\theta\thb}|} \neq -i$. This clearly requires for $\As_I$ to be a dynamical field in the problem.\footnote{ In this discussion and for the rest of the paper we implicitly assume that the scale at which the 
$\UT$ gauge symmetry emerges is commensurate with the scale at which {\sf CPT} is broken. We offer some further thoughts on the relative hierarchy of scales between these two phenomena in \S\ref{sec:discussion}. \label{fn:utcpt}}  As we shall see below, the $\UT$ super-Bianchi identity, which is independent of the particulars of the gauge dynamics ensures that the corresponding current is conserved. Absent any fundamental obstructions to gauging the thermal diffeomorphisms and making the superspace gauge fields dynamical, it behooves us to consider the possibility for them to be so.

\newpage
\section{Dissipative effective action and entropy inflow mechanism}
\label{sec:topsigma}

We are now in a position to make our  central claim and write explicit hydrodynamic effective actions. We posit the following:\vspace{.2cm}

\noindent
{\it
All of hydrodynamic transport consistent with the second law is described by effective actions of the form
\begin{equation}\label{eq:fullLag}
\boxed{
S_\text{wv} = \int d^d \sigma \; \Lag_\text{wv} \,,\qquad 
\Lag_\text{wv} = \int\,d\theta\, d\thetab\; \frac{\sqrt{-\SF{\gref}}}{\zsf} \ \SF{\Lagref}[\SF{\gref}_{IJ},\Kref^a,\Dwv_I, \gpsib_{IJ}, \gpsi_{IJ}] \,,
}
\end{equation}
provided the following symmetries are respected:
\begin{enumerate}
\item Invariance under $\UT$ thermal diffeomorphisms and ${\cal N}_T=2$ BRST symmetry.
 \item Physical spacetime superdiffeomorphisms $X^{\uA} \mapsto X^{\uA}\, + \,  \widehat{H}^{\uA}(X)$, with $\widehat{H}^{\uA}(X)$ being a target super-vector.
 \item Worldvolume diffeomorphisms $z^I \mapsto z^I + f^I(\sigma^a)$, with $f^I(\sigma^a)$ being a worldvolume super-vector.
 \item Anti-linear {\sf CPT} involution.
 \item Ghost number conservation.
\end{enumerate}
}
\smallskip
\noindent 
In addition we require that the imaginary part of this action is constrained to be positive $\text{Im}(S_\text{wv}) \geq 0$ (see Appendix A of \cite{Glorioso:2016gsa} for a clear discussion).

To rephrase this statement: {\it any} Lagrangian that is allowed by the ${\cal N}_T = 2$ symmetry of $\UT$ covariant Schwinger-Keldysh formalism is allowed and is consistent with the second law, and these actions are complete vis a vis hydrodynamic transport at all orders in the derivative expansion.\footnote{ This holds in the usual perturbation or effective field theory sense, i.e., we are not claiming to have a non-perturbative theory.} 

Some comments and explanations are in order: 
\begin{itemize}
\item In writing the action we introduced a derivative operator $\Dwv_I$ which upgrades the $\UT$ covariant derivative introduced in \eqref{eq:covDI} to allow for construction of superdiffeomorphism covariant tensors. We will only require that this derivative operator be such that: $(i)$ we can integrate by parts, and $(ii)$ $\Dwv_I \SF{\phi} = \Dut_I \SF{\phi}$ (i.e., the action on scalars agrees with the $\UT$ covariant derivative); we make no further assumption about the connection that specifies it. Importantly, it will not be required to be metric compatible, which upends some of the  standard intuition. In Appendix~\ref{sec:wvconnect} we discuss what classes of connections are compatible with this assumption, and construct and explicit example that we work with for explicit computations in \S\ref{sec:mmo} and \S\ref{sec:mmoEx}. Our choice of connection for explicit computations is summarized in \S\ref{sec:wvconn}.\footnote{ In actual implementation we also require that the commutators of the two derivatives on scalars agrees and closed on $\UT$ field strengths, which can be defined in the absence of a metric connection.}
\item The measure for superspace integration involves the field 
\begin{equation}
\zsf = 1+ \SF{\Kref}^I\, \As_I
\label{eq:zdef}
\end{equation}	
which explicitly depends on the gauge field. Its origins can be traced back to the fact that our pullback maps are implemented using the gauge covariant derivative \eqref{eq:gref}. Given the transformation of the hydrodynamic pions \eqref{eq:xut} it is easy to check that
\begin{equation}
\Dwv_I \SF{X}^\mu = \Dut_I \SF{X}^\mu = \partial_I \SF{X}^\mu + (\As_I, \SF{X}^\mu)_\Kref =   (-)^J\, 
\left(\delta_I^{\ J} + \As_I\, \SF{\Kref}^J\right)\partial_J \SF{X}^\mu \,.
\label{eq:Dparz}
\end{equation}	
Consequently, we end up with factors of $\zsf$ when taking traces, determinants, etc., as noted in  \cite{Haehl:2015uoc}: 
\begin{equation}
 d^dX \,d\Theta\, d\bar{\Theta} \; \sqrt{-\SF{g}} = d^d\sigma\, d\theta\, d\thb \; \sqrt{-\SF{\gref}} \, \frac{\det [\partial_I \SF{X}^{\uA}]}{\det [\Dut_I \SF{X}^{\uA}]}  = d^d\sigma\, d\theta\, d\thb \; \frac{ \sqrt{-\SF{\gref}}}{\SF{{\bf z}}} \,.
\end{equation}
\item We have also introduced the fields 
\begin{equation}
\gpsib_{IJ} \equiv \Dut_\theta \SF{\gref}_{IJ}\,,\qquad \gpsi_{IJ} \equiv \Dut_\thb \SF{\gref}_{IJ} \,.
\end{equation}
These turn out to be covariant 2-tensors consistent with all our symmetries (as explained in Appendix \ref{sec:wvsym}).
\item Finally, note that symmetries 1, 2, and 3 are manifest in our formulation, while 4 and 5 can be trivially implemented.  
\end{itemize}

We now describe how the effective actions of the form \eqref{eq:fullLag} maintain consistency with the second law, give rise to the correct dynamical equations of motion, and note some additional salient features. Most of these are summarized in the companion paper \cite{Haehl:2018uqv}, and non-superspace versions of some statements have already been noted earlier in \cite{Haehl:2015uoc}.

\subsection{Super-adiabaticity from $\UT$ Bianchi identity}
\label{sec:einflow}

Let us first see how the action maintains consistency with the second law. To this end consider the  Ward identity associated with a $\UT$ gauge transformation by a parameter $\SF{\Lambda}$. Define the 
energy-momentum and free-energy  Noether super-currents\footnote{ Note that variation of with respect to $\SF{\gref}_{IJ}$ inside $\{\gpsi_{IJ},\gpsib_{IJ}\}$ is well defined since integration by parts of the Grassmann odd derivatives in $\{\gpsi_{IJ} = \Dut_\thb \SF{\gref}_{IJ},\gpsib_{IJ} = \Dut_\theta \SF{\gref}_{IJ}\}$ is allowed in this case. See Appendix \ref{sec:wvsym}.}
\begin{equation}
\begin{split}
\SF{\TEMref}_{}^{IJ} 
&\equiv 
	\frac{2}{\sqrt{-\SF{\gref}}} \, \frac{\delta}{\delta \SF{\gref}_{IJ}}
	\prn{\sqrt{-\SF{\gref}}\ \SF{\Lagref}}\,,
\\ 
\SF{\Nref}_{}^I 
&\equiv
	 -\frac{\zsf}{\sqrt{-\SF{\gref}}} \, \frac{\delta }{\delta \SF{\Ascr}_{I}}
	 \prn{\frac{\sqrt{-\SF{\gref}}}{\zsf}\ \SF{\Lagref} } \,. 
\end{split}
\label{eq:TNwvdef}
\end{equation}	
The expression for the energy-momentum tensor should be familiar (modulo our upgrade to a super-tensor). The Noether current $\SF{\Nref}^I$ is related to the free-energy super-current $\SF{\Gref}_{}^I$ up to a factor of the temperature:
\begin{equation}
\SF{\Nref}_{}^I = -\frac{\SF{\Gref}_{}^I}{\SF{T}} \,,
\label{eq:NG}
\end{equation}	
generalizing the  construction of free-energy currents in hydrodynamic sigma models \cite{Haehl:2015pja}.

Consider a $\UT$ transformation by $\SF{\Lambda}$. The worldvolume metric inherits the $\UT$ transformation from \eqref{eq:xut} while the gauge field $\As_I$ transforms inhomogeneously as in \eqref{eq:aut}. The reference thermal super-vector $\SF{\Kref}^I$ does not transform. All told, we find that the $\UT$ gauge transformation acts as follows on the action:\footnote{ In \eqref{eq:utcalcfull}  the first term inside the braces has a sign $ \, (-)^{I+J+IJ} $ from super-index contractions, which we have suppressed, while the second term is free of any signs. We will suppress signs from index contractions in all of the present section.\label{fn:signsSA} }
\begin{equation}\label{eq:utcalcfull}
\begin{split}
\delta_{\SF{\Lambda}}  \, S_\text{wv} 
&=
	 \int d^d \sigma \int\,d\theta\, d\thetab\, \frac{\sqrt{-\SF{\gref}}}{\zsf}  
	 \,\left\{\frac{1}{2}  \SF{\TEMref}_{}^{IJ}  \, (\SF{\Lambda}, \SF{\gref}_{IJ})_\Kref  + \Dut_I (\SF{\Lambda})\,  \SF{\Nref}_{}^I   \right\}\\
&=
	 \int d^d \sigma \int\,d\theta\, d\thetab\, \frac{\sqrt{-\SF{\gref}}}{\zsf} \; \SF{\Lambda} \,\left\{ \frac{1}{2} \, \SF{\TEMref}_{}^{IJ}  \, \lieD_\Kref \SF{\gref}_{IJ}  -  \Dwv_I  \SF{\Nref}_{}^I   \right\} \,.
\end{split}
\end{equation}
In writing the second line, we invoked the $\UT$ transformation of the pullback metric, inherited from \eqref{eq:xut},  and  performed an integration by parts in superspace.\footnote{ We reiterate that we have not explicitly specified the worldvolume connection. Importantly, it is not the Christoffel connection of $\SF{\gref}_{IJ}$since pullback is performed respecting $\UT$ covariance. We prove the existence of a measure compatible connection in Appendix~\ref{sec:wvconnect}, which along with the fact that the connection does not contribute to derivatives of worldvolume scalars, 
is all we need for \eqref{eq:utcalcfull}.  }

 The quantity inside the curly braces must vanish due to the  $\UT$ symmetry. We shall refer to this Ward identity as the \emph{super-adiabaticity equation}:
\begin{equation}\label{eq:superadiabatic} 
\boxed{
\Dwv_I  \SF{\Nref}_{}^I-\frac{1}{2} \, \SF{\TEMref}_{}^{IJ}  \, \lieD_\Kref \SF{\gref}_{IJ}    =0  \,.
}
\end{equation}
This equation turns out to embody the physics of entropy production, and allows for a clean parameterization of dissipative contributions. The connection is made through the off-shell adiabaticity equation introduced in \cite{Haehl:2015pja}. We review the basics of that analysis and explain how entropy production arises in terms of a superspace inflow. One can equivalently view this directly as a non-equilibrium detailed balance condition analogous to the Jarzynski relation (as partly explained in 
\cite{Haehl:2015uoc}). Before doing so however it will be helpful to also understand the dynamical content of the superspace action \eqref{eq:fullLag}.

\subsection{Fluid dynamical equations of motion}
\label{sec:heom}

In superspace, we simply derive the equations of motion by varying \eqref{eq:fullLag} with respect to the superfield $\SF{X}^\mu$. Since $\SF{\gref}_{IJ}$ is the dynamical worldvolume field (through its $\SF{X}^\mu$-dependence), this means that the variation will depend on the stress tensor. We thus find the following equation of motion: 
\begin{equation} 
\Dwv_I \left( \SF{\TEMref}^{IJ} \, \Dwv_J \SF{X}^\mu \right) = 0 \,.
\label{eq:supereom}
\end{equation}
At face value this seems not quite what we need, since the equation above involves all the super-components of the energy-momentum tensor. In particular, expanding out in components and distributing derivatives, we infer that\footnote{ Note there is overall $(-)^J$ sign coming from the contraction of indices.}
\begin{equation}
\begin{split}
0 &= 
\underbrace{
	\Dwv_a \left( \SF{\TEMref}^{ab} \, \Dwv_b \SF{X}^\mu \right)  
	+\prn{ \Dwv_\theta\SF{\TEMref}^{a \theta} \;   
	+ \Dwv_\thb   \SF{\TEMref}^{a\thb } } \Dwv_a \SF{X}^\mu  
	 - \SF{\TEMref}^{\theta \thb} \left(\Dwv_\theta  \Dwv_\thb-\Dwv_\thb\Dwv_\theta\right)  \SF{X}^\mu 
	}_\text{classical + fluctuations} 
 \\
& \qquad
+ \underbrace{	 
	 \SF{\TEMref}^{a \theta} \, \Dwv_\theta  \Dwv_a \SF{X}^\mu   - \Dwv_a \left( \SF{\TEMref}^{a \theta} \, \Dwv_\theta \SF{X}^\mu \right)  +  \SF{\TEMref}^{\thb b} \, \Dwv_\thb \Dwv_b \SF{X}^\mu  
	   - \Dwv_a \left( \SF{\TEMref}^{a\theta} \, \Dwv_\thb  \SF{X}^\mu \right) 
	 }_\text{ghost bilinears}
 \\
& \qquad
+ \underbrace{	 	     
	   \Dwv_\thb \SF{\TEMref}^{\theta  \thb} \, \Dwv_\theta \SF{X}^\mu  
	 - \Dwv_\theta  \SF{\TEMref}^{\theta \thb} \; \Dwv_\thb  \SF{X}^\mu 
	 }_\text{ghost bilinears} 
\end{split}
\label{eq:seom1}
\end{equation}
The second and third line are indicated as being ghost bilinears since each term is made of tensors carrying non-vanishing ghost number. These we are free to ignore, so the physical equation of motion that 
arises from \eqref{eq:fullLag} simply can be expressed as the projection of the first line onto ordinary space:
\begin{equation}
\begin{split}
\bigg\{
	\Dwv_a \left( \SF{\TEMref}^{ab} \, \Dwv_b \SF{X}^\mu \right)  
	+\prn{ \Dwv_\theta\SF{\TEMref}^{a \theta} \;   
	+ \Dwv_\thb   \SF{\TEMref}^{a\thb } } \Dwv_a \SF{X}^\mu  
	 - \SF{\TEMref}^{\theta \thb} \left(\Dwv_\theta  \Dwv_\thb-\Dwv_\thb\Dwv_\theta\right)  \SF{X}^\mu 
\bigg\}  \big|= 0 + \text{ghosts.}
\end{split}
\label{eq:seom2}
\end{equation}
Setting the ghost bilinears implicit  in \eqref{eq:seom2} to zero, we will end up with equations that involve both the classical field $X^\mu$ as well the fluctuations $\tx^\mu$. The latter are ignored in the classical hydrodynamic equations, but the advantage of having a full effective action, is that the deformation to the equations of motion owing to the statistical (and quantum) fluctuations is made explicit.

The astute reader might wonder why such fluctuation terms are not encountered in the MSR construction for Langevin dynamics \cite{Martin:1973zz} (see the textbook discussion in \cite{ZinnJustin:2002ru}). In that case there is a stochastic noise term, which is assumed to be Gaussian. Consequently, the fluctuations enter simply as a Lagrange multiplier enforcing the physical dynamical equation, which can also be seen from the explicit superspace construction described in \cite{Haehl:2016uah}. The novelty in hydrodynamics is the non-Gaussianity of the noise. There can be (and in general are) non-trivial noise kernels in the system. As a result the fluctuation variable will no longer enter simply as a Lagrange multiplier. 

The formalism we have outlined here has the power to completely encompass such behaviour. This is somewhat hard to see at the abstract level we are describing, so we will indeed exemplify some of these statements with explicit calculations for dissipative hydrodynamics up to second order in gradients in 
\S\ref{sec:mmoEx}. 

While ignoring the ghost bilinears allows us to drop the second two lines of \eqref{eq:seom1}, we still have to understand the contribution from the  super-components of the energy-momentum tensor. We will now argue that they do not contribute to the leading classical dynamics, and in fact should simply provide the fluctuation terms in the equation of motion. Rewriting the commutator of Grassmann odd derivatives, we claim:
\begin{equation}
\begin{split}
\left\{\prn{ \Dwv_\theta\SF{\TEMref}^{a \theta} \;   
	+ \Dwv_\thb   \SF{\TEMref}^{a\thb } } \Dwv_a \SF{X}^\mu  
	 - \SF{\TEMref}^{\theta \thb} \left(2 \, \Dwv_\theta\Dwv_\thb \SF{X}^\mu- (\Fs_{\theta\thb},\SF{X}^\mu)_\Kref  \right)  
	 \right\} \big| & 
	 = \mathcal{O}(\tx^\mu) \,.
\end{split}
\label{eq:seom3f}
\end{equation}
This turns out to be pretty non-trivial to prove, but is explicitly borne out in examples that we have computed (see \S\ref{sec:mmogeneral}). With this understanding, we can then see that  the classical hydrodynamic equations are contained entirely in the first term of \eqref{eq:seom2}, viz., 
\begin{equation}
\Dwv_a \left( \SF{\TEMref}^{ab} \, \Dwv_b \SF{X}^\mu \right)  \big|  = 0 \,.
\label{eq:seom3}
\end{equation}	
To derive the first equation above we used the pullbacks and the fact that 
$ 0 = \Dwv_I  \SF{e}^I_\mu | = \Dwv_a \SF{e}^a_\mu | $ which follows from our choice of the worldvolume connection be measure compatible, cf., \eqref{eq:modrels}. 

Equations \eqref{eq:seom3f} and \eqref{eq:seom3} should be understood as follows: there is no superspace inflow of energy-momentum modifying the classical hydrodynamic equations of motion. At best there are additional fluctuation contributions arising from the Grassmann-odd directions. We view this as a non-trivial consistency check of our formalism's ability  to incorporate the correct dynamics.

This separation makes explicit the idea that we can capture the classical part of the equations of motion by focusing on the ordinary space components alone. The role of the superspace is to bring in the fluctuations (and associated Grassmann odd terms). Given that the superspace directions control the entropy production, it makes sense for them to capture the fluctuation deformations of the dynamical equations as well.

Let us also convince ourselves that the worldvolume equation of motion \eqref{eq:seom3} indeed gives the correct dynamical equations for the hydrodynamic fields after pushing forward the data to the target space. Writing\footnote{ $\SF{e}_{\uA}^{\ I}$ span the dual basis with 
$(-)^I\, \SF{e}_{\uA}^{\ I}\, \Dwv_I \SF{X}^{\uB} = \delta_{\uA}^{\ \uB}$ and 
$(-)^{\uA}\, \Dwv_J \SF{X}^{\uA}\,\SF{e}_{\uA}^{\ I}=  \delta_{J}^{\ I}$, written as usual in DeWitt conventions \cite{DeWitt:1992cy}.} 
$\SF{\TEMref}^{IJ} = \SF{e}_{\uA}^{\ I} \,\SF{e}_{\uB}^{\ J} \, T^{\uA\uB} $ (contraction signs left implicit) we have 
\begin{equation}
\begin{split}
\Dwv_a \left( \SF{\TEMref}^{ab} \, \Dwv_b \SF{X}^\mu \right)  \big|  
&= \Dwv_a \SF{X}^{\uB} \, \nabla_{\uB}\left(\SF{e}^a_{\uC} \;T^{\uC\mu} \right) | 
 \\
& = \nabla_\nu T^{\mu\nu} + \Dwv_a \SF{e}^a_\nu \; T^{\mu\nu} | + \text{ghost bilinears + fluctuations} 
\\
& = \nabla_\mu T^{\mu\nu} + \text{ghost bilinears + fluctuations}
\end{split}
\label{eq:}
\end{equation}	
where the last equality is ensured by the measure compatibility of the worldvolume connection as can be seen directly from \eqref{eq:modrels}. The issues arising from the $\UT$ covariant pullbacks are taken care of by the fact that \eqref{eq:supereom} is a target space vector, and the contracted indices transform homogeneously in both the worldvolume and target space. We will have use for this statement when we try to understand the super-adiabaticity equation below.

The equations of motion we have written down only involve the variations of the superfield $\SF{X}^\mu$. In target space we also have supercoordinates $\SF{\Theta}$ and $\SF{\bar{\Theta}}$, which we have previously gauged fixed to $\SF{\Theta} =\theta$ and $\SF{\bar \Theta}  = \thb$. We could of course vary with respect to these fields (prior to gauge fixing) and get corresponding equations -- these will be related by the underlying BRST supersymmetry to \eqref{eq:supereom}. What is a-priori clear is that these equations will be Grassmann odd and thus proportional to the ghost fields. Since for the most part we will ignore the ghost partners below, we will not have occasion to analyze these equations of motion in detail here. We should however note that a corresponding analysis for Langevin dynamics has been explicitly carried out earlier in section 7 of  \cite{Haehl:2016uah}.

\subsection{Adiabaticity equation, entropy inflow, and the second law}
\label{sec:adiab2nd}

We recall that the conventional axiomatic formulation of hydrodynamics not only demands that the dynamics be governed by conservation equations, but requires that the local form of the second law be upheld on-shell in every fluid configuration. Usually this is taken to mean that there exists an entropy current $J^\mu_S$  with non-negative divergence. It is however efficacious to take the statement of entropy production off-shell, by introducing a Lagrange multiplier field. Exploiting the field-redefinition freedom in hydrodynamics, this can be recast as the statement \cite{Haehl:2015pja}:
\begin{equation}
\nabla_\mu J^\mu_S + \Kref_\mu \nabla_\nu T^{\mu\nu} = \Delta \geq 0 \,,
\label{eq:eprodoff}
\end{equation}	
where $\nabla_\mu$ is the usual target space metric compatible covariant derivative.
Note that  $\Delta$ is the total (off-shell) entropy production. Legendre transforming the entropy current to  free energy Noether current via 
\begin{equation}
N^\mu = J_S^\mu + \Kref_\nu \, T^{\mu\nu}
\label{eq:Ndef}
\end{equation}	
we arrive at the grand canonical form for the off-shell statement of the second law, which demands that
\begin{equation}\label{eq:ordadiab}
\nabla_\mu N^\mu -\frac{1}{2}\, T^{\mu\nu} \lieD_\Kref g_{\mu\nu}  = \Delta \geq 0 \,,
\end{equation}
where $\lieD_\Kref g_{\mu\nu} = 2\, \nabla_{(\mu} \Kref_{\nu)}$.

The special case where transport produces no entropy was given the name \emph{adiabatic fluids}. In this case
an ordinary-space adiabaticity equation holds:
\begin{equation}
\qquad \frac{1}{2} \, T_{\text{(adiabatic)}}^{\mu\nu}\, \lieD_\Kref g_{\mu\nu} = \nabla_\mu N_{\text{(adiabatic)}}^\mu \,.
\end{equation}

The adiabaticity relation \eqref{eq:ordadiab} was initially discussed in the target space as reviewed above, but we can readily pull-back the  relation to the worldvolume. Firstly, we note that 
\begin{equation}
\begin{split}
\Dwv_a \SF{\Nref}^a \big|
&=    
	\nabla_\mu N^\mu   + \text{ghost bilinears + fluctuations}\\
\SF{\TEMref}^{ab} \lieD_\Kref \SF{\gref}_{ab} \big|
&= 
	T^{\mu\nu} \lieD_\Kref g_{\mu\nu}  + \text{ghost bilinears + fluctuations}
\end{split}
\label{eq:}
\end{equation}	
Modulo some extra terms that we will interpret below, we can indeed view \eqref{eq:superadiabatic} as the worldvolume version of the adiabaticity equation (uplifted to superspace). It is interesting that the intricacies involved with the $\UT$ covariant pull-backs essentially cancel out completely after push-forward.  We will now try to extract the some basic lessons by decomposing the super-adiabaticity equation in more familiar terms by separating out the super-tensor components.

\paragraph{Super-adiabaticity equation as \emph{entropy inflow} from superspace:} 
In the manifestly ${\cal N}_T=2$ supersymmetric formalism, the adiabaticity equation is implied by
\eqref{eq:superadiabatic}.   Writing out the latter and separating the contributions from the ordinary and superspace directions, we end up with:
\begin{equation}
\begin{split}
\underbrace{\left(\Dwv_a \SF{\Nref}^a 
-\frac{1}{2} \, \SF{\TEMref}^{ab}\,\lieD_\Kref\, \SF{\gref}_{ab}\right)\bigg| }_{\text{classical + fluctuations}} 
= -  \underbrace{\left( \Dwv_\theta \SF{\Nref}^\theta + \Dwv_\thb \SF{\Nref}^\thb + \SF{\TEMref}^{a\theta}\,\lieD_\Kref\, \SF{\gref}_{a\theta}
+ \SF{\TEMref}^{a\thb}\,\lieD_\Kref\, \SF{\gref}_{a\thb}
+ \SF{\TEMref}^{\theta \thb }\,\lieD_\Kref\, \SF{\gref}_{\theta \thb} \right)\bigg|}_{\text{entropy inflow}}
\end{split}
\label{eq:superadcomp}
\end{equation}
 Firstly, noting that $\{\SF{\TEMref}^{ab}, \SF{\TEMref}^{\theta\thb}, \SF{\Nref}^a\}$ are the only Grassmann-even tensors (i.e., their bottom components are Grassmann even fields) we project \eqref{eq:superadcomp} down to $\theta=\thb=0$, isolating the part that contains these tensors.\footnote{ 
 This can be inferred by looking at the ghost charges or equivalently noting that  $\SF{\gref}_{a \theta}$ and $\SF{\gref}_{a\thb}$ are clearly Grassmann-odd owing to them being proportional to $\Dwv_\theta \SF{X}^\mu$ and $\Dwv_\thb \SF{X}^\mu$, respectively. } 
 It is worth reiterating here that when we project Grassmann even quantities to their bottom component, we will end up with expressions involving the classical fields $X^\mu$,  the fluctuations $\tx^\mu$,  the zero ghost number element of the Vafa-Witten quintet $\Fs_{\thb \theta}| =\phizT$, or  ghost bilinears (either of the form $\xpsib^\mu \xpsi^\nu$, or involving the gauge sector). 

 Therefore, upto contributions from the fluctuation fields (and ghost bilinears), the terms we have isolated on the l.h.s.\ of \eqref{eq:superadcomp} are precisely the combination appearing on the l.h.s.\ of \eqref{eq:ordadiab}. This naturally suggests interpreting the second set of terms indicated as `entropy inflow' as the part that contributes to $\Delta$. Hence, 
\begin{equation}
\boxed{
\Delta = - \left( \Dwv_\theta \SF{\Nref}^\theta + \Dwv_\thb \SF{\Nref}^\thb + \SF{\TEMref}^{a\theta}\,\lieD_\Kref\, \SF{\gref}_{a\theta}
+ \SF{\TEMref}^{a\thb}\,\lieD_\Kref\, \SF{\gref}_{a\thb}
+ \SF{\TEMref}^{\theta \thb }\,\lieD_\Kref\, \SF{\gref}_{\theta \thb} \right)\Big| \,.
}
\label{eq:Delta0}
\end{equation}	
We interpret this equation as saying that the entropy production is controlled by the Grassmann-odd descendants of the free energy and energy-momentum super-tensors. 

The total entropy produced manifests itself in the physical spacetime, although it originates from the superspace components. This is highly reminiscent of the anomaly inflow mechanism, where one realizes a theory with an anomalous symmetry as the boundary dynamics at the edge of a topological (bulk) field theory.  The connection is not coincidental -- the entropy current for charged fluids with mixed flavor-gravitational anomalies is naturally captured by introducing a thermal gauge field whose chemical potential is the temperature \cite{Jensen:2013rga}. This construction provided the rationale behind the introduction of the $\UT$ gauge field in \cite{Haehl:2014zda,Haehl:2015pja}, but as we see here the full justification is for this is provided by working within the aegis of the thermal equivariance formalism.

 To compare with conventional hydrodynamics, we only need to match the classical part, since this is all that is captured in the familiar presentation of the system in terms of constitutive relations.  Dropping the ghost bilinears and setting $\tx^\mu =0$ in the other terms, we see that we can easily match the first set of terms in \eqref{eq:Delta0} with the entropy production equation, provided we identify the ghost (Grassmann-odd) components of the free energy current,  $\SF{\Nref}^\theta$ and $\SF{\Nref}^\thb$, as encoding the total dissipation:
\begin{equation}
  \Delta   = 
  	 -\left( \mathfrak{D}_\theta \Nref^\theta+  \mathfrak{D}_\thb \Nref^\thb 
  	 \right)  + \text{ghost bilinears}\,.
 \label{eq:Delta1}	 
 \end{equation}
We  have dropped the contribution involving  $\TEMref^{\theta \thb }$ since it vanishes with our super-static gauge choice which sets the bottom component  $\SF{\gref}_{\theta\thb} | = i$.\footnote{ We note that 
$\lieD_\Kref \SF{\gref}_{\theta\thb} =  (-)^I \, \SF{\Kref}^I \partial_I \SF{\gref}_{\theta\thb} - \partial_\thb \SF{\Kref}^I\,\SF{\gref}_{\theta I}  +(-)^I  \,\partial_\theta \SF{\Kref}^I \,\SF{\gref}_{I \thb} $. The last two terms vanish because of our gauge choice on the thermal super-vector \eqref{eq:betagauge}. The first term does not obviously vanish (recall that we are working with a connection that is not metric compatible, so have to exercise care in using usual intuition). However, we have a flat  metric  in the superspace directions ($\SF{\gref}_{\theta\thb}|=i$) leading to a simple result: $\lieD_\Kref \SF{\gref}_{\theta\thb} | =0$.}

\paragraph{Positivity of entropy production:} We are now left with understanding why $\Delta \geq 0$, i.e., the second law.  Let us first note that an  abstract argument for entropy production may be given by noting that the super-adiabaticity equation is equivalent to the Jarzynski relation \cite{Jarzynski:1997aa,Jarzynski:1997ab} which provides the general non-equilibrium second law.  We have previously argued in \cite{Haehl:2015uoc} that $\UT$ invariance and breaking of {\sf CPT} symmetry imply Jarzynski's equation by employing the strategy described for Langevin dynamics in \cite{Mallick:2010su}. We won't  undertake the upgrade of this statement to superspace directly.  Instead we will follow an alternate route and make direct connection between the analysis here and the eightfold classification of hydrodynamic transport given in \cite{Haehl:2015pja}. We can then invoke our earlier results to constrain the contributions to $\Delta$ and show that it is indeed non-negative definite. 

Let us first attempt to gain some intuition for the terms in the r.h.s. of \eqref{eq:Delta1}. We will ignore the contribution from $\TEMref^{\theta a}$ and $\TEMref^{\thb a}$ for we have seen that they contribute only ghost bilinear terms. The physical entropy production will be a functional of the macroscopic degrees of freedom $X^\mu$, though we will often keep track of the fluctuation contribution as well (given by terms involving 
$\tx^\mu$). The free energy Noether current appears in the action (to linear order) as 
\begin{equation}
\begin{split}
S_\text{wv} \supset - \int d^d\sigma d\theta d\thb \; \frac{\sqrt{-\SF{\gref}}}{\zsf} \, \SF{\Nref}^I \, \As_I 
&=
	 - \int d^d\sigma \, \partial_\theta \, \partial_\thb \left( \frac{\sqrt{-\SF{\gref}}}{\zsf}\,\SF{\Nref}^I \, \As_I \right)  \Big{|} \\
&=
	 \int d^d \sigma \,\left( - \mathfrak{D}_\theta \Nref^\theta\; \partial_\thb \Ascr_\theta 
	+ \mathfrak{D}_\thb \Nref^\thb\; \partial_\theta \Ascr_\thb  + \cdots \right)
\end{split}
\label{eq:NAinS}
\end{equation}	
where the ellipses ($\cdots$) denote the ghost bilinear terms that we are dropping. We are also not writing non-linear terms, which are necessary to ensure full covariance, but will not contribute to  $\Delta$. Given \eqref{eq:NAinS} it is clear that $\Delta$ can be extracted by switching on suitable sources for the superfields  $\partial_\thb \Ascr_\theta $ and 
 $\partial_\theta \Ascr_\thb$;  to wit,
\begin{equation}
\Delta = 
	\left(\frac{\delta}{\delta \partial_\thb \Ascr_\theta} 
	- \frac{\delta }{\delta \partial_\theta \Ascr_\thb} \right) S_\text{wv}
\label{eq:Deltavar}
\end{equation}	
An easy way to extract the contribution to $\Delta$ is to turn on sources for the gauge superfield, 
$\As_\thb = \theta \, \sBdel$ and $\As_\theta = - \thb \, \sBdel$. By construction the source $\sBdel$ for $\Delta$ has 
ghost number zero; in fact it appears in the gauge potential in the same place as the bare ghost number zero element of the Faddeev-Popov triplet $\BT$ -- compare the above with \eqref{eq:AthbExp} and \eqref{eq:AthExp}, respectively.\footnote{ 
To be clear, we  turn on the source $\sBdel$  only in the two aforementioned components, so as to aid in extracting $\Delta$ directly from the action. In particular,  we do not perform a full Faddeev-Popov rotation discussed in Appendices~\ref{sec:GaugeMult} and \ref{sec:PositionMult}. The difference amounts to the following: a full FP-rotation by $\BT$ will shift  the top component of the vector multiplet 
$\sAt{a} \mapsto \sAt{a} + D_a \BT$, as well as the fluctuation field, $\tx^\mu \mapsto \tx^\mu - \dbrk{\BT,X^\mu}$, as is clear from \eqref{eq:AaExp} and \eqref{eq:Xsuper}, respectively. These combinations are $\UT$ covariant. However, extracting $\Delta$ requires us to break the $\UT$ covariance since we want to isolate a part of the full adiabaticity equation. A full Faddeev-Popov rotation to introduce $\BT$ will, the reader can check, give back the  super-adiabaticity equation (more precisely: its bottom component).
\label{fn:BSnotBT}} In explicit computations we will encounter non-linear terms in  $\sBdel$, which will be necessary to ensure that the action transforms correctly under  $\UT$ and ${\sf CPT}$. The easiest way to infer the contribution of $\sBdel$ will turn out to be by means of performing a Faddeev-Popov rotation of the fields with a gauge parameter $\SF{\Lambda}_{FP} = \thb \theta \, \sBdel$ and work with the non-covariant versions of $\tx^\mu$ and $\sAt{a}$ as explained in footnote \ref{fn:BSnotBT}.

In summary, we start with the action \eqref{eq:fullLag} and pick out the contributions linear in $\sBdel$:
\begin{equation}
\Delta = -\frac{1}{\sqrt{-\gref}} \frac{\delta S_\text{wv}}{\delta \sBdel } \,.
\end{equation}
 With this interpretation we are now in a position to constrain $\Delta \geq 0$. Since we are interested in contributions to the action proportional to $\partial_\thb \As_\theta $ and  $\partial_\theta \As_\thb$, we conclude that $\sBdel$ will occur in covariant derivatives of the worldvolume metric $\SF{\gref}_{IJ}$, the thermal vector $\SF{\Kref}^I$, the 
 field strengths $\Fs_{JK}$, and their derivatives. We can further isolate contributions by examining the discrete ${\sf CPT}$ symmetry that acts as a $\mathbb{Z}_2$ anti-linear involution (recall that we implement ${\sf CPT}$ as $R$-parity on the worldvolume).

 Firstly, unlike charge currents, $\SF{\Nref}^I$ is even under ${\sf CPT}$, while $\Delta$ is ${\sf CPT}$ odd. Therefore,  the only terms that can contribute to $\Delta$ are those that transform with an additional pair of super-covariant derivatives in the Grassmann odd directions ($\Dut_\theta$ and $\Dut_\thb$), with an overall factor of $i$ to account for the anti-linearity of the ${\sf CPT}$ transformation.  Secondly, while a-priori these Grassmann-odd derivatives can appear inside various other derivatives, we can adapt a trick introduced in \cite{Haehl:2015pja} to simplify the analysis. One can capture extra derivatives acting on the fluid dynamical fields through the action of a differential operator valued tensor. Utilizing these two simple facts, we are lead to the observation that the only sets of terms that contribute non-trivially to $\Delta$ are those where we have a contribution from  $\gpsib_{IJ}=\Dut_\theta \SF{\gref}_{IJ}$ and 
 $\gpsi_{IJ}=\Dut_\thb \SF{\gref}_{KL}$. The free indices have to be contracted against a differential operator valued super-tensor $\SF{\etaref}^{IJKL}[\SF{\gref}_{IJ},\SF{\Kref}^I,\Dwv_I]$. That is, the action takes the form:
\begin{equation}
\Lag_\text{wv, diss} = \int d\theta\, d\thb\; \frac{\sqrt{-\SF{\gref}}}{\zsf}\,\left(-\frac{i}{4}\right) \, \SF{\etaref}^{IJKL} \, \gpsib_{IJ}\, \gpsi_{KL} \,.
\label{eq:Ldiss}
\end{equation}	
The dissipative tensor $\SF{\etaref}^{IJKL}$ is graded symmetric in both its first and second pair of indices. In addition it is required to also be graded symmetric under the exchange of the first and second pair of indices, viz.,  
\begin{equation}
\begin{split}
& \SF{\etaref}^{(IJ)(KL)} = (-)^{IJ}\; \SF{\etaref}^{(JI)(KL)} = (-)^{KL}\,\; \SF{\etaref}^{(IJ)(LK)}  \\
& \SF{\etaref}^{(IJ)(KL)}_{_\text{(D)}}  = (-)^{(I+J)(K+L)}\;  \SF{\etaref}^{(KL)(IJ)}_{_\text{(D)}} \,.\end{split}
\label{eq:etacpt1}
\end{equation} 
See also \S\ref{sec:classD} for more details.

With this ansatz for the dissipative terms it is easy to isolate  the contributions to $\Delta$. To this end we need the  leading ghost free contributions, $\gpsi_{IJ} = \theta\, \dbrk{\sBdel -\phizT, \gref_{IJ}} + \cdots $ and $\gpsib_{IJ} =- \thb\, \dbrk{\sBdel, \gref_{IJ}} +\cdots$, which leads us to:
\begin{equation}
\begin{split}
\Delta = -\frac{1}{\sqrt{-\gref}} \frac{\delta S_\text{wv}}{\delta \sBdel } 
&=
	\frac{1}{4}\, \etaref^{abcd}\, \lieD_\Kref \gref_{ab}\; \lieD_\Kref \gref_{cd}  + \text{fluctuations} + \text{ghost-bilinears} \,,
\end{split}
\label{eq:Delgg}
\end{equation}	
where we have employed the expectation value $\vev{\phizT} = -i $ owing to spontaneous {\sf CPT} breaking which signifies the origin of dissipation as argued for in \cite{Haehl:2015uoc}.
 
We are now in a position to infer that as long as $\etaref^{abcd}$ is a positive definite (derivative operator valued) map from the space of symmetric two-tensors to symmetric two-tensors, the amount of entropy produced is non-negative definite, viz., $\Delta \geq 0$. This is precisely the condition inferred in \cite{Haehl:2015pja}, which generalizes the theorem proven by Sayantani Bhattacharyya in \cite{Bhattacharyya:2013lha,Bhattacharyya:2014bha}. In the present context, this condition simply follows from demanding convergence of the path integral -- if $\etaref^{abcd}$ had negative eigenvalues, it would lead to divergent terms in the path integral. We elaborate further in \S\ref{sec:classD}.

We will now proceed to give explicit examples of various classes of transport for illustration. In \S\ref{sec:SigmaModel} we will return to a general discussion of all classes of hydrodynamic transport and how they are realized as specific forms of the Lagrangian \eqref{eq:fullLag}. This will then prove evidence for our central claim delineated  at the beginning of this section.

\section{The MMO limit: a truncation for explicit calculations}
\label{sec:mmo}

The previous discussion, whilst comprehensive, was quite abstract. To see how the fluid dynamical effective actions work, and to verify the statements made above, it is helpful to consider some explicit examples of terms that contribute to the action. This can, of course, be done explicitly, since we only have to construct suitable $\UT$ invariant actions build from the hydrodynamic fields. The one complicating factor  is  that in terms of explicit fields we have many components -- there are four fields in the target space maps $\SF{X}^\mu$ and the $\UT$ super-gauge field $\As_I$ has 12 component fields (3 of which are non-covariant potentials). The full set of variables, organized by ghost number, is given in Table~\ref{tab:multiplets}. In addition we turn on sources for various currents (see below).

In this section we motivate a truncated sector of the theory which suffices to capture the essential physics, and allows us to make contact with the familiar form of the hydrodynamic constitutive relations. The basic idea is to set all the fields carrying non-vanishing ghost number to zero, and furthermore effectively treat the $\UT$ symmetry as a global symmetry by ignoring the contributions from the gauge potentials. The rationale for the latter comes from our discussion in \S\ref{sec:caveat}, where we have argued for the MMO limit (after Mallick-Moshe-Orland \cite{Mallick:2010su}) where the gauge dynamics is frozen. This reduces the dynamical field content to three fields:
the hydrodynamic pions $X^\mu$, the fluctuation fields $\tx^\mu$, and the $\UT$ field strength $\Fs_{\theta\thb}| \equiv \phizT$, which will eventually be set to $\langle \phizT \rangle = -i$ according to \eqref{eq:Azero} and \eqref{eq:Fththbvev}. In addition we will allow for some sources which will prove useful for extracting currents: $\{\source{h_{ab}},\source{h_{\theta\thb}}\}$ as sources for the bosonic components of the energy-momentum tensor, $\sAt{a}$ for the entropy current, as well as a source $\sBdel$ for the total entropy production. We will indicate sources in a slightly different colour to distinguish them from the dynamical field content. We  illustrate how the hydrodynamic effective actions work with this truncation rather explicitly up to second order in derivatives.

\subsection{Field content}
\label{sec:mmofields}

We define the MMO limit to be the limit where all the ghost fields are set to zero a-priori. In this limit, we truncate to the middle row of Table~\ref{tab:multiplets} and furthermore ignore the gauge potential $\Ascr_a$. In this limit, the superfields simplify drastically and we can write them succinctly as:
\begin{equation}
\begin{split}
\As_a  &=\thb \theta\,   \sAt{a} \,, \qquad\qquad\qquad\;\;
\Fs_{ \theta\thb} = \phizT 
\,,
\qquad\qquad\;\;\, \Fs_{\thb a} = \theta \, \sAt{a} \,, \\
\As_{\thb} &= \theta \, \sBdel  \,, \qquad\qquad\qquad\quad
\Fs_{\thb \thb}  = 0  \,, \qquad \qquad\;\quad \Fs_{\theta a} = -\thb \, \prn{ \sAt{a} + \partial_a \phizT} \,, \\
\As_\theta &= \thb\, (\phizT - \sBdel) \,,\qquad\qquad
\Fs_{\theta \theta}  = 0\,, \qquad\qquad\;\quad 
\Fs_{ab} = \thb \theta  \, \prn{ \partial_a \sAt{b} - \partial_b \sAt{a} } \,,\\
\SF{X}^\mu &= X^\mu  + \thb \theta \,  {\tilde X}^\mu \,.
\end{split}
\label{eq:mmodef}
\end{equation}
The fields we have retained in this limit, should be understood by the following rationale: 
\begin{itemize}
 \item The gauge field component $\sAt{a}$ has been retained as a source for the Noether free energy current $\Nref^a$. The super free energy current is defined in  \eqref{eq:TNwvdef} and its bottom component is isolated by varying with respect to the top component of 
 $\As_a$, viz., $\sAt{a}$. Once we obtain the free energy current, we immediately obtain the entropy current via \eqref{eq:Ndef}.\footnote{ This statement relies on varying with respect to the gauge potential whilst keeping the pullback metric fixed which is natural from the standpoint of \eqref{eq:TNwvdef}. If we also vary with respect to the gauge potentials contributing to the pullback metric itself, see \eqref{eq:gwv1}, we will end up directly with the entropy current.} 
 \item The source $\sBdel$ which couples to the net entropy produced $\Delta$. The general idea for isolating entropy production has been described around \eqref{eq:Deltavar}. We will indeed verify in examples below that variation with respect to this field gives the total entropy production (including fluctuation terms).  
 \item The top component of the sigma-model map, $\tilde{X}^\mu$. This field describes fluid fluctuations and variation with respect to it will give the hydrodynamic equations of motion.
 \item In \eqref{eq:mmodef} we have also kept $\phizT$, which we will evetually set to $-i$. This field plays the role of an order parameter for dissipation.
\end{itemize}

From \eqref{eq:mmodef} we can also immediately get the following expressions for the derivatives of the position superfield:
\begin{equation}
\begin{split}
\Dut_a \SF{X}^\mu 
&=  
	\partial_a X^\mu+ \thb \theta\,
	 \prn{ 
  		\partial_a\tilde{X}^\mu  
		+ ( \sAt{a} , X^\mu)_\Kref }  \\
\Dut_{\thetab} \SF{X}^\mu 
& = 
	\theta\, \prn{ \tilde{X}^\mu + (\sBdel,X^\mu)_\Kref }
	\\
 \Dut_\theta \SF{X}^\mu &=
 	 \thb\,   \prn{(\phizT-\sBdel ,X^\mu)_\Kref -\tilde{X}^\mu  
	 }   
	  \\
 \Dut_{\theta}\Dut_{\thb} \SF{X}^\mu  
 &= 
	\tilde{X}^\mu + (\sBdel,X^\mu)_\Kref   
	+\thb \theta \big(\phizT-\sBdel,\tilde{X}^\mu + (\sBdel,X^\mu)_\Kref  \big)_\Kref 
	\\ 
\Dut_{\thetab}\Dut_{\theta } \SF{X}^\mu 
&=   
	 (\phizT-\sBdel,X^\mu)_\Kref-\tilde{X}^\mu  
	 -\thb\theta \, \big(\sBdel, (\phizT-\sBdel,X^\mu)_\Kref-\tilde{X}^\mu \big)_\Kref
\end{split}
\end{equation}
It is instructive to note that the contribution of $\sBdel$ can be easily inferred by the map $\tx^\mu \mapsto \tx^\mu + \dbrk{\sBdel, X^\mu}$ which follows from $\UT$ covariance, for reasons described in footnote \ref{fn:BSnotBT}. 

Let us now take stock of various other derived superfields which enter into the hydrodynamic effective action.

\paragraph{The world-volume metric:} The pullback computation \eqref{eq:gref} leads to the following expression for the worldvolume metric with the ghosts set to zero:
{\small
\begin{equation}
\begin{split}
\SF{\gref}_{ab} 
&= 
	g_{\mu\nu}  \, \partial_a X^\mu \partial_b X^\nu  + 
	\thb \theta
	\brk{
		\source{h_{ab}} +
		2\, g_{\mu\nu} \,\partial_{(a} X^\mu\left(\partial_{b)} \tilde{X}^\nu  
		+ (\sAt{b)} , X^\nu)_\Kref \right)
        + \tilde{X}^\rho \, \partial_\rho g_{\mu\nu} \; \partial_a X^\mu \, \partial_b X^\nu
	}	 
 \\
 & \equiv \gref_{ab} + \thb\theta \, \prn{ \source{h_{ab}} + \tilde{\gref}_{ab}   } \\
\SF{\gref}_{\thetab a}
& = 
	\theta\,  g_{\mu\nu} \, \prn{  \tilde{X}^\mu + (\sBdel,X^\mu)_\Kref } \, \partial_a X^\nu 
\\
  \SF{\gref}_{\theta a}  
  & = 
  	\thb\, g_{\mu\nu} \prn{(\phizT-\sBdel,X^\mu)_\Kref-\tilde{X}^\mu 
	} \partial_a X^\nu 
 \\
\SF{\gref}_{\theta \thetab} 
=  
 	- \SF{\gref}_{ \thetab\theta} 
	&=
 	 i  + \thb\theta\, 
 	 \brk{  
 	 \source{h_{\theta\thb}} + g_{\mu\nu} \prn{(\phizT-\sBdel,X^\mu)_\Kref-\tilde{X}^\mu
	 } 
 	 \prn{ \tilde{X}^\nu  + (\sBdel,X^\nu)_\Kref }
	  } \\
&\equiv 
 	 i + \thb\theta\, 
 	 \prn{  
 	 \source{h_{\theta\thb}} + \tilde{\gref}_{\theta\thb} }
\end{split}
\label{eq:gwv1}
\end{equation}
}\normalsize
where we used $g_{\Theta\bar{\Theta}} = - g_{\bar{\Theta}\Theta} = i$ in target space and we introduced the following shortcuts: 
\begin{equation}
\begin{split}
\gref_{ab} &= 
	g_{\mu\nu}\, \partial_a X^\mu \, \partial_b X^\nu \,, 
\\
\tilde{\gref}_{ab} &= 
	2\, g_{\mu\nu} \,\partial_{(a} X^\mu\left(\partial_{b)} \tilde{X}^\nu   
		+ (\source{\mathcal{F}_{b)}} , X^\nu)_\Kref \right) + \tilde{X}^\rho \, \partial_\rho g_{\mu\nu} \; \partial_a X^\mu \, \partial_b X^\nu\,,\\
\tilde{\gref}_{\theta\thb} &= 
	  g_{\mu\nu} \prn{(\phizT-\sBdel,X^\mu)_\Kref-\tilde{X}^\mu  
	 } \prn{\tilde{X}^\nu + (\sBdel,X^\nu)_\Kref } \\
&  = 
	g_{\mu\nu} \, \prn{ \mathfrak{J}^\mu - (\sBdel, X^\mu)_\Kref } \prn{ \tilde{X}^\nu + (\sBdel,X^\nu)_\Kref }\,, \\
\mathfrak{J}^\mu &\equiv 	
	(\phizT,X^\mu)_\Kref-\tilde{X}^\mu \,.
\end{split}
\label{eq:gJdefs}
\end{equation}	
Under a {\sf CPT} transformation $\tx^\mu$ gets exchanged with  $\mathfrak{J}^\mu$.  Furthermore, in \eqref{eq:gwv1} we introduced sources $\{\source{h_{ab}},\source{h_{\theta \thb}}\}$. These help taking variations of the actions more explicitly and also aid in departing from the topological limit. For instance, the stress tensor will be obtained by varying the action with respect to $\source{h_{ab}}$ (and then setting the sources to zero as usual).

\paragraph{The inverse metric:} One can work out the inverse of the metric in the MMO limit by the requirement that 
\begin{equation}
(-)^J\;\SF{\gref}^{IJ} \, \SF{\gref}_{JK} = \delta^I_{\ K} \,.
\label{eq:}
\end{equation}	
Note that the sign in this equation corresponds to the standard contraction convention for superspace indices that are contracted as $\searrow$, i.e.,  ``NW to SE'' as described in \cite{DeWitt:1992cy} and explained in \eqref{eq:deWitt1}. One then finds:
\begin{equation}
\begin{split}		
\SF{\gref}^{ab} &= 
	 \gref^{ab} - \thb \theta  \,  \gref^{ac} \,   \gref^{bd} 
	\prn{
		\source{h_{cd}} + \tilde{\gref}_{cd} - 
		\, 2\,i\, 
		 \prn{\mathfrak{J}^\mu - (\sBdel,X^\mu)_\Kref}
		  \, \prn{\tilde{X}^\rho + (\sBdel,X^\rho)_\Kref } \,  \partial_c X_\mu  \, \partial_d X_\rho 
	} \,, \\
\SF{\gref}^{a\thb} &= 
	i\, \thb\; \gref^{ab} \, \prn{ \mathfrak{J}_\mu - (\sBdel,X_\mu)_\Kref }  \, \partial_b X^\mu \,,
	\\
\SF{\gref}^{a\theta} &= 
	-i\,\theta\; \gref^{ab} \, \prn{ \tilde{X}_\mu + (\sBdel,X_\mu)_\Kref } \, \partial_b X^\mu\,,
	\\
\SF{\gref}^{\theta \thb} &= - \SF{\gref}^{\thb \theta} = 
	- i +   \thb \theta\prn{ 
		\source{h_{\theta\thb}}  + \tilde{\gref}_{\theta \thb}  - \prn{ \mathfrak{J}^\mu - (\sBdel,X^\mu)_\Kref }   \, \prn{ \tilde{X}_\mu + (\sBdel,X_\mu)_\Kref }  }\\
&=
	-i + \thb \theta\, \source{h_{\theta\thb}}\,.		
 \end{split}
\label{eq:imetSF}
\end{equation}

This completes our discussion of the basic field content. For convenience, we will now record some composite and derived fields in addition.

\paragraph{The super-determinant:}  We view the metric superfield and its inverse as super-matrices. The super-determinant is then given by the Berezinian of 
the metric, i.e., 
\begin{equation}
\begin{split}
\sqrt{-\SF{\gref}} 
&= 
	\sqrt{-\text{det}( \SF{\gref}_{ab} ) \, \text{det}\prn{\SF{\gref}_{z_1z_2}- 
		\SF{\gref}_{c z_1}\, \SF{\gref}^{cd}\, \SF{\gref}_{dz_2}}^{-1}} \\
&= 		
	\sqrt{-\det{\gref_{ab}}} \, \left[1 + \thb \theta 
	\, \prn{ \frac{1}{2} \, \gref^{cd}\, \prn{ \source{h_{cd}}+ \tilde{\gref}_{cd} } 
	 } \right]\,.
\label{eq:detSF}
\end{split}
\end{equation}	

\paragraph{The {\bf z} factor:} 
The $\UT$ covariant pullback measure in the action involves the field $\zsf$ introduced in \eqref{eq:zdef}, which evaluates to 
\begin{equation}
\SF{{\bf z}}  \equiv 1 + \SF{\Kref}^I  \, \As_I = 1 + \thb \theta\, \Kref^a  \,\sAt{a}  \,.
\label{eq:zSF}
\end{equation}	
One can explicitly check in the MMO limit the useful identity $\Dut_I \SF{{\bf z}} = \Fs_{IJ} \SF{\Kref}^J$.\footnote{ In order to check this, note that $\SF{{\bf z}}$ is $\UT$ 1-adjoint, which implies, e.g., $(\Fs_{IJ},\SF{{\bf z}})_\Kref = \Fs_{IJ} \, \lieD_\Kref \SF{{\bf z}} - \SF{{\bf z}} \, \lieD_\Kref \Fs_{IJ}$.}

\paragraph{The temperature superfield:} The local temperature superfield of the worldvolume description of the fluid is defined as
\begin{equation}
\begin{split}
\SF{T} &= 
	\frac{1}{\sqrt{-\SF{\gref}_{IJ} \, \SF{\Kref}^I \, \SF{\Kref}^J}} 
 = 
	\prn{-\gref_{ab} \Kref^a \Kref^b - \thb \theta\,  \Kref^a \Kref^b
	\prn{\source{h_{ab}} +\tilde{\gref}_{ab}}}^{-\frac{1}{2}} \\
& 	\equiv T + \thb\theta \;   \tilde{T}	\\
\text{where} \quad T &\equiv
	 \frac{1}{\sqrt{-\gref_{ab}\Kref^a \Kref^b}}	\,, \qquad 	 
	\tilde{T} \equiv \frac{1}{2}\, T^3\, \Kref^a \Kref^b
	\prn{\source{h_{ab}} +\tilde{\gref}_{ab}  } 
\end{split}
\label{eq:TempSF}
\end{equation}
Any function of $\SF{T}$ (such as the pressure superfield $p(\SF{T})$) can then be expanded similarly:
\begin{equation}
{\mathfrak{f}}(\SF{T})  
 = \mathfrak{f}(T) + \thb \theta \,\mathfrak{f}'(T)  \, \tilde{T} \,.
\label{eq:fnTSF}
\end{equation}	

\paragraph{The velocity superfield:} The fluid velocity is lifted to the following superfield:
\begin{equation}
\begin{split}
\SF{u}^I &\equiv 
	\SF{T}\, \SF{\Kref^I} \\
	\text{in components: } \quad
\SF{u}^a &= 
	\prn{T + \thb \theta \,   \tilde{T}	}\,\Kref^a \,, \qquad 
\SF{u}^\theta = 0 \,, \qquad \SF{u}^\thb = 0 	
\end{split}
\label{eq:uSF}
\end{equation}

\paragraph{The spatial projector:} The tensor that projects onto the space transverse to $\SF{u}^a$ has the following components in the MMO limit:
\begin{equation}
\begin{split}
\SF{P}_{IJ} &\equiv 
	\SF{\gref}_{IJ} + (-)^{K(1+L+I) + L(1+I+J)}\,  \SF{u}^K\, \SF{u}^L \, \SF{\gref}_{IK} \, \SF{\gref}_{JL}  \,.
\end{split}
\end{equation}
In components: 
\begin{equation}
\begin{split}
\SF{P}_{ab}  
&= 
	P_{ab} + \thb \theta \; P_a^c \, P_b^d \, (\source{h_{cd}} + \tilde{\gref}_{cd}) \\
\SF{P}_{a\theta}
&= 
	\thb \, P_{ab} \, e^b_\mu \, \prn{\mathfrak{J}^\mu - (\sBdel, X^\mu)_\Kref }  \\
\SF{P}_{a\thb} 
&= 
	\theta \, P_{ab} \, e^b_\mu \, \prn{ \tilde{X}^\mu + (\sBdel, X^\mu)_\Kref }   \\
\SF{P}_{\theta\thb}
&= 
	i + \thb\theta \, \prn{ \source{h_{\theta\thb}} + P_{cd} \, e_\mu^c \, e_\nu^d\,\prn{ \mathfrak{J}^\mu - (\sBdel, X^\mu)_\Kref }  \prn{ \tilde{X}^\nu + (\sBdel,X^\nu)_\Kref } }
\end{split}
\label{eq:projSFd}
\end{equation}
The spatial projector with upper indices, $\SF{P}^{IJ} =   
	\SF{\gref}^{IJ} + \SF{u}^I\, \SF{u}^J$, reads as follows:
{\small
\begin{equation}
\begin{split}
\SF{P}^{ab} 
 & = 	P^{ab}
	+  \thb \theta \,
	\Big\{ 2\,  \tilde{T} \, u^a\, \Kref^b - 
	  \brk{
		\source{h^{ab}} + \tilde{\gref}^{ab} -2 \,i\,
		 \prn{\mathfrak{J}_\mu  - (\sBdel, X_\mu)_\Kref }
		 \prn{\tilde{X}_\rho + (\sBdel, X_\rho)_\Kref } \, \partial^a X^\mu \, \partial^b X^\rho
	}\Big\}	\\
& \equiv P^{ab} + \thb \theta\,  \tilde{P}^{ab}  \\	
\SF{P}^{a\theta}  
	&=- i\, \theta\; \gref^{ab} \, \prn{\tilde{X}_\mu + (\sBdel, X_\mu)_\Kref } 
	\, \partial_b X^\mu 
	 	\\
\SF{P}^{a\thb}  
	&= 
	i\,\thb\; \gref^{ab} \, \prn{ \mathfrak{J}_\mu - (\sBdel, X_\mu)_\Kref } \, \partial_b X^\mu  
 \\
\SF{P}^{\theta\thb}  
&=
	 -i  +   \thb \theta\, \FH{\source{h_{\theta\thb}}}   
\end{split}
\label{eq:projSFu}
\end{equation}
}\normalsize 
where $P^{ab} \equiv \gref^{ab} + T^2\, \Kref^a \Kref^b$.
 
\subsection{The worldvolume connection}
\label{sec:wvconn}

We will later also need derivatives of superfields. To define these, we need to discuss the covariant derivative, or the connection, in worldvolume superspace. This is a non-trivial task because of the way in which the $\UT$ symmetry intertwines with diffeomorphisms. 

In particular, one can easily see that consistency of our formalism requires the connection in the worldvolume theory (though not the one in physical space!) to be not metric compatible. The two essential features that the worldvolume covariant derivative $\Dwv_I$ has to satisfy are the {\it commutator condition} 
\begin{equation}
\gradcomm{\Dwv_I}{\Dwv_J} \SF{X}^\mu  = ( \Fs_{IJ}, \SF{X}^\mu)_\Kref\,,
\end{equation}
and the requirement of {\it measure compatibility} (needed in order to be able to integrate by parts as usual):
\begin{equation}
\Dwv_J \left( \frac{\sqrt{-\SF{\gref}}}{\zsf} \right) = 0 \,.
\end{equation}
 A detailed analysis of these requirements is given in Appendix \ref{sec:wvconnect}. There we argue that these conditions fix the connection (up to some ambiguity, which we set to zero since it is irrelevant for the following discussion). The resulting connection is
\begin{equation}\label{eq:connFinal}
\begin{split}
\Cwv^{I}_{\ JK}
& = 
	\frac{1}{2} \, \SF{\gref}^{IL} \left[ (-)^{L(1+J)} \, 
	\Dut_J \SF{\gref}_{LK} + (-)^{L(1+J+K)+JK} \, 
	\Dut_K \SF{\gref}_{JL} - (-)^{L} \, \Dut_L \SF{\gref}_{JK} \right]  \\
&	\quad
+ \frac{1}{\zsf}\left( \Fs^I_{\ J} \,\SF{\Kref}_K + (-)^{JK}\, \Fs^I_{\ K}\, \SF{\Kref}_J \right)
\end{split}
\end{equation}	
In other words the connection we choose is the Christoffel connection built from $\Dut_I \equiv \partial_I + (\SF{\Ascr}_I, \,\cdot\,)_\Kref$ supplemented with thermal equivariance field strength data to ensure that we have correctly accounted for the $\UT$ covariance of the pullback. Note that the connection has no (super-)torsion. The fully covariant derivative then acts as follows on a generic super-tensor $\SF{\mathcal{T}}^J{}_K $:
\begin{equation}
  \Dwv_I \SF{\mathcal{T}}^J{}_K \equiv \Dut_I \SF{\mathcal{T}}^J{}_K + (-)^{IJ} \Cwv^J{}_{IL} \, \SF{\mathcal{T}}^L{}_K - (-)^{IJ}   \SF{\mathcal{T}}^J{}_L \, \Cwv^L{}_{IK} 
\end{equation}
and similarly for tensors with more indices. 
In Appendix \ref{sec:MMOfurther} we write the superspace expansion of the connection $\Cwv^{I}_{\ JK}$ explicitly in the MMO limit. 

This data is sufficient to go ahead with explicit computations (some more objects are computed in the MMO limit in Appendix \ref{sec:MMOfurther}). In what follows we will describe the hydrodynamic effective actions at the lowest three orders in the derivative expansion, demonstrating explicitly the familiar structure expected from earlier analyses in \cite{Haehl:2015pja,Haehl:2015uoc}.

\subsection{General observations}
\label{sec:mmogeneral}

Before exploring examples, let us make some general observations which are now clear from the MMO expansion of the fields.

\paragraph{Equations of motion:}

We can discern from the way that $\tilde{X}^\mu$ appears in the metric superfield \eqref{eq:gJdefs} that the equations of motion will be target space covariant. To see this, consider the way that the energy momentum tensor will generically couple to $\tilde{\gref}_{ab}$ in the action (after performing the integration over Grassmann odd directions):
\begin{equation}
  S_\text{wv} \supset \int \sqrt{-\gref} \, \left( \frac{1}{2} \, \TEMref^{ab} \, \tilde{\gref}_{ab} \right)  \supset \int \sqrt{-\gref} \;  \TEMref^{ab} \partial_{(a}X^\mu \, \left(  \partial_{b)} \tilde{X}^\nu \, g_{\mu\nu} + \frac{1}{2} \, \partial_{b)} X^\nu\, \tilde{X}^\rho \partial_\rho g_{\mu\nu}  \right)\,,
\end{equation}
where $\TEMref^{ab}$ contains both classical and fluctuation terms.
This coupling is generic, as can be seen from the fact that every occurance of the source $\source{h_{ab}}$ (which by definition sources $\TEMref^{ab}$) is accompanied by a $\tilde{\gref}_{ab}$. Push-forward to the physical spacetime and integration yields the following:
\begin{equation}
 \int \sqrt{-g} \, \tilde{X}_\mu \nabla_\nu T^{\mu\nu} \,,
\end{equation}
where $\nabla_\mu$ is the usual Christoffel connection built from the target space metric $g_{\mu\nu}$, and $T^{\mu\nu} = \TEMref^{ab}\, \partial_a X^\mu\, \partial_b X^\nu$. 

\paragraph{Superspace equation of motion:} 

The projected version of the full superspace equation of motion \eqref{eq:supereom} led us to the statement that the energy-momentum inflow contains only fluctuation terms, i.e., \eqref{eq:seom3f}. There is a nice version of this statement in the MMO limit. In the superspace action, the couplings relevant to \eqref{eq:seom3f} take the form:
\begin{equation}
\begin{split}
   S_\text{wv} &= \int d^d\sigma \, d\theta \, d\thb \, \frac{\sqrt{-\SF{\gref}}}{\zsf} \, \left\{ - \SF{\TEMref}^{a\theta} \,  \SF{\gref}_{a\theta} - \SF{\TEMref}^{a\thb} \,  \SF{\gref}_{a\thb} - \SF{\TEMref}^{\theta\thb} \,  \SF{\gref}_{\theta\thb} + \ldots \right\} \\
  &= \int d^d\sigma \sqrt{-\gref} \, \left\{ \partial_\theta \SF{\TEMref}^{a\theta} \, \partial_a X^\mu \, \mathfrak{J}_\mu  -  \partial_\thb \SF{\TEMref}^{a\thb} \, \partial_a X^\mu \, \tilde{X}_\mu  - {\TEMref}^{\theta\thb}  (\mathfrak{J}_\mu \tilde{X}^\mu) + \ldots + \text{ghost bilinears} \right\} \Big{|}
  \end{split}
\end{equation}
where we took the MMO limit in the second line and dropped all terms that are not relevant to the present discussion (such as sources). From this expression we can see what the contribution of the indicated currents to the equations of motion will be: variation with respect to $\tx^\mu$ precisely gives (minus) the combination \eqref{eq:seom3f}.  

Let us interpret this statement: when we vary the action with respect to $\tx^\mu$ while holding $\tilde{\gref}_{ab}$ fixed (where $\tx^\mu$ enters differentiated) then we end up with the above  combination entering the equations of motion. From our explicit analysis below, we will see that this combination will always lead to  fluctuation terms and not modify the dynamical equations of motion for the classical field $X^\mu$ as noted in \S\ref{sec:topsigma}.

\paragraph{Total entropy production:}

In computing $\Delta$ by variation of the action with respect to the source $\sBdel$, note that $\sBdel$ always appears only in three qualitatively different ways: 
\begin{enumerate}
\item[(i).] Every fluctuation field of the position occurs in the combination $\tilde{X}^\mu + (\sBdel,X^\mu)_\Kref$ (see footnote \ref{fn:BSnotBT}). It is clear that variation of the action with respect to these occurrences of $\sBdel$ gives a contribution proportional to equations of motion. 
\item[(ii).] Every metric source appears in the combination $\source{h_{ab}} + (\sBdel, \gref_{ab})_\Kref$; variation with respect to these $\sBdel$ gives a term of the form $\frac{1}{2} \, {\TEMref}^{ab} \, \lieD_\Kref  \gref_{ab}$. 
\item[(iii).] Finally, the source for the free energy current occurs in the combination $\sAt{c} - \partial_c \sBdel$, which leads to the contribution $\nabla_a \Nref^a$ in $\Delta$.
\end{enumerate}
We can thus anticipate on general grounds that the entropy production takes the expected form:
\begin{equation}
    -\frac{1}{\sqrt{-\gref}} \, \frac{\delta S_\text{wv}}{\delta \sBdel} = \underbrace{\nabla_a \Nref^a - \frac{1}{2} \, \TEMref^{ab} \, \lieD_\Kref \gref_{ab} }_{=\Delta}+ \text{equations of motion},
\end{equation}
where $\nabla_a$ is the standard Christoffel connection in the worldvolume after doing all superspace integrals and setting sources to zero.

In order for the above statements to be true, it is useful to observe the following. For any superspace Lagrangian $\SF{\Lagref}$ the effective action after integration over Grassmann directions generically takes the form
\begin{equation}\label{eq:trick1}
\begin{split}
  S_\text{wv} = \int d^d \sigma \sqrt{-\gref} \, \left[ \left( \frac{1}{2} \, \gref^{ab}(\source{h_{ab}}+ \tilde{\gref}_{ab}) - \sAt{a} \Kref^a \right) \Lagref + (\partial_\theta\partial_\thb \SF{\Lagref} ) \right]\,.
\end{split}
\end{equation}
This can always be covariantized in the sense described above by adding a total derivative term of the form
\begin{equation}\label{eq:trick2}
0 = \int d^d \sigma \,  \sqrt{-\gref} \,\nabla_c \left( \Kref^c\, \sBdel \, \Lagref \right) =\int d^d \sigma \, \sqrt{-\gref} \, \left[ \frac{1}{2} \, \gref^{ab} \, (\sBdel, \gref_{ab})_\Kref + \Kref^c \partial_c \sBdel \; \Lagref + (\sBdel, \Lagref)_\Kref \right]\,,
\end{equation}
such that adding \eqref{eq:trick1} and \eqref{eq:trick2} gives an action that only involves the combinations described above.

\section{The MMO limit: examples}
\label{sec:mmoEx}

Having defined the MMO limit, we now work out a few examples at low orders in the derivative expansion to illustrate the formalism.

\subsection{Zeroth order: Ideal fluid}
\label{sec:mmoideal}

The simplest example is the ideal fluid.
The effective superspace action has the form \eqref{eq:fullLag} with the Lagrangian being simply the pressure superfield. We parameterize the action by a scalar superfield ${\mathfrak{f}(\SF{T})}$ and explicitly evaluate it in the MMO limit to be 
\begin{equation}
\begin{split}
&\SF{\Lagref}^{^{\text{(ideal)}}} = \frac{\sqrt{-\SF{\gref}}}{\SF{{\bf z}}} \, \; \SF{\mathfrak{f}}(\SF{T}) \\
&\quad = 
	\sqrt{-\gref} \; \mathfrak{f}(T)\, 
	\prn{1 + \thb \theta \, 
		\brk{
			\frac{1}{2} \,  \gref^{cd}\, 	
			\prn{\source{h_{cd}}+ \tilde{g}_{cd} }   
			- \Kref^a \,\sAt{a} 
			+ \frac{\mathfrak{f}'(T)}{\mathfrak{f}(T)}\, \tilde{T}
		}}\,.
\end{split}
\label{eq:sfL0}
\end{equation}
$\mathfrak{f}(T)$ will turn out to be the standard pressure function, cf., \eqref{eq:pepsfT}.
Here we have kept all classical and fluctuation fields, but have discarded ghost contributions.
After the superspace integration we will end up with:
\begin{equation}
\begin{split}
\Lag_\text{wv}^{^{\text{(ideal)}}} &= \int d\theta d \thb \; \SF{\Lagref}^{^{\text{(ideal)}}} \equiv \partial_\theta \partial_\thb \; \SF{\Lagref}^{^{\text{(ideal)}}} \, \big{|}_{\theta=\thb=0} \\
&=  \sqrt{-\gref} \; \mathfrak{f}(T)\, \prn{\frac{1}{2} \,  \gref^{cd}\, 	
			\prn{\source{h_{cd}}+ \tilde{g}_{cd}  }   
			- \Kref^a \, \sAt{a}  
			+ \frac{\mathfrak{f}'(T)}{\mathfrak{f}(T)}\,  \tilde{T}  
}
\end{split}
\label{eq:wvL0}
\end{equation}
We could add a total derivative term to covariantize the expression as described in \S\ref{sec:mmogeneral}, but this would not affect the subsequent calculations.
From the coefficients of the sources $\source{h_{ab}}$ and $\sAt{a}$ we can easily read off the energy-momentum tensor and free energy current, respectively:
\begin{equation}
\begin{split}
{\TEMref}_{_{\text{(ideal)}}}^{ab} 
&\equiv 
	\frac{2}{\sqrt{-{\gref}}} \, \frac{\delta {\Lag}^{^{\text{(ideal)}}}_\text{wv}}{\delta \source{h_{ab}}}
	= \mathfrak{f}(T)\, \prn{  \gref^{ab} + \frac{\mathfrak{f}'(T)}{\mathfrak{f}(T)}\, T^3\, \Kref^a\, \Kref^b } \,,
\\ 
{\Nref}_{_{\text{(ideal)}}}^a
&\equiv
	- \frac{1}{\sqrt{-{\gref}}} \, \frac{\delta {\Lag}^{^{\text{(ideal)}}}_\text{wv}}{\delta {\sAt{a}}}
	=  \mathfrak{f}(T) \, \Kref^a 
	 \,.
\end{split}
\label{eq:TNwvdefideal}
\end{equation}	
The entropy current is related to the free energy current via \eqref{eq:Ndef} and can be obtained directly from the action by varying also the $\sAt{a}$ inside $\tilde{\gref}_{ab}$ (see \eqref{eq:gJdefs}). The tensor ${\TEMref}_{_{\text{(ideal)}}}^{ab}$ encodes both pressure $p(T)$ and energy density $\varepsilon(T)$, which are obtained from the function $\mathfrak{f}$. We identify 
\begin{equation}
p(T)=\mathfrak{f}(T) \,, \qquad \varepsilon(T)= T\,\mathfrak{f}'(T)-\mathfrak{f}(T) \,.
\label{eq:pepsfT}
\end{equation}	
Note that there is no source term $\source{h_{\theta\thb}}$ in the action, so we conclude ${\TEMref}_{_{\text{(ideal)}}}^{\theta\thb}=0$.

To obtain the equations of motion (i.e., conservation equations), we vary the action with respect to the fluctuation field $\tilde{X}^\mu$. The only occurrence of $\tilde{X}^\mu$ in \eqref{eq:wvL0} is inside $\tilde{\gref}_{cd}$. Carrying out this variation simply yields the expected conservation: 
\begin{equation}
 \nabla_a \, {\TEMref}_{_{\text{(ideal)}}}^{ab}  = 0 \,.
\end{equation}

Finally, there is no explicit dependence on $\sBdel$ in the Lagrangian \eqref{eq:wvL0}. Variation with respect to $\sBdel$ gives the free energy divergence, $\nabla_a \Nref_{_{\text{(ideal)}}}^a$. We therefore get for the total entropy production:
\begin{equation}
\Delta_{_{\text{(ideal)}}} \equiv
	- \frac{1}{\sqrt{-{\gref}}} \frac{\delta {\Lag}^{^{\text{(ideal)}}}_\text{wv}}{\delta \sBdel} = 0\,.
\end{equation}

\subsection{First order: viscosities and dissipation}
\label{sec:mmo1st}

Let us now turn to the next order in the gradient expansion. We already know from earlier analysis (see eg., \cite{Haehl:2015pja}) that there is no non-dissipative contribution at this order. So we should now focus on the dissipative terms, which gives us an opportunity to get some intuition for the general arguments described in \S\ref{sec:einflow}. As noted in \cite{Haehl:2015uoc} and explained around \eqref{eq:Ldiss} the structure of dissipative terms is encapsulated in super-tensors $\SF{\etaref}^{((IJ)|(KL))}_{_\text{(D)}}(\SF{\gref}_{IJ},\SF{\Kref}^I,\Dwv_I)$. Since this tensor will be contracted against terms that contain explicit super-derivatives of the metric, to first order in gradients it suffices to consider it to be built out of zero derivative objects. The natural structure then is simply combinations of the projection super-tensor $\SF{P}^{IJ}$ respecting the conditions \eqref{eq:etacpt1}. All other tensors can be argued to be equivalent once field redefinition ambiguities are accounted for (this is a superspace version of the arguments given in Section 5 of \cite{Haehl:2015pja}). 

These requirements then single out two possible independent tensor structures for first order dissipative transport. These are parameterized by scalar superfields $\SF{\zeta}(\SF{T})$ and $\SF{\eta}(\SF{T})$, respectively:
\begin{equation}
\begin{split}
\SF{\Lagref} &= 
	 - \frac{\sqrt{-\SF{\gref}}}{\SF{{\bf z}}}\, \frac{i}{4}\, (-)^{(I+J)(K+L)  + IJ+KL+K+L}\,\SF{\etaref}^{((IJ)|(KL))}_{_\text{(D)}}\, 
	   \gpsib_{IJ} \; \gpsi_{KL} 
 \\
 \SF{{\bm \eta}}^{IJKL}_{_\text{(D,visc.)}} 	 
 &= 
 	\SF{\zeta}(\SF{T})\, \SF{T}\, \SF{P}^{IJ} \, \SF{P}^{KL} + 2\, \SF{\eta}(\SF{T})\, \SF{T}\, (-)^{K(I+J)}\,\SF{P}^{K\langle I} \, \SF{P}^{J \rangle L}
\end{split}
\label{eq:Lageta}
\end{equation}
where the subscript ``visc.'' denotes the first order in gradients viscous transport contributions.
We have explicitly written the index contraction signs following from \eqref{eq:dwconven}. The transverse traceless index projection in the shear viscosity term will be defined below in \eqref{eq:shearten}.
The bottom components of the scalar superfields $\zeta(T)$ and $\eta(T)$ will end up being the usual bulk and shear viscosity transport coefficients.

We can expand the superspace structure \eqref{eq:Lageta} more explicitly to guide the eye. Using the fact that in the absence of sources carrying non-zero fermion number we can set $\gpsib_{\theta a} \to 0$ and $\gpsi_{\thb a} \to 0$, we find after evaluating the signs and symmetry factors:
\begin{equation}
\begin{split}
\Lag_\text{wv}^{(\text{D})} 
&= 
	 - \frac{i}{4}\,  \partial_\theta \partial_\thb
 	 \left(
	  \frac{\sqrt{-\SF{\gref}}}{\zsf} \, (-)^{(I+J)(K+L)  + IJ+KL+K+L}\,\SF{\etaref}^{((IJ)|(KL))}_{_\text{(D)}}\, 
	   \gpsib_{IJ} \; \gpsi_{KL} 
	 \right) \bigg| \\
&=	-\frac{i}{4}\, \sqrt{-{\gref}}\,\bigg\{
	4
 	 \left( \partial_\theta \partial_\thb\SF{\etaref}^{((\thb c)|(\theta d))}_{_\text{(D)}} + \left[ \frac{1}{2} \, \gref^{ab}(\source{h_{ab}}+ \tilde{\gref}_{ab}) - \sAt{a} \Kref^a \right]\,\SF{\etaref}^{(\thb c)(\theta d)}_{_\text{(D)}}  \right)
	   \gpsib_{\thb c} \; \gpsi_{\theta d} \\
&\qquad	+ 
	 2\,\partial_\theta 
 	 \SF{\etaref}^{((ab)|(\theta d))}_{_\text{(D)}}\,
	 \partial_\thb  \gpsib_{ab} \; \gpsi_{\theta d} 
	 -
	  4\,\partial_\theta 
 	  \SF{\etaref}^{((\theta\thb)|(\theta d))}_{_\text{(D)}}\,
	 \partial_\thb  \gpsib_{\theta\thb} \; \gpsi_{\theta d} 
        \\
&\qquad
	+2\,
	 \partial_\thb 
 	   \SF{\etaref}^{((\thb b)|(cd))}_{_\text{(D)}}\,
	\gpsib_{\thb b} \; \partial_\theta  \gpsi_{cd} 
	-4\,
	 \partial_\thb 
 	 \SF{\etaref}^{((\thb b)|(\theta\thb))}_{_\text{(D)}}\,
	\gpsib_{\thb b} \; \partial_\theta  \gpsi_{\theta\thb} \\	
&\qquad
	+ \Big[
	-2 \, \SF{\etaref}^{((\theta\thb)|(cd))}_{_\text{(D)}} \left( \partial_\thb \gpsib_{\theta\thb} \; \partial_\theta \gpsi_{cd} + \partial_\theta \gpsi_{\theta\thb} \; \partial_\thb \gpsib_{cd} \right) \\
&\qquad\qquad\qquad +4 \,\SF{\etaref}^{((\thb b)|(\theta d))}_{_\text{(D)}} 
	\left( \partial_\theta \partial_\thb  \gpsib_{\thb b} \; \gpsi_{\theta d}  +   \gpsib_{\thb b} \; \partial_\theta \partial_\thb\gpsi_{\theta d}  \right)
	\\
&\qquad\qquad\qquad
	+ \SF{\etaref}^{((ab)|(cd))}_{_\text{(D)}} \,\partial_\thb  \gpsib_{ab} \; \partial_\theta\gpsi_{cd} 
        +4\, \SF{\etaref}^{(\theta\thb)(\theta\thb)}_{_\text{(D)}} \,\partial_\thb  \gpsib_{\theta\thb} \; \partial_\theta\gpsi_{\theta\thb} 
	\Big]
	\bigg\}\bigg| + \text{ghost bilinears.}
\end{split}
\label{eq:Dcalc2}
\end{equation}	
More explicitly, one can plug in the following expressions:
\begin{equation}
\begin{split}
    \gpsib_{\thb b} | &= \left( \tilde{X}^\mu + (\sBdel, X^\mu)_\Kref \right) \partial_b X^\mu \\
    \gpsi_{\theta b} | &= \left( \mathfrak{J}^\mu - (\sBdel, X^\mu)_\Kref \right) \partial_b X^\mu\\
   \partial_\theta \partial_\thb  \gpsib_{\thb b} |&= \left( \phizT-\sBdel , (\tilde{X}_\mu + (\sBdel, X_\mu)_\Kref) \partial_b X^\mu \right)_\Kref\\
   \partial_\theta \partial_\thb  \gpsi_{\theta b} |&=- \left( \sBdel , (\mathfrak{J}_\mu - (\sBdel, X_\mu)_\Kref) \partial_b X^\mu \right)_\Kref \\
    \partial_\thb  \gpsib_{ab} |& = -  \source{h_{ab}} - \tilde{\gref}_{ab} + (\phizT-\sBdel, \gref_{ab})_\Kref  \\
     \partial_\theta  \gpsi_{ab}| & =  \source{h_{ab}} + \tilde{\gref}_{ab} + (\sBdel, \gref_{ab})_\Kref \\
   \partial_\theta\gpsi_{\theta\thb} | &= - \partial_\thb  \gpsib_{\theta\thb}| = \source{h_{\theta\thb}} + \tilde{\gref}_{\theta\thb}\,.
 \end{split}
 \label{eq:gpsisim}
\end{equation}
We will now compute \eqref{eq:Dcalc2} for bulk and shear viscosity separately.

\paragraph{Bulk viscosity term:} By explicit expansion of \eqref{eq:Dcalc2} using the MMO truncation of the superfields as defined in \S\ref{sec:mmofields} and \eqref{eq:gpsisim}, we find:
{
\begin{equation}
\begin{split}	
{\Lag}_{\text{wv}}^{^{\text{(visc.},\zeta)}} &=\int d\theta d \thb \;  \SF{\Lagref}^{^{\text{(visc.},\zeta)}}   \\
&=
	 - \sqrt{-\gref} \, \frac{i}{4} \; T\, \zeta(T)  
	 \bigg[ -i \, P^{ab} \prn{ \source{h_{ab}} + \tilde{\gref}_{ab} -
 	  (\phizT -\sBdel, \gref_{ab})_\Kref }+2 \source{h_{\theta\thb}}  \bigg] \\
& \qquad\qquad\qquad\qquad  \times \bigg[ -i \, P^{ab} \prn{ \source{h_{ab}} + \tilde{\gref}_{ab} +
 	  ( \sBdel, \gref_{ab})_\Kref }  +2 \source{h_{\theta\thb}}  \bigg] 
\end{split}
\label{eq:bvterm3}
\end{equation}
}\normalsize
This expression captures the full contribution of the bulk viscosity in the effective action, including its associated dissipation and  fluctuations. The latter are buried inside the 
$\tilde{\gref}_{ab}$, and we also see for the first time a contribution involving the metric source in the Grassmann-odd directions.

It remains to convince ourselves that the Lagrangian encodes viscous hydrodynamics in the correct way. The constitutive relations are 
\begin{equation}
\begin{split}
{\TEMref}_{_{\text{(visc.},\zeta)}}^{ab} 
&\equiv 
	\frac{2}{\sqrt{-{\gref}}} \, \frac{\delta {\Lag}^{^{\text{(visc.},\zeta)}}_\text{wv}}{\delta \source{h_{ab}}} 
	= \underbrace{ -i\phizT \, \zeta(T) \, \vartheta \, P^{ab}  }_{\text{classical}} \,
	  	+ \underbrace{\,i\,T \, \zeta(T) \,P^{ab} \, P^{cd} \, \tilde{\gref}_{cd} }_{\text{fluctuations}}
\\ 
{\Nref}_{_{\text{(visc.},\zeta)}}^a
&\equiv
	- \frac{1}{\sqrt{-{\gref}}} \, \frac{\delta {\Lag}^{^{\text{(visc.},\zeta)}}_\text{wv}}{\delta \sAt{a}}
 =  0  
 \,, 
\end{split}
\label{eq:TNwvdefideal}
\end{equation}	
where we set the sources to zero after variation, as usual. We also introduced the fluid expansion $\vartheta = \nabla_a u^a$. 
The classical part of the currents is as expected for bulk viscosity, after we set $\langle \phizT \rangle = -i$. We can then write the energy-momentum tensor including the fluctuation contribution somewhat suggestively as:
\begin{equation}
   {\TEMref}_{_{\text{(visc.},\zeta)}}^{ab}  = - \frac{1}{2} \, T \, \zeta(T) \, P^{ab} \, P^{cd} \, \prn{\lieD_\Kref\gref_{cd}  - 2i \, \tilde{\gref}_{cd} } \,.
\end{equation}
The conservation equations for ${\TEMref}_{_{\text{(visc.},\zeta)}}^{ab} $ follow as before by varying the action associated with \eqref{eq:bvterm3} with respect to $\tilde{X}^\mu$. One can verify readily that this will indeed give the expected result. We also recall that there are no contributions to the entropy current (and likewise for the free energy current) at first order.

Furthermore, as noted above, there is also a contribution to the energy-momentum tensor in the mixed Grassmann directions, 
\begin{equation}
{\TEMref}_{_{\text{(visc.},\zeta)}}^{\theta\thb} 
\equiv 
	\frac{2}{\sqrt{-{\gref}}} \, \frac{\delta {\Lag}^{^{\text{(visc.},\zeta)}}_\text{wv}}{\delta {h}_{\theta\thb}} 
	= \, \underbrace{ 2\, \phizT\, \zeta(T) \, \vartheta }_{\text{classical}} \, 
	- \underbrace{ 2\, T \, \zeta(T) \, P^{ab}\,\tilde{\gref}_{ab} }_{\text{fluctuations}}
\end{equation}

The full entropy production is also encoded in the superspace action. Namely, variation with respect to the source $\sBdel$ gives
\begin{equation}
\begin{split}
 \Delta_{_{\text{(visc.,}\zeta)}} &\equiv
	 -\frac{1}{\sqrt{-{\gref}}} \frac{\delta {\Lag}^{^{\text{(visc.,}\zeta)}}_\text{wv}}{\delta \sBdel}   
	  	 = \partial_a \Nref_{_{\text{(visc.,}\zeta)}}^a - \frac{1}{2} \, {\TEMref}_{_{\text{(visc.},\zeta)}}^{ab} \, \lieD_\Kref \gref_{ab} 
	\\
&=  \underbrace{(i\phizT) \,\frac{\zeta}{T} \, \vartheta^2}_{\text{classical}} - \underbrace{\frac{i}{2} \, T \, \zeta\, P^{ab} P^{cd} \, \lieD_\Kref \gref_{ab} \, \tilde{\gref}_{cd}}_{\text{fluctuations}} 
 \end{split}
\end{equation}
which can be verified by explicit calculation. Let us now set $\langle \phizT \rangle = -i$. Then the classical part of entropy production is precisely what is expected for dissipative fluids on general grounds \cite{Haehl:2015pja}. The second law requires $\Delta_{_{\text{(visc.,}\zeta)}} \geq 0$, which holds for arbitrary fluid configurations if and only if $\zeta \geq 0$.

A  first-principles way to argue for the positivity of $\zeta$ is by extracting the imaginary part of the effective action as a functional of the physical source (after turning off the fluctuation terms and  setting $\langle \phizT \rangle = -i$): 
\begin{equation} \label{eq:ImZeta}
 \text{Im}\big({\Lag}_{\text{wv}}^{^{\text{(visc.},\zeta)}}\big) = \sqrt{-\gref} \, \frac{1}{4} \, T \, \zeta(T) \, (P^{ab} \,\source{h_{ab}})^2  \,.
\end{equation}
In the path integral, the effective action occurs in an exponential $e^{i S_\text{wv}}$. Thus, for convergence we require Im$(S_\text{wv})\geq 0$. In the present case, this clearly translates to $\zeta \geq 0$.

\paragraph{The shear viscosity term:} We now wish to simplify the shear viscosity contribution to \eqref{eq:Dcalc2}.  To explain the constellation of indices there, let us write the shear part of the viscosity tensor more explicitly:
\begin{equation}
(-)^{K(I+J)}\,\SF{P}^{K\langle I} \, \SF{P}^{J \rangle L}
 = 
		\frac{1}{2} \prn{ (-)^{JK}
		\SF{P}^{IK}\, \SF{P}^{JL} + (-)^{I(J+K)} \SF{P}^{JK}\, \SF{P}^{IL} } - \frac{1}{d-1} \, \SF{P}^{IJ}\, \SF{P}^{KL} \,.
\label{eq:shearten}
\end{equation}	
By expanding \eqref{eq:Dcalc2} we get\footnote{ As a consistency check, note that all terms in this expression are either {\sf CPT} odd, or occur in {\sf CPT} conjugate pairs with a relative sign. Furthermore,  note that the expression is manifestly $\UT$ invariant since every difference field (dressed with a tilde) occurs with the correct covariant dressing involving $\sBdel$ (see also below). We have added the total derivative term described in \S\ref{sec:mmogeneral}.}
\small{
\begin{equation}
\begin{split}
&{\Lag}_{\text{wv}}^{^{\text{(visc.},\eta)}}  
=  -\sqrt{-\gref}\,\frac{i}{2}\, T\, \eta \, \bigg\{
	- \frac{1}{2} \prn{P^{ac}\, P^{b d}  + P^{bc}\, P^{ad} - \frac{2}{d-1} \, P^{ab} \, P^{cd} } 
	\prn{ \source{h_{ab}} + \tilde{\gref}_{ab} - (\phizT - \sBdel, \gref_{ab})_\Kref}\\
	&\qquad\qquad \qquad\qquad\qquad \times 
	\prn{ \source{h_{cd}} + \tilde{\gref}_{cd} + (\sBdel, \gref_{cd})_\Kref }  
	+   \frac{2\,i}{d-1}  \, P^{ab} \prn{ 2\,\source{h_{ab}} + 2\,\tilde{\gref}_{ab} - (\phizT-2 \sBdel , \gref_{ab})_\Kref  } \source{h_{\theta\thb}}\\	
& \qquad\;\; 
	 	 + 2i \, \brk{ 
		P^{c(a}  \, e_\nu^{b)}  \prn{\mathfrak{J}^{\nu} - (\sBdel,X^{\nu})_\Kref} \prn{\tilde{X}_\mu+(\sBdel,X_{\mu)})_\Kref } \partial_c X^\mu} 
		\prn{\source{h_{ab}} + \tilde{\gref}_{ab} + (\sBdel , \gref_{ab})_\Kref} \\
& \qquad\;\; 
	+ 2i \, \brk{ 
		P^{c(a}  \, e_\nu^{b)}  \prn{\tilde{X}^{\nu} + (\sBdel,X^{\nu})_\Kref } \prn{\mathfrak{J}_\mu  -(\sBdel,X_{\mu})_\Kref}\partial_c X^\mu} 
		\prn{\source{h_{ab}} + \tilde{\gref}_{ab} + (\phizT- \sBdel, \gref_{ab})_\Kref}   	  
\\ & \qquad\;\;
	+ 2i\, P^{ac}\, \partial_a X^\mu \, \partial_c X^\nu \, \left[ \left(\phizT-\sBdel, \tilde{X}_\mu +(\sBdel, X_\mu)_\Kref \right)_\Kref \, \left(\mathfrak{J}_\mu -(\sBdel, X_\mu)_\Kref\right) \right]
\\ & \qquad\;\;
	- 2i\, P^{ac}\, \partial_a X^\mu \, \partial_c X^\nu \, \left[ \left(\sBdel, \mathfrak{J}_\mu -(\sBdel, X_\mu)_\Kref \right)_\Kref \, \left(\tilde{X}_\mu +(\sBdel, X_\mu)_\Kref\right) \right]\\
&\qquad\;\; - 2 \brk{ \frac{d+1}{d-1}\,  \source{h_{\theta\thb}} + P^{\mu\nu} \prn{ \tilde{X}_\mu + (\sBdel, X_\mu)_\Kref } \prn{ \mathfrak{J}_\nu - (\sBdel, X_\mu)_\Kref } } \source{h_{\theta\thb}} 
\bigg\} \\
 &\quad\;\;  +\sqrt{-\gref} \, \bigg\{ T \, \eta(T) \left( \frac{1}{2} \, \gref^{cd} \prn{\source{h_{cd}} + \tilde{\gref}_{cd}+(\sBdel, \gref_{cd})_\Kref} - \Kref^c (\sAt{c} - \partial_c \sBdel)
\right) P^{ab} \\
&\qquad\quad +  (T\, \eta)' \, \prn{\tilde{T} + (\sBdel, T)_\Kref } P^{ab}
+ T\, \eta \, \brk{ 2\prn{\tilde{T} + (\sBdel, T)_\Kref } \, T \, \Kref^a \Kref^b - \source{h^{ab}} - \tilde{\gref}^{ab} - (\sBdel, \gref^{ab})_\Kref  }\bigg\}  \\
&\qquad\quad  \times \partial_a X^\mu \, \partial_b X^\nu \,   \prn{ \tilde{X}_\mu + (\sBdel, X_\mu)_\Kref } \prn{ \mathfrak{J}_\nu - (\sBdel, X_\mu)_\Kref }
\end{split}
\label{eq:L1etaC}
\end{equation}	
}\normalsize
The first line describes (among other things) the classical part and will give rise to the standard energy-momentum tensor for shear viscosity and its associated dissipation. All the remaining terms describe fluctuations.
If desired, \eqref{eq:L1etaC} can be simplified a little further by using the identity
\begin{equation}
P^{\mu\nu} \mathfrak{J}_\mu =  P^{\mu\nu}\, \prn{-i \,e_\mu^a \, \Kref_a -  \tilde{X}_\mu} = - P^{\mu\nu} \tilde{X}_\mu \,,
\label{eq:projJar}
\end{equation}	
exploiting  the fact that the spatial projector is transverse to the velocity (and hence the thermal vector).

We can now read off the following currents:
\begin{equation}
\begin{split}
{\TEMref}_{_{\text{(visc.},\eta)}}^{ab} 
&\equiv 
	\frac{2}{\sqrt{-{\gref}}} \, \frac{\delta {\Lag}^{^{\text{(visc.},\eta)}}_\text{wv}}{\delta \source{h_{ab}}} 
	\\
&= \underbrace{ -2\,i\phizT \, \eta(T) \, \sigma^{ab}  }_{\text{classical}} + \underbrace{ 2i \, T \, \eta \, P^{a\langle c} \, P^{d\rangle b} \, \tilde{g}_{cd} }_{\text{fluctuations}} \\
&\qquad\qquad\qquad\qquad - \underbrace{ \left[T\eta \left(2P^{(ac}P^{b)d} + P^{ab}P^{cd} \right) + T^2 \eta' \, u^au^b P^{cd} \right] \partial_{c}X^\mu \partial_{d}X^\nu \, \tilde{X}^{(\mu} \, \tilde{X}^{\nu)}   }_{\text{fluctuations}}\\
{\TEMref}_{_{\text{(visc.},\eta)}}^{\theta\thb} 
&\equiv 
	\frac{2}{\sqrt{-{\gref}}} \, \frac{\delta {\Lag}^{^{\text{(visc.},\eta)}}_\text{wv}}{\delta {h}_{\theta\thb}} 
	= \, \underbrace{ - \frac{4\,\eta(T)}{d-1} \, \phizT \, (\partial_c u^c)  }_{\text{classical}} \, + \, \underbrace{ 2\,T\, \eta(T) \, \left(  \frac{2}{d-1} 
	 \, P^{ab}\, \tilde{\gref}_{ab}  + 
	i \, P^{\mu\nu} \,\tilde{X}_\mu \tilde{X}_\nu\right)   }_{\text{fluctuations}}\\
 {\Nref}_{_{\text{(visc.},\eta)}}^a
&\equiv
	- \frac{1}{\sqrt{-{\gref}}} \, \frac{\delta {\Lag}^{^{\text{(visc.},\eta)}}_\text{wv}}{\delta \sAt{a}}
 =  \underbrace{  -T \, \eta(T) \big( P^{\mu\nu} \, \tilde{X}_\mu \, \tilde{X}_\nu \big) \Kref^a  }_{\text{fluctuations}}
\end{split}
\label{eq:TNwvdefeta}
\end{equation}	
where we have used \eqref{eq:projJar} to replace $\mathfrak{J}^\nu\mapsto -\tilde{X}^\nu$.
The classical parts give the dissipative constitutive relations for standard hydrodynamics, with $\sigma^{ab}$ being the shear tensor (the symmetrized, trace free derivative of the velocity field). The fluctuation terms proportional to $\tilde{\gref}_{ab}$ are precisely as anticipated in \cite{Haehl:2015uoc} and can be written in terms of a Hubbard-Stratonovich noise field as explained there.

Finally, the entropy production is 
\begin{equation}
\begin{split}
 \Delta_{_{\text{(visc.,}\eta)}} &\equiv
	 -\frac{1}{\sqrt{-{\gref}}} \frac{\delta {\Lag}^{^{\text{(visc.,}\eta)}}_\text{wv}}{\delta \sBdel}  \\
& = \partial_a \Nref_{_{\text{(visc.,}\eta)}}^a - \frac{1}{2} \, {\TEMref}_{_{\text{(visc.},\eta)}}^{ab} \, \lieD_\Kref  \gref_{ab} + \text{e.o.m.}\\
  &= \eta \, \sigma^{ab} \left( \lieD_\Kref \gref_{ab} - 2i \, \tilde{\gref}_{ab} \right) + 2 \eta \, P^{c(a} P^{b)d} \, \nabla_a u_b \, \tilde{X}_c \tilde{X}_d - \eta \, (u.\nabla) \prn{ P^{ab} \tilde{X}_a \tilde{X}_b } \,,
 \end{split}
\end{equation}
where we set $\langle \phizT \rangle =-i$ as before. The classical part of the entropy production is $\eta\, \sigma^{ab} \lieD_\Kref \gref_{ab} =  \frac{2}{T} \, \eta \, \sigma^2$, which manifestly satisfies the second law as long as $\eta \geq 0$. On top of this, our derivation predicts the fluctuation terms in $\Delta_{_{\text{(visc.,}\eta)}}$. In order to establish positivity of $\eta$, we can again compute the imaginary part of the effective action as a functional of the physical source and demand its non-negativity for the path integral to be well-defined. We find:
\begin{equation}
   \text{Im} \left({\Lag}_{\text{wv}}^{^{\text{(visc.},\eta)}}   \right) 
   = \sqrt{-\gref} \,  \frac{1}{2} \, T \, \eta \,\left[ \frac{1}{2} \prn{P^{ac}\, P^{b d}  + P^{bc}\, P^{ad} - \frac{2}{d-1} \, P^{ab} \, P^{cd} }  \source{h_{cd}} \right]^2 \,,
\end{equation}
which is positive semi-definite provided that $\eta \geq 0$. 
 
\subsection{Second order: some examples}
\label{sec:mmo2nd}

Second order hydrodynamics is rich and we will not record all possible terms here. They have been classified in \cite{Haehl:2015pja} and have been obtained for conformal fluids from a superspace effective action in \cite{Haehl:2015uoc}. They are, of course, also encompassed by the general discussion of all order hydrodynamics in \S\ref{sec:SigmaModel}. Here we discuss some examples in the MMO limit to illustrate the formalism. The dedicated reader will see from these examples that the formalism extends to all derivative orders (i.e., Class L of \cite{Haehl:2015pja} is reproduced entirely from our superspace formalism). 

To write an action for second order fluids, we will need derivatives of the velocity superfield.
First we have to decide how to take derivatives since we do not have a metric compatible connection. We will conventionally treat $\SF{\Kref}^I$ as a vector and correspondingly $\SF{u}^I$ is also defined with an upstairs index. We differentiate these objects and then lower indices whenever necessary. The reader can convince themselves that a different convention would not change any of the currents that we compute (in the MMO limit).

\paragraph{Derivatives of $\SF{\Kref}^I$ and $\SF{u}^I$:} The simplest set of derivatives to compute are those of the thermal vector $\SF{\Kref}^I$ on the worldvolume. Given our gauge fixing condition, we simply find
\begin{equation}
\Dut_I \SF{\Kref}^J = \delta_I^b \, \delta^J_a \; \partial_b \Kref^a \,.
\label{eq:KrefD}
\end{equation}
The velocity superfield is given in \eqref{eq:uSF} and we can take derivatives directly to obtain the acceleration superfield $\SF{\acc}^I 
\equiv 
	(-)^J\, \SF{u}^J \Dwv_J \SF{u}^I$:
{\small
\begin{equation}
\begin{split}
\SF{\acc}^a 
& = 
	\acc^a + \thb \theta \bigg\{ 2\,\frac{\tilde{T}}{T} \, \acc^a + u^b \partial_b \left(\frac{\tilde{T}}{T}\right) \, u^a +  \frac{(\Kref.\partial) T}{T} \,(u. \source{\mathcal{F}})\, u^a + \Cwvt^a{}_{bc}u^bu^c \bigg\} \\
\SF{\acc}^\theta  
&= -\frac{i}{2} \, \theta \, \bigg\{ - 2 \, \nabla_{(a}  \brk{ \partial_{b)}X^\mu \prn{\tilde{X}_{\mu} + (\sBdel, X_\mu)_\Kref} } + \source{h_{ab}} + \tilde{\gref}_{ab} + (\sBdel, \gref_{ab})_\Kref -4 \, \sAt{a} \Kref_b\bigg\} u^a u^b\\ 
\SF{\acc}^\thb &= \frac{i}{2} \, \thb \, \bigg\{ - 2 \, \nabla_{(a}\brk{  \partial_{b)}X^\mu \prn{\mathfrak{J}_{\mu} - (\sBdel, X_\mu)_\Kref}  }- \source{h_{ab}} - \tilde{\gref}_{ab} + (\phizT-\sBdel, \gref_{ab})_\Kref + 4 (\sAt{a} + \partial_a \phizT) \Kref_b \bigg\} u^a u^b
\end{split}
\label{eq:accSF}
\end{equation}
}\normalsize
where $\acc^a \equiv u^b \partial_b u^a$ as usual. 
Likewise the expansion follows from:
\begin{equation}
\SF{\vartheta} \equiv \Dwv_I \SF{u}^I  
= \vartheta + \thb \theta \brk{ \frac{\tilde{T}}{T} \, \vartheta + u^b \partial_b \left( \frac{\tilde{T}}{T} \right) + \frac{(\Kref.\partial)T}{T} (u.\source{\mathcal{F}}) + (-)^I\, \Cwvt^I{}_{Ic}\; u^c }
\label{eq:expSF}
\end{equation}	
where we defined the ordinary space expansion as $\vartheta \equiv \partial_a u^a $ and the trace $ (-)^I\, \Cwvt^I{}_{Ic}$ can be found in \eqref{eq:Ctrace}.

From the exhaustive classification of hydrodynamic transport in \cite{Haehl:2015pja} we know that the Lagrangian Class L of transport (i.e., adiabatic transport which can already be captured using a single-copy Lagrangian) is quite rich. At second order in derivatives, neutral parity-even fluids exhibit 5 types of transport that falls in Class L. We will not discuss all of these, but give some examples for illustration. 

\subsubsection{Acceleration and expansion squared}
\label{sec:acc}

We start with a Lagrangian consisting of the square of acceleration, $\SF{\acc}^2$, and the square of expansion, $\SF{\vartheta}^2$ (in the classification of \cite{Haehl:2015pja}, these belong to the hydrostatic Class $\PS$ and non-hydrostatic Class $\LS$, respectively): 
\begin{equation}
\begin{split}	
{\Lag}_{\text{wv}}^{^{(\acc,\vartheta)}} \equiv {\Lag}_{\text{wv}}^{^{(\acc)}} + {\Lag}_{\text{wv}}^{^{(\vartheta)}} &=\int d\theta d \thb \;  \frac{\sqrt{-\SF{\gref}}}{\SF{\bf z}}\;  \brk{
	 \SF{K}_\acc(\SF{T})\; \SF{\gref}_{IJ} \, \SF{\acc}^J \, \SF{\acc}^I  
	 +  \SF{K}_\vartheta(T)\, \SF{\vartheta}^2 }
\end{split}
\label{eq:bvterm3}
\end{equation}
Explicit evaluation of the superspace integral is straightforward and yields
{\small
\begin{equation}
\begin{split}	
{\Lag}_{\text{wv}}^{^{(\acc)}} &=  \sqrt{-{\gref}}\;  \bigg\{ \brk{ \prn{ 2K_\acc+\frac{T}{2}\,\K_\acc'} \acc^2\, u^a u^b + \frac{1}{2} \, K_\acc \, \acc^2 \, g^{ab} + K_\acc \, \acc^a \acc^b } (\source{h_{ab}}+ \tilde{\gref}_{ab})- K_\acc \, \acc^2 \, (\Kref. \source{\mathcal{F}})\\
&+\frac{i}{2} K_\acc\, \brk{ - 2 \,u^a u^b\, \nabla_{(a}\brk{ \partial_{b)}X^\mu \prn{\tilde{X}_{\mu} + (\sBdel, X_\mu)_\Kref} } + u^a u^b\prn{\source{h_{ab}} + \tilde{\gref}_{ab} + (\sBdel, \gref_{ab})_\Kref } + 4 (\Kref.\source{\mathcal{F}})
}  \\
& \quad\times \brk{ - 2 \, u^c u^d\,\nabla_{(c}  \brk{ \partial_{d)}X^\mu \prn{\mathfrak{J}_{\mu} - (\sBdel, X_\mu)_\Kref} } - u^c u^d\prn{\source{h_{cd}} + \tilde{\gref}_{cd} - (\phizT-\sBdel, \gref_{cd})_\Kref } - 4 \Kref^c (\source{\mathcal{F}}_c + \partial_c \phizT)
}  \\
&\quad\;\, + K_\acc\,u^a\,u^b \,\acc^d \,\Big[ 2\nabla_{(a} (\source{h_{b)d}}+\tilde{\gref}_{b)d}) +2 (\source{\mathcal{F}_{(a}}, \gref_{b)d})_\Kref   -  \nabla_d (\source{h_{ab}}+\tilde{\gref}_{ab}) - (\sAt{d},\gref_{ab})_\Kref \Big] \bigg\} 
\\
{\Lag}_{\text{wv}}^{^{(\vartheta)}} &=\sqrt{-{\gref}}\;  \bigg\{
 \brk{ \frac{1}{2} (K_\vartheta+TK_\vartheta') \vartheta^2\, u^a u^b + \frac{1}{2} \, K_\vartheta \, \vartheta^2 \, P^{ab} + 2\,K_\vartheta \, \vartheta \, \acc^{(a} u^{b)} } (\source{h_{ab}}+ \tilde{\gref}_{ab}) \\
&\qquad\quad\;\, + K_\vartheta \, \vartheta \, (u.\nabla) (\source{h_{ab}}+ \tilde{\gref}_{ab}) \, u^a u^b + K_\vartheta\, \vartheta \prn{ 2 (\Kref.\partial) T - \vartheta } \, (\Kref.\source{\mathcal{F}})  - 2i\,K_\vartheta \,\vartheta\,  (u.\partial) \source{h_{\theta\thb}} \\
&\qquad\quad\;\, +  K_\vartheta \, \vartheta\, u^a \, \gref^{bd} \Big[ 2\nabla_{(a} (\source{h_{b)d}}+\tilde{\gref}_{b)d}) +2 (\source{\mathcal{F}_{(a}}, \gref_{b)d})_\Kref   -  \nabla_d (\source{h_{ab}}+\tilde{\gref}_{ab}) - (\sAt{d},\gref_{ab})_\Kref \Big] 
 \bigg\}
\end{split}
\label{eq:bvterm4}
\end{equation}
}\normalsize
Note that the presence of fluctuation terms is only superficial: expanding out $\tilde{\gref}_{ab}$ as in \eqref{eq:gJdefs} shows that all dependence on $\tilde{X}^\mu$ actually cancels. We can then read off the energy-momentum tensor and the free energy current:
\begin{equation}
\begin{split}
{\TEMref}_{_{(\acc,\vartheta)}}^{ab} 
&\equiv 
	\frac{2}{\sqrt{-{\gref}}} \, \frac{\delta {\Lag}^{^{(\acc,\vartheta)}}_\text{wv}}{\delta {h}_{ab}} 
	\\
&= \prn{ 4K_\acc+T\,\K_\acc'} \acc^2\, u^a u^b +  K_\acc \, \acc^2 \, g^{ab} + 2\,K_\acc \, \acc^a \acc^b  +2 \nabla_c \brk{ K_\acc \prn{  u^a u^b \acc^c -2u^c u^{(a} \acc^{b)} }}  \\
&\quad +  K_\vartheta \, \vartheta^2 \, g^{ab}+  \prn{TK_\vartheta'\,\vartheta^2 -2K_\vartheta' \,\vartheta\, (u.\partial)T - 2K_\vartheta (u.\partial)\vartheta }  u^a u^b   - 2 \, \nabla_c \prn{K_\vartheta\, \vartheta \, u^c \, \gref^{ab} }  \\
 {\Nref}_{_{(\acc,\vartheta)}}^a
&\equiv
	- \frac{1}{\sqrt{-{\gref}}} \, \frac{\delta {\Lag}^{^{(\acc,\vartheta)}}_\text{wv}}{\delta {\source{{\cal F}_{a}}}}\\
 &=   K_\acc \, \acc^2 \, \Kref^a + K_\acc \, \prn{ \acc^a \, u^c \, u^d - 2 \, u^a \, u^{(c} \, \acc^{d)} } \lieD_\Kref \gref_{cd} - K_\vartheta \, \vartheta^2 \, \Kref^a  \,.
\end{split}
\label{eq:expsquaredres}
\end{equation}	
where we have set $\phizT=-i$. The above currents then precisely match the ones derived in Appendix F of \cite{Haehl:2015pja}, which provides a non-trivial consistency check on our formalism. 
Note that there are no fluctuation terms in the above currents (as expected in Class L). Finally, we find
\begin{equation}
\begin{split}
{\TEMref}_{_{(\acc,\vartheta)}}^{\theta\thb} 
&\equiv 
	\frac{2}{\sqrt{-{\gref}}} \, \frac{\delta {\Lag}^{^{(\acc,\vartheta)}}_\text{wv}}{\delta {h}_{\theta\thb}} 
	= 4\,i \, \nabla_a(  K_\vartheta \, \vartheta\,u^a ) \,,\\
 \Delta_{_{(\acc,\vartheta)}} &\equiv
	 -\frac{1}{\sqrt{-{\gref}}} \frac{\delta {\Lag}^{^{(\acc,\vartheta)}}_\text{wv}}{\delta \sBdel}  
	 = \text{e.o.m.} 
 \end{split}
\end{equation}
where ``e.o.m.'' denotes a term proportional to the equations of motion.

\subsubsection{Shear squared}
\label{sec:vorticity}

Let us now consider another, slightly more complicated, Class L term: the square of the shear tensor,
\begin{equation}
\begin{split}	
{\Lag}_{\text{wv}}^{^{(\sigma^2)}} &=\int d\theta d \thb \;  \frac{\sqrt{-\SF{\gref}}}{\SF{\bf z}}\;  \brk{
	 \SF{K}_\sigma(\SF{T})\, \SF{\sigma}^2 }
\end{split}
\label{eq:bvterm3}
\end{equation}
where 
\begin{equation}
\begin{split}
\SF{\sigma}^2 &\equiv (-)^{I (1+K+J+L) + K(1+L)+J(1+L) + L} \; \SF{\gref}^{IK} \SF{\gref}^{JL} \; 
	\SF{\sigma}_{IJ}\, \SF{\sigma}_{KL} \\
\SF{\sigma}_{IJ} &\equiv \frac{1}{2}\,(-)^{M(1+J+N) + N} \, \SF{P}_I{}^M\, \SF{P}_J{}^N \prn{ \Dwv_M \SF{u}_N  + (-)^{MN} \Dwv_N \SF{u}_M  -  \frac{2}{d-1} \, \SF{\gref}_{MN} \, \SF{\vartheta}}\,.
\end{split}
\label{eq:}
\end{equation}
That is, $\SF{\sigma}_{IJ}$ is the superspace version of the usual (symmetric, transverse, traceless) shear tensor. The superspace expansion of $\SF{\sigma}_{IJ}$ is computed in Appendix \ref{sec:MMOfurther} (see eq.\ \eqref{eq:supersigma}). Transport associated with an ordinary space Lagrangian of the form $\sigma^2$ was studied in \cite{Haehl:2015pja}.
The anti-symmetric combination $\SF{\Omega}_{IJ}=\Dwv_I \SF{u}_J -  (-)^{IJ} \Dwv_J \SF{u}_I$ (which is closely related to the vorticity tensor) could be treated similarly as in our analysis below.

For $\SF{\sigma}^2$ we find after a straightforward calculation (using \eqref{eq:supersigma}):
\begin{equation}
\begin{split}
\SF{\sigma}^2 
&= 
	\SF{\gref}^{ac} \SF{\gref}^{bd}\, \SF{\sigma}_{ab} \, \SF{\sigma}_{cd}
	-4\SF{\gref}^{a\theta} \SF{\gref}^{bd}\, \SF{\sigma}_{ab} \, \SF{\sigma}_{\theta d}
	-4\SF{\gref}^{a\thb} \SF{\gref}^{bd}\, \SF{\sigma}_{ab} \, \SF{\sigma}_{\thb d}
	-4\SF{\gref}^{\theta \thb} \SF{\gref}^{bd}\, \SF{\sigma}_{\theta b} \, \SF{\sigma}_{\thb d}
	\\
& \qquad - 4 \, \SF{\gref}^{\theta a} \SF{\gref}^{\thb b} \, \SF{\sigma}_{\theta\thb} \, \SF{\sigma}_{ab} - 2 \, \SF{\gref}^{\theta\thb} \SF{\gref}^{\thb \theta}  \, \SF{\sigma}_{\theta\thb} \, \SF{\sigma}_{\thb \theta} \\
& =  \sigma^2  - \frac{2}{(d-1)^2} \, \vartheta^2 
   + \thb\theta\, \Big\{ - 2 (\source{h_{ab}} + \tilde{\gref}_{ab} ) \, \sigma^{ac} \, \sigma^b{}_c 
+	2\, (\partial_\theta\partial_\thb \SF{\sigma}_{ab}|)\, \sigma^{ab}  \\
&\qquad\qquad\qquad\qquad\qquad\quad\;\; +4i\,  \prn{ \mathfrak{J}^a\, \sigma_a{}^c - g^{cd}(\partial_\thb \SF{\sigma}_{\theta d}|) } \prn{ \tilde{X}^b\, \sigma_{bc} - (\partial_\theta \SF{\sigma}_{\thb c}|) }\\
&\qquad\qquad\qquad\qquad\qquad\quad\;\; - \frac{4i}{d-1} \, \brk{  (\partial_\theta\partial_\thb \SF{\sigma}_{\theta\thb}|) + \tilde{X}^a \mathfrak{J}^b \, \sigma_{ab}  + \frac{1}{d-1} \, \vartheta \, \source{h_{\theta\thb}}  } \, \vartheta   \Big\}\,.
\end{split}
\label{eq:sigsq}
\end{equation}
The second line of the r.h.s.\ does not contribute to the currents. The other two lines yield:
\begin{equation}
\begin{split}
{\TEMref}_{_{(\sigma^2)}}^{ab} 
&= K_\sigma \, \sigma^2 \, \gref^{ab} + T \, K_\sigma'\, \sigma^2 \, u^a u^b - 4 \, K_\sigma \, \sigma^{ac}\sigma^b{}_c - 4 \, K_\sigma \, \frac{\vartheta}{d-1} \, \sigma^{ab} + 4\, K_\sigma \, \nabla_c u^{(a} \, \sigma^{b)c}\\
&\qquad + 4 \, K_\sigma \, \acc_c \, \sigma^{c(a} \, u^{b)}+ 2\, K_\sigma \, \sigma^2 \, u^au^b - 2 \, \nabla_c \prn{ K_\sigma \, u^c \, \sigma^{ab} }  \\
&\qquad - \frac{4}{(d-1)^2} \brk{ \prn{ K_\sigma  + \frac{T}{2}K'_\sigma  } \,\vartheta^2\, u^a u^b + \frac{1}{2} \, K_\sigma \,\vartheta^2\, \gref^{ab}- \nabla_c \prn{ K_\sigma \, \vartheta \, P^{ab} u^c }} \,,\\
 {\Nref}_{_{(\sigma^2)}}^a
& =  - K_\sigma \, \sigma^2 \, \Kref^a + \frac{2}{(d-1)^2} \, K_\sigma \, \vartheta^2 \, \Kref^a \,,
 \\
 {\TEMref}_{_{(\sigma^2)}}^{\theta\thb} &= \frac{4\,i\,(d+1)}{(d-1)^2} \, \nabla_a ( K_\sigma \, \vartheta \, u^a ) \,.
\end{split}
\label{eq:shearsqresult}
\end{equation}	
To intuit these constitutive relations, first note that there are no fluctuation terms involving $\tilde{X}^\mu$. This is consistent with the expectation that Class L transport is adiabatic and does not involve fluctuations. For a more detailed understanding, we can decompose the relations as follows: 
\begin{itemize}
\item The first two lines of ${\TEMref}_{_{(\sigma^2)}}^{ab}$, and the first term in $ {\Nref}_{_{(\sigma^2)}}^a$ correspond to the constitutive relations associated with the shear-tensor-squared Class L transport, worked out in detail in Appendix F of \cite{Haehl:2015pja}.\footnote{ Note a typo in Eq.\ (F.9) of \cite{Haehl:2015pja}: the first term in the third line has the wrong sign.} 
\item The last line of ${\TEMref}_{_{(\sigma^2)}}^{ab}$ together with the second term in $ {\Nref}_{_{(\sigma^2)}}^a$ correspond to Class L transport of type $\vartheta^2$. Indeed, one can easily check that these terms take the same form as $\{{\TEMref}_{_{(\vartheta^2)}}^{ab}, {\Nref}_{_{(\vartheta^2)}}^a\}$ in \eqref{eq:expsquaredres} up to normalization. This is true up to a single term in the energy-momentum tensor proportional to $u^{(a} \acc^{b)}$, which is, however, adiabatic by itself.
\end{itemize}
In conclusion, the Lagrangian \eqref{eq:bvterm3} describes a particular type of second order Class L transport in a consistent fashion. In order to match to the classification of second order Class L transport in \cite{Haehl:2015pja}, one has to form some linear combinations of the constitutive relations worked out there. We speculate that is is because superspace prefers a slightly different natural basis of constitutive relations than does the Landau-Ginzburg construction of \cite{Haehl:2015pja}. It would be interesting to work this out in detail by conducting an exhaustive analysis of second order Class L transport using the superspace formalism.

\section{The Eightfold Way from superspace}
\label{sec:SigmaModel}

Having understood the formalism behind the superspace construction of hydrodynamic effective actions, we now turn to the question of completeness, which inspired us to undertake this analysis. This will provide the proof for the theorem stated at the beginning of \S\ref{sec:topsigma}. As described earlier, in \cite{Haehl:2014zda,Haehl:2015pja} we examined the axioms of hydrodynamics as the low energy, near-equilibrium dynamics of conserved currents, subject to the local form of the second law being upheld. We argued there that a complete set of solutions to these axioms results in 8 distinct classes of allowed hydrodynamic transport, in addition to a class of forbidden terms whose presence would be deleterious to the second law. This classification provided an explicit construction of all consistent hydrodynamic constitutive relations at any order in derivatives. In this section we outline how to reproduce this structure from effective actions in superspace.

Our classification scheme, dubbed \emph{the Eightfold Way}, for obvious reasons, can be summarized as follows:
\begin{itemize}
\item Class A: anomalous transport originating from both flavor and Lorentz anomalies in conserved currents.
\item Class B: Berry-like transport such as Hall viscosity in parity odd fluids,  shear-vorticity transport in parity-even systems, 
$T^{\mu\nu} \sim \lambda_2(T)\, \sigma^{\rho (\mu} \omega_{\rho}^{\;\nu)}$, etc.
\item Class C: identically conserved entropy with no physical transport, as occurring in systems with macroscopic ground state degeneracy.
\item Class D: dissipative (entropy producing) transport, including familiar quantities such as shear and bulk viscosities, conductivities, etc.
\item Class $\PS$: hydrostatic scalar transport which can be inferred from thermal equilibrium on curved spacetime with a global timelike Killing vector. It is described by the simplest of effective actions, viz., an equilibrium partition function. Examples include ideal fluids, curvature couplings $T^{\mu\nu} \sim \kappa(T)\, R^{\mu\sigma\nu\rho} u_\sigma\, u_\rho$, vorticity-squared transport $T^{\mu\nu} \sim \lambda_3(T)\, \omega^{\rho (\mu} \omega_{\rho}^{\ \nu)}$, acceleration squared considered in \S\ref{sec:mmo2nd}, etc.
\item Class $\LS$: hydrodynamic (non-dissipative) scalar transport with a free energy current that is generated from a scalar functional. These include terms that vanish in equilibrium but can nevertheless be inferred  from a Landau-Ginzburg effective action. An example is shear-squared transport $T^{\mu\nu} \sim \lambda_1(T)\, \sigma^{\rho (\mu} \sigma_{\rho}^{\ \nu)}$ which we considered in \S\ref{sec:mmo2nd}, cf., \eqref{eq:bvterm3}. 
\item Class $\PV$: hydrostatic transport, with the novelty that the free energy current is transverse to the fluid flow. 
This class of terms arises in the context of anomalous fluids.
\item Class $\GV$: hydrodynamic transport involving a transverse free energy current. This is perhaps the least well-understood class with the simplest examples involving charged fluids at second order in derivative expansion.
\end{itemize} 

In addition we also defined a Class $\PF$ of hydrostatic forbidden terms, which follow from the equilibrium partition function analysis of \cite{Banerjee:2012iz,Jensen:2012jh}; these terms are forbidden from appearing in any physical fluid as they violate the second law.
In practice, since one tends to present fluid dynamical constitutive relations in a particular fluid frame, one usually ends up interpreting these $\PF$ terms as relations between transport coefficients. See Section 14 of \cite{Haehl:2015pja} for a concise summary of the classification and explicit examples.

It is important to note that of the 8 classes described above, only Class D terms lead to entropy production. The remaining 7 classes are adiabatic and should be viewed as non-dissipative transport. They were also argued to be captured by an effective action, dubbed Class $\LT$ action, in \cite{Haehl:2014zda,Haehl:2015pja}, which as we shall see bears a large degree of structural similarity to the superspace construction described here.

In what follows, we will describe in the superspace formalism the effective action for 6 of these 8 classes, eschewing for the present, a discussion of Classes A and $\PV$ which are associated with anomalies and require additional formalism that would obscure the presentation but should be straightforward to implement (see \S\ref{sec:discussion}). This will prove the theorem of \S\ref{sec:topsigma} upto anomalous transport, which we leave for a future investigation. Elements of the following discussion will be an elaboration of concepts explained in \cite{Haehl:2015uoc}.
More specifically, for Classes $\PS$, $\LS$ combined into Class L the effective action is given in  \eqref{eq:Llag}, the Class D Lagrangian is given in \eqref{eq:Dlag2}, while he effective actions of the other non-anomalous Classes B, C, and $\GV$ are given in \eqref{eq:LagB}, \eqref{eq:HVbarLagN}, and \eqref{eq:LagC}, respectively. To the best of our understanding this demonstrates that all of allowed non-anomalous hydrodynamic transport can be embedded into our formalism.

\subsection{Class L (= Class $\PS \ \cup$ Class $\LS$)}
\label{sec:classL}

The scalar classes $\PS$ and $\LS$ are ones where the free energy is aligned with the fluid velocity. Moreover, the free energy can be generated from a single scalar Landau-Ginzburg functional (a generalized pressure term). It  is therefore simplest to discuss them at the same time. 

As in \cite{Haehl:2015pja} we define therefore Class L = Class $\PS \ \cup$ Class $\LS$ and seek a scalar generating functional for all the currents. This case is easy to understand without the need for superspace; all we needed was a scalar Landau-Ginzburg function of the hydrodynamic fields (see \S6 and \S7 of \cite{Haehl:2015pja}). For neutral fluids under discussion we would construct a Lagrangian functional which depends on $\Kref^a$ and $\gref_{ab}$. The upgrade to superspace simply lifts this functional to a superfield. 

For Class L, we consider a general superspace Lagrangian as in \eqref{eq:fullLag} without usage of $\gpsi_{IJ}$ and $\gpsib_{IJ}$:
\begin{equation}
\boxed{
\Lag_\text{wv}^{\text{(L)}}  = 
	\int\,d\theta\, d\thetab\, \frac{\sqrt{-\SF{\gref}}}{\zsf} \ \SF{\Lagref}^{\text{(L)}}[\SF{\gref}_{IJ},
	\SF{\Kref}^I,\Dwv_J] \,.
}
\label{eq:Llag}
\end{equation}
The superspace integral can be performed by variation with respect to the various fields, and subsequent application of appropriate (covariant) derivatives. Keeping track for simplicity only of the variation with respect to the sources $\{\source{h_{ab}},\source{h_{\theta\thb}},\sAt{a},\sBdel\}$ that we turned on in the MMO limit, we find in general
{\small
\begin{equation}
\begin{split}
\Lag_\text{wv}^{\text{(L)}} 
&=
	 \partial_\theta \partial_\thb \left( \frac{\sqrt{-\SF{\gref}}}{\zsf} \ 
	 \SF{\Lagref}^{\text{(L)}}[\SF{\gref}_{IJ}\,, \SF{\Kref}^I,\Dwv_I] \right) \bigg{|} \\
 &\equiv 
 	\sqrt{-{\gref}} \left\{ \frac{1}{2} \, \TEMref_{{_\text{(L)}}}^{ab}  \, \left(\source{h_{ab}} + (\sBdel, \gref_{ab})_\Kref \right)- \Nref_{{_\text{(L)}}}^a \, \left(\sAt{a} - \partial_a \sBdel \right) + \frac{1}{2} \, \TEMref_{{_\text{(L)}}}^{\theta\thb}  \, \source{h_{\theta\thb}} 
	\right\}  
  + \ldots \\
 &= \sqrt{-{\gref}}\left\{ \frac{1}{2} \, \TEMref_{{_\text{(L)}}}^{ab}  \, \source{h_{ab}}- \Nref_{{_\text{(L)}}}^a \, \sAt{a} + \frac{1}{2} \, \TEMref_{{_\text{(L)}}}^{\theta\thb}  \, \source{h_{\theta\thb}} + \sBdel \; \Delta \right\} + \ldots 
\end{split}
\end{equation}
}\normalsize
where the ellipses ``$\ldots$'' denote ghost bilinears and terms at higher orders in sources, and we defined 
\begin{equation}\label{eq:LL}
\begin{split}
\TEMref_{{_\text{(L)}}}^{ab} &\equiv \frac{2\,\zsf}{\sqrt{-\SF{\gref}}}
         \frac{\delta}{\delta \SF{\gref}_{ab}} \left\{ \frac{\sqrt{-\SF{\gref}}\; \SF{\Lagref}^{\text{(L)}} }{\zsf} \right\} \bigg{|} =
	\frac{2}{\sqrt{-\gref}} \, \frac{\delta \Lag_\text{wv}^{\text{(L)}}}{\delta \source{h_{ab}}}\,,\\
\TEMref_{{_\text{(L)}}}^{\theta\thb} &\equiv  \frac{\zsf}{\sqrt{-\SF{\gref}}}
        \frac{\delta}{\delta \SF{\gref}_{\theta\thb}} \left\{ \frac{\sqrt{-\SF{\gref}}\; \SF{\Lagref}^{\text{(L)}} }{\zsf} \right\} \bigg{|} =
	\frac{2}{\sqrt{-\gref}} \, \frac{\delta \Lag_\text{wv}^{\text{(L)}}}{\delta \source{h_{\theta\thb}}} \,,\\ 
\Nref_{{_\text{(L)}}}^a &\equiv
         - \frac{\zsf}{\sqrt{-\SF{\gref}}}\frac{\delta}{\delta \SF{\Ascr}_{a}} 
	\left\{ \frac{\sqrt{-\SF{\gref}}\; \SF{\Lagref}^{\text{(L)}} }{\zsf}  \right\} \bigg{|} = -  \frac{1}{\sqrt{-\gref}} \frac{\delta \Lag_\text{wv}^{\text{(L)}}}{\delta \source{\mathcal{F}_{a}}} 	 \,.
\end{split}
\end{equation}	
In Class L, the Grassmann-odd components of the currents vanish. The adiabaticity equation in this class is therefore simply the ordinary-space adiabaticity equation, up to ghost bilinears as usual: 
\begin{equation}
 \nabla_a \, \Nref_{{_\text{(L)}}}^a = \frac{1}{2} \, \TEMref_{{_\text{(L)}}}^{ab} \, \lieD_\Kref \gref_{ab} + \ldots \,.
\end{equation}
Therefore, $\Delta=0$ as expected. All of this is consistent with the basic intuition that the Class L terms are captured by an ordinary Landau-Ginzburg sigma model for the dynamical part of the hydrodynamic pion $X^\mu$.

\subsection{Class D}
\label{sec:classD}

This is the class of hydrodynamic constitutive relations containing the physical dissipative transport. We have already seen in \S\ref{sec:topsigma} how the origin of entropy production is encoded in a (differential operator valued) super-tensor with certain symmetries. We will now add a few more details to this abstract discussion; the explicit examples of bulk and shear viscosities in \S\ref{sec:mmo1st} should provide the reader with intuition for these manipulations.

The Class D action is parameterized by a rank 4 super-tensor $\SF{\etaref}^{IJKL}$:
\begin{equation}
\boxed{
\begin{split}
\Lag_\text{wv}^{(\text{D})} &= 
	\int\,d\theta\, d\thetab\, \frac{\sqrt{-\SF{\gref}}}{\zsf} \; 
	\SF{\Lagref}_{(\text{D})} \\
	& \text{ with } 
	\quad 
	\SF{\Lagref}_{(\text{D})} =
 	 - \frac{i}{4}\, (-)^{(I+J)(K+L)  + IJ+KL+K+L} \;\SF{\etaref}^{((IJ)|(KL))}_{_\text{(D)}}\, 
 	 \gpsib_{IJ} \; \gpsi_{KL} \,.
	 \end{split}
}
 \label{eq:Dlag2}
\end{equation}
where we remind the reader of the definitions $\gpsib_{IJ} \equiv \Dut_\theta \SF{\gref}_{IJ}$ and $\gpsi_{IJ} \equiv \Dut_\thb \SF{\gref}_{IJ}$.
The dissipation tensor has the following symmetry properties:
\begin{equation}
\begin{split}
& \SF{\etaref}_{_\text{(D)}}^{((IJ)|(KL))} = (-)^{IJ}\; \SF{\etaref}_{_\text{(D)}}^{((JI)|(KL))} = (-)^{KL}\,\; \SF{\etaref}_{_\text{(D)}}^{((IJ)|(LK))}  \\
& \SF{\etaref}_{_\text{(D)}}^{((IJ)|(KL))} = (-)^{(I+J)(K+L)} \; \SF{\etaref}_{_\text{(D)}}^{((KL)|(IJ))} \,.
\end{split}
\label{eq:etacpt}
\end{equation} 
The first line is obvious since the super-tensor $\SF{\etaref}^{IJKL}_{_\text{(D)}}$ must have the correct graded symmetric index pairs to be contracted against the derivatives of the metric tensor. The second line is what distinguishes Class D from Class B. The latter will be discussed in the next subsection and will have the same structure but with a graded anti-symmetric $\etaref$ tensor.

Furthermore, the dissipative tensor satisfies an Onsager relation under {\sf CPT} transformations:
\begin{equation}
\SF{\etaref}^{((IJ)|(KL))}_{_\text{(D)}}  = (-)^{(I+J)(K+L)}\; [ \SF{\etaref}^{((KL)|(IJ))}_{_\text{(D)}} ]^{\text{\sf{\tiny CPT}}}\,.
\end{equation}
To understand this requirement, let us consider the {\sf CPT} transformation of the Lagrangian density \eqref{eq:Dlag2}:
\begin{equation}
\begin{split}
\SF{\Lagref}_{(\text{D})}
&\stackrel{\text{\sf{\tiny CPT}}}{\longmapsto}	
	\frac{i}{4}  (-)^{(I+J)(K+L)  + IJ+KL+K+L} \;  [ \SF{\etaref}^{((IJ)|(KL))}_{_\text{(D)}} ]^{\text{\sf{\tiny CPT}}} \,  \gpsi_{IJ} \; \gpsib_{KL} \\
&=
	\frac{i}{4}  (-)^{(I+J)(K+L)  + IJ+KL+K+L +(I+J+1)(K+L+1)} \;  [ \SF{\etaref}^{((IJ)|(KL))}_{_\text{(D)}} ]^{\text{\sf{\tiny CPT}}} \, \gpsib_{KL}\; \gpsi_{IJ}  \\
& = 
	-\frac{i}{4}(-)^{IJ+KL+K+L} \,  [ \SF{\etaref}^{((KL)|(IJ))}_{_\text{(D)}} ]^{\text{\sf{\tiny CPT}}}\,  \gpsib_{IJ}\; \gpsi_{KL}	
\end{split}
\label{eq:etacptderive}
\end{equation}
from which we see \eqref{eq:etacpt} naturally follows such that $\SF{\Lag}_\text{wv}^{(\text{D})}$ is {\sf CPT} invariant. Note that {\sf CPT} invariance of $\SF{\Lag}_\text{wv}^{(\text{D})}$ is consistent with the Lorentzian path integral: in the action $ i S_\text{wv}$ introduced  in \eqref{eq:fullLag}, which appears in the functional integral, the anti-linear action of {\sf CPT} (sending $i \mapsto -i$) cancels the sign coming from the $R$-parity action on the super-integration measure (sending $d\theta \, d\thb \mapsto - d\theta \, d\thb$).

We have already seen some examples of $\SF{\etaref}^{IJKL}_{_\text{(D)}}$ built using the projector $\SF{P}^{IJ}$ in 
\S\ref{sec:mmo1st}. To second order in the gradient expansion, it suffices to consider super-tensors parameterized by two non-negative definite scalar functions and an arbitrary (i.e., unconstrained) super-tensor involving a derivative operator \cite{Haehl:2015pja}:
\begin{equation}
\begin{split}
\SF{\etaref}^{IJKL}_{_\text{(D)}} &=
	\SF{\zeta}(\SF{T})\, \SF{T}\, \SF{P}^{IJ} \, \SF{P}^{KL} 
	+ 2\, \SF{\eta}(\SF{T})\, \SF{T}\, (-)^{K(I+J)}\,\SF{P}^{K\langle I} \, \SF{P}^{J \rangle L} \\
&\qquad 
	+\frac{1}{2}\left( \SF{\mathcal{N}}^{(IJ)(KL)}[\Dwv] +  (-)^{(I+J)(K+L)}\; 
		\SF{\mathcal{N}}^{(KL)(IJ)}[\Dwv] \right) 
\end{split}
\label{eq:D2ndorder}
\end{equation}	
where the symmetrized traceless projection in superspace is defined in \eqref{eq:shearten}.

There are a couple of important features to note about the effective action \eqref{eq:Dlag2}. Firstly there  is the explicit $i$ in front of the Lagrangian, which signals dissipative contributions. This is required to ensure that the full action $i \, S_\text{wv}^{(\text{D})}$ is {\sf CPT} invariant.  It is instructive to compare this with the Martin-Siggia-Rose construction \cite{Martin:1973zz} for the Langevin dynamics of a Brownian particle. A discussion in the framework of thermal equivariance can be found in \cite{Haehl:2016uah} (see also \cite{Haehl:2015foa}).  Secondly, this is the sector that leads to non-trivial entropy production. As we have argued in \S\ref{sec:topsigma} all we need for the second law to hold is that $\etaref^{(ab)(cd)}_{_\text{(D)}} \equiv \SF{\etaref}^{(ab)(cd)}_{_\text{(D)}}|$ be a non-negative definite quadratic form, when interpreted as a map from $\text{Sym}_2 \mapsto \text{Sym}_2$. This ensures positivity of entropy production (see \eqref{eq:DeltaD} below). We will demonstrate positivity below.

Let us analyze the dissipative Lagrangian more explicitly by doing the superspace integral. For simplicity, we assume that $\SF{\etaref}^{(IJ)(KL)}_{_\text{(D)}}$ is simply a super-tensor (as opposed to a derivative operator valued super-tensor). We can then compute directly from \eqref{eq:Dlag2} the action in ordinary space by taking super-derivatives. The result was already given in \eqref{eq:Dcalc2}, which we reproduce here:
\begin{equation}
\begin{split}
\Lag_\text{wv}^{(\text{D})} 
&=	-\frac{i}{4}\, \sqrt{-{\gref}}\,\bigg\{
	4
 	 \left( \partial_\theta \partial_\thb\SF{\etaref}^{((\thb c)|(\theta d))}_{_\text{(D)}} + \left[ \frac{1}{2} \, \gref^{ab}(\source{h_{ab}}+ \tilde{\gref}_{ab}) - \sAt{a} \Kref^a \right]\,\SF{\etaref}^{(\thb c)(\theta d)}_{_\text{(D)}}  \right)
	   \gpsib_{\thb c} \; \gpsi_{\theta d} \\
&\qquad	+ 
	 2\,\partial_\theta 
 	 \SF{\etaref}^{((ab)|(\theta d))}_{_\text{(D)}}\,
	 \partial_\thb  \gpsib_{ab} \; \gpsi_{\theta d} 
	 -
	  4\,\partial_\theta 
 	  \SF{\etaref}^{((\theta\thb)|(\theta d))}_{_\text{(D)}}\,
	 \partial_\thb  \gpsib_{\theta\thb} \; \gpsi_{\theta d} 
        \\
&\qquad
	+2\,
	 \partial_\thb 
 	   \SF{\etaref}^{((\thb b)|(cd))}_{_\text{(D)}}\,
	\gpsib_{\thb b} \; \partial_\theta  \gpsi_{cd} 
	-4\,
	 \partial_\thb 
 	 \SF{\etaref}^{((\thb b)|(\theta\thb))}_{_\text{(D)}}\,
	\gpsib_{\thb b} \; \partial_\theta  \gpsi_{\theta\thb} \\	
&\qquad
	+ \Big[
	-2 \, \SF{\etaref}^{((\theta\thb)|(cd))}_{_\text{(D)}} \left( \partial_\thb \gpsib_{\theta\thb} \; \partial_\theta \gpsi_{cd} + \partial_\theta \gpsi_{\theta\thb} \; \partial_\thb \gpsib_{cd} \right) \\
&\qquad\qquad\qquad +4 \,\SF{\etaref}^{((\thb b)|(\theta d))}_{_\text{(D)}} 
	\left( \partial_\theta \partial_\thb  \gpsib_{\thb b} \; \gpsi_{\theta d}  +   \gpsib_{\thb b} \; \partial_\theta \partial_\thb\gpsi_{\theta d}  \right)
	\\
&\qquad\qquad\qquad
	+ \SF{\etaref}^{((ab)|(cd))}_{_\text{(D)}} \,\partial_\thb  \gpsib_{ab} \; \partial_\theta\gpsi_{cd} 
        +4\, \SF{\etaref}^{(\theta\thb)(\theta\thb)}_{_\text{(D)}} \,\partial_\thb  \gpsib_{\theta\thb} \; \partial_\theta\gpsi_{\theta\thb} 
	\Big]
	\bigg\}\bigg| + \text{ghosts}
\end{split}
\label{eq:Dcalc3}
\end{equation}	
We can now isolate the terms which describe classical hydrodynamics and fluctuations. The classical parts of the currents only get a contribution from the first term in the last line of \eqref{eq:Dcalc3}:
{\small
\begin{equation}
\begin{split}
\TEMref_{_\text{(D,class.)}}^{ab} &= -\frac{i \, \phizT}{2} \, \etaref^{((ab)|(cd))}_{_\text{(D)}} \, \lieD_\Kref \, \gref_{cd} \,, \\
\Nref_{_\text{(D,class.)}}^a &= 0 \,.
\end{split}
\label{eq:Dconst}
\end{equation}	
}\normalsize
These classical terms (not involving $\tilde{X}$) are consistent with the general analysis of Class D transport in classical hydrodynamics. This structure was already anticipated in \cite{Haehl:2015uoc}.\footnote{ Note in particular that a vanishing current $\Nref_{_\text{(D,class.)}}^a$ is consistent with our assumption that $\SF{\etaref}^{IJKL}$ is a tensor (not a tensor valued derivative operator).} 

The fluctuation currents can a-priori get contributions from each term in \eqref{eq:Dcalc3} which involves a source. While we will not write this in generality (the answer is obvious from \eqref{eq:Dcalc3}), we note that it can be straightforwardly extracted for any given choice of $\SF{\etaref}^{IJKL}$. The resulting fluctuation currents provide a prediction by our formalism.

\paragraph*{The second law:}
Setting $\langle i\, \phizT \rangle= 1$, we can easily verify the standard adiabaticity equation from \eqref{eq:Dconst}: 
\begin{equation}
\nabla_a \Nref_{_\text{(D,class.)}}^a - \frac{1}{2}\, \TEMref_{_\text{(D,class.)}}^{ab} \, \lieD_\Kref g_{ab} =   \Delta   \,,
\end{equation}
where the total entropy production is 
\begin{equation}
\label{eq:DeltaD}
\Delta \equiv \frac{1}{4} \, \etaref^{((ab)|(cd))}_{_\text{(D)}}  \, \lieD_\Kref \gref_{ab} \; \lieD_\Kref \gref_{cd}  \,.
\end{equation}
Let us finally demonstrate how the second law follows, i.e., $\Delta \geq 0$. From \eqref{eq:DeltaD} it is already clear that $\Delta$ is a quadratic form. Therefore, all that needs to be shown is that $\etaref^{((ab)|(cd))}_{_\text{(D)}}$ is a semi-positive definite map from symmetric two-tensors to symmetric two-tensors. We will now argue that this follows simply from convergence of the path integral. 

In the path integral the effective action appears in the form $e^{i S_\text{wv,diss.}}$. Consequently, a well defined path integral requires Im$(S_\text{wv,diss.}) \geq 0$. From \eqref{eq:Dcalc3} we can extract the imaginary part of the action. For this purpose, we turn off the fluctuations and set the auxiliary sources $\{\source{h_{\theta\thb}},\source{\mathcal{F}_I},\sBdel\}$ to zero. This leaves us with an effective action in terms of the physical source $\source{h_{ab}}$ whose imaginary part is given by
\begin{equation} \label{eq:ImS}
\begin{split}
  \text{Im}(S_\text{wv,diss.} [\source{h_{ab}}] )&= \frac{1}{4} \, \int d^d \sigma \sqrt{-\gref} \ \etaref^{((ab)|(cd))}_{_\text{(D)}} \,\source{h_{ab}}\, \source{h_{cd}} 
 \end{split}
\end{equation}
which is a quadratic form as desired and originates from the last two lines of \eqref{eq:Dcalc3}. Positivity of the above expression then requires that $\etaref^{((ab)|(cd))}_{_\text{(D)}}$ is a semi-positive definite map on symmetric two-tensors. Note that the derivative expansion implies that once we have established the general form of entropy production \eqref{eq:DeltaD} with \eqref{eq:D2ndorder}, it is sufficient to constrain the leading viscous coefficients ($\eta,\zeta\geq 0$). This is a compact version of Bhattacharyya's theorem \cite{Bhattacharyya:2014bha,Bhattacharyya:2013lha}.

Let us note that demanding Im$(S_\text{wv,diss.}) \geq 0$ within the hydrodynamic limit is tantamount to asking for the path integral to converge within the hydrodynamic expansion. A-priori one could have allowed for the possibility that Im$(S_\text{wv,diss.}) $ has any sign within the hydrodynamic regime, but receives non-hydrodynamic corrections that render the full path integral convergent. If this were to be the case, we would find we are missing important information within the effective field theory, and cannot use the effective actions to derive any useful statements per se. Our analysis assumes that the hydrodynamic effective actions are complete in and of themselves, and  do not require any additional non-perturbative 
(from the viewpoint of the gradient expansion) information to complete them. On a related note, one would also argue that these effective actions should not lead to non-perturbative constraints that violate the rules of the gradient expansion. 

To summarize our discussion, we have provided a complete derivation of the two statements that constitute the second law in hydrodynamics: $(i)$ the most general form of the dissipative constitutive relations, and $(ii)$ the positivity of total entropy production, $\Delta \geq 0$. We thus have an effective action description of the entire class of dissipative transport uncovered in \cite{Haehl:2014zda,Haehl:2015pja}.

\subsection{Class B}
\label{sec:classB}

The Class B action is also parameterized by a rank 4 super-tensor $\SF{\etaref}^{[(IJ) | (KL)]}_{_\text{(B)}}$ which is graded symmetric in the first two and second two pairs of indices, but is graded anti-symmetric under swapping the two pairs. This is a straightforward generalization of the discussion in \cite{Haehl:2015pja}. However, now we have to ensure that the Lagrangian density is {\sf CPT} even, which means unlike in the case of the Class D terms we are not going to include explicit factors of $i$. The only other choice we have is to use the ghost number zero field strength component $\Fs_{\thb \theta}$ which leads to the following action for Class B:
\begin{equation}
\boxed{
\begin{split}
\Lag_\text{wv}^{(\text{B})} &= 
	\int\,d\theta\, d\thetab\, \frac{\sqrt{-\SF{\gref}}}{\zsf} \; 
	\SF{\Lagref}_{(\text{B})} \\
	& \text{ with } 
	\quad 
	\SF{\Lagref}_{(\text{B})} =
	 \frac{1}{4}\, (-)^{(I+J)(K+L)  + IJ+KL+K+L}\,\Fs_{\thb\theta} \, \SF{\etaref}^{[(IJ)|(KL)]}_{_\text{(B)}}\, \gpsib_{IJ} \; \gpsi_{KL} \,.
	 \end{split}
}
 \label{eq:LagB}
\end{equation}
The Class B tensor has the following symmetry properties:
\begin{equation}
\begin{split}
& \SF{\etaref}_{_\text{(B)}}^{[(IJ)|(KL)]} = (-)^{IJ}\; \SF{\etaref}_{_\text{(B)}}^{[(JI)|(KL)]} = (-)^{KL}\,\; \SF{\etaref}_{_\text{(B)}}^{[(IJ)|(LK)]}  \\
& \SF{\etaref}_{_\text{(B)}}^{[(IJ)|(KL)]} = -(-)^{(I+J)(K+L)} \; \SF{\etaref}_{_\text{(B)}}^{[(KL)|(IJ)]} \,.
\end{split}
\label{eq:etabcpt}
\end{equation} 
where the second line distinguishes this type of transport from Class D. {\sf CPT} transformations act as
\begin{equation}
\SF{\etaref}_{_\text{(B)}}^{[(IJ)|(KL)]} = (-)^{(I+J)(K+L)} \, [ \SF{\etaref}^{[(KL)|(IJ)]}_{_\text{(B)}} ]^{\text{\sf{\tiny CPT}}}  \,.
\end{equation}

Note that despite this {\sf CPT} transformation, the Lagrangian \eqref{eq:LagB} gives a {\sf CPT} invariant action due to the non-minimal coupling to $\SF{\mathscr{F}}_{\thb\theta}$ (which is by itself {\sf CPT} invariant, but will later on acquire a {\sf CPT} breaking expectation value $\langle \SF{\mathscr{F}}_{\thb\theta} | \rangle = \langle \phizT \rangle = -i$).

An abstract analysis completely analogous to the one we did for Class D in the previous subsection can be performed. In particular, an equation of the same form as \eqref{eq:Dcalc3} holds and leads to the following classical currents:
{\small
\begin{equation}
\begin{split}
\TEMref_{_\text{(B,class.)}}^{ab} &\equiv  - \frac{(i\,\phizT)^2}{2} \, \etaref^{[(ab)|(cd)]}_{_\text{(B)}} \, \lieD_\Kref \, \gref_{cd} \,, \\
\Nref_{_\text{(B,class.)}}^a &\equiv 0 \,.
\end{split}
\label{}
\end{equation}	
}\normalsize
The adiabaticity equation is satisfied for the classical currents with vanishing entropy production: 
\begin{equation}
\nabla_a \Nref_{_\text{(B,class.)}}^a - \frac{1}{2}\, \TEMref_{_\text{(B,class.)}}^{ab} \, \lieD_\Kref g_{ab} =   \Delta = 0  \,,
\end{equation}
due to the antisymmetry $\etaref^{[(ab)|(cd)]}_{_\text{(B)}} = - \etaref^{[(cd)|(ab)]}_{_\text{(B)}}$.

\subsection{Class $\GV$}
\label{sec:classHVbar}

We call as Class $\GV$ the non-dissipative transport\footnote{ The reader should keep in mind that even the classical terms in Class $\GV$ transport are still very much unexplored. The simplest known examples involve charged fluids at second order in the derivative expansion \cite{Haehl:2015pja}. We include a discussion of the expected structures here merely for completeness.} which is described by the following Lagrangian: 
\begin{equation}
\boxed{
\begin{split}
&\Lag_\text{wv}^{(\GV)} = 
	\int\,d\theta\, d\thetab\, \frac{\sqrt{-\SF{\gref}}}{\zsf} \; 
	\SF{\Lagref}_{(\GV)} \\
&\;\; \text{ with} 
	\quad 
	\SF{\Lagref}_{(\GV)} =
	\frac{i}{4}  \,(-)^{M(I+J+K+L)+I(J+K+L)+J(K+L)+K+L+KL} \, \SF{\mathfrak{C}}^{M(IJ)(KL)}\\
&\qquad\qquad\qquad\qquad \times \Big[ \left( \Dwv_M  \gpsib_{IJ} +(\Fs_{\theta M},  \SF{\gref}_{IJ} )_\Kref \right)\ \gpsi_{KL}  -  \gpsib_{IJ} \, \left( \Dwv_M \gpsi_{KL} +  (\Fs_{\thb M},  \SF{\gref}_{KL} )_\Kref \right)\Big] 
	 \end{split}
}
 \label{eq:HVbarLagN}
\end{equation}
with the symmetry $\SF{\mathfrak{C}}^{M(IJ)(KL)}=(-)^{(I+J)(K+L)}\,\SF{\mathfrak{C}}^{M(KL)(IJ)}$. 
Note that the relative minus sign in the bracket of \eqref{eq:HVbarLagN} is essential for distinguishing this action from a Class D action with derivative-valued dissipative tensor $\etaref$. Note further that we have written a combination of, on the one hand, terms involving derivatives of $\{\gpsib,\gpsi\}$, and, on the other hand, terms involving $\UT$ field strengths. Each set of terms would also be individually allowed, but lead to a somewhat unconventional structure of the currents. In order to match with \cite{Haehl:2015pja} we choose the parameterization \eqref{eq:HVbarLagN}.\footnote{ For instance, dropping the terms in involving field strengths would still give us Class $\GV$ energy-momentum tensors, but would shift the free energy entirely into ghost components of the free energy super-current. This would also be a consistent presentation of this class of transport in superspace.}

It is tedious (though easy) to keep track of even the fluctuation contributions in this expression. For illustration, we only write the classical terms. These take the form 
\begin{equation}
\Lag_\text{wv}^{(\GV)} 
= \sqrt{-\gref} \; \left\{ \frac{1}{2} \, \TEMref_{_{(\GV\text{,class.})}}^{ab} \, \source{h_{ab}}  -  \sAt{a}\, \Nref_{_{(\GV\text{,class.})}}^a +\text{fluctuations} + \text{ghosts} \right\} 
\end{equation}
with
\begin{equation}
\begin{split}
  \TEMref_{_{(\GV\text{,class.})}}^{ab}& = \frac{1}{2}\, (i \, \phizT) \left[ \nabla_f {\mathfrak{C}}^{f(ab)(cd)} \, \lieD_\Kref \gref_{cd}  + 2 \, {\mathfrak{C}}^{f(ab)(cd)} \, \nabla_f  \lieD_\Kref \gref_{cd}  \right] 
  + \frac{i}{2} \,  {\mathfrak{C}}^{f(ab)(cd)} \,( \partial_f \phizT) \lieD_\Kref \gref_{ab}
   \,,\\
  \Nref_{_{(\GV\text{,class.})}}^{f} &= \frac{1}{4}\,(i \, \phizT) \,  {\mathfrak{C}}^{f(ab)(cd)} \, \lieD_\Kref \gref_{ab}\, \lieD_\Kref \gref_{cd}    \,.
\end{split}
\end{equation}
These currents precisely reproduce the discussion of \cite{Haehl:2015pja}.\footnote{ As usual, after setting $\langle \phizT \rangle = -i$.}  We can again explicitly verify that the super-adiabaticity equation is upheld: 
\begin{equation}
 \nabla_a \,\Nref_{(\GV\text{,class.})}^a = \frac{1}{2}\,\TEMref_{_{(\GV\text{,class.})}}^{ab} \, \lieD_\Kref \gref_{ab} \,.
\end{equation}
Note that this equation holds even before setting $\langle \phizT \rangle = -i$. This is, of course, required since super-adiabaticity is a statement of $\UT$ invariance of the theory independent of spontaneous symmetry breaking. 

\subsection{Class C}
\label{sec:classC}

Let us finish with the most trivial class of transport, Class C, where there is no energy-momentum transport at all; only the free energy current is non-zero (but identically conserved). The action describing such transport is just:
\begin{equation}\label{eq:LagC}
\boxed{
\Lag_\text{wv}^{(\text{C})} = - \int\,d\theta\, d\thetab\, \frac{\sqrt{-\SF{\gref}}}{\zsf} \, \As_I\, \SF{\Nref}_{_\text{(C)}}^I  
\qquad\text{for any}\quad \Dwv_I \SF{\Nref}_{_\text{(C)}}^I= 0 \,.
}
\end{equation}
This exact conservation guarantees that the action \eqref{eq:LagC} is invariant under the symmetries (in particular under $\UT$). 

Evaluating the superspace integral, we find
\begin{equation}
\begin{split}
\Lag_\text{wv}^{(\text{C})} &= \partial_\theta \partial_\thb \left\{ -\frac{\sqrt{-\SF{\gref}}}{\zsf} \; \As_I\, \SF{\Nref}_{_\text{(C)}}^I  \right\} \bigg{|} \\
&=  \sqrt{-{\gref}}\, \left\{-  \left( \partial_\theta \partial_\thb  \As_a \right)\, {\Nref}_{_\text{(C)}}^a - \left( \partial_\thb \As_\theta \right) \partial_\theta \SF{\Nref}_{_\text{(C)}}^\theta + \left( \partial_\theta \As_\thb \right) \partial_\thb \SF{\Nref}_{_\text{(C)}}^\thb  \right\} \Big{|} + \ldots \\
&= \sqrt{-{\gref}}\, \left\{-  \sAt{a}\,  \Nref_{_\text{(C)}}^a -(\phizT-\sBdel) \,\partial_\theta {\Nref}_{_\text{(C)}}^\theta + \sBdel \, \partial_\thb {\Nref}_{_\text{(C)}}^\thb  \right\}  + \ldots
\end{split}
\end{equation}
where we are again employing the MMO limit and dropping ghost bilinears. The resulting adiabatic constitutive relations simply yield an identically conserved Noether current and a vanishing energy-momentum tensor.

\section{Discussion}
\label{sec:discussion}

In this paper, we have endeavored to provide a comprehensive (and largely self-contained) discussion of hydrodynamic effective field theories. The primary aim was to fill in some of the details in our earlier discussion \cite{Haehl:2015uoc}, and work in a fully superspace covariant formalism. We have shown how to write down the effective action Eq.~\eqref{eq:fullLag} constrained by thermal equivariance which captures not only the classical hydrodynamic transport, but also the fluctuations about the hydrodynamic state.

Explicit computations backing these statements have been carried out in part in a particular limit, the MMO limit inspired from \cite{Mallick:2010su}, where the emergent thermal $\UT$ gauge field is treated as a background gauge field. In this limit, we demonstrated how to construct actions up to second order in the gradient expansion for neutral fluids.  Furthermore, we  have extracted predictions for the hydrodynamic fluctuations that go hand-in-hand with dissipative transport, for the shear and bulk viscosity. Clearly, it would be interesting to verify these statements from a direct computation of the associated correlation functions in a long-wavelength limit. 

One of the important highlights of the construction relates to results that can be obtained abstractly at a structural level. We have argued that entropy production can be viewed as arising from an inflow mechanism operating in superspace. To appreciate this statement better, we remind the reader that in the standard picture of classical hydrodynamics, a local form of the second law is input as an axiom, with its origins left unexplained. While one can readily understand that the dynamics in the near-equilibrium limit should refer to conserved charges, the presence of an entropy current, or its Legendre transform, the free energy current, is quite mysterious. 

 The origins of entropy and how it gets produced in the low energy effective field theory is a challenge from the microscopic perspective. In part, this has to do with the observation that there is no obvious analog of entropy in the fine-grained microscopic sense, but such a quantity has to emerge in the infra-red. This is well illustrated by the fluid/gravity correspondence \cite{Bhattacharyya:2008jc}, where the conserved currents are well-defined (composite) operators in the boundary field theory, but the entropy current is obtained from the pullback of the area form on the black hole event horizon onto the boundary \cite{Bhattacharyya:2008xc}. Despite the locality inherent in the fluid/gravity map  (see \cite{Hubeny:2011hd} for a review). There is still is a need to explain the rationale behind the existence of a local entropy current. In \cite{Haehl:2015foa} we argued that the thermal equivariance provides us with a mechanism for understanding this emergence.

 There was however an important question that was not satisfactorily answered at that point: how could it be that the physical system leads to entropy production in the macroscopic limit, and at the same time the entropy current  be the conserved Noether current of a gauge symmetry? The earlier arguments that this had to work had to do with indirect arguments relating to the Jarzynski relation, etc. However, the cleanest answer to this question is elegantly captured by the \emph{entropy inflow} mechanism that we have described here. As argued the thermal equivariance symmetry leads to a conservation law, the super-adiabaticity equation which operates at the level of full superspace. Decomposing this equation into the physical spacetime directions and the super-directions leads to a presentation where one can explicitly see the origins of entropy production -- it is sourced in superspace. 

 We have referred to this picture as an inflow mechanism for entropy in analogy with the manner inflow operates in the context of anomalies \cite{Callan:1984sa}. Starting with an anomalous theory, one constructs an anomaly-free theory, by viewing the physical theory as the boundary dynamics of a topological field theory (eg., the chiral edge states at the boundary of a quantum Hall insulator). In the hydrodynamic effective field theories, the upgrade to superspace with thermal equivariance, introduces a topological sector and associated Grassmann-odd partners of the entropy current, which end up in effect sourcing physical entropy production, when a field strength background is turned on. This suggests that the Class D terms are  analogous to Wess-Zumino-Witten terms in superspace. It would be interesting to try to make this analogy more precise. Entropy inflow serves as an important consistency check of our formalism.  Importantly, it guarantees that there is no in-principle obstacle to gauging the thermal $\UT$ symmetry, which was one of our primary hypotheses. 

 Somewhat curiously there is no corresponding inflow of energy-momentum from superspace at the classical level. This is of course necessary for the formalism to reproduce the correct dynamics. We  find that the non-fluctuation contributions cancel amongst themselves when we sum over all the Grassmann-odd directions in the equations of motion. These components of the energy-momentum tensor however do end up contributing fluctuation terms. This supports the hypothesis that the thermal $\UT$ dynamics controls primarily entropy production and  associated fluctuations. It would be good to understand this at an abstract level more generally. 

 There are several questions that we have not addressed in our formalism so far. One obvious extension is to consider the presence of other conserved charges.  For flavour symmetry with source $A_\mu$ in the physical spacetime, the Goldstone modes include a flavour group element $c(\sigma)$, which maps points on the flavour bundle of the worldvolume onto the physical target flavour bundle. The chemical potential is defined by pushing-forward the worldvolume thermal twist $\Lref$. Incorporating the desired super-transformations  one upgrades  $c(\sigma)$ to a superfield $\SF{c}(z)$.  This gives  a worldvolume  pull-back flavour  gauge field $\SF{\Aref}_I$ defined by the map $\{\SF{X}^\mu, \SF{c} \}$ which may further be deformed by unaligning sources. The crucial item to note is the  $\UT$ transformation on the flavour superfield, which is given by $(\LamS,\SF{c})_\Kref = \LamS \, \SF{c}\, \prn{\SF{\Lambda}_\Kref + \SF{\Kref}^I \SF{\Aref}_I}$. 
Coupled with target space flavour symmetry acting on $\SF{c}$, we should be able to include this into our effective actions along the lines of the sigma models constructed in \cite{Haehl:2015pja}. 

The extension to flavour charges is necessary to address two of the classes of transport that we have not tackled in our discussion: those relating to anomalies (Classes A and $\PV$). We note here that the construction of anomalous effective actions provided us with the first hints of the existence of the doubling of degrees of freedom \cite{Haehl:2013hoa}. A shadow of the $\UT$ symmetry was also seen in the construction of hydrostatic partition functions for mixed anomalies in \cite{Jensen:2013rga}. The close similarity between the constructions of the topological sigma models for hydrodynamics, and the empirical analysis of Class $\LT$ Lagrangians in \cite{Haehl:2015pja} leads us to believe that the extension to recover the two classes should be straightforward; it would however be useful to undertake the exercise explicitly. 

The other missing element in our discussion is the precise nature of the gauge dynamics for the $\UT$ gauge field. We have given arguments in favour of this being described by a topological $BF$ system, which would be interesting to verify in detail. From a physical perspective, the dynamical nature of the $\UT$ gauge field allows us to probe phases of the theory where the {\sf CPT} symmetry is not spontaneously broken. Understanding the dynamics is of course necessary to also argue that the physical hydrodynamic system with dissipation is picked out by the vacuum that dynamically leads to the thermal gauge field strength $\Fs_{\theta\thb}| $ acquiring a non-trivial expectation value, $\vev{ i\, \Fs_{\theta\thb}| } =1$. 

As mentioned in the introduction, the formalism we have described in this work raises many
interesting and intriguing questions in holography and black hole physics. What is the AdS counterpart
of the emergent gauge super fields in CFT fluid dynamics? What would be the gravity description of the 
fact that {\sf CPT} is spontaneously broken by the expectation value of a super field strength which then in turn drives the entropy inflow? It is tantalizing to speculate that the super field strength is in fact the order parameter characterizing the presence of black hole horizons (with the opposite expectation saddle points  describing the white hole saddle points of gravity Schwinger-Keldysh path integral). It would then follow that the superspace directions are the counterpart of  black-hole interior and the entropy inflow is the in-falling of lowest quasinormal mode. It would be nice to make these ideas precise.

As mentioned in footnote \ref{fn:utcpt} we have implicitly conflated the scale at which the $\UT$ symmetry emerges in the low energy theory with the scale at which {\sf CPT} symmetry is broken and dissipation ensues. It is however, more reasonable to anticipate that the $\UT$ symmetry arises at a higher energy scale. Once it does the notion of a local entropy current becomes sensible (even if it is conserved). Dissipation then will arise at a lower scale and at this point we have entropy production and {\sf CPT} symmetry  breaking. The relative separation between the scales is most likely given by the leading order dissipative coefficients like viscosity (made suitably dimensionless, say $\eta/s$)  This picture is suggested by the fact that the adiabaticity equation with $\Delta =0$ already holds in equilibrium, while dissipation is a higher order effect involving superspace inflow. Further support also can be gleaned from the fluid/gravity correspondence \cite{Hubeny:2011hd} -- the $\UT$ symmetry is naturally associated with the emergence of a horizon, while {\sf CPT} symmetry breaking relates to a choice of causal boundary condition on the horizon, cf., \cite{Haehl:2015foa}. A detailed verification of this picture is highly desirable, and it should be possible once we understand $\UT$  gauge dynamics. 

Our discussion was primarily focused on the high temperature or stochastic limit, which is sufficient for hydrodynamics. It would be interesting to ask how to understand the thermal equivariance structure at finite temperatures. Likewise, one can imagine a parallel discussion involving modular Hamiltonians for density matrices which are not necessarily of the Gibbs form. In both cases one is forced to confront a symmetry structure that has some intrinsic non-locality, which makes it an interesting challenge. The fluid/gravity correspondence and recent discussions between entanglement and geometry also suggests that there ought to be a similar structure hidden in the semi-classical gravitation theory for real-time observables, as speculated in \cite{Haehl:2015foa}.

Finally, we note that the class of fluctuations that we have understood about hydrodynamics are those that are computed by the Schwinger-Keldysh contour, which are thus only a subset of the full fluctuation relations. The response functions studied in hydrodynamics are related by the thermal KMS relations to other Wightman functions, which are not computed on the Schwinger-Keldysh path integral contour, but rather on a multi-timefolded, $k$-OTO contour (the Schwinger-Keldysh contour is the $1$-OTO contour) \cite{Haehl:2017eob}.  How the hydrodynamic effective field theory incorporates this information, and whether by a suitable upgrade one can construct effective field theories that capture more than the post local-thermalization behaviour  (eg., by having effective descriptions for higher OTO correlators), remain fascinating open questions that are worth exploring.

\acknowledgments

It is a pleasure to thank Veronika Hubeny, Shiraz Minwalla, and Spenta Wadia for useful discussions.
FH gratefully acknowledges support through a fellowship by the Simons Collaboration `It from Qubit', and hospitality by UC Davis, McGill University, University College London, UC Santa Barbara, and UT Austin during the course of this project. RL gratefully acknowledges support from International Centre for Theoretical Sciences (ICTS), Tata institute of fundamental research, Bengaluru. RL would also like to acknowledge his debt to all those who have generously supported and encouraged the pursuit of science in India. MR is supported in part by U.S.\ Department of Energy grant DE-SC0009999 and by the University of California. 
MR would like to thank TIFR, Mumbai and ICTS, Bengaluru, as well as the organizers of the ``It from Qubit workshop'' at Bariloche, and  ``Chaos and Dynamics in Correlated Quantum Matter'' at Max Planck Institute for the Physics of Complex Systems, Dresden for hospitality during the course of this work.

\appendix

\newpage
\part{Thermal supergeometry: Mathematical details}
\section{The thermal SK-KMS superalgebra}
\label{sec:skkmsalg}

The  SK-KMS superalgebra described in \S\ref{sec:thermaleq} is generated by the SK supercharges $\{\QSK,\QSKb\}$ and the thermal supermultiplet of KMS operators $\{\QKMS,\QKMSb,\Qbeta,\Qzero\}$  which satisfy the following algebra:
\begin{gather}
\QSK^2 =\QSKb^2 = \QKMS^2=\QKMSb^2 = 0\ , \nonumber\\
\gradcomm{\QSK}{\QKMS} =  \gradcomm{\QSKb}{\QKMSb} = \gradcomm{\QSKb}{\QSK} = \gradcomm{\QKMS}{\QKMSb} = 0\ ,\nonumber \\
\gradcomm{\QSK}{\QKMSb} =  \gradcomm{\QSKb}{\QKMS} = \Qbeta\,, \label{eq:kmsalg}\\
\gradcomm{\QKMS}{\Qzero} = \gradcomm{\QKMSb}{\Qzero} = 0 \,,\nonumber\\
\gradcomm{\QSK}{\Qzero} =  \QKMS\,, \qquad \gradcomm{\QSKb}{\Qzero} = -\QKMSb\,.\nonumber
\end{gather}
This is a graded superalgebra with a conserved ghost number charge, and can be understood in terms of an equivariant cohomology algebra as described in earlier works \cite{Haehl:2016uah}.

The canonical way to interpret this structure is to extend the operator algebra of the quantum system of interest to an operator superalgebra which lives in a two dimensional superspace (locally ${\mathbb R}^{d-1,1|2}$).  The quadruplet of operators $\{\SKAv{O}, \SKG{O}, \SKGb{O}, \SKDif{O}\}$ associated with a single-copy operator $\OpH{O}$ is encapsulated into a single SK-superfield
\begin{equation}\label{eq:OpO}
\SF{\Op{O}} \equiv \SKAv{O} + \theta \, \SKGb{O} + \bar\theta \, \SKG{O} + \bar\theta\theta \, \SKDif{O} \,.
\end{equation}
Here $\theta, \thb$ are the Grassmann coordinates in superspace with $\gh{\theta} =1$ and $\gh{\thb}=-1$, respectively.

 In superspace, the operators $\{\QSK,\QSKb\}$, which are intrinsic to the Schwinger-Keldysh formalism, are realized as derivations along the Grassmann-odd directions:
\begin{equation}
\QSK \;\longrightarrow\;  \partial_{\thb}\,, \qquad \QSKb \; \longrightarrow \;  \partial_\theta\,.
\end{equation}
These derivations induce the action of $\{\QSK,\QSKb\}$ component-wise on superfields; for example, we have $\gradcomm{\QSK}{\SF{\mathbb{O}}} = \partial_\thb \SF{\mathbb{O}} = \SKGb{O} + \theta\, \SKDif{O}$. We refer to \cite{Haehl:2016pec,Haehl:2016uah} for more details. The KMS operators can be best understood in terms of a quadruplet of thermal translation operators. Let, 
\begin{equation}\label{eq:SKsuperops}
\begin{split}
   \IKMSzero &= \Qzero + \thb \, \QKMS - \theta \, \QKMSb + \thb \theta \, \Qbeta \,,\\
   \IKMS &= \QKMS + \theta \, \Qbeta \,,\\
   \IKMSb &= \QKMSb + \thb \, \Qbeta \,,\\
   \LKMS &=  \Qbeta \,,
\end{split}
\end{equation}
such that 
\begin{equation}
\IKMSzero \, \SF{\mathbb{O}} = \IKMS \, \SF{\mathbb{O}} = \IKMSb \, \SF{\mathbb{O}}= 0 \,.
\end{equation}

The connection between the SK-KMS algebra and the equivariant cohomology algebras (see Appendix \ref{sec:EqReview}) can be made manifest by noting that the thermal operators transform under the super-derivations into each other via, 
\begin{equation}
\begin{tikzcd}
&\IKMSzero \arrow{ld}{\partial_\thb} \arrow{rd}[below]{\partial_\theta}    &   \\
\IKMS\arrow{rd}{\partial_\theta} & &  -\IKMSb \arrow{ld}[above]{\!\!\!\!\!\!\!\!\!\!\!\!\!\!-\partial_\thb}\\
&   \LKMS &
\end{tikzcd}
\label{eq:QzeroDiag}
\end{equation}
One can equivalently express this directly in terms of the super-commutation relations as: 
\begin{gather}
\partial_\thb^2 =\partial_\theta^2 = (\IKMS)^2=  (\IKMSb)^2= 0\ , \nonumber\\
\gradcomm{\QSK}{\IKMS} =  \gradcomm{\QSKb}{\IKMSb} = \gradcomm{\QSKb}{\QSK} = \gradcomm{\IKMS}{ \IKMSb}= 0\ ,\nonumber \\
\gradcomm{\QSK}{ \IKMSb} =  \gradcomm{\QSKb}{\IKMS} = \LKMS \,, \label{eq:kmsalg2}\\
\gradcomm{\LKMS}{ \IKMSzero} = \gradcomm{ \IKMSb}{ \IKMSzero} = 0 \,,\nonumber\\
\gradcomm{\QSK}{ \IKMSzero} = \IKMS\,, \qquad \gradcomm{\QSKb}{ \IKMSzero} =  -\IKMSb\,.\nonumber
\end{gather}
%

\section{Review of equivariant cohomology}
\label{sec:EqReview}

We briefly review here (the superspace formulation of) ${\cal N}_\smallT=2$ extended equivariant cohomology mainly to set-up some of the background structure and notation. Our discussion will completely general and simply exemplifies the algebraic structures, which will be relevant for the thermal Schwinger-Keldysh theories. These concepts were originally developed in \cite{Cordes:1994sd,Blau:1991bn,
Vafa:1994tf,Dijkgraaf:1996tz,Blau:1996bx} and have been reviewed in our earlier paper \cite{Haehl:2016uah} where the reader can find more background and references. 

\subsection{Weil model}
\label{sec:eqweil}

The essential idea in equivariant cohomology is to develop an algebraic formalism to tackle cohomology on spaces with a group action (so that one is often interested in non-trivial gauge invariant data). The first step is to construct appropriate cohomological charges and representations for the gauge sector, which can be done by working in the universal covering space of the group (parameterized algebraically by the space of gauge connections). Exploiting the isomorphism between differential forms and fermions, one can package the relevant fields of the  gauge sector (the Weil complex) into a super gauge field with components: 
\begin{subequations}
\begin{align}
\As_a \equiv&\
	\Ascr_a + \thb  \bigbr{ \lambda_a + D_a G  } + \theta \bigbr{  \bar{\lambda}_a + D_a \bar{G}  }
\nonumber \\
&
\qquad	+\;\thb   \theta \;\Bigl\{ \sAt{a} +D_a B + \comm{G}{\bar{\lambda}_a}
	- \comm{\bar{G}}{\lambda_a}
	+\half \comm{G}{D_a \bar{G} }-  \half \comm{\bar{G} }{D_a G} \Bigr\}
\label{eq:AaExp2} \\
\As_\thb \equiv&\
	 G + \thb  \bigbr{ \phi-  \half \comm{G}{G}  } + \theta \bigbr{ B-  \half \comm{\bar{G} }{G} }
\nonumber \\
&
\qquad	 -\; \thb  \theta \bigbr{ \eta +  \comm{\bar{G} }{\phi} - \comm{G}{B }+ \half \comm{G}{\comm{\bar{G}}{G}}   }
\label{eq:AthbExp2}  \\
\As_\theta \equiv&\
	\bar{G}  + \theta  \bigbr{ \bar{\phi}-  \half \comm{\bar{G} }{\bar{G} }  }
	+ \thb  \bigbr{\phi^0- B-  \half \comm{\bar{G} }{G} }
\nonumber \\
&
\qquad	+\;\thb   \theta \bigbr{\etab +  \comm{G}{\bar{\phi}} -  \comm{\bar{G} }{\phi^0- B}
	+ \half  \comm{\bar{G} }{ \comm{\bar{G} }{G}}  } \,.
\label{eq:AthExp2}
\end{align}
\end{subequations}
All components of these gauge fields are valued in the Lie algebra of the equivariant gauge group. For example, $G \equiv G^i t_i$ with $\{t_i\}_{i=1,\ldots,n}$ the generators of the Lie algebra.

The algebra is generated by two Weil charges, which are represented on superspace as generators of translations in the Grassmann odd directions:
\begin{equation}
\QW = \partial_{\thb}(\ldots)| \,, \qquad \QWb = \partial_\theta(\ldots)|\,.
\label{}
\end{equation}
Further, we have interior contractions $\{{\cal I}_k,{\cal I}_k^0,\Ibar_k\}$, which project in the gauge directions. The defining equations for these operators are the following non-vanishing actions:
\begin{align}
\Ibar_k G^j = - \delta^j_k  \,, &\qquad \Ibar_k B^j = -\frac{1}{2}\, f^j_{k\ell} \,\bar{G}^\ell \,,\\
{\mathcal{I}}_k \bar{G}^j = - \delta^j_k  \,, &\qquad  {\mathcal{I}}_k B^j = \frac{1}{2} \, f^j_{k\ell}\, G^\ell \,,\\
\mathcal{I}^0_k &B^j  = -\delta^j_k \,.
\label{}
\end{align}
These equations should be interpreted as follows: the {\it `Faddeev-Popov ghosts'}  $\{G^i,B^i,\bar{G}^i\}$ are generators of the gauge directions with different ghost numbers ($\{1,0,-1\}$, respectively); the interior contractions project along them.
A Lie derivative operation (which generates gauge transformations) can now be defined in terms of the interior contractions and the Weil charges:
\begin{equation}
  \mathcal{L}_j \equiv \gradcomm{\QW}{\Ibar_j} = \gradcomm{\,\QWb}{{\mathcal{I}}_j}  \,.
\end{equation}
The generators introduced above generate a graded algebra:
\begin{equation}
\begin{split}
\QW^2 = \QWb^2  &=\gradcomm{\QW}{\QWb} = 0 \\
\gradcomm{\QW}{\Ibar_j} = \gradcomm{\,\QWb}{{\mathcal{I}}_j}
= \mathcal{L}_j  \,, & \qquad
\gradcomm{\QW}{{\mathcal{I}}_j} = \gradcomm{\,\QWb}{\Ibar_j} = 0\\
\gradcomm{\QW}{\mathcal{I}^0_j} = {\mathcal{I}}_j\,, & \qquad
  \gradcomm{\,\QWb}{\mathcal{I}^0_j} =- {\Ibar}_j \\
 \gradcomm{\QW}{\mathcal{L}_j} &= \gradcomm{\,\QWb}{\mathcal{L}_j} = 0 \\
 \gradcomm{\Ibar_i}{{\mathcal{I}}_j} &= f_{ij}^k\, \mathcal{I}^0_k \\
 \gradcomm{\mathcal{L}_i}{\Ibar_j} = -f_{ij}^k\,\Ibar_{k}\,, \qquad
   \gradcomm{\mathcal{L}_i}{{\mathcal{I}}_j} &= -f_{ij}^k\,{\mathcal{I}}_{k}\,, \qquad
   \gradcomm{\mathcal{L}_i}{\mathcal{I}^0_j} = - f_{ij}^k\,\mathcal{I}^0_{k}
\end{split}
\label{eq:LieSAlg2}
\end{equation}

This algebra can immediately be lifted to superspace by defining the following super-interior contractions and super-Lie derivative:
\begin{equation}\label{eq:NT2superops}
\begin{split}
  \SF{{\cal I}}^0_k & \equiv {\cal I}^0_k  + \thb \, {\cal I}_k- \theta\, \Ibar_k + \thb \theta \, {\cal L}_k \,,\\
  \SF{\cal I}_k &\equiv {\cal I}_k + \theta \, {\cal L}_k \,,\\
  \SF{\Ibar}_k &\equiv \Ibar_k + \thb \, {\cal L}_k \,,\\
  \SF{{\cal L}}_k &\equiv {\cal L}_k \,.
\end{split}
\end{equation}
The graded algebra in superspace can be written as:
\begin{equation}
\begin{split}
 \gradcomm{\QWb}{\SF{{\cal I}}^0_k} &= \partial_\theta \SF{{\cal I}}^0_k = - \SF{\Ibar}_k \,,\qquad
 \gradcomm{\QW}{\SF{{\cal I}}^0_k} = \partial_\thb \SF{{\cal I}}^0_k = \SF{{\cal I}}_k \,, \quad\;\;\\
\gradcomm{\QW}{\SF{\Ibar}_k} &= \partial_\thb \SF{\Ibar}_k = \gradcomm{\QWb}{\SF{{\cal I}}_k} = \partial_\theta \SF{{\cal I}}_k = \SF{\cal L}_k \,, \\
  \gradcomm{\QWb}{\SF{\Ibar}_k} &= \gradcomm{\QW}{\SF{{\cal I}}_k} = \gradcomm{\SF{{\cal I}}^0_j}{\SF{{\cal I}}^0_k} = 0 \,,
\end{split}
\end{equation}
This is simply an explicit way of writing the simple statement that the Weil charges $\{\QW,\QWb\}$ act as super derivations $\{\partial_{\thb},\partial_\theta\}$ on the super operations \eqref{eq:NT2superops}.

\subsection{Cartan model}
\label{sec:eqcartl}

A slightly different model of equivariant cohomology is the Cartan model, where one trades $\{\QW,\QWb\}$ for the covariant Cartan charges $\{\QC,\QCb\}$. In superspace, one would phrase this as the statement that $\{\partial_\thb,\partial_\theta\}$ are being traded for super-covariant derivatives $\{\Dut_\thb,\Dut_\theta\}$, which generate the action of Cartan charges via
\begin{equation}
\QC = \Dut_\thb(\ldots)| \,, \qquad \QCb = \Dut_\theta (\ldots)| \,.
\end{equation}
The super-covariant derivatives are defined with respect to the gauge superfield as usual:
\begin{equation}
 \Dut_I = \partial_I + [\As_I,\ \cdot\;] \,.
\label{eq:covDIa}
\end{equation}
After passing to the Cartan model, it is consistent to fix a gauge where $G^i=\bar{G}^i=B^i=0$. We will refer to this as Wess-Zumino gauge. In this gauge, one can then check that the Cartan charges can be written as:
\begin{equation}
\begin{split}
\QC &= \QW + \phi^k \, \Ibar_k + (\phi^0)^k \, {\cal I}_k +  \eta^k\, \mathcal{I}^0_k \,,\\
\QCb &= \QWb +  \overline{\phi}^k \, {\mathcal{I}}_k \,.
\end{split}
\label{eq:QCQCb}
\end{equation}
From this representation one can easily check the property that they square to gauge transformations along the covariant field strength components:
\begin{equation}
\QC^2 = \phi^k \, {\cal L}_k \,,\qquad
\QCb^2 = \overline{\phi}^k \, {\cal L}_k \,.
\end{equation}
In superspace, the full Cartan algebra can be stated succinctly as follows:
\begin{equation}
\begin{split}
 \Dut_\thb^2 = \SF{\mathcal{L}}_{\SF{\mathscr{F}}_{\thb\thb}} \,, \qquad& \Dut_\theta^2 = \SF{\mathcal{L}}_{\SF{\mathscr{F}}_{\theta\theta}}  \,,
\qquad
\gradcomm{\Dut_\thb}{\Dut_\theta} = \SF{\mathcal{L}}_{\SF{\mathscr{F}}_{\theta\thb}} \,, \\
\gradcomm{\Dut_\theta}{\SF{\mathcal{L}}_{\SF{\Lambda}}} = \SF{\mathcal{L}}_{\Dut_\theta \SF{\Lambda}} \,, \qquad &\gradcomm{\Dut_\thb}{\SF{\mathcal{L}}_{\SF{\Lambda}}} = \SF{\mathcal{L}}_{\Dut_\thb \SF{\Lambda}} \,, \qquad \gradcomm{ \SF{\cal L}_{\SF{\Lambda}_1}}{\SF{\cal L}_{\SF{\Lambda}_2}} = \SF{\mathcal{L}}_{ \comm{ \SF{\Lambda}_1 }{ \SF{\Lambda}_2 } } \,.
\end{split}
\label{eq:sCa2}
\end{equation}
%
 
\section{$\UT$ connection and representation theory}
\label{sec:UTreps}

We now turn to the structure of the $\UT$ symmetry and the machinery associated with thermal equivariance. As in the main text we take $z^I=\{\sigma^a,\theta,\thetab \}$ as the coordinates on the superspace. Here $\{\sigma^a\}$ denotes  Grassmann-even co-ordinates of the ordinary spacetime  whereas  $\{\theta,\thetab\}$ are Grassmann odd. The essential structure that adapts such a superspace to the purposes of thermal field theory is a background super-vector $\SF{\Kref}^I(z)$, which we will call as the \emph{thermal super-vector}. 

The notation and nomenclature are supposed to be suggestive -- it will later turn out that, in fluid dynamics
the thermal vector roughly points along the local fluid velocity and its norm is the inverse of local temperature (viz., the length of local thermal circle after a local Wick rotation in the  fluid rest frame). We take $\SF{\Kref}^I$ to have the same Grassmann parity as $z^I$. Thus, $\SF{\Kref}^a$ is Grassmann-even whereas  $\SF{\Kref}^\theta$ and $\SF{\Kref}^{\thetab}$ are Grassmann odd.

Let us also note that in the presence of flavor, we will need a Grassmann even \emph{thermal twist} $\SLref(z)$ which is the flavor twist imposed as we go around the local thermal circle after Wick rotation. Under a local flavor rotation $\SF{U}(z)$, it transforms  as $\SLref \mapsto  \SF{U}^{-1} \, \SLref  \,\SF{U} - 
 \SF{U}^{-1}\, \SF{\Kref}^I \, \partial_I \SF{U} $. In a flavor basis where the rest frame temporal component of the gauge field is zero,  $\SLref(z)$ can be thought of as the local flavor chemical potential divided by local temperature. For details on how these structures play a role in conventional fluid dynamics, please see  \cite{Haehl:2015pja}.

\subsection{Thermal commutator}

Say we have a tensor superfield $\SF{\mathcal{T}}^{I\ldots}{}_{J\ldots}$ which transforms in the flavor representation. This means under an infinitesimal super-diffeomorphism $\SF{\xi}^I$ and flavor transformation $\SF{\Lambda}^{(F)}$,  $\SF{\mathcal{T}}^{I\ldots}{}_{J\ldots}$ transforms as\footnote{  We will use anti-hermitian basis
for flavors and use $[\ ,\ ]$ to denote flavor adjoint action.}
\begin{equation}
\begin{split}
\SF{\mathcal{T}}^{I\ldots}{}_{J\ldots} 
&\mapsto 
	\SF{\mathcal{T}}^{I\ldots}{}_{J\ldots} -
	 \bigbr{ \lieD_{\SF{\xi}} \SF{\mathcal{T}}^{I\ldots}{}_{J\ldots} 
	 - [\SF{\Lambda}^{(F)}, \SF{\mathcal{T}}^{I\ldots}{}_{J\ldots}] }
 \end{split}
\end{equation}
where $\lieD_\xi$ is the left super-Lie derivative  (see \cite{DeWitt:1992cy} for details).

We can then introduce a \emph{thermal bracket} which acts via infinitesimal thermal translations and flavor twist, viz.,
\begin{equation}
\label{eq:tensorTrf}
(\SF{\Lambda},\SF{\mathcal{T}}^{I\ldots}{}_{J\ldots})_\Kref \equiv \SF{\Lambda} \bigbr{ \lieD_\Kref \SF{\mathcal{T}}^{I\ldots}{}_{J\ldots} - [\SLref, \SF{\mathcal{T}}^{I\ldots}{}_{J\ldots}] }
\end{equation}
where the superfield $\SF{\Lambda}=\SF{\Lambda}(z)$ is the infinitesimal parameter which parameterizes the thermal translation.

The result of a thermal bracket $(\SF{\Lambda},\SF{\mathcal{T}}^{I\ldots}{}_{J\ldots})_\Kref $ is another tensor superfield  which transforms in the same tensor/flavor representation as $\SF{\mathcal{T}}^{I\ldots}{}_{J\ldots}$. 
For example, under flavor transformations if we have $\SF{\mathcal{T}}^{I\ldots}{}_{J\ldots}\mapsto \SF{U}^{-1}\SF{\mathcal{T}}^{I\ldots}{}_{J\ldots}\SF{U}$, then it follows that $(\SF{\Lambda},\SF{\mathcal{T}}^{I\ldots}{}_{J\ldots})_\Kref \mapsto \SF{U}^{-1}(\SF{\Lambda},\SF{\mathcal{T}}^{I\ldots}{}_{J\ldots})_\Kref  \SF{U}$.

The set of such infinitesimal thermal translations form a super Lie algebra, viz.,
\begin{equation}
\prn{\SF{\Lambda}_1,(\SF{\Lambda}_2,\SF{\mathcal{T}}^{I\ldots}{}_{J\ldots})_\Kref}_\Kref
-(-)^{\SF{\Lambda}_1\SF{\Lambda}_2}\prn{\SF{\Lambda}_2,(\SF{\Lambda}_1,\SF{\mathcal{T}}^{I\ldots}{}_{J\ldots})_\Kref}_\Kref
= \prn{(\SF{\Lambda}_1,\SF{\Lambda}_2)_\Kref,\SF{\mathcal{T}}^{I\ldots}{}_{J\ldots}}_\Kref
\end{equation}
where
\begin{equation}\label{eq:u1T_closure}
(\SF{\Lambda}_1,\SF{\Lambda}_2)_\Kref \equiv \SF{\Lambda}_1 \lieD_\Kref \SF{\Lambda}_2 -(-)^{\SF{\Lambda}_1\SF{\Lambda}_2}\  \SF{\Lambda}_2 \lieD_\Kref \SF{\Lambda}_1\ .
\end{equation}
Further, we can show that a super-Jacobi identity holds, viz.,
\begin{equation}\label{eq:u1T_assoc}
\prn{\SF{\Lambda}_1,(\SF{\Lambda}_2,\SF{\Lambda}_3)_\Kref}_\Kref
-(-)^{\SF{\Lambda}_1\SF{\Lambda}_2}\prn{\SF{\Lambda}_2,(\SF{\Lambda}_1,\SF{\Lambda}_3)_\Kref}_\Kref
= \prn{(\SF{\Lambda}_1,\SF{\Lambda}_2)_\Kref,\SF{\Lambda}_3}_\Kref\ ,
\end{equation}
which can be rewritten in a cyclic form:
\begin{equation}\label{eq:u1T_Jacobi}
(-)^{\SF{\Lambda}_3\SF{\Lambda}_1}\prn{\SF{\Lambda}_1,(\SF{\Lambda}_2,\SF{\Lambda}_3)_\Kref}_\Kref
+(-)^{\SF{\Lambda}_1\SF{\Lambda}_2}\prn{\SF{\Lambda}_2,(\SF{\Lambda}_3,\SF{\Lambda}_1)_\Kref}_\Kref
+(-)^{\SF{\Lambda}_2\SF{\Lambda}_3} \prn{\SF{\Lambda}_3,(\SF{\Lambda}_1,\SF{\Lambda}_2)_\Kref}_\Kref =0\ .
\end{equation}
We will denote this group of infinitesimal thermal translations as $\UT$.

Note that from \eqref{eq:u1T_closure} we see that the $\UT$ parameter superfield $\SF{\Lambda}$ does not quite transform like a scalar superfield under thermal translations. In fact, it transforms like a $\UT$ thermal twist analogous to how $\Lref$  transforms as a flavor thermal twist with $\Lref\mapsto U^{-1} \Lref U- U^{-1}\lieD_\Kref U $ under a flavor transformation $U$. We will call a field which transforms as \eqref{eq:u1T_closure}  under $\UT$ as a $\UT$ \emph{adjoint superfield}. 

\subsection{$\UT$ connection}

The first item in our agenda is to understand such $\UT$ adjoint superfields and formulate a $\UT$ connection, curvature etc. We begin by introducing a superspace one-form gauge field $\SF{\Ascr}_I$ associated with $\UT$:
\begin{equation}
 \As = \As_I \, dz^I = \As_a \, d\sigma^a + \As_\theta \, d\theta + \As_{\thb} \, d\thb\,.
\label{eq:A1fsf}
\end{equation}
We will also define the corresponding two-form super-field strengths $\SF{\mathscr{F}}_{IJ}$  via
\[\SF{\mathscr{F}}_{IJ} \equiv (1-\half \delta_{IJ}) \bigbr{ \partial_I \SF{\Ascr}_J - (-)^{IJ}  \partial_J \SF{\Ascr}_I + (\SF{\Ascr}_I,\SF{\Ascr}_J)_\Kref}\ .\]
Here, the thermal bracket is taken by treating  $\SF{\Ascr}_I$ as a $\UT$ adjoint superfield, i.e.,
\begin{equation}
(\SF{\Ascr}_I,\SF{\Ascr}_J)_\Kref \equiv\SF{\Ascr}_I \lieD_\Kref \SF{\Ascr}_J -(-)^{IJ}\  \SF{\Ascr}_J \lieD_\Kref \SF{\Ascr}_I\ .
\label{eq:F1fsf}
\end{equation}
The extra factor of half in \eqref{eq:F1fsf} is introduced relative to the usual definition to save us various factors of half later on. Note that the factor of half is relevant only when  both the superspace indices are Grassmann odd and equal. Often it is useful to write this definition in the form
\[ (1+ \delta_{IJ})\SF{\mathscr{F}}_{IJ} \equiv  \partial_I \SF{\Ascr}_J - (-)^{IJ}  \partial_J \SF{\Ascr}_I + (\SF{\Ascr}_I,\SF{\Ascr}_J)_\Kref\ .\]
Finally, we define a non-covariant combination, which will sometimes be useful:
\begin{align}
\Bs _{\theta\thb} \equiv \partial_\theta\As_{\thb} + \frac{1}{2}\, (\As_\theta,\As_{\thb})_\Kref \,.
\end{align}

Let us pause for a moment to count the number of independent (real space) fields that we have introduced. Clearly, the total number of fields in the super gauge connection \eqref{eq:A1fsf} is 12. Let us hence give 12 names to these independent fields so we can refer to them more easily. We organize these 12 fields as follows:

\begin{itemize}
\item {\bf Faddeev-Popov triplet:} out of the 12 fields, there are 3 combinations which are not $\UT$ covariant. We identify them as
\begin{align}
\As_{\thb}|  \equiv  \GT \,, \qquad \As_\theta| \equiv \GbT \,, \qquad \Bs_{\theta\thb}| \equiv \BT \,.
\label{eq:triplet}
\end{align}
\item {\bf Vafa-Witten quintet:} a further 5 fields are given by
\begin{align}
& \Fs_{\thb\thb}| \equiv \phi\,,\qquad \Fs_{\theta\theta}|\equiv \overline{\phi} \,,\qquad
  \Fs_{\theta\thb}| \equiv \phi^0 \,,
\nonumber \\
&\qquad\Dut_{\thb} \Fs_{\theta\thb}| \equiv \etaT\,, \qquad
\Dut_{\theta} \Fs_{\theta\thb}| \equiv \etabT\,,
\label{eq:quintet}
\end{align}
where we defined the $\UT$ covariant derivative (on a flat superspace ${\mathbb R}^{d-1,1|2}$)
\begin{equation}
\Dut_I \SF{\mathscr{F}}_{JK}  \equiv \partial_I \SF{\mathscr{F}}_{JK}
+ ({\SF{\Ascr}}_I, \SF{\mathscr{F}}_{JK} )_\Kref \,.
\end{equation}
This covariant derivative has a number of useful properties which we will discuss later in great detail.
\item {\bf Vector quartet:} Finally, we have 4 covariant vectors:
\begin{align}
\As_a| \equiv \Ascr_a \,, \qquad \Fs_{\thb a}| \equiv \lambda_a \,, \qquad \Fs_{\theta a}| \equiv \overline{\lambda}_a \,, \qquad
\Dut_\theta \Fs_{\thb a}| \equiv \sAt{a} \,.
\label{eq:quartet}
\end{align}
\end{itemize}
This provides a brief and abstract discussion of the components of the gauge superfield that we use. 
We will return to it in more details in Appendix~\ref{sec:GaugeMult}. 

However, before turning to this discussion, it is useful to understand features of the $\UT$ representations and adjoint fields a bit better.  We will for the most part be interested only in objects transforming as in \eqref{eq:tensorTrf} which we refer to as $0$-adjoints, and the gauge multiplet, which will transform as $1$-adjoints or simply adjoints (see below). However, other representations labeled by an integer appear to be admissible, and are detailed in Appendix~\ref{sec:formal}.

\section{Superspace representations: adjoint superfield}
\label{sec:Adjoint}

In this appendix we work out the detailed superspace expansions of adjoint superfields and their derivatives, elaborating in part on the material described in \cite{Haehl:2016uah}.

\subsection{Basic definitions and component fields}

Given a $\UT$ adjoint superfield $\SF{\mathfrak{F}}$, we can define a covariant derivative on it via
\begin{equation}\label{eq:DFs}
 \Dut_I \SF{\mathfrak{F}} \equiv \partial_I \SF{\mathfrak{F}} + (\SF{\Ascr}_I, \SF{\mathfrak{F}})_\Kref\,.
\end{equation}
The superfield $\Dut_I \SF{\mathfrak{F}} $ is a $\UT$ adjoint valued super-one-form.
The  thermal bracket for adjoint superfields  is defined via
\begin{equation}
\begin{split}
 (\SF{\mathfrak{F}},\SF{\mathfrak{F}}')_\Kref &\equiv \SF{\mathfrak{F}}\ \lieD_\Kref \SF{\mathfrak{F}}' -(\lieD_\Kref \SF{\mathfrak{F}})\ \SF{\mathfrak{F}}'\,.
\end{split}
\end{equation}
for arbitrary $\UT$ adjoint fields $\SF{\mathfrak{F}},\SF{\mathfrak{F}}'$.  The bracket appearing in \eqref{eq:DFs} acts in the same way.

We can use this to define the covariant components of $\SF{\mathfrak{F}}$ via
\begin{equation}
\begin{split}
 \mathfrak{F} &\equiv \SF{\mathfrak{F}}|\ ,\quad
 \mathfrak{F}_{\psib} \equiv \Dut_\theta \SF{\mathfrak{F}} | \ ,\quad
 \mathfrak{F}_\psi \equiv \Dut_{\thetab } \SF{\mathfrak{F}} |\ ,\quad
 \tilde{\mathfrak{F}} \equiv \Dut_\theta \Dut_{\thetab } \SF{\mathfrak{F}}|\ .
 \end{split}
 \label{eq:covadj}
\end{equation}
We will often focus just on these covariant components of superfields and write them in a vectorial form:
\begin{equation}
 \SF{\mathfrak{F}} \ :\ \prn{\begin{array}{c} \mathfrak{F} \\ \mathfrak{F}_\psi \\\mathfrak{F}_{\psib} \\ \tilde{\mathfrak{F}}
 \end{array}} \,.
 \label{eq:adjcol}
\end{equation}
The action of various covariant derivative operators can then simply be written as a map between such column vectors of covariant superfield components. The conversion of such a vector into an explicit superspace expression is given by
\begin{equation}
\begin{split}
\SF{\mathfrak{F}} &=\mathfrak{F} + \thetab\,  \bigbr{\mathfrak{F}_\psi - (\GhT,\mathfrak{F})_\Kref   }
+ \theta\, \bigbr{\mathfrak{F}_{\psib}- (\GhbT,\mathfrak{F})_\Kref  } \\
&\quad + \thetab   \theta\ \bigbr{\tilde{\mathfrak{F}} - (\BT,\mathfrak{F})_\Kref+ (\GhT,\mathfrak{F}_{\psib})_\Kref-(\GhbT,\mathfrak{F}_\psi)_\Kref +\half (\GhbT,(\GhT,\mathfrak{F})_\Kref)_\Kref - \half (\GhT,(\GhbT,\mathfrak{F})_\Kref)_\Kref\ } \,.
\end{split}
\label{eq:adjsf}
\end{equation}
The two representations of the superfield $\SF{\mathfrak{F}}$ given above, are equivalent.

Given two $\UT$ adjoint superfields $\mathfrak{F}$ and $\mathfrak{F}'$, their thermal bracket
$(\mathfrak{F},\mathfrak{F}')_\Kref$ is also a $\UT$ adjoint superfield with components
\begin{equation}
\begin{split}
 (\SF{\mathfrak{F}}, \SF{\mathfrak{F}}')_\Kref \ :\ \prn{\begin{array}{c} (\mathfrak{F},\mathfrak{F}')_\Kref \\  (\mathfrak{F},\mathfrak{F}_{\psi}')_\Kref+(\mathfrak{F}_{\psi},\mathfrak{F}')_\Kref \\
 (\mathfrak{F},\mathfrak{F}_{\psib}')_\Kref+(\mathfrak{F}_{\psib},\mathfrak{F}')_\Kref \\
 (\mathfrak{F},\tilde{\mathfrak{F}}')_\Kref + (\tilde{\mathfrak{F}},\mathfrak{F}')_\Kref+(\mathfrak{F}_{\psib},\mathfrak{F}_\psi')_\Kref -(\mathfrak{F}_{\psi},\mathfrak{F}_{\psib}')_\Kref
 \end{array}}
\end{split}
\end{equation}
The covariant derivative $\Dut_I $ then distributes over this thermal bracket, viz.,
\begin{equation}
\begin{split}
 \Dut_I(\SF{\mathfrak{F}},\SF{\mathfrak{F}}')_\Kref =  (\Dut_I\SF{\mathfrak{F}},\SF{\mathfrak{F}}')_\Kref +
(-)^{I\mathfrak{F}}(\SF{\mathfrak{F}}, \Dut_I\SF{\mathfrak{F}}')_\Kref .
\end{split}
\end{equation}
%

\subsection{Gauge transformation and WZ gauge}

Next, we are interested in how this superfield transforms under $\UT$ super gauge transformations.  Given a gauge parameter superfield $\SF{\Lambda}$ we will define the transformation of the $\UT$ adjoint superfield to be
\begin{equation}
\begin{split}
 \SF{\mathfrak{F}} &\mapsto  
 	\text{exp}\left[ (\SF{\Lambda},\,\cdot\,)_\Kref \right] \SF{\mathfrak{F}} \\
  & \equiv 
	  \SF{\mathfrak{F}}+(\SF{\Lambda},\SF{\mathfrak{F}})_\Kref + \frac{1}{2!}\prn{\SF{\Lambda}, (\SF{\Lambda},\SF{\mathfrak{F}})_\Kref}_\Kref
	   + \frac{1}{3!}\prn{\SF{\Lambda},\prn{\SF{\Lambda}, (\SF{\Lambda},\SF{\mathfrak{F}})_\Kref}_\Kref}_\Kref + \ldots
\end{split}
\end{equation}
The gauge transformation parameter $\SF{\Lambda}$ is again an adjoint superfield whose superspace expansion is built atop its covariant components $\{ \Lambda, \Lambda_\psi, \Lambda_{\psib}, \tilde{\Lambda} \}$, analogously to \eqref{eq:adjsf}.
 
Let us, for simplicity, consider first the subset of gauge transformation with the bottom component of 
$\SF{\Lambda}$ vanishing, i.e., $\Lambda=0$. Then, the super-gauge transformation of the adjoint superfield truncates to
\begin{equation}
\SF{\mathfrak{F}} \mapsto \SF{\mathfrak{F}}+(\SF{\Lambda},\SF{\mathfrak{F}})_\Kref + \frac{1}{2!}\prn{\SF{\Lambda}, (\SF{\Lambda},\SF{\mathfrak{F}})_\Kref}_\Kref  .
\end{equation}
since 
$ \prn{\SF{\Lambda},\prn{\SF{\Lambda}, (\SF{\Lambda},\ldots)_\Kref }_\Kref  }_\Kref =0$ when $\Lambda=0$.  
This is easily worked out to give
\begin{equation}
\begin{split}
 \SF{\mathfrak{F}} &\mapsto \SF{\mathfrak{F}}\bigbr{\begin{array}{c} \GhT\mapsto \GhT- \Lambda_\psi, \\
 \GhbT\mapsto \GhbT- \Lambda_{\psib},\\ \BT\mapsto \BT- \tilde{\Lambda}
 + \half (\Lambda_{\psib},\GhT)_\Kref- \half (\Lambda_{\psi},\GhbT)_\Kref \end{array} }
 \quad \text{with} \quad \prn{\begin{array}{c} \mathfrak{F} \\ \mathfrak{F}_{\psi} \\ \mathfrak{F}_{\psib} \\ \tilde{\mathfrak{F}} \end{array}}
  \quad \text{fixed.}
\end{split}
\end{equation}

Thus the super-gauge transformed superfield can be obtained by absorbing the gauge transformation into  shifts of Faddeev-Popov (FP) fields  $\{\GhT,\GhbT,\BT\}$ appearing in the superfield expansion of $ \SF{\mathfrak{F}}$ keeping the basic quartet fields fixed. 

For this reason, we will  henceforth refer this subset of transformations as \emph{FP boosts}. We will call the complementary subset of gauge transformations  which leave invariant $\{\GhT,\GhbT,\BT\}$ as the \emph{FP rotations}. This is akin to how  the Poincar\'e group action on massive spinning particle states can be constructed by thinking of it as composed of a little group of rest-frame rotations dressed with translations/boosts. Analogously, the group of $\UT$  super-gauge transformations can be thought of as a FP rotations having $\Lambda\neq0$ dressed with FP boosts that shift $\{\GhT,\GhbT,\BT\}$. More precisely, we think of the transformations of super-fields  using the method of induced representations whereby FP rotations that leave $\{\GhT,\GhbT,\BT\}$ invariant can be used to induce the whole group of $\UT$ gauge transformations.

The first step in doing this is to use the FP boosts to set the fields $\{\GhT,\GhbT,\BT\}=0$. This amounts to a partial gauge  fixing and we will call a gauge with  $\{\GhT,\GhbT,\BT\}=0$ as a \emph{Wess-Zumino gauge}.  This step is analogous to using the boosts to  go to the rest frame in the Poincar\'e group example. 
Given a $\{\GhT,\GhbT,\BT\}$, the simplest FP boost which takes us to Wess-Zumino gauge has
$\{\Lambda=0,\Lambda_\psi = \GhT,\Lambda_{\psib} = \GhbT, \tilde{\Lambda} = \BT\}$
which gives
\begin{equation}
\Lambda_{FP}\equiv \theta\ \GhbT + \thetab \ \GhT + \thetab  \theta\ \BT
\label{eq:LamFP}
\end{equation}

This is the analogue of the boost which takes a given momentum to its rest frame. In Poincar\'e case, such a boost is obviously not unique: one can always add in a rest frame rotation from the little group and the new boost is as good as the old. By a similar logic, $\Lambda_{FP}$ can be composed with any little $\UT$ gauge transformation which leaves $\{\GhT,\GhbT,\BT\}$ invariant and it would still give a Wess-Zumino gauge. In the Wess-Zumino gauge obtained using $\Lambda_{FP}$, the adjoint superfield has a simple superspace expansion:
\begin{equation}
\begin{split}
(\SF{\mathfrak{F}})_{WZ} &=\mathfrak{F} + \thetab\ \mathfrak{F}_\psi
+ \theta\ \mathfrak{F}_{\psib} + \thetab   \theta\ \tilde{\mathfrak{F}}
\end{split}
\end{equation}
This shows that basic quartet fields are analogous to  components of a tensor measured after going to the rest frame using a standard set of boosts.

The next step in the method of induced representations is to construct the `little group', viz., the set of all FP rotations which preserve $\{\GhT,\GhbT,\BT\}=0$. This is easily seen to be the transformations with  $\{\Lambda\neq 0,\Lambda_\psi = 0,\Lambda_{\psib} = 0, \tilde{\Lambda} = 0\}$. A more super-covariant way of characterizing FP rotations is to define them as the set of $\SF{\Lambda}$ with 
$$ \mathcal{D}_\theta \SF{\Lambda} |= \Dut_{\thb} \SF{\Lambda} |
= \mathcal{D}_\theta \Dut_{\thb}\SF{\Lambda} | =0  \,.$$ 
Note however that $\Dut_{\thb}\mathcal{D}_\theta \SF{\Lambda} | =(\phizT,\Lambda)_\Kref \neq 0$ as we shall explicitly see below. Thus, If $\alpha$ represents one of the basic quartet fields $\{\mathfrak{F},\mathfrak{F}_\psi,\mathfrak{F}_{\psib},\tilde{\mathfrak{F}}\}$,  its transformation under FP rotations is
given by  the infinite series
\begin{equation}
\begin{split}
\alpha&\mapsto \alpha+(\Lambda,\alpha)_\Kref + \frac{1}{2!}\prn{\Lambda, (\Lambda,\alpha)_\Kref}_\Kref
 + \frac{1}{3!}\prn{\Lambda,\prn{\Lambda, (\Lambda,\alpha)_\Kref}_\Kref}_\Kref
+ \ldots
\end{split}
\end{equation}
We show in Appendix~\ref{sec:UTBCH} that the algebra of gauge transformations closes and work out  the composition rule for $\UT$ transformations.

\subsection{Covariant derivatives and Bianchi identities}

Having understood gauge transformations and the WZ gauge $\GhT=\GhbT=\BT=0$, we can now without loss of generality work in this gauge. This has the salubrious feature that the non-covariant gauge potentials appearing in the FP-triplet are no longer floating around in the computations.

Superspace covariant derivatives acting on a covariant adjoint superfield give a covariant superfield of the same $\UT$ adjoint  type with a similar superspace expansion built out of a new quartet of fields.  It is thus convenient to represent the action of  superspace covariant derivatives by how they change the basic quartets
rather than to repeat the full superspace expansion again and again. We let 
$$ D_a = \partial_a + (A_a, \cdot)_\Kref \,.$$
The action of any super-covariant derivative maps a superfield into a new superfield with covariant components as follows:
\begin{subequations}
\begin{align}
&\Dut_a\ :\ 
	\prn{\begin{array}{c} 
		\mathfrak{F} \\ \mathfrak{F}_{\psi} \\ \mathfrak{F}_{\psib} \\ \tilde{\mathfrak{F}} 
	\end{array}} 
	\mapsto
	\prn{\begin{array}{c} 
		D_a\mathfrak{F} \\ 
		D_a\mathfrak{F}_{\psi}+(\lambda_a,\mathfrak{F})_\Kref  \\
		D_a \mathfrak{F}_{\psib}+(\bar{\lambda}_a,\mathfrak{F})_\Kref  \\
		D_a \tilde{\mathfrak{F}} + (\sAt{a},\mathfrak{F})_\Kref  
			+(\bar{\lambda}_a,\mathfrak{F}_\psi)_\Kref-(\lambda_a,\mathfrak{F}_{\psib})_\Kref
	\end{array}}\ , 
 \label{eq:DamapAdj}	\\
& \Dut_{\thetab}\ :\ 
	\prn{\begin{array}{c} 
	\mathfrak{F} \\ \mathfrak{F}_{\psi} \\ \mathfrak{F}_{\psib} \\ \tilde{\mathfrak{F}} 
	\end{array}} 
	\mapsto
	\prn{\begin{array}{c}  
		\mathfrak{F}_\psi \\ 
		(\phiT,\mathfrak{F})_\Kref \\ 
		\tilde{\mathfrak{F}}\\   
		(\phiT,\mathfrak{F}_{\psib})_\Kref - (\etaT,\mathfrak{F})_\Kref
	 \end{array}} \,, 
 \label{eq:DthbmapAdj}	 \\
& \Dut_\theta\ :\ 
	\prn{\begin{array}{c} 
	\mathfrak{F} \\ \mathfrak{F}_{\psi} \\ \mathfrak{F}_{\psib} \\ \tilde{\mathfrak{F}} 
	\end{array}} 
	\mapsto
	 \prn{\begin{array}{c}  
		 \mathfrak{F}_{\psib} \\
		  (\phizT,\mathfrak{F})_\Kref-\tilde{\mathfrak{F}}\\
		  (\phibT,\mathfrak{F})_\Kref \\
		(\phizT,\mathfrak{F}_{\psib})_\Kref-(\phibT,\mathfrak{F}_{\psi})_\Kref  + (\etabT,\mathfrak{F})_\Kref
	  \end{array}}\,.
 \label{eq:DthmapAdj}
 \end{align}
 \end{subequations}

\afterpage{\clearpage
\begin{sidewaystable}[H]
\small
\centering
\begin{tabular}{|| c || c | c || c | c || c | c ||}
\hline\hline
{\shadeB } &  {\shadeB }      & {\shadeB  } &
	{ \shadeB } & { \shadeB } &
	{ \shadeB } & { \shadeB } \\
{\shadeB Type} &  {\shadeB Field}      & {\shadeB Superspace } &
	{ \shadeB $\QSK\equiv \partial_{\thetab }(\ldots)|$} & { \shadeB $\QSKb\equiv \partial_{\theta}(\ldots)|$} &
	{ \shadeB $\Q\equiv \Dut_{\thetab }(\ldots)|$} & { \shadeB $\Qb\equiv \Dut_{\theta}(\ldots)|$} \\
{\shadeB } &  {\shadeB }      & {\shadeB  Definition} &
	{ \shadeB } & { \shadeB } &
	{ \shadeB } & { \shadeB } \\
\hline
\hline
{\shadeR Adjoint} &{\shadeR $\mathfrak{F}$}      & {\shadeR $\SF{\mathfrak{F}}|$} &
{\shadeR $\mathfrak{F}_\psi - (\GhT,\mathfrak{F})_\Kref $} & {\shadeR $\mathfrak{F}_{\psib}- (\GhbT,\mathfrak{F})_\Kref $} &  {\shadeR $\mathfrak{F}_\psi$} &  {\shadeR $\mathfrak{F}_{\psib}$}  \\
\cdashline{2-7}[2pt/2pt]
{\shadeR superfield} &{\shadeR $\mathfrak{F}_{\psib}$}      & {\shadeR $\Dut_\theta \SF{\mathfrak{F}}|$} &
{\shadeR $(\phizT,\mathfrak{F})_\Kref-\tilde{\mathfrak{F}}- (\GhT,\mathfrak{F}_{\psib})_\Kref$} & {\shadeR $ (\phibT,\mathfrak{F})_\Kref - (\GhbT,\mathfrak{F}_{\psib})_\Kref$}
&  {\shadeR $(\phizT,\mathfrak{F})_\Kref-\tilde{\mathfrak{F}}$} & {\shadeR $ (\phibT,\mathfrak{F})_\Kref $}  \\
\cdashline{2-7}[2pt/2pt]
 {\shadeR }&{\shadeR $\mathfrak{F}_{\psi}$}      & {\shadeR $\Dut_{\thetab} \SF{\mathfrak{F}}|$} &
 {\shadeR $ (\phiT,\mathfrak{F})_\Kref-(\GhT,\mathfrak{F}_{\psi})_\Kref$}
& {\shadeR$\tilde{\mathfrak{F}}- (\GhbT,\mathfrak{F}_{\psi})_\Kref$} & {\shadeR $ (\phiT,\mathfrak{F})_\Kref$} & {\shadeR $\tilde{\mathfrak{F}}$}  \\
\cdashline{2-7}[2pt/2pt]
{\shadeR }  &{\shadeR $\tilde{\mathfrak{F}}$}      & {\shadeR $\Dut_\theta \Dut_{\thetab } \SF{\mathfrak{F}}|$ } &
{\shadeR $(\phizT,\mathfrak{F}_\psi)_\Kref  + (\etaT,\mathfrak{F})_\Kref -(\phiT,\mathfrak{F}_{\psib})_\Kref$} &
{\shadeR $(\phibT,\mathfrak{F}_\psi)_\Kref - (\GhbT,\tilde{\mathfrak{F}})_\Kref$} &
{\shadeR $(\phizT,\mathfrak{F}_\psi)_\Kref  + (\etaT,\mathfrak{F})_\Kref -(\phiT,\mathfrak{F}_{\psib})_\Kref$} &
 {\shadeR $(\phibT,\mathfrak{F}_\psi)_\Kref$ } \\
 {\shadeR }  &{\shadeR }      & {\shadeR } &
{\shadeR $\qquad- (\GhT,\tilde{\mathfrak{F}})_\Kref$} &
{\shadeR } &
{\shadeR } &
 {\shadeR  } \\
\hline
\hline
\end{tabular}
\caption{Four fields of the adjoint multiplet and the action of Weil and Cartan charges.}
\label{tab:Adjoint}
\end{sidewaystable}
\clearpage }

These maps can be used to evaluate graded commutators of $\UT$ super-covariant derivatives to give
\begin{equation}
\begin{split}
\Dut_I \Dut_J \SF{\mathfrak{F}} 
-(-)^{IJ}\,  \Dut_J \Dut_I 
\SF{\mathfrak{F}}
&= (1+ \delta_{IJ}) \dbrk{\SF{\mathscr{F}}_{IJ},\, \SF{\mathfrak{F}}}.
\end{split}
\end{equation}
or more explicitly
\begin{align}
[\Dut_a,\Dut_b] \SF{\mathfrak{F}} &= \dbrk{\SF{\mathscr{F}}_{ab},\SF{\mathfrak{F}}}   \,, & 
[\Dut_\theta,\Dut_a] \SF{\mathfrak{F}} &=\dbrk{\SF{\mathscr{F}}_{\theta a},\, \SF{\mathfrak{F}}}  \,, &
[\Dut_{\thetab},\Dut_a] \SF{\mathfrak{F}} &= \dbrk{\SF{\mathscr{F}}_{\thetab a},\,  \SF{\mathfrak{F}}}\,,
\nonumber \\
\{\Dut_\theta,\Dut_{\thetab }\} \SF{\mathfrak{F}} &= \dbrk{\SF{\mathscr{F}}_{\theta\thetab } ,\,  \SF{\mathfrak{F}} } \,, &
\Dut_\theta^2 \SF{\mathfrak{F}} &= \dbrk{\SF{\mathscr{F}}_{\theta\theta} ,\, \SF{\mathfrak{F}} }  \,, &
\Dut_{\thetab}^2 \SF{\mathfrak{F}} &= \dbrk{\SF{\mathscr{F}}_{\bar\theta\bar\theta}  ,\,  \SF{\mathfrak{F}}} \,.
\end{align}
To verify this, it is useful to directly compose the above maps, which gives:
\begin{equation}
\begin{split}
\Dut_\theta\Dut_{\thetab}\ :\ \prn{\begin{array}{c} \mathfrak{F} \\ \mathfrak{F}_{\psi} \\ \mathfrak{F}_{\psib} \\ \tilde{\mathfrak{F}} \end{array}} &\mapsto
\prn{\begin{array}{c} \tilde{\mathfrak{F}}  \\ (\phizT,\mathfrak{F}_\psi)_\Kref-(\phiT,\mathfrak{F}_{\psib})_\Kref+(\etaT,\mathfrak{F})_\Kref  \\ (\phibT,\mathfrak{F}_\psi)_\Kref\\
(\phizT,\tilde{\mathfrak{F}})_\Kref - (\phibT,(\phiT,\mathfrak{F})_\Kref)_\Kref + (\etabT,\mathfrak{F}_\psi)_\Kref
 \end{array}}
\\
\Dut_{\thb}\Dut_\theta\ :\ \prn{\begin{array}{c} \mathfrak{F} \\ \mathfrak{F}_{\psi} \\ \mathfrak{F}_{\psib} \\ \tilde{\mathfrak{F}} \end{array}} &\mapsto
 \prn{\begin{array}{c} (\phizT,\mathfrak{F})_\Kref - \tilde{\mathfrak{F}} \\  (\phiT,\mathfrak{F}_{\psib})_\Kref \\
(\phizT,\mathfrak{F}_{\psib})_\Kref-(\phibT,\mathfrak{F}_{\psi})_\Kref  + (\etabT,\mathfrak{F})_\Kref \\
(\phiT,(\phibT,\mathfrak{F})_\Kref)_\Kref - (\etaT,\mathfrak{F}_{\psib})_\Kref
  \end{array}}
  \\
\Dut_{\theta}^2\ :\ \prn{\begin{array}{c} \mathfrak{F} \\ \mathfrak{F}_{\psi} \\ \mathfrak{F}_{\psib} \\ \tilde{\mathfrak{F}} \end{array}} &\mapsto
 \prn{\begin{array}{c} (\phibT,\mathfrak{F})_\Kref  \\  (\phibT,\mathfrak{F}_{\psi})_\Kref - (\etabT,\mathfrak{F})_\Kref \\
(\phibT,\mathfrak{F}_{\psib})_\Kref \\
((\phizT,\phibT)_\Kref,\mathfrak{F})_\Kref + (\phibT,\tilde{\mathfrak{F}})_\Kref + (\etabT,\mathfrak{F}_{\psib})_\Kref
  \end{array}}
    \\
\Dut_{\thb}^2\ :\ \prn{\begin{array}{c} \mathfrak{F} \\ \mathfrak{F}_{\psi} \\ \mathfrak{F}_{\psib} \\ \tilde{\mathfrak{F}} \end{array}} &\mapsto
 \prn{\begin{array}{c} (\phiT,\mathfrak{F})_\Kref  \\  (\phiT,\mathfrak{F}_{\psi})_\Kref \\
(\phiT,\mathfrak{F}_{\psib})_\Kref - (\etaT,\mathfrak{F})_\Kref \\
 (\phiT,\tilde{\mathfrak{F}})_\Kref - (\etaT,\mathfrak{F}_{\psi})_\Kref
  \end{array}}
\end{split}
\end{equation}
and similarly for double derivatives involving $\Dut_a$. 
The essential identities are summarized in Table~\ref{tab:Adjoint}.

\subsection{SK-KMS superalgebra}

Finally, it is useful to define an abstract set of operations that illustrate explicitly the structure of the  $\mathcal{N}_\smallT =2$ superalgebera  we are working with. This can be done by writing down a set of  operators that map adjoint-superfields as follows:
\begin{subequations}
\begin{equation}
\begin{split}
\dSK\ :\ \prn{\begin{array}{c} \mathfrak{F} \\ \mathfrak{F}_{\psi} \\ \mathfrak{F}_{\psib} \\  \tilde{\mathfrak{F}} \end{array}} & \mapsto
\prn{\begin{array}{c}  \mathfrak{F}_\psi \\ 0\\  \tilde{\mathfrak{F}}\\   0
 \end{array}} \,,\qquad\,
\dSKb\ :\ \prn{\begin{array}{c} \mathfrak{F} \\ \mathfrak{F}_{\psi} \\ \mathfrak{F}_{\psib} \\  \tilde{\mathfrak{F}} \end{array}} \mapsto
 \prn{\begin{array}{c}  \mathfrak{F}_{\psib} \\ - \tilde{\mathfrak{F}}\\ 0\\     0   \end{array}}  \,.
\end{split}
 \label{eq:N1LadjWZa}
\end{equation}
\begin{equation}
\begin{split}
 \iKMSb :
	\prn{\begin{array}{c} \mathfrak{F} \\ \mathfrak{F}_{\psi} \\ \mathfrak{F}_{\psib} \\  \tilde{\mathfrak{F}}
	\end{array}}
	&\mapsto
	\prn{\begin{array}{c}  0 \\\relax -\lieD_\Kref\mathfrak{F} \\ 0 \\ -\lieD_\Kref\mathfrak{F}_{\psib} \end{array}}\ ,
\; \quad\;\;\,
 \iKMSzero:
	\prn{\begin{array}{c} \mathfrak{F} \\ \mathfrak{F}_{\psi} \\ \mathfrak{F}_{\psib} \\  \tilde{\mathfrak{F}} \end{array}}
	\mapsto
	\prn{\begin{array}{c}  0 \\\relax 0 \\ 0  \\ \lieD_\Kref\mathfrak{F}  \end{array}}
 \\
  \iKMS :
	\prn{\begin{array}{c} \mathfrak{F} \\ \mathfrak{F}_{\psi} \\ \mathfrak{F}_{\psib} \\  \tilde{\mathfrak{F}}\end{array}}
	&\mapsto
	\prn{\begin{array}{c}  0 \\\relax 0 \\-\lieD_\Kref\mathfrak{F} \\ \lieD_\Kref\mathfrak{F}_\psi \end{array}}\ ,
 \qquad
 \lKMS:
	\prn{\begin{array}{c} \mathfrak{F} \\ \mathfrak{F}_{\psi} \\ \mathfrak{F}_{\psib} \\  \tilde{\mathfrak{F}}\end{array}}
	\mapsto
	\prn{\begin{array}{c} \lieD_\Kref\mathfrak{F} \\\relax \lieD_\Kref\mathfrak{F}_\psi \\ \lieD_\Kref\mathfrak{F}_{\psib} \\ \lieD_\Kref\tilde{\mathfrak{F}}  \end{array}}
\end{split}
 \label{eq:N1LadjWZb}
\end{equation}
\end{subequations}
The notation is meant to suggest the origins of these map: SK standing for Schwinger-Keldysh and KMS for the corresponding thermal periodicity conditions.

These maps generate an extended equivariant algebra, called the {\it SK-KMS superalgebra}, which takes the following form:
\begin{equation}
\begin{split}
\dSK^2 = \dSKb^2  &=\gradcomm{\dSK}{\dSKb} = 0 \\
\gradcomm{\dSK}{\iKMSb} = \gradcomm{\dSKb}{\iKMS}
= - \lKMS \,, & \qquad
\gradcomm{\dSK}{\iKMS} = \gradcomm{\dSKb}{\iKMSb} = 0\\
\gradcomm{\dSK}{\iKMSzero} = - \iKMS\,, & \qquad
  \gradcomm{\dSKb}{\iKMSzero} =  \iKMSb \\
 \gradcomm{\dSK}{\lKMS} &= \gradcomm{\dSKb}{\lKMS} = 0 \\
 \gradcomm{\iKMSb}{\iKMS} =
 \gradcomm{\lKMS}{\iKMSb} &=
   \gradcomm{\lKMS}{\iKMS} =   \gradcomm{\lKMS}{\iKMSzero} = 0\,.
\end{split}
\label{eq:LieSAlg3}
\end{equation}
The $\UT$ covariant derivative maps can be written as the following linear combinations of \eqref{eq:N1LadjWZa} and \eqref{eq:N1LadjWZb}: 
\begin{equation}\label{eq:DSKexpl2}
\begin{split}
\DSK &= \dSK - \phiT \,\iKMSb  - \etaT \,\iKMSzero \,,\\
\DSKb &= \dSKb - \phizT \,\iKMSb- \phibT\, \iKMS + \etabT \,\iKMSzero \,.
\end{split}
\end{equation}
From this representation it is also straightforward to check the following:
\begin{equation}
\begin{split}
& \DSK^2 = \phiT \, \lKMS - \etaT \, \iKMS \,,  \qquad
\DSKb^2 = \phibT \, \lKMS - \etabT \, \iKMSb + (\phizT,\phibT)_\Kref \, \iKMSzero \,,\\
&\quad\gradcomm{\DSK}{\DSKb} = \phizT \, \lKMS + \etaT \,\iKMSb + \etabT \, \iKMS + (\phiT,\phibT)_\Kref \, \iKMSzero \,.
\end{split}
\end{equation}
%

\section{Superspace representations: gauge multiplet}
\label{sec:GaugeMult}

We now turn to a detailed analysis of the superspace $\UT$ gauge field with components $\{\As_a,\As_\thb,\As_\theta\}$. This analysis will involve understanding the covariant field strength derived from the gauge field. Since the field strengths transform as adjoint superfields, the results of Appendix \ref{sec:Adjoint} will be useful.

\subsection{Component fields }
We now want to define the component fields by projecting the superfields and their Grassmann odd derivatives (projecting Grassmann even derivatives
just gives the derivatives of the component fields and does not yield new component fields). Following earlier analysis of the SK-KMS superalgebra \cite{Haehl:2016uah}  (also see above) we can associate a supercharge with every Grassmann odd derivative  as we will explain shortly. As explained in Appendix~\ref{sec:EqReview}, the Weil charges $\{\QSK,\QSKb\}$ are associated with ordinary derivatives $\{\partial_{\thetab },\partial_\theta\}$, while  $\{\Q,\Qb\}$ associated with covariant derivatives $\{\Dut_{\thetab },\Dut_\theta\}$ are the Cartan supercharges.

We will define the action of these supercharges on a particular component field as follows. First, we lift the component field to superspace, i.e., construct a superspace expression that when projected down to $\{\theta=0,\thetab =0\}$ subspace gives the component field of interest. Next, we apply the
appropriate Grassmann odd derivative on the superspace expression and project back to the  $\{\theta=0,\thetab =0\}$  subspace. Thus, by this procedure we construct  the Weil supercharges that are nilpotent but not covariant, whereas Cartan supercharges are covariant but not nilpotent. As is common with
covariant derivatives, Cartan charges square to $\UT$ transformations along projected super-field strengths. If one thinks of topological invariance as a twisted supersymmetry \emph{a la} Parisi-Sourlas, then the Cartan supercharges can alternately be interpreted as twisted supercharges.

It is convenient to choose a complete set of component fields by projecting down covariant expressions in superspace. Once such a set is chosen the action of supercharges can directly be read off using the superspace identities derived in Appendix~\ref{sec:Adjoint}. We summarize the results of this exercise in  Table~\ref{tab:Dodecuplet}.

\afterpage{\clearpage
\begin{sidewaystable}[H]
\small
\centering
\begin{tabular}{|| c || c | c || c | c || c | c ||}
\hline\hline
{\shadeB } &  {\shadeB }      & {\shadeB  } &
	{ \shadeB } & { \shadeB } &
	{ \shadeB } & { \shadeB } \\
{\shadeB Type} &  {\shadeB Field}      & {\shadeB Superspace } &
	{ \shadeB $\QSK\equiv \partial_{\thetab }(\ldots)|$} & { \shadeB $\QSKb\equiv \partial_{\theta}(\ldots)|$} &
	{ \shadeB $\Q\equiv \Dut_{\thetab }(\ldots)|$} & { \shadeB $\Qb\equiv \Dut_{\theta}(\ldots)|$} \\
{\shadeB } &  {\shadeB }      & {\shadeB  Definition} &
	{ \shadeB } & { \shadeB } &
	{ \shadeB } & { \shadeB } \\
\hline
{\shadeR Faddeev Popov} &{\shadeR $\GhT$}      & {\shadeR $\Athb|$} &
	{\shadeR $\phiT -  \half (\GhT,\GhT)_\Kref $} & {\shadeR $\BT -  \half (\GhbT,\GhT)_\Kref $} &  {\shadeR $0$} &  {\shadeR $0$}  \\
	\cdashline{2-7}[2pt/2pt]
{\shadeR Ghost } &{\shadeR $\BT$}      & {\shadeR $\SF{\mathscr{B}}_{\theta\thetab }|$} &
	{\shadeR $\eta+ \half (\GhbT,\phiT)_\Kref   $} &
	 {\shadeR $  -\half (\GhT,\phibT)_\Kref $}
	&  {\shadeR $0$} &  {\shadeR $0$}\\
{\shadeR Triplet } &{\shadeR }      & {\shadeR } &
	{\shadeR $\quad  -  \half (\GhT,\BT+\phizT+\half (\GhbT,\GhT)_\Kref )_\Kref $} &
	 {\shadeR $ \quad  -  \half (\GhbT,\BT-\half (\GhbT,\GhT)_\Kref )_\Kref$}
	&  {\shadeR } &  {\shadeR }\\
	\cdashline{2-7}[2pt/2pt]
{\shadeR  }  &{\shadeR $\GhbT$}      & {\shadeR $\Ath|$} &
        {\shadeR $\phizT-\BT - \half (\GhbT,\GhT)_\Kref $} & {\shadeR $\phibT - \half (\GhbT,\GhbT)_\Kref $} &  {\shadeR $0$} &  {\shadeR $0$}   \\
\hline
\hline
{\shadeR Vector} &{\shadeR $\SF{\Ascr}_a$}      & {\shadeR $\As_a|$} &
	{\shadeR $\lambda_a +D_a \GhT$} & {\shadeR $\bar{\lambda}_a +D_a \GhbT$} & {\shadeR $\lambda_a$} & {\shadeR $\bar{\lambda}_a $}    \\
	\cdashline{2-7}[2pt/2pt]
{\shadeR Quartet} &{\shadeR $\lambda_a$}      & {\shadeR $\SF{\mathscr{F}}_{\thetab a}|$} &
	{\shadeR $-D_a \phiT-(\GhT,\lambda_a)_\Kref$} & {\shadeR $\sAt{a}-(\GhbT,\lambda_a)_\Kref$} & {\shadeR $-D_a \phiT$}  & {\shadeR $\sAt{a}$ }    \\
	\cdashline{2-7}[2pt/2pt]
{\shadeR } &{\shadeR $\sAt{a}$}      & {\shadeR $\Dut_\theta \SF{\mathscr{F}}_{\thetab  a}|$} &
	{\shadeR $-D_a \eta-(\GhT,\sAt{a})_\Kref$} & {\shadeR $-(\phibT,\lambda_a)_\Kref-(\GhbT,\sAt{a})_\Kref$}  &  {\shadeR $-D_a \eta$}  & {\shadeR $-(\phibT,\lambda_a)_\Kref$}  \\
	\cdashline{2-7}[2pt/2pt]
{\shadeR } &{\shadeR $\bar{\lambda}_a$}      & {\shadeR $\SF{\mathscr{F}}_{\theta a}|$} &
	{\shadeR $-D_a \phizT-\sAt{a}-(\GhT,\bar{\lambda}_a)_\Kref$} & {\shadeR $-D_a \phibT-(\GhbT,\bar{\lambda}_a)_\Kref$}  &  {\shadeR $-D_a \phizT-\sAt{a}$}  & {\shadeR $-D_a \phibT$}  \\
\hline
\hline
{\shadeR Vafa Witten} &{\shadeR $\phiT$}      & {\shadeR $\SF{\mathscr{F}}_{\thetab \thetab }|$} &
	{\shadeR $- (\GhT,\phiT)_\Kref$} & {\shadeR $-\eta- (\GhbT,\phiT)_\Kref$} & {\shadeR $0$}&  {\shadeR $-\eta$} \\
	\cdashline{2-7}[2pt/2pt]
{\shadeR Ghost of Ghost} &{\shadeR $\eta$}      & {\shadeR $\Dut_{\thetab }\SF{\mathscr{F}}_{\theta\thetab }|$} &
	{\shadeR $(\phiT,\phizT)_\Kref- (\GhT,\eta)_\Kref$} & {\shadeR $(\phiT,\phibT)_\Kref- (\GhbT,\eta)_\Kref$} &{\shadeR $(\phiT,\phizT)_\Kref$} & {\shadeR $(\phiT,\phibT)_\Kref$ }  \\
	\cdashline{2-7}[2pt/2pt]{\shadeR  Quintet} &{\shadeR $\phizT$}      & {\shadeR $\SF{\mathscr{F}}_{\theta\thetab }|$} &
	{\shadeR $\eta- (\GhT,\phizT)_\Kref $} & {\shadeR$\bar{\eta}- (\GhbT,\phizT)_\Kref$}  &{\shadeR $\eta$} &{\shadeR $\bar{\eta}$} \\
	\cdashline{2-7}[2pt/2pt]
{\shadeR } &{\shadeR $\bar{\eta}$}      & {\shadeR $\Dut_{\theta}\SF{\mathscr{F}}_{\theta\thetab }|$} &
	{\shadeR $(\phibT,\phiT)- (\GhT,\bar{\eta})_\Kref$} & {\shadeR  $(\phibT,\phizT)- (\GhbT,\bar{\eta})_\Kref$}  &{\shadeR $(\phibT,\phiT)_\Kref$} & {\shadeR $(\phibT,\phizT)_\Kref$}\\
	\cdashline{2-7}[2pt/2pt]
{\shadeR } &{\shadeR $\phibT$}      & {\shadeR $\SF{\mathscr{F}}_{\theta\theta}|$} &
	{\shadeR $-\bar{\eta}- (\GhT,\phibT)_\Kref$} & {\shadeR $- (\GhbT,\phibT)_\Kref$}  &{\shadeR  $-\bar{\eta}$} & {\shadeR $0$ }\\
\hline
 \hline
\end{tabular}
\caption{{Twelve fields of the gauge dodecuplet and the action of Weil and Cartan supercharges on the dodecuplet.} }
\label{tab:Dodecuplet}
\end{sidewaystable}
\clearpage }

Let us explain how this table is constructed with an example. Say we would like to figure out how the Weil supercharge $\QSK$ acts on the field $\bar{\lambda}_a$. As described above, the Weil supercharge $\QSK$ is associated with $\partial_{\thetab }$ and using the second row of Table~\ref{tab:Dodecuplet}, $\lambda_a$ is defined by the projection $\bar{\lambda}_a\equiv \SF{\mathscr{F}}_{\theta a}|$ . Thus, we conclude $\QSK$ acting on  $\bar{\lambda}_a$ gives $\partial_{\thetab } \SF{\mathscr{F}}_{\theta a}|$. In order to write this in terms of component fields, we have to rewrite $\partial_{\thetab } \SF{\mathscr{F}}_{\theta a}|$ in terms of defining combinations appearing  in the  second row of Table~\ref{tab:Dodecuplet} (with possible Grassmann even derivatives acting on the defining combinations):
\begin{equation}
\begin{split}
\QSK \bar{\lambda}_a \equiv \partial_{\thetab } \SF{\mathscr{F}}_{\theta a}| 
&= 
	\Dut_{\thetab } \SF{\mathscr{F}}_{\theta a}- (\Athb,\SF{\mathscr{F}}_{\theta a})_\Kref| \\
&=  
	-\Dut_a \SF{\mathscr{F}}_{\theta \thetab }-\Dut_{\theta} \SF{\mathscr{F}}_{\thetab  a}
	- (\Athb,\SF{\mathscr{F}}_{\theta a})_\Kref| \\
&= 
	-D_a \phizT - \sAt{a} - (\GhT,\bar{\lambda}_a)_\Kref  \,.
\end{split}
\end{equation}
In the first line we have used the definition of the covariant derivative and in the second line we have invoked the Bianchi identity. The final line  then follows from definitions in  the second row of Table~\ref{tab:Dodecuplet}. Every entry in that table can be derived similarly by using superspace
definitions and identities from Appendix~\ref{sec:Adjoint}.

There is a more direct but tedious method of evaluating  $\partial_{\thetab } \SF{\mathscr{F}}_{\theta a}|$ using superspace Taylor expansion. For this we will have to first compute the superspace Taylor expansion of the gauge fields in terms of the defining combinations. Let us show how this is done by working out the superspace Taylor expansion of $\As_a$. We need
\begin{equation}\label{eq:AththbExp1}
\begin{split}
\As_a | 
&\equiv 
	\SF{\Ascr}_a \\
\partial_{\thetab } \As_a | 
&= 
	\SF{\mathscr{F}}_{\thetab a} + \partial_a \Athb + (\As_a, \Athb)_\Kref | =  \lambda_a + D_a \GhT\\
\partial_{\theta} \As_a |
 &= 
 	\SF{\mathscr{F}}_{\theta a} + \partial_a \Ath + (\As_a, \Ath)_\Kref | = \bar{\lambda}_a + D_a \GhbT\\
\partial_{\theta} \partial_{\thetab } \As_a | 
&=
	\partial_{\theta} \brk{ \SF{\mathscr{F}}_{\thetab a} + \partial_a \Athb + (\As_a, \Athb)_\Kref } | \\
&= 
	\Dut_{\theta} \SF{\mathscr{F}}_{\thetab a} - (\Ath,\SF{\mathscr{F}}_{\thetab a})_\Kref
	+ \partial_a \partial_{\theta} \Athb + (\As_a, \partial_{\theta}\Athb)_\Kref + ( \Athb, \partial_\theta \As_a)_\Kref | \\
&= 
	\sAt{a} - (\GhbT,\lambda_a) +D_a [\BT- \half (\GhbT,\GhT)_\Kref ]+ (\GhT,\bar{\lambda}_a+ D_a \GhbT)_\Kref  \\
&= 
	\sAt{a} +D_a \BT+ (\GhT,\bar{\lambda}_a)_\Kref- (\GhbT,\lambda_a) +\half(\GhT, D_a \GhbT)_\Kref -  \half (\GhbT,D_a \GhT)_\Kref\ .
\end{split}
\end{equation}
Similar computation can be repeated for superfields $\Athb$ and $\Ath$. 
Only the $\thetab  \theta$ component computations are slightly involved. They evaluate to
\begin{equation}\label{eq:AththbExp1}
\begin{split}
\partial_{\theta} \partial_{\thetab } \Athb |
&=
	\partial_{\theta} \brk{ \SF{\mathscr{F}}_{\thetab \thetab } -\half  (\Athb, \Athb)_\Kref } |
	= \Dut_{\theta} \SF{\mathscr{F}}_{\thetab \thetab } - (\Ath,\SF{\mathscr{F}}_{\thetab \thetab })_\Kref
	+ (\Athb, \partial_{\theta} \Athb)_\Kref  | \\
&=  
	- \Dut_{\thetab } \SF{\mathscr{F}}_{\theta\thetab } - (\Ath,\SF{\mathscr{F}}_{\thetab \thetab })_\Kref+ (\Athb, \partial_{\theta} \Athb)_\Kref  | \\
&= 
	 - \bigbr{ \etaT+  (\GhbT,\phiT)_\Kref - (\GhT,\BT-  \half (\GhbT,\GhT)_\Kref  )_\Kref }\\
&=
	 -\bigbr{ \etaT+  (\GhbT,\phiT)_\Kref - (\GhT,\BT )_\Kref + \half (\GhT,(\GhbT,\GhT)_\Kref )_\Kref  \ }\ ,\\
 \partial_{\theta} \partial_{\thetab } \Ath | 
 &=
	 \partial_{\theta} \brk{ \SF{\mathscr{F}}_{\theta\thetab } - (\Ath, \Athb)_\Kref } |  \\
&=	
	 \Dut_{\theta} \SF{\mathscr{F}}_{\theta\thetab } -(\Ath,\SF{\mathscr{F}}_{\theta\thetab })_\Kref
	+(\Ath,\partial_{\theta} \Athb)_\Kref +(\Athb,\SF{\mathscr{F}}_{\theta\theta} -\half  (\Ath, \Ath)_\Kref )_\Kref   | \\
&= 
	\Dut_{\theta} \SF{\mathscr{F}}_{\theta\thetab } +(\Athb,\SF{\mathscr{F}}_{\theta\theta})_\Kref
	 -(\Ath,\SF{\mathscr{F}}_{\theta\thetab }-\partial_{\theta} \Athb- (\Ath, \Athb)_\Kref)_\Kref | \\
&=  
	\etabT +  (\GhT,\phibT)_\Kref -  (\GhbT, \phizT- \BT- \half (\GhbT,\GhT)_\Kref )_\Kref\\
&= 
	\etabT +  (\GhT,\phibT)_\Kref -  (\GhbT, \phizT- \BT)_\Kref+ \half  (\GhbT, (\GhbT,\GhT)_\Kref )_\Kref\ .
\end{split}
\end{equation}
We can now put it all together to get the superspace Taylor expansion. For any superfield $\SF{\mathfrak{f}}$, we have
\begin{equation}
\begin{split}
\SF{\mathfrak{f}} = 
	\mathfrak{f}| + \theta \prn{ \partial_\theta \mathfrak{f}|} + \thetab  \prn{ \partial_{\thetab } \mathfrak{f}|} 
	+ \thetab  \theta  \prn{ \partial_\theta\partial_{\thetab } \mathfrak{f}|} \,.
\end{split}
\end{equation}
Thus, we can use the above calculations to deduce the expressions for the $\UT$ gauge superfield:
\begin{subequations}\label{eq:AExp}
\begin{align}
\As_a  
&\equiv
	\ {\Ascr}_a + \thetab  \bigbr{ \lambda_a + D_a \GhT  } + \theta \bigbr{  \bar{\lambda}_a + D_a \GhbT }  
\nonumber \\	
&
	\quad+\thetab   \theta \Bigl\{\source{\mathcal{F}_a} +D_a \BT+ (\GhT,\bar{\lambda}_a)_\Kref- (\GhbT,\lambda_a) 
	+\half(\GhT, D_a \GhbT)_\Kref -  \half (\GhbT,D_a \GhT)_\Kref \Bigr\}
\label{eq:AaExp}\\
\As_{\thb} 
&\equiv
	\ \GhT + \thetab  \bigbr{ \phiT-  \half (\GhT,\GhT)_\Kref  } + \theta \bigbr{ \BT-  \half (\GhbT,\GhT)_\Kref } 
\nonumber \\	
&	
	\quad -\thetab   \theta \bigbr{ \etaT+  (\GhbT,\phiT)_\Kref - (\GhT,\BT )_\Kref + \half (\GhT,(\GhbT,\GhT)_\Kref )_\Kref  \ }
\label{eq:AthbExp} \\
\As_\theta 
&\equiv \ 
	\GhbT + \theta  \bigbr{ \phibT-  \half (\GhbT,\GhbT)_\Kref  } + \thetab  \bigbr{\phizT- \BT-  \half (\GhbT,\GhT)_\Kref  } 
\nonumber \\
& 
	\quad +\thetab   \theta \bigbr{\etabT +  (\GhT,\phibT)_\Kref -  (\GhbT, \phizT- \BT)_\Kref+ \half  (\GhbT, (\GhbT,\GhT)_\Kref )_\Kref  }
\label{eq:AthExp}	
\end{align}
\end{subequations}
We can then compute $\partial_{\thetab } \SF{\mathscr{F}}_{\theta a}|$ by feeding this Taylor expansion into the definition of 
$\SF{\mathscr{F}}_{\theta a}$ and getting its superspace Taylor expansion. While this is straightforward but tedious to do by hand, we have checked that 
it can be easily coded in  Mathematica and checked that the data reported in Table~\ref{tab:Dodecuplet} is reproduced.

Clearly, the component maps defined in Appendix~\ref{sec:Adjoint} for adjoint superfields, provide a much more efficient way of writing down and manipulating superspace Taylor expansions. Let us therefore now introduce component map notation for the field strengths associated with the gauge field.

\subsection{Gauge transformation and WZ gauge}

We now consider a super-connection $\As_I$ whose gauge transformation is given by the series
\begin{equation}
\begin{split}
\As_I &\mapsto
	 \As_I\equiv \As_I+(\SF{\Lambda},\As_I)_\Kref- \partial_I \SF{\Lambda}+ \frac{1}{2!}\prn{\SF{\Lambda}, (\SF{\Lambda},\As_I)_\Kref
	 - \partial_I \SF{\Lambda}}_\Kref  \\
&\qquad \qquad \quad 
	+ \frac{1}{3!}\prn{\SF{\Lambda},\prn{\SF{\Lambda}, (\SF{\Lambda},\As_I)_\Kref- \partial_I \SF{\Lambda} }_\Kref  }_\Kref \\
&\qquad \qquad\quad 
	+ \frac{1}{4!}\prn{\SF{\Lambda},\prn{\SF{\Lambda},\prn{\SF{\Lambda}, (\SF{\Lambda},\As_I)_\Kref
	- \partial_I \SF{\Lambda} }_\Kref  }_\Kref }_\Kref +\ldots\\
\end{split}
\end{equation}
Here, we defined the thermal brackets  on $\As_I$ as if they were $\UT$ adjoint superfields, since the inhomogeneous transformation of the potential is already accounted for.

We will again begin by focusing  on the FP boosts with a gauge parameter \eqref{eq:LamFP} (i.e., setting $\Lambda=0$). 
Then the above series truncates to
\begin{equation}
\begin{split}
\As_I &\mapsto
	 \As_I+(\SF{\Lambda},\As_I)_\Kref- \partial_I \SF{\Lambda}+ \frac{1}{2!}\prn{\SF{\Lambda}, (\SF{\Lambda},\As_I)_\Kref
	 - \partial_I \SF{\Lambda}}_\Kref  
	 \\
&\qquad \quad 
	+ \frac{1}{3!}\prn{\SF{\Lambda},\prn{\SF{\Lambda}, (\SF{\Lambda},\As_I)_\Kref- \partial_I \SF{\Lambda} }_\Kref  }_\Kref .
\end{split}
\end{equation}
since $ \prn{\SF{\Lambda},\prn{\SF{\Lambda}, (\SF{\Lambda},\ldots)_\Kref }_\Kref  }_\Kref =0$ when $\Lambda=0$. We apply this to the superspace expansions given in \eqref{eq:AExp} and learn that the transformation amounts to the shifts:
\begin{equation}
\begin{split}
\As_I &\mapsto \As_I\bigbr{\begin{array}{c} \GhT\mapsto \GhT- \Lambda_\psi, \\
 \GhbT\mapsto \GhbT- \Lambda_{\psib},\\ \BT\mapsto \BT- \tilde{\Lambda}
 + \half (\Lambda_{\psib},\GhT)_\Kref- \half (\Lambda_{\psi},\GhbT)_\Kref \end{array} }
 \quad \text{with} \quad \prn{\begin{array}{c} \SF{\Ascr}_a  \\ \lambda_a \\ \bar{\lambda}_a \\ \sAt{a} \end{array}}
\& \prn{\begin{array}{c} \phiT \\ \etaT\\ \phizT \\ \etabT \\ \phibT \end{array}}
  \quad \text{fixed.}
\end{split}
\end{equation}
As before,  the super-gauge transformed $\As_I$ can be obtained by absorbing the gauge transformation into shifts of $\{\GhT,\GhbT,\BT\}$ appearing in the superfield expansion of $\As_I$. Note that these shifts match with the shifts found in the adjoint superfield discussion (Appendix~\ref{sec:Adjoint}),  thus ensuring the consistency of our framework. The basic vector quartet and Vafa-Witten quintet fields do not transform under these FP boosts.

As in the case of adjoint superfield, one can pass to Wess-Zumino gauge via 
\begin{equation}
\begin{split}
(\As_I)_{WZ} &\equiv \As_I+(\Lambda_{FP},\As_I)_\Kref- \partial_I \Lambda_{FP} + \frac{1}{2!}\prn{\Lambda_{FP}, (\Lambda_{FP},\As_I)_\Kref- \partial_I \Lambda_{FP} }_\Kref  \\
&\qquad+ \frac{1}{3!}\prn{\Lambda_{FP},\prn{\Lambda_{FP}, (\Lambda_{FP},\As_I)_\Kref- \partial_I \Lambda_{FP} }_\Kref  }_\Kref
\end{split}
\end{equation}
with $\SF{\Lambda}_{FP}$ given in \eqref{eq:LamFP} such that we get  $\{\GhT,\GhbT,\BT\}_{WZ}=0$. In such a gauge, the superspace expansion again simplifies and we obtain
\begin{equation}
\begin{split}
(\As_a)_{WZ} &\equiv\ \SF{\Ascr}_a + \thetab \  \lambda_a + \theta\  \bar{\lambda}_a  +\thetab  \theta\ \sAt{a} \ , \\
(\Athb)_{WZ} &\equiv\  \thetab \ \phiT  -\thetab  \theta\ \etaT\ , \\
(\Ath)_{WZ} &\equiv\  \thetab \ \phibT+ \thetab \ \phizT +\thetab   \theta\ \etabT\ .
\end{split}
\label{eq:AExpWZ}
\end{equation}

In the Wess-Zumino gauge, FP rotations act on $\SF{\Ascr}_a$ as
\begin{equation}
\begin{split}
\SF{\Ascr}_a &\mapsto \SF{\Ascr}_a +(\Lambda,\SF{\Ascr}_a )_\Kref -\partial_a \Lambda+ \frac{1}{2!}\prn{\Lambda, (\Lambda,\SF{\Ascr}_a )_\Kref -\partial_a \Lambda}_\Kref
 + \frac{1}{3!}\prn{\Lambda,\prn{\Lambda, (\Lambda,\SF{\Ascr}_a )_\Kref -\partial_a \Lambda}_\Kref}_\Kref
+ \ldots
\end{split}
\end{equation}
The fields $\{\lambda_a,\bar{\lambda}_a,\sAt{a}\}$ and the Vafa-Witten quintet are $\UT$ adjoint superfields which transform via the series we encountered before, viz.,
\begin{equation}
\begin{split}
\alpha&\mapsto \alpha+(\Lambda,\alpha)_\Kref + \frac{1}{2!}\prn{\Lambda, (\Lambda,\alpha)_\Kref}_\Kref
 + \frac{1}{3!}\prn{\Lambda,\prn{\Lambda, (\Lambda,\alpha)_\Kref}_\Kref}_\Kref
+ \ldots
\end{split}
\end{equation}
The fields $\{\GhT,\GhbT,\BT\}_{WZ}$ do not transform under FP rotations and remain zero.

\subsection{Super field strengths}

For the rest of this section, we will conveniently work in WZ gauge. Also, we will focus on component vectors of covariant superfields, such that transformations are simple.

We repeat for convenience the definitions of the covariant field strengths and of the non-covariant combination that defines $B_\smallT$:
\begin{equation}
\begin{split}
(1+ \delta_{IJ})\,\SF{\mathscr{F}}_{IJ} &\equiv  \partial_I \SF{\Ascr}_J - (-)^{IJ}  \partial_J \SF{\Ascr}_I + (\SF{\Ascr}_I,\SF{\Ascr}_J)_\Kref\,,\\
\SF{\mathscr{B}}_{\theta\thetab }  &\equiv \partial_{\theta} \Athb +\half (\Ath,\Athb)_\Kref \,,
\end{split}
\end{equation}
and remind the reader that indices $I,J$ take values in $\{a,\theta,\thb\}$.
Note that under an exchange of $\theta$ and $\thetab $, we have $\SF{\mathscr{B}}_{\theta\thetab } \mapsto \SF{\mathscr{F}}_{\theta\thetab }-\SF{\mathscr{B}}_{\theta\thetab } $.

The individual components of the above superfields can be simply summarized as column vectors:
\begin{subequations}
\begin{equation}
\begin{split}
\SF{\mathscr{F}}_{ab}\ :\ \prn{\begin{array}{c} \SF{\mathscr{F}}_{ab} \\ D_a\lambda_b-D_b\lambda_a \\ D_a\bar{\lambda}_b-D_b \bar{\lambda}_a \\ D_a\sAt{b}-D_b\sAt{a}+(\lambda_a,\bar{\lambda}_b)_\Kref-(\bar{\lambda}_a,\lambda_b)_\Kref  \end{array}}
 \end{split}
\end{equation}
\begin{equation}
\begin{split}
\SF{\mathscr{F}}_{\thetab a}\ :\ \prn{\begin{array}{c}  \lambda_a \\ - D_a\phiT  \\  \sAt{a} \\D_a \etaT+(\phiT,\bar{\lambda}_a)_\Kref \end{array}} \quad ,\quad
\SF{\mathscr{F}}_{\theta a}\ :\ \prn{\begin{array}{c}  \bar{\lambda}_a  \\ - D_a\phizT- \sAt{a}  \\  -D_a\phibT \\(\phizT,\bar{\lambda}_a)_\Kref  -  D_a \etabT -(\phibT,\lambda_a)_\Kref  \end{array}}
 \end{split}
\end{equation}
\begin{equation}
\begin{split}
\SF{\mathscr{F}}_{\thetab \thetab }\ :\ \prn{\begin{array}{c}  \phiT \\ 0 \\  -\etaT \\ 0 \end{array}} \quad ,\quad
\SF{\mathscr{F}}_{\theta \thetab }\ :\ \prn{\begin{array}{c}  \phizT \\ \etaT\\  \etabT \\ (\phiT,\phibT)_\Kref \end{array}} \quad ,\quad
\SF{\mathscr{F}}_{\theta \theta }\ :\ \prn{\begin{array}{c}  \phibT \\   -\etabT \\ 0 \\ (\phizT,\phibT)_\Kref \end{array}}
 \end{split}
\end{equation}
\end{subequations}
These expressions can be worked out in a similar fashion as described for the gauge field components in the last subsection. See also Table~\ref{tab:Dodecuplet} for a summary of the same.

\subsection{Covariant derivatives and Bianchi identities}

We now turn to the derivatives of $\As_I$ which give the super field strengths. The super field strengths and their super-covariant derivatives transform as $\UT$ adjoint superfields and the discussion of Appendix~\ref{sec:Adjoint} applies to them without any changes.  Thus, we will just quote the basic components that make up the relevant superfields and refer the reader to the previous subsection for details about superspace expansion,
gauge transformation, going to the Wess-Zumino gauge etc. 

We then define covariant derivatives  of these field strengths in the same way as we did for adjoint superfields:
\begin{equation}
\Dut_I \SF{\mathscr{F}}_{JK}  \equiv \partial_I \SF{\mathscr{F}}_{JK}
+ (\SF{\Ascr}_I, \SF{\mathscr{F}}_{JK} )_\Kref \,.
\end{equation}
These covariant derivatives are not all independent as  they must satisfy Bianchi identities which take the form
\begin{equation}
 (-)^{KI}(1+ \delta_{JK})\Dut_I \SF{\mathscr{F}}_{JK} +(-)^{IJ}(1+ \delta_{KI}) \Dut_J \SF{\mathscr{F}}_{KI} + (-)^{JK}(1+ \delta_{IJ}) \Dut_K \SF{\mathscr{F}}_{IJ}   =0\,.
 \end{equation}
We can also write these identities component-wise:
\begin{equation}\label{eq:superBianchiF}
\begin{split}
\Dut_c \SF{\mathscr{F}}_{ab} + \Dut_a \SF{\mathscr{F}}_{bc}+ \Dut_b \SF{\mathscr{F}}_{ca} &= 0
= \Dut_{\theta} \SF{\mathscr{F}}_{\thetab a}+ \Dut_{\thetab } \SF{\mathscr{F}}_{\theta a} +\Dut_a \SF{\mathscr{F}}_{\theta\thetab }\\
\Dut_{\thetab } \SF{\mathscr{F}}_{ab} - \prn{\Dut_a \SF{\mathscr{F}}_{\thetab b}-\Dut_b \SF{\mathscr{F}}_{\thetab a}} & =0
= \Dut_{\theta} \SF{\mathscr{F}}_{ab} - \prn{\Dut_a \SF{\mathscr{F}}_{\theta b}-\Dut_b \SF{\mathscr{F}}_{\theta a}}\\
\Dut_{\thetab } \SF{\mathscr{F}}_{\thetab a} +  \Dut_a \SF{\mathscr{F}}_{\thetab \thetab }  &
=  0 = \Dut_{\theta} \SF{\mathscr{F}}_{\theta a} +  \Dut_a \SF{\mathscr{F}}_{\theta \theta} \ ,
\\
\Dut_{\thetab } \SF{\mathscr{F}}_{\theta\thetab }+\Dut_{\theta} \SF{\mathscr{F}}_{\thetab \thetab }
&=0=\Dut_{\theta} \SF{\mathscr{F}}_{\theta\thetab }+ \Dut_{\thetab } \SF{\mathscr{F}}_{\theta\theta}\ ,\\
\Dut_{\thetab } \SF{\mathscr{F}}_{\thetab \thetab } &=0=\Dut_{\theta} \SF{\mathscr{F}}_{\theta\theta}\ .
\end{split}
\end{equation}
If the non-covariant field $\SF{\mathscr{B}}_{\theta\thb}$ is involved, we find that there are also identities of the form
\begin{equation}\label{eq:superBianchiB}
\begin{split}
 \partial_\theta \SF{\mathscr{B}}_{\theta\thetab } + \half (\Ath, \SF{\mathscr{B}}_{\theta\thetab }+\half (\Ath,\Athb)_\Kref )_\Kref+\half (\Athb, \SF{\mathscr{F}}_{\theta\theta})_\Kref   & = 0\,, \\
 \partial_{\thetab } (\SF{\mathscr{F}}_{\theta\thetab } -\SF{\mathscr{B}}_{\theta\thetab })
 + \half (\Athb, \SF{\mathscr{F}}_{\theta\thetab } - \SF{\mathscr{B}}_{\theta\thetab }
 +\half (\Ath,\Athb)_\Kref )_\Kref
 +\half (\Ath, \SF{\mathscr{F}}_{\thetab \thetab })_\Kref    & = 0 \,,
\end{split}
\end{equation}
satisfied by derivatives of $\SF{\mathscr{B}}_{\theta\thetab }$. The second identity can equivalently be written as
\begin{equation}\label{eq:superBianchiB2}
\begin{split}
\Dut_{\thetab } \SF{\mathscr{F}}_{\theta\thetab }
    & = \partial_{\thetab } \SF{\mathscr{B}}_{\theta\thetab }
 + \half (\Athb, \SF{\mathscr{F}}_{\theta\thetab } + \SF{\mathscr{B}}_{\theta\thetab }-\half (\Ath,\Athb)_\Kref )_\Kref -\half (\Ath, \SF{\mathscr{F}}_{\thetab \thetab })_\Kref \,.
\end{split}
\end{equation}

It will be convenient to write simple component maps for these operations. Using the fact that field strengths transform as adjoint superfields, their derivative components follow immediately from Eqs.\ \eqref{eq:DamapAdj}, \eqref{eq:DthbmapAdj}, \eqref{eq:DthmapAdj}:
\begin{subequations}
\begin{align}
& \Dut_\theta\SF{\mathscr{F}}_{\thetab a}
\ :\ 
	\prn{\begin{array}{c} \sAt{a} \\ (\phizT,\lambda_a)_\Kref- D_a \etaT
	-(\phiT,\bar{\lambda}_a)_\Kref\\  (\phibT,\lambda_a)_\Kref \\ (\phizT,\sAt{a} )_\Kref 
	+ (\phibT,D_a\phiT)_\Kref +(\etabT,\lambda_a)_\Kref  \end{array}} \ , 
\\ 
& \Dut_{\thetab } \SF{\mathscr{F}}_{\theta\thetab }=-\Dut_{\theta } \SF{\mathscr{F}}_{\thetab\thetab } 
\ :\
	 \prn{\begin{array}{c}  \etaT \\ (\phiT,\phizT)_\Kref \\  (\phiT,\phibT)_\Kref \\ (\phiT,\etabT)_\Kref-(\etaT,\phizT)_\Kref \end{array}}  \\
& \Dut_{\theta} \SF{\mathscr{F}}_{\theta\thetab }=-\Dut_{\thetab } \SF{\mathscr{F}}_{\theta\theta }
 :\ 
 	\prn{\begin{array}{c}  \etabT \\-(\phiT,\phibT)_\Kref \\  (\phibT,\phizT)_\Kref \\ -(\phibT,\eta)_\Kref\ \end{array}}
\\
&\Dut_{\thetab}\SF{\mathscr{F}}_{\theta a}
\ :\
	 \prn{\begin{array}{c} -D_a \phizT- \sAt{a} \\   (\phiT,\bar{\lambda}_a)_\Kref \\ (\phizT,\bar{\lambda}_a)_\Kref
	 - D_a \etabT -(\phibT,\lambda_a)_\Kref\\ - (\phiT,D_a\phibT)_\Kref -(\etaT,\bar{\lambda}_a)_\Kref  \end{array}} \ .
\end{align}
\end{subequations}
These derivatives satisfy the super Bianchi identities enumerated in \eqref{eq:superBianchiF}. Further $\UT$ super-covariant derivatives and identities can all be read off as such from our
discussion for a general $\UT$ adjoint superfield.

\subsection{Double covariant derivatives of super field strengths }

We can proceed further and  define additional covariant derivatives  of the covariant field strengths via $\Dut_I \Dut_J \SF{\mathscr{F}}_{KL}  \equiv \partial_I (\Dut_J\SF{\mathscr{F}}_{KL})+ (\SF{\Ascr}_I, \Dut_J\SF{\mathscr{F}}_{KL})_\Kref $. The double covariant derivatives in Grassmann directions
can be worked out explicitly by using
\begin{equation} \Dut_I \Dut_J \SF{\mathscr{F}}_{KL} -(-)^{IJ} \Dut_J \Dut_I \SF{\mathscr{F}}_{KL} = (1+ \delta_{IJ}) (\SF{\mathscr{F}}_{IJ}, \SF{\mathscr{F}}_{KL} )_\Kref \,.
\end{equation}
Let us evaluate these relations again explicitly for future reference.
First, consider the derivatives on $\SF{\mathscr{F}}_{\theta\theta},\SF{\mathscr{F}}_{\thetab \thetab }$ and $\SF{\mathscr{F}}_{\theta\thetab }$:
\begin{equation}\label{eq:doubleDerivF1}
\begin{split}
\Dut_{\theta}^2 \SF{\mathscr{F}}_{\theta\theta} &= 0
= \Dut_{\thetab }^2 \SF{\mathscr{F}}_{\theta\theta} - (\SF{\mathscr{F}}_{\thetab \thetab },\SF{\mathscr{F}}_{\theta\theta})_\Kref \\
\Dut_{\thetab } \Dut_{\theta} \SF{\mathscr{F}}_{\theta\theta} &=0
=\Dut_{\theta} \Dut_{\thetab } \SF{\mathscr{F}}_{\theta\theta}   - (\SF{\mathscr{F}}_{\theta\thetab },\SF{\mathscr{F}}_{\theta\theta})_\Kref \\
\Dut_{\theta}^2 \SF{\mathscr{F}}_{\thetab \thetab }- (\SF{\mathscr{F}}_{\theta\theta},\SF{\mathscr{F}}_{\thetab \thetab })_\Kref &= 0
= \Dut_{\thetab }^2 \SF{\mathscr{F}}_{\thetab \thetab }  \\
\Dut_{\thetab } \Dut_{\theta} \SF{\mathscr{F}}_{\thetab \thetab } - (\SF{\mathscr{F}}_{\theta\thetab },\SF{\mathscr{F}}_{\thetab \thetab })_\Kref &=0
=\Dut_{\theta} \Dut_{\thetab } \SF{\mathscr{F}}_{\thetab \thetab }  \\
\Dut_{\theta}^2 \SF{\mathscr{F}}_{\theta\thetab }-(\SF{\mathscr{F}}_{\theta\theta},\SF{\mathscr{F}}_{\theta\thetab })_\Kref  &= 0
= \Dut_{\thetab }^2 \SF{\mathscr{F}}_{\theta\thetab } - (\SF{\mathscr{F}}_{\thetab \thetab },\SF{\mathscr{F}}_{\theta\thetab })_\Kref \\
\Dut_{\thetab } \Dut_{\theta} \SF{\mathscr{F}}_{\theta\thetab } - (\SF{\mathscr{F}}_{\theta\theta},\SF{\mathscr{F}}_{\thetab \thetab })_\Kref  &=0
=\Dut_{\theta} \Dut_{\thetab } \SF{\mathscr{F}}_{\theta\thetab }   - (\SF{\mathscr{F}}_{\thetab \thetab },\SF{\mathscr{F}}_{\theta\theta})_\Kref\,.
\end{split}
\end{equation}
Similarly, the double derivative acting on $\SF{\mathscr{F}}_{\theta a}$ and $\SF{\mathscr{F}}_{\thetab  a}$ result in
\begin{equation}\label{eq:doubleDerivF2}
\begin{split}
\Dut_{\theta}^2 \SF{\mathscr{F}}_{\theta a} -(\SF{\mathscr{F}}_{\theta\theta},\SF{\mathscr{F}}_{\theta a})_\Kref  &= 0
= \Dut_{\thetab }^2 \SF{\mathscr{F}}_{\theta a} - (\SF{\mathscr{F}}_{\thetab \thetab },\SF{\mathscr{F}}_{\theta a})_\Kref \\
\Dut_{\thetab } \Dut_{\theta} \SF{\mathscr{F}}_{\theta a}
&=\Dut_a \Dut_\theta  \SF{\mathscr{F}}_{\theta\thetab } +(\SF{\mathscr{F}}_{\theta\theta},\SF{\mathscr{F}}_{\thetab  a})_\Kref
=-\Dut_{\theta} \Dut_{\thetab } \SF{\mathscr{F}}_{\theta a}
+ (\SF{\mathscr{F}}_{\theta\thetab },\SF{\mathscr{F}}_{\theta a})_\Kref
 \\
\Dut_{\theta}^2 \SF{\mathscr{F}}_{\thetab  a}- (\SF{\mathscr{F}}_{\theta\theta},\SF{\mathscr{F}}_{\thetab  a})_\Kref &= 0
= \Dut_{\thetab }^2 \SF{\mathscr{F}}_{\thetab  a} - (\SF{\mathscr{F}}_{\thetab \thetab },\SF{\mathscr{F}}_{\thetab  a})_\Kref   \\
-\Dut_{\thetab } \Dut_{\theta} \SF{\mathscr{F}}_{\thetab  a}
+ (\SF{\mathscr{F}}_{\theta\thetab },\SF{\mathscr{F}}_{\thetab  a})_\Kref
&=\Dut_{\theta} \Dut_{\thetab } \SF{\mathscr{F}}_{\thetab  a}
=
\Dut_a \Dut_{\thetab }  \SF{\mathscr{F}}_{\theta\thetab } +(\SF{\mathscr{F}}_{\thetab \thetab },\SF{\mathscr{F}}_{\theta a})_\Kref   \,,
\end{split}
\end{equation}
and the double derivative acting on $ \SF{\mathscr{F}}_{ab}$ evaluates to
\begin{equation}\label{eq:doubleDerivF3}
\begin{split}
\Dut_{\theta}^2 \SF{\mathscr{F}}_{ab} -(\SF{\mathscr{F}}_{\theta\theta},\SF{\mathscr{F}}_{ab})_\Kref  &= 0
= \Dut_{\thetab }^2 \SF{\mathscr{F}}_{ab} - (\SF{\mathscr{F}}_{\thetab \thetab },\SF{\mathscr{F}}_{ab})_\Kref \\
 - \Dut_{\thetab } \Dut_{\theta} \SF{\mathscr{F}}_{ab}+ (\SF{\mathscr{F}}_{\thetab \thetab },\SF{\mathscr{F}}_{ab})_\Kref
&=\Dut_{\theta} \Dut_{\thetab } \SF{\mathscr{F}}_{ab}
= \Dut_a \Dut_\theta  \SF{\mathscr{F}}_{\thetab  b}+(\SF{\mathscr{F}}_{\theta a},\SF{\mathscr{F}}_{\thetab  b})_\Kref\,.
- (a\leftrightarrow b)
\end{split}
\end{equation}

To get the corresponding component vectors, we can again use the adjoint maps  \eqref{eq:DamapAdj}, \eqref{eq:DthbmapAdj}, \eqref{eq:DthmapAdj}. This yields for example: 
\begin{equation}
\begin{split}
\Dut_\theta^2\SF{\mathscr{F}}_{\thetab a}
\ :\
\prn{\begin{array}{c} (\phibT,\lambda_a)_\Kref \\ - (\phibT,D_a\phiT)_\Kref -(\etabT,\lambda_a)_\Kref \\ (\phibT,\sAt{a})_\Kref  \\ ((\phizT,\phibT)_\Kref ,\lambda_a)_\Kref  + (\phibT,  D_a \etaT+(\phiT,\bar{\lambda}_a)_\Kref)_\Kref + (\etabT,\sAt{a})_\Kref \end{array}} .
\end{split}
\end{equation}
%

\section{Properties of the $\UT$ KMS gauge group}
\label{sec:formal}

In this appendix we collect results of various investigations undertaken to understand the structure of the $\UT$ gauge group. Since $\UT$ describes thermal diffeomorphisms, we expect a very close relation to the representation theory of Diff$({\bf S}^1)$. 

The discussion below is in two distinct parts. In \S\ref{sec:UTBCH} we argue that despite the intricate structure of the $\UT$ algebra and the connection to diffeomorphisms, one can follow the usual logic behind the Baker-Campbell-Hausdorff formula, which eventually allows us to compose $\UT$ transformations.  In \S\ref{sec:repthy} we then describe our attempts to understand the general representations of the $\UT$ algebra and demonstrate some connections to usual ideas in deformation quantization. As we will see, there are various suggestive  connections, though we should add that we have not fully appreciated all the details here.  We include this as part of our discussion mainly for completeness (and hope that it may prove useful for others undertaking related  investigations).

\subsection{Composing $\UT$ transformations }
\label{sec:UTBCH}

The aim of this section is to give a Baker-Campbell-Hausdorff formula for $\UT$ transformations acting on adjoint superfields. We begin with the $\UT$ transformation on the adjoint superfield:
\begin{equation}
\begin{split}
 \SF{\mathfrak{F}} &\mapsto \SF{\mathfrak{F}}+(\SF{\Lambda},\SF{\mathfrak{F}})_\Kref + \frac{1}{2!}\prn{\SF{\Lambda}, (\SF{\Lambda},\SF{\mathfrak{F}})_\Kref}_\Kref
 + \frac{1}{3!}\prn{\SF{\Lambda},\prn{\SF{\Lambda}, (\SF{\Lambda},\SF{\mathfrak{F}})_\Kref}_\Kref}_\Kref + \ldots\\
 &\equiv \exp\prn{ \Ad_{\SF{\Lambda}} }  \cdot \SF{\mathfrak{F}}\,,
\end{split}
\end{equation}
where we have defined $\Ad_{\SF{\Lambda}}$ as an operator which acts via thermal bracket: $\Ad_{\SF{\Lambda}} \cdot \SF{\mathfrak{F}} \equiv (\SF{\Lambda},\SF{\mathfrak{F}})_\Kref$.
Jacobi identity then gives the commutation relation
\begin{equation}
\Ad_{\SF{\Lambda}} \cdot \Ad_{\SF{\Lambda}'}
 -  \Ad_{\SF{\Lambda}'} \cdot \Ad_{\SF{\Lambda}} = \Ad_{(\SF{\Lambda} , \SF{\Lambda}')_\Kref}  =\Ad_{\Ad_{\SF{\Lambda}} \cdot \SF{\Lambda}'} \,.
\end{equation}	
We have  taken both $\SF{\Lambda}$ and $ \SF{\Lambda}'$ to be Grassmann even in the above expression.

 Say we perform a $\UT$ transformation with a parameter $\SF{\Lambda}^{(2)}$ followed by a $\UT$ transformation with a parameter $\SF{\Lambda}^{(1)}$. The question we want to address is the following: can the result be written as a $\UT$ transformation along a parameter $\SF{\Lambda}^{(1\oplus 2)}$? That is, we want a $\SF{\Lambda}^{(1\oplus 2)}$ satisfying
\begin{equation}
\exp\prn{ \Ad_{\SF{\Lambda}^{(1)}} } \cdot \exp\prn{  \Ad_{\SF{\Lambda}^{(2)}} } \cdot \SF{\mathfrak{F}} = \exp\prn{ \Ad_{\SF{\Lambda}^{(1\oplus 2)}}}  \cdot \SF{\mathfrak{F}} \,.
\end{equation}	

 We will answer this question in affirmative and derive a Baker-Campbell-Hausdorff (BCH) formula for $\SF{\Lambda}^{(1\oplus 2)}$ which takes the form
\begin{equation}
\begin{split}
\SF{\Lambda}^{(1\oplus 2)} &= 
	\SF{\Lambda}^{(1)}+ \SF{\Lambda}^{(2)} + \half (\SF{\Lambda}^{(1)}, \SF{\Lambda}^{(2)})_\Kref
	 + \frac{1}{12}\prn{\SF{\Lambda}^{(1)}-\SF{\Lambda}^{(2)}, (\SF{\Lambda}^{(1)}, \SF{\Lambda}^{(2)})_\Kref }_\Kref \\
&\qquad \quad 
	 - \frac{1}{24}\prn{\SF{\Lambda}^{(2)}, (\SF{\Lambda}^{(1)},(\SF{\Lambda}^{(1)}, \SF{\Lambda}^{(2)})_\Kref)_\Kref }_\Kref +\ldots
\end{split}
\end{equation}
We have checked this form  explicitly in Mathematica to the order shown.

We will now give a derivation of this expression. The derivation here adopts the standard derivation of BCH formula for Lie algebra commutators for the thermal brackets, sidestepping some of the manipulations in the standard Lie algebra proof that have no analogues in thermal brackets. We will begin with the following theorem.
\begin{theorem}
The following identities hold:
\begin{equation}
\begin{split}
\exp\prn{ \Ad_{\SF{\Lambda}} }  \cdot \exp\prn{- \Ad_{\SF{\Lambda}} }  \cdot \SF{\mathfrak{F}} &= \SF{\mathfrak{F}}\ , \\
\exp\prn{ \Ad_{\SF{\Lambda}} }  \cdot  \Ad_{\SF{\Lambda}'} \cdot \exp\prn{ -\Ad_{ \SF{\Lambda}} }  \cdot \SF{\mathfrak{F}} &=  \Ad_{\exp\prn{ \Ad_{\SF{\Lambda}} }\cdot\SF{\Lambda}'} \cdot \SF{\mathfrak{F}}\ , \\
\exp\prn{- \Ad_{\SF{\Lambda}(t) } }  \cdot \frac{d}{dt} \exp\prn{ \Ad_{\SF{\Lambda}(t) } }  \cdot \SF{\mathfrak{F}} &=
  \Ad_{ \brk{ \frac{1-\exp\prn{ - \Ad_{ \SF{\Lambda}(t) } }}{ \Ad_{ \SF{\Lambda}(t) } }  }  \cdot \frac{d}{dt} \SF{\Lambda}(t)} \cdot \SF{\mathfrak{F}} .
\end{split}
\end{equation}
In the last identity, $\SF{\Lambda}(t)$ denotes a parameterized set of adjoint superfields.
\end{theorem}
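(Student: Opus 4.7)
All three identities are formal analogues of standard facts about Lie algebra exponentials, so my plan is to reduce each one to the single structural input that has already been established in the excerpt, namely the representation property
\begin{equation*}
 \Ad_{\SF{\Lambda}}\cdot\Ad_{\SF{\Lambda}'} - \Ad_{\SF{\Lambda}'}\cdot\Ad_{\SF{\Lambda}} \;=\; \Ad_{(\SF{\Lambda},\SF{\Lambda}')_\Kref}\,,
\end{equation*}
which is a consequence of the graded Jacobi identity \eqref{eq:u1T_Jacobi} for the thermal bracket. Throughout I will take $\SF{\Lambda}$, $\SF{\Lambda}'$, $\SF{\Lambda}(t)$ to be Grassmann-even, so that $(\SF{\Lambda},\SF{\Lambda})_\Kref = 0$ and hence $[\Ad_{\SF{\Lambda}},\Ad_{\SF{\Lambda}}]=0$; this ordinary commutativity will let the usual power series manipulations go through without the graded sign bookkeeping that would otherwise be the chief obstacle.

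For the first identity I will note that since $\Ad_{\SF{\Lambda}}$ commutes with itself, the exponential series for $\exp(\Ad_{\SF{\Lambda}})$ and $\exp(-\Ad_{\SF{\Lambda}})$ multiply formally as numerical power series in a single variable, so their product reduces to $\exp(0)=\mathbf{1}$ acting trivially on $\SF{\mathfrak{F}}$. For the second identity I would introduce the one-parameter family
\begin{equation*}
 H(s)\;\equiv\;\exp(s\,\Ad_{\SF{\Lambda}})\cdot\Ad_{\SF{\Lambda}'}\cdot\exp(-s\,\Ad_{\SF{\Lambda}})\,, \qquad g(s)\;\equiv\;\exp(s\,\Ad_{\SF{\Lambda}})\cdot\SF{\Lambda}'\,,
\end{equation*}
and verify that both $H(s)$ and $\Ad_{g(s)}$ satisfy the same first-order ODE $\partial_s X = [\Ad_{\SF{\Lambda}},X]$ with the common initial value $\Ad_{\SF{\Lambda}'}$ at $s=0$; uniqueness then forces $H(1)=\Ad_{g(1)}$, which is the claim. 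The derivative of $\Ad_{g(s)}$ is handled using $\dot g(s)=\Ad_{\SF{\Lambda}}\cdot g(s)$ together with the representation property above.

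For the third identity the plan is to combine the first two. I would define
\begin{equation*}
 F(s,t)\;\equiv\;\exp(-s\,\Ad_{\SF{\Lambda}(t)})\cdot\partial_t\exp(s\,\Ad_{\SF{\Lambda}(t)})\,,
\end{equation*}
note that $F(0,t)=0$ and that $F(1,t)$ is the object of interest, then compute $\partial_s F$. The two terms obtained by differentiating the outer factors cancel against the $\Ad_{\SF{\Lambda}(t)}\,F$ contribution coming from $\partial_t\Ad_{\SF{\Lambda}(t)}$, leaving
\begin{equation*}
 \partial_s F \;=\; \exp(-s\,\Ad_{\SF{\Lambda}(t)})\cdot\Ad_{\partial_t\SF{\Lambda}(t)}\cdot\exp(s\,\Ad_{\SF{\Lambda}(t)})\;=\;\Ad_{\exp(-s\,\Ad_{\SF{\Lambda}(t)})\cdot\partial_t\SF{\Lambda}(t)}\,,
\end{equation*}
where the second equality invokes identity~2. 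Integrating from $s=0$ to $s=1$ and pulling the $s$-integral inside $\Ad_{(\cdot)}$ (which is linear in its argument) produces the geometric-series operator $\tfrac{1-\exp(-\Ad_{\SF{\Lambda}(t)})}{\Ad_{\SF{\Lambda}(t)}}$ acting on $\partial_t\SF{\Lambda}(t)$, which is precisely the right-hand side of the claimed identity. The main place to exercise care, and the step I expect to be most delicate, is the interchange of $\partial_t$ with the power series defining $\exp(s\,\Ad_{\SF{\Lambda}(t)})$: one needs the graded-Leibniz rule for $\partial_t$ acting on nested thermal brackets, which follows term by term from the bilinearity of $(\cdot,\cdot)_\Kref$ but should be stated explicitly to justify the formal manipulations.
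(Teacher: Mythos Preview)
Your proposal is correct. For the first identity you and the paper do the same thing: collapse the double exponential to the identity using that $\Ad_{\SF{\Lambda}}$ commutes with itself.

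For the second and third identities your route differs from the paper's. The paper proves identity~2 by a direct combinatorial expansion: it pushes $(\Ad_{\SF{\Lambda}})^p$ past $\Ad_{\SF{\Lambda}'}$ using the binomial-type relation $(\Ad_{\SF{\Lambda}})^p\cdot\Ad_{\SF{\Lambda}'}=\sum_k\binom{p}{k}\Ad_{(\Ad_{\SF{\Lambda}})^k\cdot\SF{\Lambda}'}\cdot(\Ad_{\SF{\Lambda}})^{p-k}$ and then resums. Your ODE-uniqueness argument sidesteps that combinatorics entirely, trading the binomial identity for the representation relation $[\Ad_{\SF{\Lambda}},\Ad_{g}]=\Ad_{\Ad_{\SF{\Lambda}}\cdot g}$ applied once per $s$-derivative. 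For identity~3 the paper starts from the Duhamel-type integral $\int_0^1 d\tau\,e^{(1-\tau)\Ad}\,\Ad_{\dot{\SF{\Lambda}}}\,e^{\tau\Ad}$, evaluates the resulting Beta integrals to show it equals $\tfrac{d}{dt}e^{\Ad}$, and only then conjugates by $e^{-\Ad}$ and applies identity~2. Your approach instead builds $F(s,t)=e^{-s\Ad}\,\partial_t e^{s\Ad}$, differentiates in $s$ to expose the same integrand directly, and integrates; this avoids the Beta-function bookkeeping at the cost of the cancellation check you flagged (which indeed works: the $-\Ad_{\SF{\Lambda}(t)}F$ from differentiating the left factor cancels against the $\Ad_{\SF{\Lambda}(t)}$ piece of $\partial_t[\Ad_{\SF{\Lambda}(t)}e^{s\Ad}]$). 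Both routes are standard Lie-theory arguments; yours is somewhat more conceptual and avoids explicit series resummation, while the paper's is more hands-on and makes the term-by-term structure visible. Your caveat about justifying $\partial_t$ through the power series is well placed and is exactly the ``rigor gap'' that both approaches share in this formal setting.
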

\begin{proof} The main elements of the proofs will be to use the definition of the adjoint action, commutation relations and certain simple identities.

\paragraph{1.} A direct expansion of LHS in the first identity gives
\begin{equation}
\begin{split}
\exp\prn{ \Ad_{\SF{\Lambda}} }  \cdot \exp\prn{- \Ad_{\SF{\Lambda}} }  \cdot \SF{\mathfrak{F}}
&= \sum_{p,q=0}^\infty \frac{(-)^q}{p!q!} \prn{ \Ad_{\SF{\Lambda}} }^{p+q} \cdot \SF{\mathfrak{F}} \\
&= \SF{\mathfrak{F}}\ .
\end{split}
\end{equation}
Here we have used identity $\sum_{p,q=0}^\infty \frac{(-)^q x^{p+q}}{p!\,q!}  =1$. It is straightforward to generalize this to
\begin{equation}
\begin{split}
\exp\prn{\tau_1 \,\Ad_{\SF{\Lambda}} }  \cdot \exp\prn{\tau_2 \,\Ad_{\SF{\Lambda}} } 
 \cdot \SF{\mathfrak{F}}
=\exp\prn{(\tau_1+\tau_2) \, \Ad_{\SF{\Lambda}} }  \cdot \SF{\mathfrak{F}}  \,. \qed
\end{split}
\end{equation}

\paragraph{2.} Next we turn to  the second identity. We have
\begin{equation}
\begin{split}
\exp\prn{ \Ad_{\SF{\Lambda}} }  \cdot  \Ad_{\SF{\Lambda}'} \cdot \exp\prn{ -\Ad_{ \SF{\Lambda}} }  \cdot \SF{\mathfrak{F}} 
&=
 \sum_{p,q=0}^\infty \frac{(-)^q}{p!\,q!} 
 \prn{ \Ad_{\SF{\Lambda}} }^p\cdot \Ad_{\SF{\Lambda}'} \cdot \prn{ \Ad_{\SF{\Lambda}} }^q .
\end{split}
\end{equation}
Using the commutation relation for $\Ad$ operators, $ \Ad_{\SF{\Lambda}} \cdot \Ad_{\SF{\Lambda}'}
 = \Ad_{\SF{\Lambda}'} \cdot \Ad_{\SF{\Lambda}} + \Ad_{\Ad_{\SF{\Lambda}} \cdot \SF{\Lambda}'} $ ( repeatedly, we can show that
\begin{equation}
\begin{split}
 \prn{ \Ad_{\SF{\Lambda}} }^p\cdot \Ad_{\SF{\Lambda}'}
 &=\sum_{k=0}^p\binom{p}{k} \ \Ad_{\prn{\Ad_{\SF{\Lambda}} }^{k} \cdot \SF{\Lambda}'} \cdot \prn{ \Ad_{\SF{\Lambda}} }^{p-k} \\
\end{split}
\end{equation}
so that
\begin{equation}
\begin{split}
\exp\prn{ \Ad_{\SF{\Lambda}} }  &\cdot  \Ad_{\SF{\Lambda}'} \cdot \exp\prn{ -\Ad_{ \SF{\Lambda}} }  \cdot \SF{\mathfrak{F}}\\
 &=
 \sum_{k,p,q=0}^\infty \frac{(-)^q}{k!(p-k)!q!} \ \Ad_{\prn{\Ad_{\SF{\Lambda}} }^{k} \cdot \SF{\Lambda}'} \cdot \prn{ \Ad_{\SF{\Lambda}} }^{p-k+q} \cdot \SF{\mathfrak{F}}\\
  &=
 \sum_{k=0}^\infty \frac{1}{k!} \ \Ad_{\prn{\Ad_{\SF{\Lambda}} }^{k} \cdot \SF{\Lambda}'} \cdot
\sum_{p,q=0}^\infty \frac{(-)^q}{p!q!}
 \prn{ \Ad_{\SF{\Lambda}} }^{p+q} \cdot \SF{\mathfrak{F}}\\
 &=  \Ad_{\exp\prn{ \Ad_{\SF{\Lambda}} }\cdot\SF{\Lambda}'} \cdot \SF{\mathfrak{F}}\ .  \qed
\end{split}
\end{equation}

\paragraph{3.} To prove the third identity, we begin by examining
\begin{equation}
\begin{split}
\int_0^1 d\tau\  &\exp\prn{ (1-\tau) \Ad_{\SF{\Lambda}(t)} }  \cdot  \Ad_{  \frac{d}{dt} \SF{\Lambda}(t) } \cdot \exp\prn{ \tau \Ad_{\SF{\Lambda}(t)} } \cdot
\SF{\mathfrak{F}} \\
&= \int_0^1 d\tau\ \sum_{p,q=0}^\infty \frac{(1-\tau)^p\tau^q }{p!q!} \prn{ \Ad_{\SF{\Lambda}(t)} }^p \cdot  \Ad_{  \frac{d}{dt} \SF{\Lambda}(t) } \cdot \prn{ \Ad_{\SF{\Lambda}(t)} }^q\cdot
\SF{\mathfrak{F}} \\
&=  \sum_{p,q=0}^\infty \frac{1 }{(p+q+1)!} \prn{ \Ad_{\SF{\Lambda}(t)} }^p \cdot  \Ad_{  \frac{d}{dt} \SF{\Lambda}(t) } \cdot \prn{ \Ad_{\SF{\Lambda}(t)} }^q \cdot
\SF{\mathfrak{F}} \\
&=   \sum_{k=1}^{\infty} \sum_{p=0}^{k-1} \frac{1}{k!} \prn{ \Ad_{\SF{\Lambda}(t)} }^p \cdot  \Ad_{  \frac{d}{dt} \SF{\Lambda}(t) } \cdot \prn{ \Ad_{\SF{\Lambda}(t)} }^{k-p-1}\cdot
\SF{\mathfrak{F}} \\
&=    \frac{d}{dt}\sum_{k=1}^{\infty}  \frac{1}{k!} \prn{ \Ad_{\SF{\Lambda}(t)} }^k \cdot
\SF{\mathfrak{F}} = \frac{d}{dt} \exp\prn{ \Ad_{\SF{\Lambda}(t) } }  \cdot \SF{\mathfrak{F}}\ .
\end{split}
\end{equation}
Applying on both sides $\exp\prn{ -\Ad_{\SF{\Lambda}(t) } }  \cdot $ and then using the 
second identity, we get
\begin{equation}
\begin{split}
\int_0^1 d\tau\  \Ad_{\exp\prn{ -\tau\ \Ad_{\SF{\Lambda}(t)} }  \cdot    \frac{d}{dt} \SF{\Lambda}(t) } \cdot
\SF{\mathfrak{F}}
&= \exp\prn{ -\Ad_{\SF{\Lambda}(t) } }  \cdot \frac{d}{dt} \exp\prn{ \Ad_{\SF{\Lambda}(t) } }  \cdot \SF{\mathfrak{F}}\ .
\end{split}
\end{equation}
Performing the $\tau$ integral then gives the required identity.
\end{proof}

Using the above derived identities we will now show by an explicit construction that there is a $\SF{\Lambda}^{(1\oplus 2)}(t)$ satisfying
\begin{equation}
\exp\prn{t_1 \ \Ad_{\SF{\Lambda}^{(1)}} } \cdot \exp\prn{ t_2\ \Ad_{\SF{\Lambda}^{(2)}} } \cdot \SF{\mathfrak{F}} = \exp\prn{ \Ad_{\SF{\Lambda}^{(1\oplus 2)}(t) }}  \cdot \SF{\mathfrak{F}} \,.
\label{eq:genBHC}
\end{equation}	
Setting $t_1=t_2=1$ would then give us the required BCH formula.  We have
\begin{equation}
\begin{split}
\exp\prn{- \Ad_{\SF{\Lambda}^{(1\oplus 2)}(t) } }  \cdot \frac{\partial}{\partial t_i} \exp\prn{ \Ad_{\SF{\Lambda}^{(1\oplus 2)}(t) } }  \cdot \SF{\mathfrak{F}} &=
  \Ad_{ \brk{ \frac{1-\exp\prn{ - \Ad_{ \SF{\Lambda}^{(1\oplus 2)}(t) } }}{ \Ad_{ \SF{\Lambda}^{(1\oplus 2)}(t) } }  }  \cdot \frac{\partial}{\partial t_i} \SF{\Lambda}^{(1\oplus 2)}(t)} \cdot \SF{\mathfrak{F}} \,.
 \end{split}
\end{equation}
A direct evaluation gives
\begin{equation}\label{eq:BCHdiff}
\begin{split}
 \frac{\partial}{\partial t_1}  \SF{\Lambda}^{(1\oplus 2)}(t)  &=
\brk{ \frac{
 \Ad_{ \SF{\Lambda}^{(1\oplus 2)}(t) }
 }{
\exp\prn{  \Ad_{ \SF{\Lambda}^{(1\oplus 2)}(t) } } -1
 } }   \cdot \SF{\Lambda}^{(1)}      \\
 \frac{\partial}{\partial t_2}  \SF{\Lambda}^{(1\oplus 2)}(t)  &=
\brk{ \frac{
 \Ad_{ \SF{\Lambda}^{(1\oplus 2)}(t) }
 }{
 1-\exp\prn{ - \Ad_{ \SF{\Lambda}^{(1\oplus 2)}(t) } }
 } }   \cdot \SF{\Lambda}^{(2)} \,.
\end{split}
\end{equation}
These are the basic generating differential equations for BCH series. For example, setting either $t_1$ or $t_2$ to unity and integrating, we can write down an integral version of the BCH formula as
\begin{equation}\label{eq:BCHInt}
\begin{split}
 \SF{\Lambda}^{(1\oplus 2)}  &= \SF{\Lambda}^{(1)} + \int_0^1 dt_2 \brk{ \frac{ \log\bigbr{\exp\prn{ \Ad_{\SF{\Lambda}^{(1)}} } \cdot \exp\prn{ t_2\ \Ad_{\SF{\Lambda}^{(2)}} } } }
{1-\exp\prn{ - t_2\ \Ad_{ \SF{\Lambda}^{(2)} } } \cdot \exp\prn{ -\Ad_{\SF{\Lambda}^{(1)}} } }  } \cdot \SF{\Lambda}^{(2)} \\
 &= \SF{\Lambda}^{(2)} + \int_0^1 dt_1 \brk{ \frac{ \log\bigbr{\exp\prn{t_1 \Ad_{\SF{\Lambda}^{(1)}} } \cdot \exp\prn{ \ \Ad_{\SF{\Lambda}^{(2)}} } } }
{\exp\prn{  t_1\ \Ad_{ \SF{\Lambda}^{(1)} } } \cdot \exp\prn{ \Ad_{\SF{\Lambda}^{(2)}} } -1 }  } \cdot \SF{\Lambda}^{(1)}
\end{split}
\end{equation}

We can find more useful expressions for example by working in the limit when the initial transformations are along infinitesimal parameters. For example if  a linear approximation in either $\SF{\Lambda}^{(1)}$ or $\SF{\Lambda}^{(2)}$ is required, we can directly use the differential equations in \eqref{eq:BCHdiff} to write
\begin{equation}\label{eq:BCHLin}
\begin{split}
 \SF{\Lambda}^{(1\oplus 2)}  &=  \SF{\Lambda}^{(2)}  +
\brk{ \frac{
 \Ad_{ \SF{\Lambda}^{(2)} }
 }{
\exp\prn{  \Ad_{ \SF{\Lambda}^{(2)} } } -1
 } }   \cdot \SF{\Lambda}^{(1)}  + O\prn{\SF{\Lambda}^{(1)} }^2   \\
 &=  \SF{\Lambda}^{(1)}  +
\brk{ \frac{
 \Ad_{ \SF{\Lambda}^{(1)} }
 }{
 1-\exp\prn{ - \Ad_{ \SF{\Lambda}^{(1)} } }
 } }   \cdot \SF{\Lambda}^{(2)}   + O\prn{\SF{\Lambda}^{(2)} }^2
\end{split}
\end{equation}

For our purposes, it is desirable to seek a \emph{quadratic} approximation that would be exact if either of $\SF{\Lambda}^{(k)}$'s turn to be a FP boost. This can be achieved by feeding the above linear approximation back into  \eqref{eq:BCHdiff}.  We get
\begin{equation}\label{eq:BCHQuad1}
\begin{split}
 \SF{\Lambda}^{(1\oplus 2)}  &= \SF{\Lambda}^{(1)} + \int_0^1 d\tau \brk{ \frac{
 \Ad_{ \SF{\Lambda}^{(1),Lin}(\tau) }
 }{
 1-\exp\prn{ - \Ad_{ \SF{\Lambda}^{(1),Lin}(\tau) } }
 } } \cdot \SF{\Lambda}^{(2)}  + O\prn{\SF{\Lambda}^{(2)} }^3\\
\text{with} &\qquad
 \SF{\Lambda}^{(1),Lin}(\tau) \equiv \SF{\Lambda}^{(1)}  + \tau
\brk{ \frac{
 \Ad_{ \SF{\Lambda}^{(1)} }
 }{
 1-\exp\prn{ - \Ad_{ \SF{\Lambda}^{(1)} } }
 } }   \cdot \SF{\Lambda}^{(2)}
\end{split}
\end{equation}
and
\begin{equation}\label{eq:BCHQuad2}
\begin{split}
 \SF{\Lambda}^{(1\oplus 2)}  &= \SF{\Lambda}^{(2)} + \int_0^1 d\tau \brk{ \frac{
 \Ad_{ \SF{\Lambda}^{(2),Lin}(\tau) }
 }{
 \exp\prn{ \Ad_{ \SF{\Lambda}^{(2),Lin}(\tau) } }-1
 } } \cdot \SF{\Lambda}^{(1)}  + O\prn{\SF{\Lambda}^{(1)} }^3\\
\text{with} &\qquad
 \SF{\Lambda}^{(2),Lin}(\tau) \equiv \SF{\Lambda}^{(2)}  + \tau
\brk{ \frac{
 \Ad_{ \SF{\Lambda}^{(2)} }
 }{
 \exp\prn{ \Ad_{ \SF{\Lambda}^{(2)} } }-1
 } }   \cdot \SF{\Lambda}^{(1)}
\end{split}
\end{equation}

We turn now to the case where both $\SF{\Lambda}^{(1)}$ and $\SF{\Lambda}^{(2)}$ are small and we want to organize BCH formula in terms of
the number of  $\SF{\Lambda}^{(1)}$ or $\SF{\Lambda}^{(2)}$ that occurs in every term. This can be done by setting $t_1=t_2=\tau$ such that
\begin{equation}\label{eq:BCHdiffsym}
\begin{split}
 \frac{d}{d\tau}  \SF{\Lambda}^{(1\oplus 2)}(\tau)  &=
 \brk{ \frac{\partial}{\partial t_1}  \SF{\Lambda}^{(1\oplus 2)}(t) + \frac{\partial}{\partial t_2}  \SF{\Lambda}^{(1\oplus 2)}(t) }_{t_1=t_2=\tau}\\
&=
\brk{ \frac{
 \Ad_{ \SF{\Lambda}^{(1\oplus 2)}(\tau) }
 }{
\exp\prn{  \Ad_{ \SF{\Lambda}^{(1\oplus 2)}(\tau) } } -1
 } }   \cdot \SF{\Lambda}^{(1)}      +
\brk{ \frac{
 \Ad_{ \SF{\Lambda}^{(1\oplus 2)}(\tau) }
 }{
 1-\exp\prn{ - \Ad_{ \SF{\Lambda}^{(1\oplus 2)}(\tau) } }
 } }   \cdot \SF{\Lambda}^{(2)}
\end{split}
\end{equation}
Invoke now the expansions
\begin{equation}
\begin{split}
\frac{z}{\exp(z)-1} &= 1-\frac{z}{2} + \sum_{k=1}^\infty \frac{B_{2k}}{(2k)!} \, z^{2k} \\
\frac{z}{1-\exp(-z)} &= 1+\frac{z}{2} + \sum_{k=1}^\infty \frac{B_{2k}}{(2k)!} \, z^{2k}
\end{split}
\end{equation}
where $B_{2k}$ are the $(2k)^{\rm th}$ Bernoulli numbers.\footnote{ The first few Bernoulli numbers are $B_2=\frac{1}{6}$, $B_4= -\frac{1}{30}$, $B_6= \frac{1}{42}$,$B_8= -\frac{1}{30}$, and $B_{10}= \frac{5}{66}$.} This  then gives
\begin{equation}\label{eq:BCHdiffsym}
\begin{split}
 \frac{d}{d\tau}  \SF{\Lambda}^{(1\oplus 2)}(\tau)
&= \SF{\Lambda}^{(1)}  + \SF{\Lambda}^{(2)} -\half  \Ad_{ \SF{\Lambda}^{(1\oplus 2)}(\tau) } \cdot \prn{ \SF{\Lambda}^{(1)}  - \SF{\Lambda}^{(2)} }\\
&\qquad +  \sum_{k=1}^\infty \frac{B_{2k}}{(2k)!} \prn{\Ad_{ \SF{\Lambda}^{(1\oplus 2)}(\tau) } }^{2k} \cdot  \prn{ \SF{\Lambda}^{(1)}  + \SF{\Lambda}^{(2)} }
\end{split}
\end{equation}

The BCH formula is just the solution of this equation with the initial condition 
$ \SF{\Lambda}^{(1\oplus 2)}(\tau=0)=0$. Setting
 $ \SF{\Lambda}^{(1\oplus 2)}(\tau) =\sum_{j=1}^\infty \tau^j  \Lambda_{(S), j}^{(1\oplus 2)}$ in the above expression and comparing the  factors of $\tau^j$ on  either side, we get a recursion relation
\begin{equation}
\begin{split}
(j+1)\ \Lambda_{(S), j+1}^{(1\oplus 2)}
&=\prn{ \SF{\Lambda}^{(1)}  + \SF{\Lambda}^{(2)} } \delta_{j,0} - \half  \Ad_{ \Lambda_{(S), j}^{(1\oplus 2)}} \cdot \prn{ \SF{\Lambda}^{(1)}  - \SF{\Lambda}^{(2)} }\\
&\qquad +  \sum_{k=1}^{2k\leq j} \sum_{p_i \in \mathbb{Z}^+}
 \frac{B_{2k}}{(2k)!}   \delta_{j,\sum_{i=1}^{2k} p_i} \prn{ \prod_{i=1}^{2k}\Ad_{ \Lambda_{(S), p_i}^{(1\oplus 2)} } } \cdot  \prn{ \SF{\Lambda}^{(1)}  + \SF{\Lambda}^{(2)} }
\end{split}
\end{equation}
which, when solved recursively, gives
\begin{equation}
\begin{split}
 \Lambda_{(S), 1}^{(1\oplus 2)} &=  \SF{\Lambda}^{(1)}  + \SF{\Lambda}^{(2)} \\
 \Lambda_{(S), 2}^{(1\oplus 2)} &=  -\quarter  \Ad_{\SF{\Lambda}^{(1)}  + \SF{\Lambda}^{(2)} } \cdot \prn{ \SF{\Lambda}^{(1)}  - \SF{\Lambda}^{(2)} }
 =  \half (\SF{\Lambda}^{(1)}, \SF{\Lambda}^{(2)})_\Kref \\
  \Lambda_{(S), 3}^{(1\oplus 2)} &=  -\frac{1}{6}  \Ad_{ \half (\SF{\Lambda}^{(1)}, \SF{\Lambda}^{(2)})_\Kref } \cdot \prn{ \SF{\Lambda}^{(1)}  - \SF{\Lambda}^{(2)} } +
   \frac{B_2}{3\times2!} \prn{\Ad_{ \SF{\Lambda}^{(1)}  + \SF{\Lambda}^{(2)} } }^2 \cdot  \prn{ \SF{\Lambda}^{(1)}  + \SF{\Lambda}^{(2)} } \\
 &=   \frac{1}{12}\prn{\SF{\Lambda}^{(1)}-\SF{\Lambda}^{(2)}, (\SF{\Lambda}^{(1)}, \SF{\Lambda}^{(2)})_\Kref }_\Kref \\
 \Lambda_{(S), 4}^{(1\oplus 2)} &=  -\frac{1}{8}  \Ad_{\frac{1}{12}\prn{\SF{\Lambda}^{(1)}-\SF{\Lambda}^{(2)}, (\SF{\Lambda}^{(1)}, \SF{\Lambda}^{(2)})_\Kref }_\Kref } \cdot \prn{ \SF{\Lambda}^{(1)}  - \SF{\Lambda}^{(2)} } \\
 &\qquad +
   \frac{B_2}{4\times2!} \Ad_{ \SF{\Lambda}^{(1)}  + \SF{\Lambda}^{(2)} } \cdot  \Ad_{ \half (\SF{\Lambda}^{(1)}, \SF{\Lambda}^{(2)})_\Kref } \cdot  \prn{ \SF{\Lambda}^{(1)}  + \SF{\Lambda}^{(2)} } \\
 &=   \frac{1}{24} \times \quarter\prn{\SF{\Lambda}^{(1)}-\SF{\Lambda}^{(2)},\prn{\SF{\Lambda}^{(1)}-\SF{\Lambda}^{(2)}, (\SF{\Lambda}^{(1)}, \SF{\Lambda}^{(2)})_\Kref }_\Kref }_\Kref\\
&\qquad  -  \frac{1}{24} \times \quarter\prn{\SF{\Lambda}^{(1)}+\SF{\Lambda}^{(2)},\prn{\SF{\Lambda}^{(1)}+\SF{\Lambda}^{(2)}, (\SF{\Lambda}^{(1)}, \SF{\Lambda}^{(2)})_\Kref }_\Kref }_\Kref \\
&= -\frac{1}{48} \prn{\SF{\Lambda}^{(2)}, (\SF{\Lambda}^{(1)},(\SF{\Lambda}^{(1)}, \SF{\Lambda}^{(2)})_\Kref)_\Kref }_\Kref -\frac{1}{48} \prn{\SF{\Lambda}^{(1)}, (\SF{\Lambda}^{(2)},(\SF{\Lambda}^{(1)}, \SF{\Lambda}^{(2)})_\Kref)_\Kref }_\Kref\\
&= -\frac{1}{24} \prn{\SF{\Lambda}^{(2)}, (\SF{\Lambda}^{(1)},(\SF{\Lambda}^{(1)}, \SF{\Lambda}^{(2)})_\Kref)_\Kref }_\Kref
\end{split}
\end{equation}
Thus, we obtain the first few terms of BCH formula quoted in the beginning of this section.

\subsection{$\UT$ representation theory}
\label{sec:repthy}
The basic  action of $\UT$ was explained at the level of the Lie algebra in Appendix~\ref{sec:UTreps}. We briefly comment now on some representations of this symmetry algebra. 

We start by defining a tower of $\UT$ representations which we call as $k$-adjoint representations. A generic $k$-adjoint superfield $\SF{\phi}_{(k)}$ transforms as 
\begin{equation}
 (\SF{\Lambda},\SF{\phi}_{(k)})_\Kref \equiv \SF{\Lambda}\, \lieD_\Kref \, \SF{\phi}_{(k)} - k \, \SF{\phi}_{(k)} \lieD_\Kref \SF{\Lambda} \,.
\end{equation}
We will refer to $k$ as the {\it adjoint weight}. We will refer to fields transforming in the $k$-adjoint representation as {\it thermal primaries of level $k$}. 

Let us briefly mention some examples. 
Comparison with Eq.\ \eqref{eq:u1T_closure} reveals that the $\UT$ gauge parameters are $1$-adjoint as one would expect. Similarly, field strengths are $1$-adjoint. The generic tensor superfields studied in \eqref{eq:tensorTrf} were $0$-adjoint. Another $1$-adjoint field which plays an important role in the discussion is $\zsf$ defined as
\begin{equation}
\zsf \equiv 1+ \SF{\Kref}^I \As_I \,.
\end{equation}
We remind the reader that this field enters into our measure owing to the $\UT$ covariant pullbacks employed. One can easily check that $\zsf$ transforms covariantly in the $1$-adjoint representation. 

The adjoint weight is additive in the sense that $\SF{\phi}_{(k)} \SF{\phi}_{(\ell)}$ transforms in the $(k+\ell)$-adjoint and $\UT$ gauge transformations act distributively on such products:
\begin{equation}
\begin{split}
(\SF{\Lambda},\SF{\phi}_{(k)}\SF{\phi}_{(\ell)})_\Kref &=  (\SF{\Lambda},\SF{\phi}_{(k)})_\Kref \SF{\phi}_{(\ell)} + (\SF{\Lambda},\SF{\phi}_{(\ell)})_\Kref \SF{\phi}_{(k)}\\
& =  \SF{\Lambda} \, \lieD_\Kref\left( \SF{\phi}_{(k)}\SF{\phi}_{(\ell)} \right)- (k+\ell) \SF{\phi}_{(k)}\SF{\phi}_{(\ell)} \,  \lieD_\Kref \, \SF{\Lambda}\,.
\end{split}
\end{equation}
We have written these expressions assuming that $\SF{\phi}_{(k)}$ and $\SF{\phi}_{(\ell)}$ are Grassmann-even superfields and thus refrained from writing various Grassmann parity induced signs in the above expressions.

\paragraph{Properties of thermal bracket:}
We define the thermal bracket between $k$-adjoint and $\ell$-adjoint fields as
\begin{equation} \label{eq:GeneralBrk}
(\SF{\phi}_{(k)} , \SF{\phi}_{(\ell)} )_\Kref \equiv k \, \SF{\phi}_{(k)} \lieD_\Kref \, \SF{\phi}_{(\ell)} - \ell \, \SF{\phi}_{(\ell)} \lieD_\Kref \, \SF{\phi}_{(k)} \,.
\end{equation}
One can easily verify that $(\SF{\phi}_{(k)} , \SF{\phi}_{(\ell)} )_\Kref$ transforms as a $(k+\ell-1)$-adjoint. This definition of the thermal bracket satisfies the following three properties (which are the defining axioms of a Poisson bracket): 
\begin{enumerate}
\item[$(i)$] $(\SF{\phi}_{(k)},\SF{\phi}_{(\ell)})_\Kref = - (\SF{\phi}_{(\ell)},\SF{\phi}_{(k)})_\Kref$ ,
\item[$(ii)$] $(\SF{\phi}_{(k)},(\SF{\phi}_{(\ell)},\SF{\phi}_{(m)})_\Kref)_\Kref + (\SF{\phi}_{(\ell)},(\SF{\phi}_{(m)},\SF{\phi}_{(k)})_\Kref)_\Kref  + (\SF{\phi}_{(m)},(\SF{\phi}_{(k)},\SF{\phi}_{(\ell)})_\Kref)_\Kref  = 0 $ ,
\item[$(iii)$] $(\SF{\phi}_{(k)},\SF{\phi}_{(\ell)}\SF{\phi}_{(m)})_\Kref = (\SF{\phi}_{(k)},\SF{\phi}_{(\ell)})_\Kref \;\SF{\phi}_{(m)} + (\SF{\phi}_{(k)},\SF{\phi}_{(m)})_\Kref\; \SF{\phi}_{(\ell)}$ .
\end{enumerate}
It is natural to define an operator $\Adj$, which gives the adjoint weight of any thermal primary, i.e., 
\begin{equation}
 \Adj \SF{\phi}_{(k)} = k \, \SF{\phi}_{(k)} \,.
\end{equation} 
In terms of this operator, the formal similarity between thermal bracket and a Poisson bracket can be made more manifest:
\begin{equation}
 (\SF{\phi}_{(k)},\SF{\phi}_{(\ell)})_\Kref = \left( \Adj \SF{\phi}_{(k)} \right) \left( \lieD_\Kref \SF{\phi}_{(\ell)} \right) - \left( \lieD_\Kref \SF{\phi}_{(k)} \right)\left( \Adj \SF{\phi}_{(\ell)} \right)   \,.
\end{equation}

Let us now consider covariant derivatives on $k$-adjoint fields. 
We define the $\UT$ covariant derivative in the $k$-adjoint representation as
\begin{equation}\label{eq:CovDerAdj}
\Dut_I \SF{\phi}_{(k)} \equiv \partial_I \SF{\phi}_{(k)} + (\As_I,\SF{\phi}_{(k)} )_\Kref \,,
\end{equation}
where $\As_I$ can be viewed as a $1$-adjoint for the purpose of defining the bracket on the right hand side according to the general definition \eqref{eq:GeneralBrk}. The $\UT$ covariant derivative respects the adjoint weight, for  $\Dut_I \SF{\phi}_{(k)}$ is again a $k$-adjoint, and it furthermore distributes over both the thermal bracket and usual products: 
\begin{equation}
\begin{split}
 \Dut_I (\SF{\phi}_{(k)} , \SF{\phi}_{(\ell)} )_\Kref& = (\mathcal{D}_I\SF{\phi}_{(k)} , \SF{\phi}_{(\ell)} )_\Kref + (\SF{\phi}_{(k)} ,\Dut_I\SF{\phi}_{(\ell)} )_\Kref \,,\\
 \Dut_I (\SF{\phi}_{(k)}\SF{\phi}_{(\ell)} ) &=  \SF{\phi}_{(\ell)}\Dut_I \SF{\phi}_{(k)} + \SF{\phi}_{(k)}\Dut_I \SF{\phi}_{(\ell)} \,.
  \end{split}
\end{equation}

\paragraph{Thermal descendants:}
Given a $k$-adjoint field $\SF{\phi}_{(k)}$, we would like to understand the properties of $\lieD_\Kref \SF{\phi}_{(k)}$. Let us try to figure out the induced action of $\Adj$ on $\lieD_\Kref \SF{\phi}_{(k)}$ by computing the following object in two ways: 
\begin{equation}
\begin{split}
 \Adj (\SF{\phi}_{(k)},\SF{\phi}_{(\ell)})_\Kref &= \Adj \left[ k \SF{\phi}_{(k)} \, \lieD_\Kref \, \SF{\phi}_{(\ell)} - \ell \SF{\phi}_{(\ell)} \, \lieD_\Kref \, \SF{\phi}_{(k)} \right]\\
 &= k^2 \SF{\phi}_{(k)} \, \lieD_\Kref \, \SF{\phi}_{(\ell)} - \ell^2 \SF{\phi}_{(\ell)} \, \lieD_\Kref \, \SF{\phi}_{(k)} + k \SF{\phi}_{(k)} \, \Adj \lieD_\Kref \, \SF{\phi}_{(\ell)} - \ell \SF{\phi}_{(\ell)} \, \Adj \lieD_\Kref \, \SF{\phi}_{(k)} \,,\\
 \Adj (\SF{\phi}_{(k)},\SF{\phi}_{(\ell)})_\Kref  &= (k+\ell-1) (\SF{\phi}_{(k)},\SF{\phi}_{(\ell)})_\Kref  \\
 &= k^2 \SF{\phi}_{(k)} \, \lieD_\Kref \, \SF{\phi}_{(\ell)} - \ell^2 \SF{\phi}_{(\ell)} \, \lieD_\Kref \, \SF{\phi}_{(k)} + k (\ell-1) \SF{\phi}_{(k)} \,  \lieD_\Kref \, \SF{\phi}_{(\ell)} - \ell (k-1) \SF{\phi}_{(\ell)} \, \lieD_\Kref \, \SF{\phi}_{(k)} \,.
\end{split}
\end{equation}
By comparison of these two expressions, we conclude that the following should hold:
\begin{equation}
 \Adj \lieD_\Kref \, \SF{\phi}_{(k)} = (k-1) \lieD_\Kref \, \SF{\phi}_k + \alpha \, k \, \SF{\phi}_{(k)} \,,
\end{equation}
where $\alpha$ is some undetermined object (which must be independent of $k$). It appears tempting to fix $\alpha =0$, but we have not managed to convince ourselves that this is the only possibility (eg.,  adjoint weight counting would also allow for the possibility of $\alpha = \zsf^{-1}$).  We leave this a curious observation that should be better understood.

\paragraph{Deformation quantization:}
For amusement, let us try to quantize the thermal bracket. We first write the thermal bracket in suggestive notation as follows: 
\begin{equation}
(\SF{\phi}_{(k)},\SF{\phi}_{(\ell)})_\Kref = \SF{\phi}_{(k)} \left( \overleftarrow{\Adj} \,\overrightarrow{\lieD_\Kref} - \overleftarrow{\lieD_\Kref} \,\overrightarrow{\Adj} \right) \SF{\phi}_{(\ell)} \,.
\end{equation}
Next, we introduce a star product: 
\begin{equation}
  \SF{\phi}_{(k)} \star \SF{\phi}_{(\ell)} \equiv \SF{\phi}_{(k)} \, \exp \left[ \frac{i\hbar}{2} \left( \overleftarrow{\Adj} \,\overrightarrow{\lieD_\Kref} - \overleftarrow{\lieD_\Kref} \,\overrightarrow{\Adj} \right) \right] \SF{\phi}_{(\ell)} 
\end{equation}
This allows us to define the following deformation of the thermal bracket: 
\begin{equation}
\begin{split}
 (\SF{\phi}_{(k)}, \SF{\phi}_{(\ell)})^\star_\Kref 
 &\equiv  \frac{1}{i\hbar} \left( \SF{\phi}_{(k)} \star \SF{\phi}_{(\ell)} -  \SF{\phi}_{(\ell)} \star \SF{\phi}_{(k)} \right) \\
 &\equiv  \frac{2}{\hbar} \, \SF{\phi}_{(k)} \, \sin \left[ \frac{\hbar}{2} \left( \overleftarrow{\Adj} \,\overrightarrow{\lieD_\Kref} - \overleftarrow{\lieD_\Kref} \,\overrightarrow{\Adj} \right) \right] \SF{\phi}_{(\ell)} \\
 &= (\SF{\phi}_{(k)}, \SF{\phi}_{(\ell)})_\Kref - \frac{1}{3!} \left( \frac{\hbar}{2} \right)^2  \SF{\phi}_{(k)} \, \left( \overleftarrow{\Adj} \,\overrightarrow{\lieD_\Kref} - \overleftarrow{\lieD_\Kref} \,\overrightarrow{\Adj} \right)^3 \SF{\phi}_{(\ell)} + \mathcal{O}(\hbar^5) \,.
\end{split}
\end{equation}
By associativity of the star product, this deformed bracket continues to satisfy the Jacobi identity. 

It is interesting to speculate that this deformed thermal bracket plays a role in the thermal equivariant algebra at finite temperatures $\beta \sim \hbar$ (recall from \S\ref{sec:thermaleq} that we primarily focus on the high temperature limit). This idea has been previously considered in the deformation quantization literature, see for eg., \cite{Basart:1984aa} and subsequent developments in \cite{Bordemann:1998aa,Bordemann:1999aa}.

\newpage
\part{Hydrodynamic sigma models: Mathematical details}
\section{Superspace representations: position multiplet}
\label{sec:PositionMult}

As we saw before, the $\UT$ gauge theory naturally comes with a $\UT$ gauge field quartet, a Faddeev-Popov ghost triplet, and a gauge parameter quintet. We will now formulate the gauged sigma-model of a Brownian brane. This requires introducing a position multiplet consisting of an embedding coordinate of the brane in the target space and its completion into a superfield.

\subsection{Basic definitions}
\label{sec:PositionMultDefs}

As indicated in the main text we should upgrade physical spacetime to a super-geometry with Grassmann odd coordinates ${\Theta}, {\bar{\Theta}}$. We collect these together with the position superfield into 
$$ {X}^{\uA} = \{{X}^\mu, {\Theta}, {\bar{\Theta}}\} \,.$$ 
We will take the spacetime to be endowed with  a (non-dynamical) metric $g_{\uA\uB}({X^{\uC}})$ and a  Christoffel connection $\Gamma^{\uA}{}_{\uB\uC}({X}) $ with a curvature 
$R^{\uA}{}_{\uB\uC\uD}({X})$.  We will furthermore fix a-priori some metric components. Our choice is described in the main text \eqref{eq:gsptgauge} which we reproduce here for convenience:
\begin{equation}
 \qquad 
g_{\mu\Theta} = g_{\mu \bar{\Theta}} =0 \,, \qquad g_{\Theta \bar{\Theta}} = -g_{\bar{\Theta} \Theta} = i
\,, \qquad g_{\mu\nu} = g_{\mu\nu}(X^\rho) \,.
\label{eq:gsptgauge1}
\end{equation}	
This is a canonical choice in the literature on supermanifolds \cite{DeWitt:1992cy}. The target space super-connection $\nabla_{\uA}$ is simply derived from the usual Christoffel symbol (upgraded to superspace).

In hydrodynamics, we think of the physical spacetime as the target space of the Brownian brane parameterized by a worldvolume spacetime with coordinates $\sigma^a$.  The thermal supergeometry structure discussed earlier implies that we  should upgrade the  Brownian particle worldvolume to a superspace with super-coordinates  $z^I = \{\sigma^a,\theta,\thetab \}$ such that $X^{\uA}=X^{\uA}(z^I)$. Note that the Brownian brane preserves the full twisted supersymmetry of the target space. This is in contrast to usual supersymmetry where at least half the supersymmetries are broken by the worldline. This is also related to the fact that there are no additional goldstino-like  Grassmann odd modes living on the Brownian brane.

Let us work in RNS formalism where we ignore the target space twisted supersymmetry along with $\{\Theta,\bar{\Theta}\}$ and focus only on worldvolume supersymmetry and superspace. Equivalently, we can make the super-static gauge choice \eqref{eq:superstatic}, viz.,  $\{\Theta(\sigma^a,\theta,\thetab )= \theta,\bar{\Theta}(\sigma^a,\theta,\thetab )= \thetab \}$, thus aligning the target space Grassmann odd directions to the Grassmann odd directions in the world-volume. We will work out the implications of this super-static gauge choice in Appendix~\ref{sec:gaugesdiff}. With this assumption, pullbacks along $\Theta$'s are trivial. We can then develop an RNS formalism involving just the worldvolume twisted superspace.  

For readers familiar with string theory, some clarifying comments are in order.
In usual perturbative string theory in RNS formalism, target space supersymmetry is eventually restored by
a series of tricks on the worldsheet: doing a GSO projection, constructing spin-fields, picture-changing, summing over  spin structures etc. The corresponding procedure in D-branes (or for that matter, any RR background) is however unknown. D-branes are usually dealt with by taking target space supersymmetry seriously from the beginning and shifting to Green-Schwarz (GS) formalism instead. For strings, GS formalism brings its own problems: it obscures worldsheet supersymmetry (or more precisely, it buries it into $\kappa$-symmetry) and in turn, target space Lorentz invariance which then needs to be dealt with  by either passing to light-cone or pure-spinor gauge. It is unclear, as of now, how these issues translate to the twisted sigma-models we are interested in. We will just plough ahead and work with Brownian branes in RNS formalism hoping that the  usual obstructions are alleviated when considering twisted supersymmetry.

Target space tensors can viewed as worldvolume superfields once we have the super-embedding $\SF{X}^{uA}$. The bottom components of these tensors when are projected  to $\{\theta=0,\thetab =0\}$ subspace will be the basic target tensor superfield. To simplify notation we will routinely drop their arguments, for example we write,  $g_{\mu\nu}\equiv g_{\mu\nu}({X})|=g_{\mu\nu}({X}|)$ and similarly for any other target super-tensor field.

In general, we define the Lie derivative along the thermal super-vector, $\SF{\Kref}$ as 
\begin{equation}
\lieD_\Kref \SF{T}^{\uA\cdots}{}_{\uC\cdots} \equiv 	
	\SF{\Kref}^I \partial_I\left( \SF{T}^{\uA\cdots}{}_{\uC\cdots} \right)
\end{equation}
for $\SF{T}^{\uA\cdots}{}_{\uC\cdots}$ being any partially pulled back tensor superfield which is not in 
$\UT$ adjoint. The superspace Lie derivative is the standard one defined in \cite{DeWitt:1992cy}. We define the $\UT$ super-covariant derivative of the sigma-model position superfield 
$\SF{X}^{\uA}$ via
\begin{equation}
\Dut_I  \SF{X}^{\uA} \equiv  \partial_I \SF{X}^{\uA} + \dbrk{\As_I,\SF{X}^{\uA}} = \partial_I \SF{X}^{\uA} + \As_I \,\SF{\Kref}^J\partial_J \SF{X}^{\uA} \,.
\end{equation}
An advantage of the super-static gauge choice \eqref{eq:superstatic} together with the gauge-fixing of the thermal super-vector \eqref{eq:betagauge} is that $\Dut_\theta \Theta = \Dut_{\thb} \bar{\Theta} =1$ and all other derivatives of the target space Grassmann coordinates vanish. As a result for the most part we can 
simplify the discussion to focus only target space tensors with only ordinary space components. 

On a general partially pulled back target space tensor superfield $\SF{T}^{\uA\cdots}{}_{\uC\cdots}$ the fully covariant derivative will be defined as\footnote{ There are various signs to keep track of in actual computations since we have now Grassmann indices in both target spacetime and the worldvolume. We will refrain from writing these down explicitly unless absolutely necessary.} 
\begin{equation} \label{eq:DcovVS}
\begin{split}
  \Dut_I \SF{T}^{\uA\cdots}{}_{\uC\cdots} &\equiv  \partial_I \SF{T}^{\uA\cdots}{}_{\uC\cdots} 
  + \Gamma^{\uA}{}_{\uB\uD}(\SF{X})\ \Dut_I \SF{X}^{\uD} \; \SF{T}^{\uB\cdots}{}_{\uC\cdots} + \ldots \\
   &\qquad\qquad\quad\; - \Gamma^{\uB}{}_{\uC\uD}(\SF{X})\ \Dut_I \SF{X}^{\uD}\; \SF{T}^{\uA\cdots}{}_{\uB\cdots} + \ldots
   + (\SF{\Ascr}_I \,, \SF{T}^{\uA\cdots}{}_{\uC\cdots})_\Kref \\
       &= \SF{\Dref}_I \SF{T}^{\uA\cdots}{}_{\uC\cdots} + \dbrk{\As_I ,\,\SF{T}^{\uA\cdots}{}_{\uC\cdots}} \,,
\end{split}
\end{equation}
where target space connection terms are absorbed into the partial pullback derivative:
\begin{equation}\label{eq:CovDref}
 \SF{\Dref}_I\SF{T}^{\uA\cdots}{}_{\uC\cdots} \equiv 
 \partial_I \SF{T}^{\uA\cdots}{}_{\uC\cdots} + \Gamma^{\uA}{}_{\uB\uD}(\SF{X})\ 
 \partial_I \SF{X}^{\uD} \ \SF{T}^{\uB\cdots}{}_{\uC\cdots} + \ldots  - \Gamma^{\uB}{}_{\uC\uD}(\SF{X})\ \partial_I \SF{X}^{\uD} \ \SF{T}^{\uA\cdots}{}_{\uB\cdots} + \ldots
\end{equation}
Furthermore, the thermal bracket $\dbrk{\cdot\,,\cdot}$ is defined on any tensors other than $\SF{X}^{\uA}$ by
\begin{equation}
\begin{split}
 \dbrk{\SF{\Lambda} ,\,\SF{T}^{\uA\cdots}{}_{\uC\cdots}} &\equiv  
 \SF{\Lambda}\  \SF{\Kref}^I \SF{\Dref}_I \SF{T}^{\uA\cdots}{}_{\uC\cdots}
\end{split}
\end{equation}
for $\SF{\Lambda}$ in the adjoint of $\UT$ the tensor $\SF{T}^{\uA\cdots}{}_{\uC\cdots}(\SF{X})$ is partially pulled back tensor superfield which is not in the adjoint of $\UT$.

Note that the bracket $\dbrk{\,\cdot\,,\,\cdot}$ defined with covariant derivatives as in 
\eqref{eq:CovDref} still satisfies a Jacobi identity, despite involving a covariant derivative.  That is, for super tensor fields $\SF{A},\SF{B},\SF{C}$ all in the adjoint of $\UT$ we have the thermal bracket (leaving Grassmann index and contraction  signs implicit):
\begin{equation}
\dbrk{\SF{A}^{\uA\cdots}_{\uC\cdots},\,\SF{B}^{\uB\cdots}_{\uD\cdots}} 
\equiv 
	\SF{A}^{\uA\cdots}_{\uC\cdots} \prn{ \SF{\Kref}^J\SF{\Dref}_J \SF{B}^{\uB\cdots}_{\uD\cdots}} -
	 (-)^{AB} \, \SF{B}^{\uB\cdots}_{\uD\cdots} \prn{ \SF{\Kref}^J\SF{\Dref}_J \SF{A}^{\uA\cdots}_{\uC\cdots} } \,.
\end{equation}
which implies:
\begin{equation}
\begin{split}
 (-)^{AC} \, \dbrk{ \SF{A}^{\uA_1\cdots}_{\uA_2\cdots} ,\, \dbrk{\SF{B}^{\uB_1\cdots}_{\uB_2\cdots} ,\, \SF{C}^{\uC_1\cdots}_{\uC_2\cdots}}} 
&+(-)^{CB} \, \dbrk{\SF{C}^{\uC_1\cdots}_{\uC_2\cdots}  ,\, \dbrk{\SF{A}^{\uA_1\cdots}_{\uA_2\cdots} ,\, \SF{B}^{\uB_1\cdots}_{\uB_2\cdots}}} \\
& +(-)^{BA} \, \dbrk{\SF{B}^{\uB_1\cdots}_{\uB_2\cdots}  ,\,\dbrk{ \SF{C}^{\uC_1\cdots}_{\uC_2\cdots} ,\, \SF{A}^{\uA_1\cdots}_{\uA_2\cdots}}}  = 0 \,.
\end{split}
\end{equation}	

It is useful to note that adding the $\Gamma^{\uA}{}_{\uB\uC}(\SF{X})$ terms in the definition of $\Dut_I$ makes $\Dut_I$ target-space metric compatible, viz.,
\begin{equation}
\begin{split}
\Dut_I g_{\uA\uB}(\SF{X}) 
&\equiv 
	\partial_I g_{\uA\uB}(\SF{X}) + \SF{\Ascr}_I\, \SF{\Kref}^J\partial_J g_{\uA\uB}(\SF{X}) \\
&\qquad 
	 -  g_{\uA\uC}(\SF{X})\, \Gamma^{\uC}{}_{\uB\uD}(\SF{X}) \,  \Dut_I \SF{X}^{\uD}
	 -  g_{\uC\uB}(\SF{X})\, \Gamma^{\uC}{}_{\uA\uD}(\SF{X})  \, \Dut_I \SF{X}^{\uD}\\
&= 
	\Dut_I \SF{X}^{\uD} \bigbr{ \nabla_{\uD} g_{\uA\uB}(\SF{X}) } = 0
\end{split}
\end{equation}
where we used target space metric-compatibility of $\nabla_{\uD}$ in the last step. In fact, on any target space tensor we observe that $\Dut_I$ acts via $(\Dut_I \SF{X}^{\uB}) \nabla_{\uB}$, where $\nabla$ is the conventional, metric compatible, covariant derivative on the target space. 
By similar reasoning we also find:
\begin{equation}
\SF{\Dref}_I g_{\uA\uB} (\SF{X}) = \Dref_I \SF{X}^{\uD} \prn{ \nabla_{\uD} g_{\uA\uB}(\SF{X}) } = 0 \,.
\end{equation}
%

\subsection{Component fields the WZ gauge}

We now turn to the component fields of $\SF{X}^{\uA}$. Working in super-static gauge has the advantage that the Grassmann coordinates have simple expansions given in \eqref{eq:superstatic}.  So one only needs to worry about the superspace expansion of $\SF{X}^\mu$.  They are covariantly defined via
\begin{equation}
\begin{split}
X^\mu &\equiv \SF{X}^\mu|\ ,\quad
 \xpsib^\mu \equiv \Dut_\theta \SF{X}^\mu  | \ ,\quad
 \xpsi^\mu \equiv \Dut_{\thetab } \SF{X}^\mu |\ ,\quad
\tilde{X}^\mu \equiv \Dut_\theta \Dut_{\thetab } \SF{X}^\mu|\ .
 \end{split}
\end{equation}
The superspace expansion of these identities reads 
\begin{equation} \label{eq:Xsuper}
\begin{split}
\SF{X}^\mu &= X^\mu + \thetab  \bigbr{\xpsi^\mu -  \GhT \Kref.\partial X^\mu} + \theta \bigbr{\xpsib^\mu -  \GhbT \Kref.\partial X^\mu} \\
&\quad + \thetab   \theta \Bigl\{\tilde{X}^\mu -\Gamma^\mu_{\nu\lambda} \xpsib^\nu \xpsi^\lambda
+ \GhT \Kref.\partial (\xpsib^\mu-  \GhbT \Kref.\partial X^\mu)-\GhbT \Kref.\partial (\xpsi^\mu- \GhT \Kref.\partial X^\mu)  \Bigr.\\
&\qquad \Bigl. \qquad
-\BT\ \Kref.\partial X^\mu - \GhbT \GhT \Kref.\partial (\Kref.\partial X^\mu) \ \Bigr\}
\end{split}
\end{equation}
We note that these expressions are written for simplicity after imposing the gauge condition \eqref{eq:betagauge} on the thermal super-vector 
$\SF{\Kref}^I$ and hence, $\Kref.\partial \equiv \Kref^a \partial_a$ here and below.

The super-gauge transformation on the position superfield  is given by
\begin{equation}
\begin{split}
\SF{X}^\mu &\mapsto \SF{X}^\mu+\lieD_{\SF{\Lambda}\Kref}\SF{X}^\mu + \frac{1}{2!} \lieD_{\SF{\Lambda}\Kref}\lieD_{\SF{\Lambda}\Kref}\SF{X}^\mu
  + \frac{1}{3!}  \lieD_{\SF{\Lambda}\Kref}\ \lieD_{\SF{\Lambda}\Kref}\lieD_{\SF{\Lambda}\Kref}\SF{X}^\mu
+ \ldots
\end{split}
\end{equation}
where the Lie-derivative acts via  $\lieD_{\SF{\Lambda}\Kref}\SF{X}^\mu \equiv \Kref^a \partial_a \SF{X}^\mu$. This series truncates  when $\SF{\Lambda}$ is a FP boost:
\begin{equation}
\begin{split}
\SF{X}^\mu &\mapsto \SF{X}^\mu+\lieD_{\SF{\Lambda}\Kref}\SF{X}^\mu + \frac{1}{2!} \lieD_{\SF{\Lambda}\Kref}\lieD_{\SF{\Lambda}\Kref}\SF{X}^\mu.
\end{split}
\end{equation}
This works out to
\begin{equation}
\begin{split}
 \SF{X}^\mu &\mapsto \SF{X}^\mu\bigbr{\begin{array}{c} \GhT\mapsto \GhT- \Lambda_\psi, \\
 \GhbT\mapsto \GhbT- \Lambda_{\psib},\\ \BT\mapsto \BT- \tilde{\Lambda}
 + \half (\Lambda_{\psib},\GhT)_\Kref- \half (\Lambda_{\psi},\GhbT)_\Kref \end{array} }
 \quad \text{with} \quad \prn{\begin{array}{c} X^\mu \\ \xpsi^\mu \\ \xpsib^\mu \\ \tilde{X}^\mu \end{array}}
  \quad \text{fixed.}
\end{split}
\end{equation}
Thus, FP boosts shift only the FP triplets as expected. In particular, in WZ gauge, we have
\begin{equation}
\begin{split}
(\SF{X}^\mu)_{WZ} &\equiv X^\mu + \thetab \ \xpsi^\mu + \theta\ \xpsib^\mu  +  \thetab \theta\ (\tilde{X}^\mu-\Gamma^\mu_{\nu\lambda} \xpsib^\nu \xpsi^\lambda )
\end{split}
\end{equation}
When we discuss the target space symmetries we will see that the Christoffel connection piece is induced in the top-component for reasons of spacetime covariance under the partial pullback; see the end of Appendix~\ref{sec:gaugesdiff}.

\subsection{Covariant derivatives and Bianchi identities}

The superfields $\Dut_I \SF{X}^\mu$ and its further derivatives transform as target space vector and our previous discussion of such fields can directly be  used to give their superspace expansion, gauge transformation etc. The only required data is the basic
quartet of components which make up the target space vector superfield under question. We  give below the building components for  $\Dut_I \SF{X}^\mu$:
\begin{subequations}
\begin{align}
&\Dut_a \SF{X}^\mu:\;  
	\prn{\begin{array}{c} 
	D_a X^\mu \\ 
	D_a \xpsi^\mu + (\lambda_a,X^\mu)_\Kref \\ 
	D_a{\xpsib}^\mu+ (\bar{\lambda}_a,X^\mu)_\Kref \\
 	D_a\tilde{X}^\mu + (\sAt{a},X^\mu)_\Kref + (\bar{\lambda}_a,\xpsi^\mu)_\Kref 
 		-(\lambda_a,\xpsib^\mu)_\Kref+ R^\mu{}_{\nu\alpha\sigma}\xpsib^\alpha (D_aX^\sigma)\xpsi^\nu
 	\end{array}} \ ,
 \\
& \Dut_{\thetab} \SF{X}^\mu:\;
 	\prn{\begin{array}{c} 
 		\xpsi^\mu \\  
 		(\phiT,X^\mu)_\Kref \\ 
 		\tilde{X}^\mu \\
		(\phiT,\xpsib^\mu)_\Kref -(\etaT,X^\mu)_\Kref
		\end{array}} \ ,
 \\
 &\Dut_\theta \SF{X}^\mu :\; 
 	\prn{\begin{array}{c} 
 		\xpsib^\mu \\  
 		(\phizT,X^\mu)_\Kref -\tilde{X}^\mu \\
 		(\phibT,X^\mu)_\Kref \\
 		 (\phizT,\xpsib^\mu)_\Kref+ R^\mu{}_{\nu\alpha\sigma}\xpsib^\alpha \xpsi^\sigma {\xpsib}^\nu
 			 -(\phibT,\xpsi^\mu)_\Kref+(\etabT,X^\mu)_\Kref
 	\end{array}} \ .
\end{align}
\end{subequations}

Further super-covariant derivatives of  target space vector superfields $\Dut_I \SF{X}^\mu$ give vector superfields. Using \eqref{eq:DcovVS}, we thus define the double covariant derivative on the position superfield via
\begin{equation}
\Dut_I \Dut_J \SF{X}^\mu \equiv \partial_I (\Dut_J \SF{X}^\mu)+ \dbrk{\SF{\Ascr}_I, \Dut_J \SF{X}^\mu}
+ \Gamma^\mu{}_{\nu\alpha}(\SF{X})\   (\partial_I \SF{X}^\alpha) (\Dut_J \SF{X}^\nu)\,,
\end{equation}
These derivatives can then be computed directly, and we have for example
\begin{subequations}
\begin{align}
\Dut_{\theta}\Dut_{\thetab } \SF{X}^\mu &: \,  
	\prn{\begin{array}{c}
		 \tilde{X}^\mu \\
		(\phizT,\xpsi^\mu)_\Kref+R^\mu{}_{\nu\alpha\sigma}\xpsib^\alpha\xpsi^\sigma \xpsi^\nu-(\phiT,\xpsib^\mu)_\Kref+(\etaT,X^\mu)_\Kref\\
		 (\phibT,\xpsi^\mu)_\Kref+\half R^\mu{}_{\nu\alpha\sigma}\xpsib^\alpha\xpsib^\sigma \xpsi^\nu \\
		(\phizT,\tilde{X}^\mu)_\Kref+R^\mu{}_{\nu\alpha\sigma}\xpsib^\alpha\xpsi^\sigma\tilde{X}^\nu -(\phibT,(\phiT,X^\mu)_\Kref)_\Kref
		-\half R^\mu{}_{\nu\alpha\sigma}\xpsib^\alpha\xpsib^\sigma (\phiT,X^\nu)_\Kref+(\etabT,\xpsi^\mu)_\Kref\\
		\qquad \quad 
			+\brk{(\xpsib.\nabla R^\mu{}_{\nu\alpha\sigma})\xpsib^\alpha \xpsi^\sigma
			-  R^\mu{}_{\nu\alpha\sigma}\xpsi^\alpha (\phibT,X^\sigma)_\Kref
 			+ R^\mu{}_{\nu\alpha\sigma}\xpsib^\alpha\prn{(\phizT,X^\sigma)_\Kref-\tilde{X}^\sigma} } \xpsi^\nu
 	\end{array}} \\
 \Dut_{\thetab}\Dut_{\theta } \SF{X}^\mu &:\,  
	\prn{\begin{array}{c} 
		(\phizT,X^\mu)_\Kref-\tilde{X}^\mu \\
 		(\phiT,\xpsib^\mu)_\Kref+\half R^\mu{}_{\nu\alpha\sigma}\xpsi^\alpha\xpsi^\sigma \xpsib^\nu \\
		(\phizT,\xpsib^\mu)_\Kref+R^\mu{}_{\nu\alpha\sigma}\xpsib^\alpha\xpsi^\sigma \xpsib^\nu-(\phibT,\xpsi^\mu)_\Kref+(\etabT,X^\mu)_\Kref\\
		(\phiT,(\phibT,X^\mu)_\Kref)_\Kref +\half R^\mu{}_{\nu\alpha\sigma}\xpsi^\alpha\xpsi^\sigma (\phibT,X^\nu)_\Kref
		 -(\etaT,\xpsib^\mu)_\Kref\qquad \qquad \qquad \\
		\hspace{.5cm}
			 -\brk{(\xpsi.\nabla R^\mu{}_{\nu\alpha\sigma})\xpsib^\alpha \xpsi^\sigma
			- R^\mu{}_{\nu\alpha\sigma} \xpsib^\alpha(\phiT,X^\sigma)_\Kref
			+R^\mu{}_{\nu\alpha\sigma}\xpsi^\alpha\tilde{X}^\sigma  }\xpsib^\nu
	\end{array}}
\end{align}
\end{subequations}

These double super-covariant derivatives obey the commutation relations
\begin{equation}
 \Dut_I \Dut_J \SF{X}^\mu -(-)^{IJ} \Dut_J \Dut_I \SF{X}^\mu = (1+ \delta_{IJ}) (\SF{\mathscr{F}}_{IJ}, \SF{X}^\mu )_\Kref 
\label{eq:doubleDXs0}
\end{equation}	
or more explicitly
\begin{equation}\label{eq:doubleDXs}
\begin{split}
\Dut_{\theta}^2 \SF{X}^\mu - (\SF{\mathscr{F}}_{\theta\theta},\SF{X}^\mu)_\Kref &= 0
= \Dut_{\thetab }^2 \SF{X}^\mu - (\SF{\mathscr{F}}_{\thetab \thetab },\SF{X}^\mu)_\Kref \\
\Dut_{\thetab } \Dut_{\theta} \SF{X}^\mu  + \Dut_{\theta} \Dut_{\thetab } \SF{X}^\mu  &= (\SF{\mathscr{F}}_{\theta\thetab },\SF{X}^\mu)_\Kref \\
\Dut_{\thetab } \Dut_a \SF{X}^\mu  - \Dut_a \Dut_{\thetab } \SF{X}^\mu  &= (\SF{\mathscr{F}}_{\thetab a},\SF{X}^\mu)_\Kref \\
\Dut_{\theta} \Dut_a \SF{X}^\mu  - \Dut_a \Dut_{\theta} \SF{X}^\mu  &= (\SF{\mathscr{F}}_{\theta a},\SF{X}^\mu)_\Kref \\
\end{split}
\end{equation}

Also the triple covariant derivative on the position superfield is defined via \eqref{eq:DcovVS}. This immediately yields the identity
\begin{equation}
\begin{split}
\Dut_I \Dut_J &\Dut_K \SF{X}^\mu -(-)^{IJ} \Dut_J \Dut_I \Dut_K \SF{X}^\mu \\
&= (1+ \delta_{IJ}) \dbrk{\SF{\mathscr{F}}_{IJ},\, \Dut_K \SF{X}^\mu}
+ R^\mu{}_{\nu\alpha\beta}(\SF{X})\   (\Dut_I \SF{X}^\alpha) (\Dut_J \SF{X}^\beta) (\Dut_K \SF{X}^\nu).
\end{split}
\end{equation}
Explicitly, we find, for example,\footnote{ When dealing with curvatures, we often use the Bianchi identity in the form
\begin{equation}
\frac{1}{2} R^\mu{}_{\nu\alpha\beta} \xpsi^\alpha \xpsi^\beta \xpsib^\nu = - R^\mu{}_{\nu\alpha\beta} \xpsi^\alpha \xpsib^\beta \xpsi^\nu
\end{equation}
and similarly for other Grassmann-odd contractions of the same schematic form.
}
\begin{equation}\label{eq:tripleXs}
\begin{split}
  [\Dut_a,\Dut_b] \Dut_K \SF{X}^\mu &= \dbrk{\SF{\mathscr{F}}_{ab},\, \Dut_K \SF{X}^\mu}+ R^{\mu}{}_{\nu\alpha\beta}(\SF{X})\, (\Dut_a \SF{X}^\alpha) (\Dut_{b}\SF{X}^\beta) (\Dut_{K}\SF{X}^\nu)    \,,\\
  [\Dut_a,\Dut_\theta] \Dut_K \SF{X}^\mu &=\dbrk{\SF{\mathscr{F}}_{a\theta},\, \Dut_K \SF{X}^\mu}+ R^{\mu}{}_{\nu\alpha\beta}(\SF{X})\, (\Dut_a \SF{X}^\alpha) (\Dut_{{\theta}}\SF{X}^\beta) (\Dut_{K}\SF{X}^\nu)    \,,\\
  [\Dut_a,\Dut_{\bar \theta}] \Dut_K \SF{X}^\mu &= \dbrk{\SF{\mathscr{F}}_{a\thetab },\, \Dut_K \SF{X}^\mu}+  R^{\mu}{}_{\nu\alpha\beta}(\SF{X})\, (\Dut_a \SF{X}^\alpha) (\Dut_{\thetab }\SF{X}^\beta) (\Dut_{K}\SF{X}^\nu) \,,\\
  \{\Dut_\theta,\Dut_{\thetab }\} \Dut_K \SF{X}^\mu &= \dbrk{\SF{\mathscr{F}}_{\theta\thetab } ,\, \Dut_K \SF{X}^\mu }
        +  R^{\mu}{}_{\nu\alpha\beta}(\SF{X})\, (\Dut_{{\theta}}\SF{X}^\alpha) (\Dut_{{\bar\theta}}\SF{X}^\beta) (\Dut_{K}\SF{X}^\nu) \,,\\
  \Dut_\theta^2\Dut_{K} \SF{X}^\mu &= \dbrk{\SF{\mathscr{F}}_{\theta\theta} ,\,\Dut_{K} \SF{X}^\mu } + \frac{1}{2} \, R^{\mu}{}_{\nu\alpha\beta}(\SF{X})\, (\Dut_{{\theta}}\SF{X}^\alpha) (\Dut_{{\theta}}\SF{X}^\beta) (\Dut_{K}\SF{X}^\nu) \,,\\
  \Dut_{\bar \theta}^2\Dut_K \SF{X}^\mu &= \dbrk{\SF{\mathscr{F}}_{\bar\theta\bar\theta}  ,\, \Dut_K \SF{X}^\mu } + \frac{1}{2} \, R^{\mu}{}_{\nu\alpha\beta}(\SF{X})\, (\Dut_{\thetab }\SF{X}^\alpha) (\Dut_{\thetab }\SF{X}^\beta) (\Dut_{K}\SF{X}^\nu) \,,\\
 \Dut_{\bar\theta} \Dut_\theta \Dut_{\thetab } \SF{X}^\mu &= \dbrk{\SF{\mathscr{F}}_{\theta\bar\theta} ,\,\Dut_{\bar \theta} \SF{X}^\mu } - \dbrk{\SF{\mathscr{F}}_{\thetab \thetab } ,\,\Dut_{\theta} \SF{X}^\mu }  - \Dut_\theta \SF{\mathscr{F}}_{\bar\theta\bar\theta} \, \Kref.\partial \SF{X}^\mu\\
    &\quad  + R^{\mu}{}_{\nu\alpha\beta}(\SF{X})\, (\Dut_{{\theta}}\SF{X}^\alpha) (\Dut_{{\bar\theta}}\SF{X}^\beta) (\Dut_{{\bar\theta}}\SF{X}^\nu) \,,
\end{split}
\end{equation}

Let us now turn to the case of four covariant derivatives acting on $\SF{X}^\mu$. The basic commutator identity in this case is
\begin{equation}
\begin{split}
\Dut_I \Dut_J &\Dut_K \Dut_L \SF{X}^\mu -(-)^{IJ} \Dut_J \Dut_I \Dut_K \Dut_L \SF{X}^\mu \\
&= (1+ \delta_{IJ}) \dbrk{\SF{\mathscr{F}}_{IJ}, \Dut_K \Dut_L \SF{X}^\mu }
+ R^\mu{}_{\nu\alpha\beta}(\SF{X})\   (\Dut_I \SF{X}^\alpha) (\Dut_J \SF{X}^\beta) (\Dut_K \Dut_L \SF{X}^\nu).
\end{split}
\end{equation}
In particular,
\begin{equation}\label{eq:quadrupleXs}
\begin{split}
 \Dut_\theta^2 (\Dut_\theta \Dut_{\thetab } \SF{X}^\mu) &= \dbrk{\SF{\mathscr{F}}_{\theta\theta} ,\, \Dut_\theta \Dut_{\bar\theta} \SF{X}^\mu }
    + \half R^\mu{}_{\nu\alpha\beta}(\SF{X}) (\Dut_{{\theta}} \SF{X}^\alpha) (\Dut_{{\theta}}\SF{X}^\beta)(\Dut_\theta \Dut_{\thetab } \SF{X}^\nu)\,  \,,\\
 \Dut_{\thetab }^2 (\Dut_\theta \Dut_{\thetab } \SF{X}^\mu) &= \dbrk{\SF{\mathscr{F}}_{\bar\theta\thetab } ,\, \Dut_\theta \Dut_{\thetab } \SF{X}^\mu }
    + \half R^\mu{}_{\nu\alpha\beta}(\SF{X}) \, (\Dut_{\thetab } \SF{X}^\alpha) (\Dut_{\thetab }\SF{X}^\beta)(\Dut_\theta \Dut_{\thetab } \SF{X}^\nu) \,,\\
  \{\Dut_\theta,\Dut_{\thetab }\} (\Dut_\theta \Dut_{\thetab } \SF{X}^\mu) &= \dbrk{ \SF{\mathscr{F}}_{\theta\thetab } ,\, \Dut_\theta \Dut_{\thetab } \SF{X}^\mu} + R^\mu{}_{\nu\alpha\beta}(\SF{X}) \, (\Dut_{\thetab } \SF{X}^\alpha) (\Dut_{{\theta}} \SF{X}^\beta)(\Dut_\theta \Dut_{\thetab } \SF{X}^\nu)\,.
\end{split}
\end{equation}
These identities can be used to simplify various expressions in the effective action as we compute higher derivative terms.
 
\section{Superspace representations: metric and curvature}
\label{sec:MetCurvMult}

The target space metric and curvature tensor can be viewed as superfields by means of expanding their dependence on $\SF{X}^\mu$. 
We pull them back to the worldvolume to obtain corresponding data which enters explicitly into the hydrodynamic effective actions.  We give the corresponding superspace expansions for the pullback metric superfield and curvature, which were used in the explicit analysis of 
\S\ref{sec:mmo}.

\subsection{Metric superfield}
The worldvolume metric superfield, we recall, is covariantly pulled back as in \eqref{eq:gref}. We will give the components in the super-static gauge 
where $\Theta(z) = \theta$ and $\bar{\Theta}(z) = \thb$. Note also the symmetry properties are those of a super-tensor:
\begin{equation} 
\SF{\gref}_{IJ} = (-)^{IJ} \, \SF{\gref}_{JI} \;\; \Longrightarrow \;\; \SF{\gref}_{ \theta \thetab} = - \SF{\gref}_{ \thetab\theta} 
\label{eq:grefsym}
\end{equation}	
Working out the components using the superfield expansion of $\Dut_I \SF{X}^\mu$ we end up with:
\begin{subequations}
\begin{align}
&\SF{\gref}_{ab} :  
 	\prn{
 	\begin{array}{l} 
 	\scriptstyle{
 	 	g_{\mu\nu} (D_a X^\mu)(D_a X^\nu)  }\\
 	 \scriptstyle{
		g_{\mu\nu} (D_a X^\mu) (D_b \xpsi^\nu + (\lambda_b,X^\nu)_\Kref) +g_{\mu\nu} (D_b X^\mu) (D_a \xpsi^\nu + (\lambda_a,X^\nu)_\Kref) }\\
 	 \scriptstyle{
		g_{\mu\nu} (D_a X^\mu) (D_b{\xpsib}^\nu+ (\bar{\lambda}_b,X^\nu)_\Kref) + g_{\mu\nu} (D_b X^\mu) (D_a{\xpsib}^\nu
 		+ (\bar{\lambda}_a,X^\nu)_\Kref) }\\
 	 \scriptstyle{
		\source{h_{ab}} +\; g_{\mu\nu} (D_a X^\mu) \prn{D_b\tilde{X}^\nu + (\sAt{b},X^\nu)_\Kref + (\bar{\lambda}_b,\xpsi^\nu)_\Kref 
 		-(\lambda_b,\xpsib^\nu)_\Kref+ R^\nu{}_{\kappa\alpha\sigma}\xpsib^\alpha (D_bX^\sigma)\xpsi^\kappa} }\\
 	\qquad 
 	 \scriptstyle{
  		+ \;g_{\mu\nu} (D_b X^\mu) \prn{D_a\tilde{X}^\nu + (\sAt{a},X^\nu)_\Kref + (\bar{\lambda}_a,\xpsi^\nu)_\Kref -(\lambda_a,\xpsib^\nu)_\Kref+ R^\nu{}_{\kappa\alpha\sigma}\xpsib^\alpha (D_aX^\sigma)\xpsi^\kappa}} \\
 	\qquad  
  	 \scriptstyle{
		+ g_{\mu\nu} (D_a{\xpsib}^\mu+ (\bar{\lambda}_a,X^\mu)_\Kref)(D_b \xpsi^\nu + (\lambda_b,X^\nu)_\Kref) 
 		+ g_{\mu\nu} (D_b{\xpsib}^\mu+ (\bar{\lambda}_b,X^\mu)_\Kref)(D_a \xpsi^\nu + (\lambda_a,X^\nu)_\Kref) }
	\end{array}  }
\\
& \SF{\gref}_{\thetab a}:  
	\prn{\begin{array}{l} 
 	 \scriptstyle{
		g_{\mu\nu} \xpsi^\mu (D_a X^\nu)  }\\
 	 \scriptstyle{
		-g_{\mu\nu} \xpsi^\mu (D_a \xpsi^\nu + (\lambda_a,X^\nu)_\Kref) +g_{\mu\nu} (\phiT,X^\mu)_\Kref D_a X^\nu  }\\
 	 \scriptstyle{
		 -g_{\mu\nu} \xpsi^\mu (D_a{\xpsib}^\nu+ (\bar{\lambda}_a,X^\nu)_\Kref) +g_{\mu\nu}\tilde{X}^\mu D_a X^\nu }\\
  	 \scriptstyle{
		\source{h_{\thb a}} +\;g_{\mu\nu} \xpsi^\mu \prn{D_a\tilde{X}^\nu + (\sAt{a},X^\nu)_\Kref + (\bar{\lambda}_a,\xpsi^\nu)_\Kref 
		-(\lambda_a,\xpsib^\nu)_\Kref+ R^\nu{}_{\kappa\alpha\sigma}\xpsib^\alpha (D_aX^\sigma)\xpsi^\kappa}  }\\
	\qquad 
 	 \scriptstyle{
		+ g_{\mu\nu} ((\phiT,\xpsib^\mu)_\Kref-(\etaT,X^\mu)_\Kref) D_a X^\nu 
		-g_{\mu\nu}\tilde{X}^\mu (D_a \xpsi^\nu + (\lambda_a,X^\nu)_\Kref) }\\
	\qquad 
 	 \scriptstyle{
		+ g_{\mu\nu} (\phiT,X^\mu)_\Kref  (D_a{\xpsib}^\nu+ (\bar{\lambda}_a,X^\nu)_\Kref)}
 	\end{array}} \\
& \SF{\gref}_{\theta a} : 
	\prn{\begin{array}{l} 
	\scriptstyle{
		g_{\mu\nu} \xpsib^\mu (D_a X^\nu)  }\\
 	 \scriptstyle{
		-g_{\mu\nu} \xpsib^\mu (D_a \xpsi^\nu + (\lambda_a,X^\nu)_\Kref) +g_{\mu\nu} \prn{(\phizT,X^\mu)_\Kref-\tilde{X}^\mu} D_a X^\nu  }\\
 	 \scriptstyle{
		- g_{\mu\nu} \xpsib^\mu (D_a{\xpsib}^\nu+ (\bar{\lambda}_a,X^\nu)_\Kref) +g_{\mu\nu}(\phibT,X^\mu)_\Kref  D_a X^\nu }\\
 	 \scriptstyle{
		 \source{h_{\theta a}} +\; g_{\mu\nu} \xpsib^\mu \prn{D_a\tilde{X}^\nu + (\sAt{a},X^\nu)_\Kref 
		 + (\bar{\lambda}_a,\xpsi^\nu)_\Kref -(\lambda_a,\xpsib^\nu)_\Kref
		 + R^\nu{}_{\kappa\alpha\sigma}\xpsib^\alpha (D_aX^\sigma)\xpsi^\kappa}  }\\
	\qquad 
	\scriptstyle{
		+ g_{\mu\nu} \prn{ (\phizT,\xpsib^\mu)_\Kref+ R^\mu{}_{\kappa\alpha\sigma}\xpsib^\alpha \xpsi^\sigma {\xpsib}^\kappa
		-(\phibT,\xpsi^\mu)_\Kref+(\etabT,X^\mu)_\Kref} D_a X^\nu }\\
  	\scriptstyle{ 
		-g_{\mu\nu}(\phibT,X^\mu)_\Kref  (D_a \xpsi^\nu + (\lambda_a,X^\nu)_\Kref)\qquad \qquad\qquad }\\
	\qquad 
 	\scriptstyle{
		+ g_{\mu\nu} \prn{(\phizT,X^\mu)_\Kref-\tilde{X}^\mu}  (D_a{\xpsib}^\nu+ (\bar{\lambda}_a,X^\nu)_\Kref)}
 \end{array}}
  \end{align}
 \begin{align}
& \SF{\gref}_{\theta \thetab} :  
	\prn{\begin{array}{l} 
 	 \scriptstyle{
		i  \;+\;  g_{\mu\nu} \xpsib^\mu \xpsi^\nu  }\\
 	 \scriptstyle{
		-g_{\mu\nu} \xpsib^\mu (\phiT,X^\nu)_\Kref +g_{\mu\nu} \prn{(\phizT,X^\mu)_\Kref-\tilde{X}^\mu} \xpsi^\nu  }\\
 	 \scriptstyle{
		- g_{\mu\nu} \xpsib^\mu \tilde{X}^\nu+g_{\mu\nu}(\phibT,X^\mu)_\Kref  \xpsi^\nu }\\
 	 \scriptstyle{
		\source{h_{\theta\thb}} +\; g_{\mu\nu} \xpsib^\mu ((\phiT,\xpsib^\nu)_\Kref-(\etaT,X^\nu)_\Kref) }\\
 	\qquad
 	 \scriptstyle{
		 + g_{\mu\nu} \prn{ (\phizT,\xpsib^\mu)_\Kref
		 + R^\mu{}_{\kappa\alpha\sigma}\xpsib^\alpha \xpsi^\sigma {\xpsib}^\kappa-(\phibT,\xpsi^\mu)_\Kref+(\etabT,X^\mu)_\Kref} \xpsi^\nu }\\
 	 \qquad
 	 \scriptstyle{
		 -g_{\mu\nu}(\phibT,X^\mu)_\Kref  (\phiT,X^\nu)_\Kref  + g_{\mu\nu} \prn{(\phizT,X^\mu)_\Kref-\tilde{X}^\mu} \tilde{X}^\nu}
 \end{array}}
\end{align}
\end{subequations}
In the above expressions we have indicated the source deformations explicitly, since will end up varying with respect to these to obtain the physical components of the energy-momentum tensor. Ignoring all the terms with non-zero ghost number, we end up with the expressions used in \S\ref{sec:mmo}.

\subsection{Curvature superfield}
\label{sec:curvature}

We define the partially pulled back curvature superfield as:
\begin{equation}
\begin{split}
 (1+ \delta_{IJ}) R^\mu{}_{\nu \text{\tiny IJ}}(\SF{X})
\equiv R^\mu{}_{\nu\alpha\beta}(\SF{X})\   (\Dut_I \SF{X}^\alpha) (\Dut_J \SF{X}^\beta).
\end{split}
\end{equation}

It admits a superfield expansion which can be inferred from the component maps (we only write out the superspace components):
\begin{subequations}
\begin{align}
& R^\mu{}_{\nu\thetab\thetab}:\ 
	\prn{\begin{array}{l}
 	 \scriptstyle{
		   \half R^\mu{}_{\nu\alpha\sigma}\xpsi^\alpha \xpsi^\sigma  }\\
 	 \scriptstyle{
		 - R^\mu{}_{\nu\alpha\sigma} \xpsi^\alpha(\phiT,X^\sigma)_\Kref }\\
 	 \scriptstyle{
		 -\brk{(\xpsi.\nabla R^\mu{}_{\nu\alpha\sigma})\xpsib^\alpha \xpsi^\sigma
		+R^\mu{}_{\nu\alpha\sigma}\xpsi^\alpha\tilde{X}^\sigma } }\\
 	 \scriptstyle{
		 -(\xpsib.\nabla R^\mu{}_{\nu\lambda\sigma})\ \xpsi^\lambda (\phiT,X^\sigma)_\Kref
		 -R^\mu{}_{\nu\lambda\sigma}\ \brk{\tilde{X}^\lambda (\phiT,X^\sigma)_\Kref+\xpsi^\lambda (\etaT,X^\sigma)_\Kref
		 -(\phiT,\xpsib^\lambda )_\Kref \xpsi^\sigma} }
\end{array}}
 \\
& R^\mu{}_{\nu\theta\theta}:\
	 \prn{\begin{array}{l}	  
	 \scriptstyle{
   		\half R^\mu{}_{\nu\alpha\sigma}\xpsib^\alpha \xpsib^\sigma  }\\
	\scriptstyle{
		 -\brk{(\xpsib.\nabla R^\mu{}_{\nu\alpha\sigma})\xpsib^\alpha \xpsi^\sigma
		+R^\mu{}_{\nu\alpha\sigma}\xpsib^\alpha\prn{(\phizT,X^\sigma)_\Kref-\tilde{X}^\sigma} }}\\ 
 	 \scriptstyle{
		- R^\mu{}_{\nu\alpha\sigma} \xpsib^\alpha(\phibT,X^\sigma)_\Kref }\\
 	 \scriptstyle{
		 (\phizT,\half R^\mu{}_{\nu\alpha\sigma}\xpsib^\alpha \xpsib^\sigma )_\Kref
		- (\phibT,R^\mu{}_{\nu\alpha\sigma}\xpsib^\alpha \xpsi^\sigma )_\Kref
		+\half R^\mu{}_{\kappa\alpha\sigma}  R^\kappa{}_{\nu\gamma\xi}
		\prn{\xpsib^\alpha \xpsi^\sigma  \xpsib^\gamma \xpsib^\xi-\xpsib^\alpha \xpsib^\sigma\xpsib^\gamma \xpsi^\xi  } }\\
 	\qquad
 	 \scriptstyle{
		 -(\xpsib.\nabla R^\mu{}_{\nu\lambda\sigma})\ \xpsi^\lambda ( \phibT, X^\sigma)_\Kref  
		  -R^\mu{}_{\nu\lambda\sigma}\ \brk{\tilde{X}^\lambda ( \phibT,X^\sigma)_\Kref-\xpsib^\lambda (\etabT,X^\sigma)_\Kref
		 -( \phibT,\xpsib^\lambda )_\Kref \xpsi^\sigma}}
 \end{array}}
\\ 
& R^\mu{}_{\nu\theta\thetab}:\ 
	\prn{\begin{array}{l}   
 	 \scriptstyle{
		R^\mu{}_{\nu\alpha\sigma}\xpsib^\alpha \xpsi^\sigma  }\\
 	 \scriptstyle{
		(\xpsi.\nabla R^\mu{}_{\nu\alpha\sigma})\xpsib^\alpha \xpsi^\sigma
		 - R^\mu{}_{\nu\alpha\sigma} \xpsib^\alpha(\phiT,X^\sigma)_\Kref 
		 - R^\mu{}_{\nu\alpha\sigma}\xpsi^\alpha\prn{(\phizT,X^\sigma)_\Kref-\tilde{X}^\sigma}  }\\
 	 \scriptstyle{
		(\xpsib.\nabla R^\mu{}_{\nu\alpha\sigma})\xpsib^\alpha \xpsi^\sigma
		 - R^\mu{}_{\nu\alpha\sigma} \xpsi^\alpha(\phibT,X^\sigma)_\Kref
		 -R^\mu{}_{\nu\alpha\sigma}\xpsib^\alpha\tilde{X}^\sigma  }\\
 	 \scriptstyle{
		 (\phiT, \half R^\mu{}_{\nu\alpha\sigma}\xpsib^\alpha \xpsib^\sigma)_\Kref
		 -(\phibT, \half  R^\mu{}_{\nu\alpha\sigma}\xpsi^\alpha \xpsi^\sigma)_\Kref  
		 +\quarter R^\mu{}_{\kappa\alpha\sigma}  R^\kappa{}_{\nu\gamma\xi}
 		\prn{\xpsi^\alpha \xpsi^\sigma \xpsib^\gamma \xpsib^\xi  -\xpsib^\alpha \xpsib^\sigma \xpsi^\gamma \xpsi^\xi  }\qquad \qquad\qquad  }\\
 	\qquad
 	 \scriptstyle{
		 -(\xpsib.\nabla R^\mu{}_{\nu\lambda\sigma})\ \xpsi^\lambda (\phizT,X^\sigma)_\Kref 
		 -R^\mu{}_{\nu\lambda\sigma}\ \brk{\tilde{X}^\lambda (\phizT,X^\sigma)_\Kref+\xpsib^\lambda (\etaT,X^\sigma)_\Kref 
		 -\xpsi^\lambda (\etabT,X^\sigma)_\Kref
		 -(\phizT,\xpsib^\lambda )_\Kref \xpsi^\sigma}}
\end{array}}
\end{align}
\end{subequations}

\section{Gauge fixing and super-diffeomorphisms}
\label{sec:gaugesdiff}

We now turn to the worldvolume and target space symmetries which we have used in our construction. We will begin with a discussion of the worldvolume and then examine the target space issues. In both cases we will explain the rationale behind some of the choices we made in the main text, and how they serve to determine the structure of our hydrodynamic effective actions.

\subsection{Worldvolume symmetries}
\label{sec:wvsym}

We would naively anticipate that the theory we write down should have worldvolume superdiffeomorphisms as a symmetry. This would imply  that the worldvolume coordinates $z^I=(\sigma^a,\theta,\thb)$ transform as
$z^I  \mapsto z^I + \SF{\mathfrak{F}}^I(z^J)$ for some arbitrary superfield $\SF{\mathfrak{F}}$. We however find that this is too constraining, and relax the symmetries slightly to allow only worldvolume reparameterizations of the form:
\begin{equation}
z^I  \mapsto z^I + f^I(\sigma)\,.
\label{eq:wvsdiff}
\end{equation}	
This turns out to be sufficient to allow for a clean representation of the $\mathcal{N}_\smallT = 2$ thermal equivariance structures. While it would be interesting to explore why the full set of super-diffeomorphisms need to be truncated thus, we will not do so here. We will now see, however, that our gauge choices almost enforce this constraint.

As explained in the text we are going to use some of this reparameterization freedom to gauge fix the thermal super-vector $\SF{\Kref}^I$ as in \eqref{eq:betagauge} which we reproduce here for convenience:
\begin{equation}
 \SF{\Kref}^\theta=\SF{\Kref}^{\thetab} = 0=\partial_\theta \SF{\Kref}^a=  \partial_{\thetab} \SF{\Kref}^a\,,
\label{eq:betagauge2}
\end{equation}	
that is, $\SF{\Kref}^I = (\Kref^a,0,0)$.
We therefore can only allow $f^I(\sigma)$ that leave this choice invariant. Since we have a distinguished timelike vector in ordinary space, we expect the full set of usual diffeomorphisms to reduce to those that preserve the foliation induced by the vector field, along with additional constraints in superspace. 

Let us see this explicitly. Note that under \eqref{eq:wvsdiff} the thermal super-vector transforms via a Lie drag which reads
\begin{equation}
\SF{\Kref}^I \mapsto \SF{\Kref}^I + (-)^J \left(  f^J \partial_J \SF{\Kref}^I - \SF{\Kref}^J \partial_J \, f^I \right)\,.
\label{eq:}
\end{equation}	
The first two equations of \eqref{eq:betagauge2} which refer to the superspace components of $\SF{\Kref}^I$ imply that 
\begin{equation}
\Kref^a \partial_a f^\theta  = \Kref^a \partial_a f^\thb = 0 \;\; \Longrightarrow \;\; 
f^\theta= f^\theta(\sigma^\perp) \,,\; f^\thb= f^\thb(\sigma^\perp)
\label{eq:fconst1}
\end{equation}	
where $\sigma^\perp$ coordinatize the space perpendicular to the thermal vector $\Kref^a$. In fact, had we further gauge fixed 
$\Kref^a = \delta^a_0$, then we would conclude that $\sigma^\perp = \sigma^i$ and \eqref{eq:fconst1} would imply that
$f^\theta=f^\theta(\sigma^i) \,, f^\thb=f^\thb(\sigma^i)$. 

We now turn to the second set of constraints in \eqref{eq:betagauge2}. These imply
\begin{equation}
  \partial_\theta( \lieD_\Kref f^a) = \partial_\thb(\lieD_\Kref  f^a) = 0 \,.
\end{equation}
In fact, demanding invariance of the full superfield $\SF{\Kref}^a$ gives the slightly stronger constraint
\begin{equation}
   \lieD_{\SF{\Kref}} \SF{f}^a = \lieD_{\SF{\Kref}}  \SF{f}^a = 0 \,.
\end{equation}
This is enforced since we want to have this background vector unchanged under coordinate transformations.
All told we simply require that the super-vector field $\SF{f}^a(\sigma)$ commute with $\SF{\Kref}^a$, i.e., $[\SF{\Kref}^a,\SF{f}^b]=0$.

Let us now use this information to see what the covariant data in the worldvolume connection is. As noted in the text the connection is a-priori a new piece of data owing to the fact that the $\UT$ covariant pullback of the target spacetime metric does not uniquely determine a worldvolume connection. We should thus determine which components transform covariantly so that we may employ them in the construction of our actions. A general connection on the worldvolume transforms under \eqref{eq:wvsdiff} as
\begin{equation}
\Cwv^I{}_{JK} \mapsto \Cwv^I{}_{JK} +  (-)^{I(J+K)} \partial_J \partial_K f^I +\text{covariant pieces}\,.
\label{eq:wvCtr}
\end{equation}	
Given our restrictions on the super-diffeomorphism function $f^I(\sigma)$ we learn that the following components transform covariantly: 
\begin{equation}
\Cwv^i_{0J} \,,\Cwv^\theta_{0J} \,, \Cwv^\thb_{0J}\,, \;  \Cwv^I_{\ \theta J} \,, \Cwv^I_{\ \thb J} \,.
\label{eq:wvCcov}
\end{equation}	
We will have use for the fact that the last two terms identified above are covariant in our construction. In fact, 
in the bulk of the discussion we use this to allow the introduction of the fields $ \gpsib_{IJ}$ and $\gpsi_{IJ} $ whose covariance follows from the simple identities (see Eq.~\eqref{eq:DwvDutrel} below):
\begin{equation}
\begin{split}
 \gpsib_{IJ} &\equiv  \Dut_\theta  \SF{\gref}_{IJ} = \Dwv_\theta \SF{\gref}_{IJ} + (-)^{IK} \, \Cwv^K_{\ \theta I} \SF{\gref}_{KJ} +(-)^{K(I+J)+IJ} \, \Cwv^K_{\ \theta J} \SF{\gref}_{IK}  \\
\gpsi_{IJ} &\equiv  \Dut_\thb  \SF{\gref}_{IJ} =  \Dwv_\thb \SF{\gref}_{IJ} + (-)^{IK} \, \Cwv^K_{\ \thb I} \SF{\gref}_{KJ} +(-)^{K(I+J)+IJ} \, \Cwv^K_{\ \thb J} \SF{\gref}_{IK} 
\end{split}
\label{}
\end{equation}

We note that integration by parts of the Grassmann odd derivatives in the definition of $\{\gpsib_{IJ}, \gpsi_{IJ}\}$ is allowed.
In particular, we have $\Dut_\theta \SF{\mathcal{S}} = \Dwv_\theta \SF{\mathcal{S}}$ for any scalar superfield $\SF{S}$, such that the boundary term occurring when integrating by parts is zero. For example, one has:
{\small
 \begin{equation}
 \int d^d \sigma d\theta d\thb \, \frac{\sqrt{-\SF{\gref}}}{\zsf}\,  \gpsi_{IJ}\SF{\mathcal{T}}^{JI} = \underbrace{\int d^d \sigma d\theta d\thb  \,  \frac{\sqrt{-\SF{\gref}}}{\zsf}\,  \Dwv_\thb \big(  \SF{\gref}_{IJ}  \SF{\mathcal{T}}^{JI} \big)}_{=0}
 - (-)^{I+J}\int d^d \sigma d\theta d\thb \,  \frac{\sqrt{-\SF{\gref}}}{\zsf}\,\SF{\gref}_{IJ}  \, \Dut_\thb \big( \SF{\mathcal{T}}^{JI} \big)
 \end{equation}
 }\normalsize
  for any tensor superfield $\SF{\mathcal{T}}^{IJ}$.

\subsection{Target space symmetries}
\label{sec:tarsym}

Let us now turn to the target space symmetries. The target space super-coordinates $\SF{X}^{\uA}=(\SF{X}^\mu,\Theta,\bar{\Theta})$ transform under spacetime super-diffeomorphisms along with the transformation induced by the worldvolume super-diffeomorphisms \eqref{eq:wvsdiff}. One has
\begin{equation}
\SF{X}^{\uA} \mapsto \SF{X}^{\uA} \; + \;  \widehat{H}^{\uA} \,, \qquad 
\widehat{H}^{\uA} = \SF{H}^{\uA} \; + \; (-)^I \, f^I\partial_I \SF{X}^{\uA}
\label{eq:tarsdiff}
\end{equation}	
where the index $\uA \in \{\mu, \Theta, {\bar \Theta}\}$. Under this transformation the target space super-metric transforms as usual via
\begin{equation}
g_{\uA \uB} \mapsto g_{\uA\uB} \;+\;  (-)^{\uB\uC} g_{\uA \uC} \partial_{\uB} \widehat{H}^{\uC} \;+\; (-)^{\uA(\uB+\uC)}  g_{\uB \uC} \partial_{\uA} \widehat{H}^{\uC} 
\;-\; g_{\uA \uB} \partial_{\uC} \widehat{H}^{\uC}\,.
\label{eq:gtarsdiff}
\end{equation}	

As discussed in the text we wish to use this freedom to impose the super-static gauge where we pick:
\begin{equation}
\SF{X}^{\Theta }(z^I) \equiv \SF{\Theta}(z^I) = \theta \,, \quad 
 \SF{X}^{\bar \Theta}(z^I) \equiv  \SF{{\bar \Theta}}(z^I) = \thb \,.
\label{eq:sstatic1}
\end{equation}	
This can always be done by suitably constraining $\SF{H}^{\uA}$ in terms of the worldvolume diffeomorphism parameter $f^I$, and poses no real constraint for our considerations for it demands
\begin{equation}
\widehat{H}^{\Theta} = \widehat{H}^{\bar \Theta} =0 \,.
\label{eq:}
\end{equation}	

Our gauge condition \eqref{eq:gsptgauge1} only constrains $\SF{H}^\mu $ and we find:
\begin{equation}
\partial_\theta \left(\SF{H}^{\mu} + f^I \partial_I \SF{X}^\mu\right) 
= \partial_{\thb} \left(\SF{H}^{\mu} + f^I \partial_I \SF{X}^\mu\right) 
= 0 \,.
\label{eq:}
\end{equation}	
These arise from the first set of gauge fixing conditions for the super-metric in \eqref{eq:gsptgauge1}. The second choice $g_{\Theta {\bar \Theta}} =i$ in addition to the above implies
\begin{equation}
 \partial_\mu \widehat{H}^\mu = 0 \,.
\end{equation}
The fact that the ordinary space metric only has a bottom component is also solved by demanding that $\widehat{H}^\mu$ be independent of $\SF{\Theta},\SF{\bar{\Theta}}$. All told we find that 
\begin{equation}
\boxed{
\widehat{H}^{\Theta} = \widehat{H}^{\bar \Theta} =0 \,,\quad \widehat{H}^{\mu}=\widehat{H}^{\mu} (\SF{X}^\nu) \,,\quad  \partial_\mu \widehat{H}^\mu = 0  \,, \qquad 
\widehat{H}^{\uA} = \SF{H}^{\uA} +f^I \partial_I \SF{X}^{\uA} \,.	
}
\label{eq:dthH}
\end{equation}	

It is instructive to ask how the top component of $\SF{X}^\mu$ transforms under \eqref{eq:tarsdiff}. We have
\begin{equation}
\begin{split}
\tx^\mu &= 
	\partial_\theta \partial_{\thb} \SF{X}^\mu | \\
\tx^\mu  & 
	\mapsto \tx^\mu +  \partial_\theta \partial_{\thb} \widehat{H}^\mu | \\ 
& = 
	\tx^\mu +  
	\partial_\rho \partial_\nu \widehat{H}^\mu \, \partial_\theta \SF{X}^\rho \, \partial_{\thb}\SF{X}^\nu |
	+ \partial_\nu \widehat{H}^\mu \partial_\theta\partial_{\thb} \SF{X}^\nu | \\
& = 
	\tx^\mu + \partial_\nu \widehat{H}^\mu \, \tx^\nu 
	+ \partial_\rho \partial_\nu \widehat{H}^\mu\,  X_{\psib}^\rho\, X_\psi^\nu	
\end{split}
\end{equation}
Now we note that the ghost bilinear term has a non-covariant piece analogous to the one in the transformation of the connection on the target spacetime (compare with \eqref{eq:wvCtr}). While $\widehat{H}^\mu$ has no dependence on the target space Grassmann coordinates, worldvolume Grassmann dependence is  induced through $\SF{X}^\mu$.
All this then implies is that the combination appearing in the top component of  \eqref{eq:xsf0}, viz., 
\begin{equation}
\tx^\mu  - \Gamma_{\rho\sigma}^\mu X_{\psib}^\rho \, X_\psi^\sigma  \,,
\end{equation}	
is covariant under \eqref{eq:tarsdiff}.  One can similarly check that $\tilde{\gref}_{ab}$ transforms covariantly as a worldvolume tensor.

\section{Fixing the worldvolume connection}
\label{sec:wvconnect}

We start with a set of axioms for the $\UT$ covariant worldvolume theory and ask how does one fix the connection  $\Cwv^I_{\ JK}$ on the worldvolume. Our aim is to define a worldvolume covariant derivative $\Dwv_I$ which agrees with $\Dut_I$ when acting on scalar superfields, and furthermore allows for covariant integration by parts.
The key reason why there is no obvious canonical connection owes the fact that we pullback the spacetime metric in a $\UT$ covariant fashion and the target space superfield $\SF{X}^{\uA}$ transforms non-trivially under $\UT$. Recall that we define\footnote{ We suppress all signs associated with super-index contractions in this appendix. That is, we only give Grassmann signs explicitly for uncontracted indices.}
\begin{equation}
\begin{split}
\SF{\gref}_{IJ} &= g_{\uA\uB}\; \Dwv_I  \SF{X}^{\uA} \Dwv_J \SF{X}^{\uB}  \,,  \\
\Dwv_I \SF{X}^{\uA} &=  \Dut_I \SF{X}^{\uA} = \partial_I \SF{X}^{\uA} + (\As_I,\SF{ X}^{\uA})_\Kref\,.
\end{split}
\label{eq:pbg}
\end{equation}
We have made explicit that insofar as $\SF{X}^{\uA}$ is concerned we have no issue as it transforms as a scalar under worldvolume superdiffeomorphisms. Without further ado, let us then record our basic assumptions for the worldvolume superspace. 

\paragraph{Axioms:}
\begin{itemize}
\item[{\bf A1.}] There is a  superspace on the worldvolume with intrinsic coordinates $z^I$ and  the pullback superfield $\SF{X}^{\uA}$  exists.
\item[{\bf A2.}] Using target space superdiffeomorphisms we can fix the gauge 
$\Theta = \theta$, ${\bar \Theta} = {\bar \theta}$. 
\item[{\bf A3.}] $\exists \; \SF{\Kref}^I$ and using worldvolume superdiffeomorphisms we can fix a gauge  such that $\SF{\Kref}^\theta = \SF{\Kref}^\thb = 0$ and  
$\partial_\theta \SF{\Kref}^I  = \partial_{\thb} \SF{\Kref}^I =0$.
\item[{\bf A4.}] $\exists \;\As_I$ on the worldvolume and a spacetime metric $g_{\uA\uB}=g_{\uA\uB}(\SF{X}^{\uA})$. We furthermore gauge fix the spacetime metric to obey \eqref{eq:gsptgauge}. 
\item[{\bf A5.}] Commutator condition: 
\begin{equation}
\gradcomm{\Dwv_I}{\Dwv_J} \SF{X}^{\uA}  = ( \Fs_{IJ}, \SF{X}^{\uA})_\Kref \,.
\label{eq:Comm}
\end{equation}	
\item[{\bf A6.}] Measure compatibility:
\begin{equation}
\Dwv_J \left( \frac{\sqrt{-\SF{\gref}}}{\zsf} \right) = 0 \,.
\label{eq:Measure}
\end{equation}	
\end{itemize}
Given this, our task is to ascertain whether there is a connection on the worldvolume which retains $\UT$ covariance. The first four axioms are reasonably innocuous and don't pose any constraints. The discussion in Appendix \ref{sec:gaugesdiff} makes clear that the gauge fixing constraints can be imposed immediately.

We would like to ensure the existence of a connection satisfying {\bf A5} and {\bf A6}. ${\bf A5}$ simply encodes compatibility with $\UT$ covariance. Note that we are not demanding that the worldvolume connection is compatible with the pullback metric. Clearly, measure and metric compatibility cannot simultaneously hold with $\UT$ covariant pullback as is clear from \eqref{eq:Measure}. Measure compatibility ${\bf A6}$ is all we need in order to be able to perform standard integration by parts inside superspace integrals, as we will show. 

Let us record some useful intermediate lemmas that focus our attention on the salient features of the worldvolume connection.  
\begin{enumerate}
\item[{\bf L1}.] Determining $\SF{\Omega}_{IJ}{}^{\uA} \equiv \Dwv_I \Dwv_J \SF{X}^{\uA} $ is equivalent to prescribing a worldvolume connection. 
\item[{\bf L2}.] Our axioms {\bf A1}-{\bf A6} only require us to fix the (graded) antisymmetric part of the connection and its trace, but leave the (graded) symmetric part undetermined.
\item[{\bf L3}.] Given the pullback metric we can evaluate its derivatives and check that 
\begin{equation}
\begin{split}
\Dwv_A \SF{\gref}_{IJ} &=  g_{\uA\uB }\left(
	\SF{\Omega}_{AI}{}^{\uA} \; \Dwv_J \SF{X}^{\uB} + 
	(-)^{AI}\, \Dwv_I \SF{X}^{\uA} \; \SF{\Omega}_{AJ}{}^{\uB} 
	\right)\\
\gradcomm{\Dwv_A}{\Dwv_B} \SF{\gref}_{IJ} &= 
	g_{\uA\uB} \left( 
	\SF{\Upsilon}_{ABI}{}^{\uA} \; \Dwv_J\SF{X}^{\uB} 
	+ (-)^{IJ }\,  \SF{\Upsilon}_{ABJ}{}^{\uA} \; \Dwv_I\SF{X}^{\uB} 
	\right) \\
\text{where }\SF{\Upsilon}_{ABI}{}^{\uA} & \equiv \Dwv_A \SF{\Omega}_{BI}{}^{\uA}  - (-)^{AB}\Dwv_B \SF{\Omega}_{AI}{}^{\uA}
 \end{split}	
\label{eq:Dg1}
\end{equation}	
\item[{\bf L4}.] If we demand that $\SF{\gref}_{IJ}$ transforms as a 0-adjoint and the commutator condition {\bf A5} holds, then we learn that 
\begin{equation}
	g_{\uA\uB} \left( 
	\SF{\Upsilon}_{ABI}{}^{\uA} \; \Dwv_J\SF{X}^{\uB} 
	+ (-)^{IJ}\,  \SF{\Upsilon}_{ABJ}{}^{\uA} \; \Dwv_I\SF{X}^{\uB} 
	\right)
=	\Fs_{AB} \, \lieD_\Kref \SF{\gref}_{IJ}
\end{equation}	
\end{enumerate}

Let us now unpack and explain these statements one by one and present a prescription to fix the derivative $\Dwv$. On occasion we will also use the basic non-covariant derivative $\Dut$ on the worldvolume. One can consider the restriction of $\Dwv$ as acting on scalars to define the derivative operator $\Dut$, which acts on arbitrary worldvolume tensors  or partially pulled back target space tensors as a covariant derivative where the all connection terms discarded:
\begin{equation}\label{eq:PartialSlash}
\Dut_J \equiv \Dwv_J \big{|}_{\Gamma^{\uC}_{\uA\uB} = \Cwv^I_{JK} = 0} \,.
\end{equation}
Indeed, this definition provides a natural generalization of $\partial_I$ in the $\UT$ covariant theory. It is simply given by the covariant derivative restricted to spacetime with flat connection (which is how one could recover $\partial_a$ from $\nabla_a$ in usual geometry without equivariant gauge group). 
For instance, on arbitrary tensors with worldvolume indices we would write 
$\Dut_I \equiv \partial_I + (\As_I \,,\cdot\,)_\Kref$, i.e., it acts like $\Dwv_I$, but without the connection terms:
\begin{equation}
\boxed{
  \Dwv_I \SF{\mathcal{T}}^J{}_K \equiv \Dut_I \SF{\mathcal{T}}^J{}_K + (-)^{IJ} \Cwv^J{}_{IL} \, \SF{\mathcal{T}}^L{}_K - (-)^{IJ}   \SF{\mathcal{T}}^J{}_L \, \Cwv^L{}_{IK} \,,
  }
\label{eq:DwvDutrel}
\end{equation}
where $\SF{\mathcal{T}}^J{}_K$ is a generic super-tensor.

Consider now the derivative of the pull-back metric \eqref{eq:pbg}:\footnote{ In this calculation, the part where the derivative hits $g_{\uA\uB}(\SF{X})$ evaluates to zero since the target space metric can be chosen to have a metric compatible connection: 
\begin{equation}
\Dwv_I g_{\uA\uB} (\SF{X}) =  \Dwv_I \SF{X}^{\uC} \; \nabla_{\uC} g_{\uA\uB}(\SF{X})  = 0 \,.
\end{equation}
}
\begin{equation}
\Dwv_I \,   \SF{\gref}_{JK} = g_{\uA\uB}( \SF{X}) \,\left( \Dwv_I \Dwv_J \SF{X}^{\uA} \, \Dwv_K \SF{X}^{\uB} 
+ (-)^{IJ}\, \Dwv_J \SF{X}^{\uA} \, \Dwv_I  \Dwv_K \SF{X}^{\uB} \right) \,.
\label{eq:Dgwv}
\end{equation}	

The result \eqref{eq:Dgwv} can be equivalently phrased in terms of the behaviour of the derivative acting on the dual tetrad basis $\SF{e}^I_{\uA}$ which satisfy 
\begin{equation}
  \Dwv_J \SF{X}^{\uA} \; \SF{e}_{\uA}{}^I = \delta_{J}{}^{I}\,, \qquad  \,\SF{e}_{\uA}{}^{I}\; \Dut_{I} \SF{X}^{\uB}  =  \delta_{\uA}{}^{\uB} \,. 
\end{equation}	
Taking a worldvolume covariant derivative of the tetrad and calculating in two different ways we infer the following:
\begin{equation}
\begin{split}
\Dwv_J \SF{e}_{\uA}{}^I & =   (-)^{J\uA}\, \SF{e}_{\uA}{}^K\, \Dwv_J \Dwv_K \SF{X}^{\uB}  \; \SF{e}_{\uB}{}^I \\ 
\Dwv_J \SF{e}_{\uA}{}^I &
	= \Dut_J\, \SF{e}_{\uA}{}^I 
	 - (-)^{J\uA}\,\Gamma^{\uB}{}_{\uA\uC} (\SF{X}) \;  \Dwv_J \SF{X}^{\uC} \;\SF{e}_{\uB}{}^I 
	 + (-)^{I(J+\uA)} \; \Cwv^I{}_{JK}\, \SF{e}_{\uA}{}^K  \\
& =
	 \partial_J \SF{e}_{\uA}{}^I + (\SF{\Ascr}_J, \SF{e}_{\uA}{}^I) -(-)^{J\uA}\, \Gamma^{\uB}{}_{\uA\uC}(\SF{X}) \; 
	 \Dwv_J \SF{X}^{\uC} \;\SF{e}_{\uB}{}^I +  (-)^{I(J+\uA)} \;\Cwv^I{}_{JK}\, \SF{e}_{\uA}{}^K\,,
\end{split}
\label{eq:DivE}
\end{equation}	
where we have indicated for clarity the various derivative operators at our disposal.  We can then solve for the worldvolume connection:
\begin{align}
\Cwv^I_{\ JK} &=  
	\SF{e}_{\uA}{}^I\; \Gamma^{\uA}{}_{\uB\uC}( \SF{X}) \; \Dwv_J \SF{X}^{\uB} \, \Dwv_K \SF{X}^{\uC} \, - (-)^{IJ} \Dut_J \SF{e}_{\uA}{}^I \; \Dwv_K \SF{X}^{\uA}   + (-)^{IJ}\; \Dwv_J \SF{e}_{\uA}{}^I \;  \Dwv_K \SF{X}^{\uA} 
\nonumber \\
	&=   \SF{e}_{\uA}{}^I\;  \Gamma^{\uA}{}_{\uB\uC}( \SF{X}) \; \Dwv_J \SF{X}^{\uB} \, 
	\Dwv_K \SF{X}^{\uC}
	+\SF{e}_{\uB}{}^I\; \Dut_J \Dwv_K \SF{X}^{\uB} - \SF{e}_{\uA}{}^I \; \Dwv_J \Dwv_K \SF{X}^{\uA}
\label{eq:gamref}
\end{align}
We can thus see a relation between $\Cwv^I{}_{JK}$ and $\Dwv_J \Dwv_K  X^{\uA}$: determining the worldvolume connection $\Cwv^I_{\ JK}$ is equivalent to fixing the tetrad condition $\Dwv_J \, \Dwv_K X^{\uA}$, as anticipated in lemma {\bf L1}. In the following, we will have to impose certain consistency conditions to fix either of these two objects.  Often, we will also use \eqref{eq:gamref} in the form: 
\begin{align}\label{eq:DbDcX}
\Dwv_J \Dwv_K  \SF{X}^{\uA}  = \Dut_J \Dwv_K  \SF{X}^{\uA} -  \Cwv^I_{\ JK} \;\Dwv_I  \SF{X}^{\uA} +  \Dwv_J  \SF{X}^{\uB} \, \Dwv_K  \SF{X}^{\uC}\, \Gamma^{\uA}{}_{\uB\uC}( \SF{X}) \,.
\end{align}

Let us first assume that we have a tetrad condition encoded in a new superfield $\SF{\Omega}$, viz., 
\begin{equation}
\Dwv_I \Dwv_J \SF{X}^{\uA} \equiv \SF{\Omega}_{IJ}{}^{\uA} \,.
\label{eq:Omdef}
\end{equation}	
We shall now constrain the form of $\SF{\Omega}$ by imposing the axioms {\bf A5} and {\bf A6}.

\paragraph{The commutator condition:} Imposing that \eqref{eq:Comm} holds will constrain our choice of $\SF{\Omega}$. It is easy to check that 
\begin{equation}
\SF{\Omega}_{IJ}{}^{\uA} -(-)^{IJ} \, \SF{\Omega}_{JI}{}^{\uA} \equiv \Fs_{IJ} \; \Kref^I\, \partial_I \SF{X}^{\uA} = \frac{1}{\zsf} \,  
\Fs_{IJ} \SF{\Kref}^K \Dwv_K \SF{X}^{\uA}  
\label{eq:Om1}
\end{equation}	
where we use \eqref{eq:zdef} to infer the  identities:
\begin{equation}
\Dwv_I  \zsf=  \Fs_{IJ}\, \Kref^J \,, \qquad  \Kref^I\Dwv_I \SF{X}^{\uA} = \zsf \,\Kref^I \partial_I \SF{X}^{\uA}  \,.
\label{eq:zrels}
\end{equation}	

\paragraph{Measure compatibility:}  The axiom \eqref{eq:Measure} can be rewritten as
\begin{equation}\label{eq:MeasureCompAlt}
\begin{split}
 0 &= \Dwv_I \left( \frac{\sqrt{-\SF{\gref}}}{\zsf} \right) \\
 &= \frac{\sqrt{-\SF{\gref}}}{\zsf} \left[  \frac{1}{2}\,g_{\uA\uB}\, \left( \Dwv_I \Dwv_J \SF{X}^{\uA} \; \SF{\gref}^{JK}\, \Dwv_K \SF{X}^{\uB} + \Dwv_J \SF{X}^{\uA}  \; \SF{\gref}^{JK}\, \Dwv_I \Dwv_K X^{\uB} \right) -\frac{1}{\zsf} \, \Fs_{IJ} \, \Kref^J\right]\\
 &=  \frac{\sqrt{-\SF{\gref}}}{\zsf} \left[ \SF{\Omega}_{IJ}{}^{\uA} \;\SF{e}_{\uA}{}^J - \frac{1}{\zsf}\, \Fs_{IJ} \, \Kref^J \right] 
 \,,
\end{split}
\end{equation}
where we used \eqref{eq:zrels}. From this, we infer
\begin{equation}
\SF{\Omega}_{IJ}{}^{\uA} \; \SF{e}_{\uA}{}^J = \frac{1}{\zsf}\, \Fs_{IJ} \, \Kref^J \,.
\label{eq:Om2}
\end{equation}

A general solution for  $\SF{\Omega}_{IJ}{}^{\uA}$  satisfying \eqref{eq:Om1} and \eqref{eq:Om2} may now be parameterized in the following form:
\begin{equation}
\SF{\Omega}_{IJ}{}^{\uA} = \frac{1}{2\, \zsf} \left( \Kref_I \, \Fs_{JM} +  (-)^{IJ} \, \Kref_J \, \Fs_{IM} 
+ (-)^{M(1+J)} \Kref_M \, \Fs_{IJ} \right) \, \SF{\gref}^{MN}\, \Dwv_N \SF{X}^{\uA}  + \SF{\omega}_{IJ}{}^{\uA}
 \label{eq:Oansatz}
\end{equation}	
for some $\SF{\omega}_{IJ}{}^{\uA}$.
The rationale for splitting off the pieces involving the field strength is that these are determined by the 
$\UT$ gauge structure alone, and in fact suffice to ensure that our axioms {\bf A5} and {\bf A6} are met. A short computation reveals that all we need is for the tensor $\SF{\omega}_{IJ}{}^{\uA}$ to be graded symmetric and super-traceless in it worldvolume indices:
\begin{equation}
\begin{split}
&\mathbf{A5} \;\; \Longrightarrow \;\; \SF{\omega}_{IJ}{}^{\uA} =  (-)^{IJ}\, \SF{\omega}_{JI}{}^{\uA}\,, \\
&\mathbf{A6} \;\; \Longrightarrow \;\; \SF{\omega}_{IJ}{}^{\uA}  \SF{e}_{\uA}{}^{J} =0  \,.
\end{split}
\label{eq:omprops}
\end{equation}
With this parameterization one can furthermore check that:
\begin{equation}
\begin{split}
\Dwv_I \, \SF{e}_{\uA}{}^I &= 
	- \SF{e}_{\uB}{}^M\, \SF{\omega}_{MN}{}^{\uB} \, \SF{e}_{\uA}{}^N  = 0 \\
\Dwv_I \SF{\gref}_{JK} & = 
	\frac{1}{\zsf} \bigg( 
		\Fs_{IJ} \Kref_K + (-)^{JK}\, \Fs_{IK}\, \Kref_J \bigg) 
	+ \; g_{\uA\uB} \left(\SF{\omega}_{IJ}{}^{\uA} \, \Dwv_K X^{\uB} + (-)^{JK} \, 
  	\SF{\omega}_{IK}{}^{\uB} \, \Dwv_J X^{\uA}  \right)
\end{split}
\label{eq:modrels}
\end{equation}

We thus see that the general connection satisfying measure compatibility and the commutator condition is parametrized by $\SF{\omega}_{IJ}{}^{\uA}$ satisfying \eqref{eq:omprops}. While we could work with a general connection of this type, it is helpful to make a particular choice for explicit computations. We will make the canonical choice $\SF{\omega}_{IJ}{}^{\uA} =0$ which implies that  
\begin{equation}
\boxed{\Dwv_I \Dwv_J \SF{X}^{\uA} 
= \frac{1}{2\, \zsf} \left( \Kref_I \, \Fs_{JM} +  (-)^{IJ} \, \Kref_J \, \Fs_{IM} 
+ (-)^{M(I+J)} \Kref_M \, \Fs_{IJ} \right) \, \SF{\gref}^{MN}\, \Dwv_N \SF{X}^{\uA}  }
\label{eq:ddXfix}
\end{equation}	
which, using \eqref{eq:DbDcX}, fixes $\Cwv^I_{\ JK}$. Since the difference of two connections is covariant we can easily incorporate non-zero $\SF{\omega}_{IJ}{}^{\uA}$ if necessary. The choice of setting $\SF{\omega}_{IJ}{}^{\uA} =0$ is inspired by the observation that then the connection is determined solely in terms of the $\UT$ data and the reference thermal super-vector on the worldvolume. This appears most natural since the failure of the standard tetrad condition is due to our covariant pullback of target spacetime data. 

In the main text we will work with \eqref{eq:ddXfix} or equivalently with:
\begin{equation}
\begin{split}
\Cwv^I_{\ JK} 
&=  
	 \SF{e}_{\uA}{}^I\; \Gamma^{\uA}{}_{\uB\uC}( \SF{X}) \; \partial_J \SF{X}^{\uB} \; \Dwv_K \SF{X}^{\uC}
	+\SF{e}_{\uB}{}^I\; \Dut_J \Dwv_K \SF{X}^{\uB} \\
&	- \frac{1}{2\, \zsf}\; \SF{e}_{\uA}{}^I
	\left( \Kref_J \, \Fs_{KM} +  (-)^{JK} \, \Kref_K \, \Fs_{JM} 
	+ (-)^{M(1+K)} \Kref_M \, \Fs_{JK} \right) \, \SF{\gref}^{MN}\, \Dwv_N \SF{X}^{\uA}\,.
\end{split}
\label{eq:gamreffix}
\end{equation}

It turns out that there is an alternate way to motivate \eqref{eq:gamreffix} which is closely inspired by the manner in which the Christoffel connection is determined. Consider the following worldvolume derivative:
\begin{equation}
\Dut_I^{^{(\mathscr{F})}} \mathcal{T}^{J}{}_K \equiv \Dut_I \mathcal{T}^{J}{}_K
+ \frac{1}{\zsf}\,  \; (-)^{JM}\, \Fs_{I M} \Kref^J \mathcal{T}^M{}_K  
- \frac{1}{\zsf}\, \; (-)^{L(1+J)+JK}\,\Fs_{I K} \Kref^L \mathcal{T}^J{}_L \,.
\label{eq:}
\end{equation}	
and similarly for tensors with more indices. This strange derivative supplements the flat worldvolume derivative with a connection set in terms of the $\UT$ field strength and thermal vector. One can check that our our choice for $\Cwv^I_{\ JK}$ in \eqref{eq:gamreffix} is equivalent to the choice:
\begin{equation}
\begin{split}
\Cwv^{I}_{\ JK}
&= 
	 \frac{1}{2} \, \SF{\gref}^{IL} \left[ 
	 \Dut^{^{(\mathscr{F})}}_J \SF{\gref}_{LK} + (-)^{JK} \, 
	 \Dut^{^{(\mathscr{F})}}_K \SF{\gref}_{JL} -  \Dut^{^{(\mathscr{F})}}_L \SF{\gref}_{JK} \right] \\
& = 
	\frac{1}{2} \, \SF{\gref}^{IL} \left[ 
	\Dut_J \SF{\gref}_{LK} + (-)^{JK} \, 
	\Dut_K \SF{\gref}_{JL} -  \Dut_L \SF{\gref}_{JK} \right]  	
+ \frac{1}{\zsf}\left( \Fs^I_{\ J} \,\SF{\Kref}_K + (-)^{JK}\, \Fs^I_{\ K}\, \SF{\Kref}_J \right)
\end{split}
\label{eq:wvCalt}
\end{equation}	
That is, we choose the connection which looks like the superspace Christoffel connection supplemented with thermal equivariance data to ensure that we have correctly accounted for the $\UT$ covariance of the pullback. Reinstating Grassmann signs for index contractions yields \eqref{eq:connFinal}.

\paragraph{Integration by parts:} For illustration, we now show how the properties of the worldvolume connection that we imposed above, allow for integration by parts. For a generic 0-adjoint vector $\SF{{\bf V}}^I = \SF{V}^{\uA}\,\SF{e}_{\uA}{}^I $, observe that \eqref{eq:modrels} implies 
{\small
\begin{equation}
\begin{split}
  \int \frac{\sqrt{-\SF{\gref}}}{\SF{{\bf z}}}  \, \Dwv_I \SF{{\bf V}}^I 
  &=  \int \frac{\sqrt{-\SF{\gref}}}{\SF{{\bf z}}}   \, \prn{\Dwv_I \SF{V}^{\uA}} \SF{e}_{\uA}{}^I \\
   &= \int \frac{\sqrt{-\SF{\gref}}}{\SF{{\bf z}}} \,  \left[ \Dwv_I X^{\uB} \, \partial_{\uB} \SF{V}^{\uA} +\Dwv_I X^{\uB} \,\SF{\Gamma}^{\uA}{}_{\uB\uC} \,  V^{\uC} \right] \SF{e}_{\uA}{}^I\\
  &= \int \frac{\sqrt{-\SF{\gref}}}{\SF{{\bf z}}} \, \Dwv_I X^{\uB} \prn{ {\nabla}_{\uB} \SF{V}^{\uA} } \SF{e}_{\uA}{}^I
  = \int \sqrt{-g}  \; {\nabla}_{\mu} \SF{V}^{\mu}  = 0 \,,
\end{split}
\label{eq:intparts}
\end{equation}
}\normalsize
using the partially pulled back derivative defined in \eqref{eq:CovDref}.
In the last step we used the transformation of the measure as one passes from the worldvolume to the target space theory: 
\begin{equation}
 d^dX \,d\Theta\, d\bar{\Theta} \; \sqrt{-\SF{g}} = d^d\sigma\, d\theta\, d\thb \; \sqrt{-\SF{\gref}} \, \frac{\det [\partial_I \SF{X}^{\uA}]}{\det [\Dut_I \SF{X}^{\uA}]}  = d^d\sigma\, d\theta\, d\thb \; \frac{ \sqrt{-\SF{\gref}}}{\SF{{\bf z}}} \,.
\end{equation}
Eq.~\eqref{eq:intparts} is, of course, just a way of stating the ability to integrate by parts in the worldvolume theory.

\section{Superspace expansion of further fields in the MMO limit}
\label{sec:MMOfurther}

We collect here some more involved superspace expansions of superfields in the MMO limit. This is relevant for the calculations in \S\ref{sec:mmo}.

\paragraph{Non-covariant derivatives of the metric:} To construct the worldvolume connection \eqref{eq:wvCalt}, we need the non-covariant derivatives $\Dut_K \SF{\gref}_{IJ}$ of the metric, recalling that $\Dut_I \equiv \partial_I + (\SF{\Ascr}_I, \,\cdot\,)_\Kref$ is in some sense a $\UT$ covariant partial derivative:
\begin{equation}
\begin{split}
\Dut_\theta \SF{\gref}_{ab} 
&= 
	- \thb \,	\prn{ \source{h_{ab}} + \tilde{\gref}_{ab} 
	- (\phizT-\sBdel, \gref_{ab})_\Kref} 
 \\
\Dut_\thb \SF{\gref}_{ab} 
&= 
	\theta \, \prn{ \source{h_{ab}} + \tilde{\gref}_{ab} + (\sBdel, \gref_{ab})_\Kref}  \\
\Dut_\theta \SF{\gref}_{\thetab a}
& = 
	  \prn{\tilde{X}_\mu + (\sBdel, X_\mu)_\Kref } \, \partial_a X^\mu 
	  + \thb\theta \, \prn{\phizT-\sBdel, ( \tilde{X}_\mu +(\sBdel,X_\mu)_\Kref \, 
	   \partial_a X^\mu}_\Kref
\\
\Dut_\thb \SF{\gref}_{\thetab a}
& = 
\Dut_\theta  \SF{\gref}_{\theta a}  = 0
  \\
 D_\thb  \SF{\gref}_{\theta a}  
  & = 
  	\prn{ \mathfrak{J}_\nu - (\sBdel, X_\nu)_\Kref } \, \partial_a X^\nu  - \thb \theta \, ( \sBdel, \mathfrak{J}_\mu - (\sBdel, X_\mu)_\Kref ) \partial_a X^\mu )_\Kref
 \\
\Dut_\theta \SF{\gref}_{\theta \thetab} 
&=  
 	- \thb\, 
 	 \prn{  
 	 \source{h_{\theta\thb}} +\tilde{\gref}_{\theta\thb}}
 \\
\Dut_\thb \SF{\gref}_{\theta \thetab} 
&=  
 	\theta\, 
 	 \prn{  
 	 \source{h_{\theta\thb}} + \tilde{\gref}_{\theta\thb}}
\end{split}
\label{eq:Dtgref}
\end{equation}

\paragraph{The worldvolume connection:} Using the above expansions, we can now write the connection \eqref{eq:wvCalt} in the MMO limit.
In components, the non-zero parts of the connection read as follows:
{\small
\begin{subequations}
\begin{align}
 \Cwv^c{}_{ab} 
&= \Gamma^c{}_{ab} + \thb\theta \bigg\{
\frac{1}{2}\, \gref^{cd} \Big[ 2\nabla_{(a} (\source{h_{b)d}}+\tilde{\gref}_{b)d}) +2 (\source{\mathcal{F}_{(a}}, \gref_{b)d})_\Kref   -  \nabla_d (\source{h_{ab}}+\tilde{\gref}_{ab}) - (\sAt{d},\gref_{ab})_\Kref \Big] \nonumber \\
 &\quad -\frac{i}{2} \, e_\nu^c \prn{\tilde{X}^\nu + (\sBdel,X^\nu)_\Kref} \prn{ 2\, \nabla_{(a}\brk{\partial_{b)}X^\mu \prn{\mathfrak{J}_{\mu} - (\sBdel, X_\mu)_\Kref}} +  \source{h_{ab}} + \tilde{\gref}_{ab} - ( \phizT-\sBdel, \gref_{ab})_\Kref }  \nonumber \\
 &\quad -\frac{i}{2} \,e_\nu^c \prn{\mathfrak{J}^\nu - (\sBdel,X^\nu)_\Kref} \prn{ 2\, \nabla_{(a}\brk{\partial_{b)}X^\mu \prn{\tilde{X}_{\mu} + (\sBdel, X_\mu)_\Kref}} -  \source{h_{ab}} - \tilde{\gref}_{ab} + ( \sBdel, \gref_{ab})_\Kref }  \nonumber \\
 &\quad + 2\brk{
	 i  \,\tx^c \left(\source{\mathcal{F}_{(a}} + \partial_{(a} \phizT \right) - i\, \mathfrak{J}^c \, \source{\mathcal{F}_{(a}} } \Kref_{b)}
+ \gref^{cd} \, \left(\Kref_{(a}\,\nabla_{d} \source{\mathcal{F}_{b)}}  -\Kref_{(a} \,\nabla_{b)} \source{\mathcal{F}_{d}}  \right)
 \bigg\} 
 \nonumber \\
\Cwv^c{}_{a\theta} &= \thb \,\bigg\{ - \frac{1}{2} \, \gref^{cd} \prn{ \source{h_{da}} + \tilde{\gref}_{da} - (\phizT- \sBdel, \gref_{da})_\Kref} + \gref^{cd} \, \partial_{[a} \brk{ \partial_{d]}X^\mu \prn{\mathfrak{J}_{\mu} - (\sBdel, X_\mu)_\Kref} }   \nonumber \\
&\qquad + \frac{i}{2} \, e_\mu^c \prn{\mathfrak{J}^\mu - (\sBdel,X^\mu)_\Kref }\; \partial_a X^\nu (\tilde{X}_\nu+ \mathfrak{J}_\nu)  + \brk{\source{\mathcal{F}^c} + \partial^c \phizT - i\, \mathfrak{J}^c \, \phizT} \Kref_a \bigg\}
 \nonumber \\
\Cwv^c{}_{a\thb} &= \theta \,\bigg\{  \frac{1}{2} \, \gref^{cd} \prn{ \source{h_{da}} + \tilde{\gref}_{da} + (\sBdel, \gref_{da})_\Kref} + \gref^{cd} \, \partial_{[a} \brk{\partial_{d]}X^\mu \prn{\tilde{X}_{\mu} + (\sBdel, X_\mu)_\Kref}  } \nonumber \\
&\qquad - \frac{i}{2} \, e_\mu^c \prn{\tilde{X}^\mu + (\sBdel,X^\mu)_\Kref }\; \partial_a X^\nu (\tilde{X}_\nu + \mathfrak{J}_\nu) + \brk{-\source{\mathcal{F}^c} + i\, \tx^c \, \phizT} \Kref_a 
\bigg\}  \nonumber \\
\Cwv^\theta{}_{ab} &= -\frac{i}{2} \, \theta \, \bigg\{ - 2 \, \nabla_{(a} \brk{ \partial_{b)}X^\mu \prn{\tilde{X}_{\mu} + (\sBdel, X_\mu)_\Kref}  }+ \source{h_{ab}} + \tilde{\gref}_{ab} + (\sBdel, \gref_{ab})_\Kref  \nonumber \\
&\qquad -2\, \brk{\sAt{a}\, \Kref_b + \sAt{b} \, \Kref_a}\bigg\}
\end{align}
\begin{align}
\Cwv^\thb{}_{ab} &= \frac{i}{2} \, \thb \, \bigg\{  - 2 \, \nabla_{(a} \brk{ \partial_{b)}X^\mu \prn{\mathfrak{J}_{\mu} - (\sBdel, X_\mu)_\Kref} }- \source{h_{ab}} - \tilde{\gref}_{ab} + (\phizT-\sBdel, \gref_{ab})_\Kref  \nonumber \\
&\qquad + 2\, \brk{\left( \sAt{a} + \partial_a \phizT\right) 
	\Kref_b +\left(\sAt{b} + \partial_b \phizT\right) \, \Kref_a}\bigg\} 
 \nonumber \\
\Cwv^c{}_{\theta\thb} &= \frac{1}{2} \, e_\mu^c\,\prn{ \tilde{X}^\mu - \mathfrak{J}^\mu + 2 (\sBdel, X^\mu)_\Kref } +  \thb\theta \, \bigg\{ \partial_bX^\mu \, \prn{\tilde{X}_\mu - \mathfrak{J}_\mu + 2(\sBdel,X_\mu)_\Kref } \times  \nonumber \\
&\qquad\quad \times \Big[ -i \, \gref^{ab} (\source{h_{\theta\thb}} + \tilde{\gref}_{\theta\thb}) - \frac{1}{2} \prn{ \source{h^{ab}} + \tilde{\gref}^{ab} -  2i \, e^{(a}_\rho e^{b)}_\sigma \, \prn{ \mathfrak{J}^\rho - (\sBdel,X^\rho)_\Kref } \prn{ \tilde{X}^\sigma + (\sBdel,X^\sigma)_\Kref } } \Big]  \nonumber \\
&\qquad\quad+ \frac{1}{2} \, \gref^{ab} \, \Big[ \prn{\phizT-\sBdel,  (\tilde{X}_\nu + (\sBdel, X_\nu)_\Kref )\partial_b X^\nu }_\Kref + \prn{ \sBdel, (\mathfrak{J}_\nu - (\sBdel, X_\nu)_\Kref ) \partial_b X^\nu  }_\Kref  \nonumber \\
&\qquad\quad - \partial_b (\source{h_{\theta\thb}} + \tilde{\gref}_{\theta\thb}) \Big]
-\brk{
	\source{\mathcal{F}^c} (\mathfrak{J}^d-\tx^d) - \partial^c \phizT \;\tx^d 
	+ i \, \phizT\left(\mathfrak{J}^c \tx^d - \tx^c \; \mathfrak{J}^d \right)  
	}\Kref_d 
	 \bigg\}
 \nonumber \\
\Cwv^\theta{}_{a\theta} &=  \frac{i}{2} \, \phizT\,\Kref_a+  \thb\theta \, \bigg\{ i\, \tilde{X}^c \, \nabla_{[a}  \brk{\partial_{c]}X^\mu \prn{\mathfrak{J}_{\mu} - (\sBdel, X_\mu)_\Kref} }  \nonumber \\
&\qquad\quad - \frac{i}{2} \prn{ \source{h_{ac}} + \tilde{\gref}_{ac} - (\phizT-\sBdel, \gref_{ac})_\Kref } \, e_\nu^c \prn{\tilde{X}^\mu + (\sBdel, X^\nu)_\Kref} \nonumber \\ 
&\qquad\quad  - \frac{i}{2} \, \partial_a (\source{h_{\theta\thb}} + \tilde{\gref}_{\theta\thb} ) + \frac{1}{2} \, \source{h_{\theta\thb}} \, \partial_a X^\nu  \prn{\tilde{X}_\nu + \mathfrak{J}_\nu+ 2(\sBdel,X_\nu)_\Kref } \nonumber \\
&\qquad\quad + \frac{i}{2} \, (\sBdel, \partial_aX^\nu \prn{\mathfrak{J}_\nu - (\sBdel,X_\nu)_\Kref} )_\Kref - \frac{i}{2} (\phizT-\sBdel, \partial_aX^\nu \prn{\tilde{X}_\nu + (\sBdel,X_\nu)_\Kref} )_\Kref   \nonumber \\
&\qquad\quad 
	-\phizT\, \Kref_a \left( \source{h_{\theta\thb}} + i\, \Kref^d \, \sAt{d}\right)
	+i\, \tx^d\left(\sAt{d} + \partial_d \phizT \right)\Kref_a 
	-i\, \mathfrak{J}^d \Kref_d \, \sAt{a}
	+ i\,\phizT\left(\source{h_{ab}} + \tilde{\gref}_{ab} \right)\Kref^b 
	+ i \phizT \, \frac{\tilde{T}}{T} \, \Kref_a 
\bigg\}
 \nonumber \\
\Cwv^\thb{}_{a\thb} &= -\frac{i}{2} \, \phizT\,\Kref_a - \thb\theta \, \bigg\{ -i\, \mathfrak{J}^c \, \nabla_{[a} \brk{ \partial_{c]}X^\mu \prn{\tilde{X}_{\mu} + (\sBdel, X_\mu)_\Kref} }  \nonumber \\
&\qquad\quad  - \frac{i}{2} \prn{ \source{h_{ac}} + \tilde{\gref}_{ac} + (\sBdel, \gref_{ac})_\Kref } \, e_\nu^c \prn{\mathfrak{J}^\nu - (\sBdel,X^\nu)_\Kref }  \nonumber \\ 
&\qquad\quad + \frac{i}{2} \, \partial_a (\source{h_{\theta\thb}} + \tilde{\gref}_{\theta\thb} )+ \frac{1}{2} \, \source{h_{\theta\thb}} \, \partial_a X^\nu  \prn{\tilde{X}_\nu + \mathfrak{J}_\nu+ 2(\sBdel,X_\nu)_\Kref }   \nonumber \\
&\qquad\quad + \frac{i}{2} (\sBdel, \partial_aX^\nu \prn{\mathfrak{J}_\nu - (\sBdel,X_\nu)_\Kref} )_\Kref - \frac{i}{2} \, (\phizT-\sBdel, \partial_aX^\nu \prn{\tilde{X}_\nu + (\sBdel,X_\nu)_\Kref} )_\Kref  \nonumber \\
&\qquad\quad - 
	\phizT\, \Kref_a \left( \source{h_{\theta\thb}} + i\, \Kref^d \, \sAt{d}\right)
	+i\, \mathfrak{J}^d \sAt{d} \; \Kref_a 
	-i\, \tx^d \Kref_d \,\left(\sAt{a} + \partial_a \phizT\right)
	+ i\,\phizT\left(\source{h_{ab}} + \tilde{\gref}_{ab} \right)\Kref^b 
	+ i \phizT \, \frac{\tilde{T}}{T} \, \Kref_a 
	\bigg\}  \nonumber \\
\Cwv^\theta{}_{\theta\thb} &= -\frac{i}{2} \, \theta \, \bigg\{ \prn{\tilde{X}^\mu + (\sBdel, X^\mu)_\Kref } \prn{ \tilde{X}_\mu - \mathfrak{J}_\mu + 2 (\sBdel, X_\mu)_\Kref }  + 2 \, (\source{h_{\theta\thb}} + \tilde{\gref}_{\theta\thb} ) - 2 \, \phizT \, \tx^d \Kref_d \bigg\}  \nonumber \\
\Cwv^\thb{}_{\thb\theta} &= \frac{i}{2} \, \thb \, \bigg\{ -\prn{\mathfrak{J}^\mu - (\sBdel, X^\mu)_\Kref } \prn{ \tilde{X}_\mu - \mathfrak{J}_\mu + 2 (\sBdel, X_\mu)_\Kref }  + 2 \, (\source{h_{\theta\thb}} + \tilde{\gref}_{\theta\thb} ) - 2\, \phizT \, \mathfrak{J}^d \Kref_d\bigg\}
\end{align}
\end{subequations}
}\normalsize
Note the following simple identity for the supertrace of the connection: 
\begin{equation}\label{eq:Ctrace}
\begin{split}
(-)^I\, \Cwv^I{}_{Ic} &\equiv \thb \theta \, (-)^I\, \Cwvt^I{}_{Ic} \,,\\
   (-)^I\, \Cwvt^I{}_{Ic} &= 
   \, \frac{1}{2}\,  \gref^{ab} \Big[ 2\nabla_{(a} (\source{h_{c)b}}+\tilde{\gref}_{c)b}) +2 (\source{\mathcal{F}_{(a}}, \gref_{c)b})_\Kref   -  \nabla_b (\source{h_{ac}}+\tilde{\gref}_{ac}) - (\sAt{b},\gref_{ac})_\Kref \Big]  \\
   &\quad + \Kref^a \, \nabla_{[c} \source{\mathcal{F}}_{a]} + i\,  \partial_c \source{h_{\theta\thb}}   \,.
\end{split}
\end{equation}

\paragraph{Derivatives of field strengths:}

For the discussion of dissipative transport (such as \S\ref{sec:mmo1st}) we need covariant derivatives of field strengths. In the MMO limit, these take the following form: 
{\small
\begin{equation}
\begin{split}
  \Dwv_a \SF{\mathscr{F}}_{\theta b} &= \thb \, \bigg\{ - \nabla_a (\sAt{b} + \partial_b \phizT)- i\,\phizT ( \source{\mathcal{F}_{a}} + \partial_{a}\phizT) \, \Kref_{b} - \frac{i}{2} \, \phizT \Kref_a (\sAt{b} + \partial_b \phizT) \\
  &\qquad + \frac{i}{2} \, \phizT \, \prn{ 2 \nabla_{(a} \brk{ \partial_{b)} X^\mu (\mathfrak{J}_\mu - (\sBdel,X_\mu)_\Kref )}  + \source{h_{ab}} + \tilde{\gref}_{ab} - (\phizT-\sBdel, \gref_{ab})_\Kref } \bigg\} 
  \\
   \Dwv_a \SF{\mathscr{F}}_{\thb b} &= \theta \, \bigg\{  \nabla_a \sAt{b}  - i\,\phizT \ \source{\mathcal{F}_{a}} \, \Kref_{b} - \frac{i}{2} \, \phizT \, \Kref_a \, \sAt{b} \\
  &\qquad - \frac{i}{2} \, \phizT \, \prn{ 2 \nabla_{(a} \brk{ \partial_{b)} X^\mu (\tilde{X}_\mu + (\sBdel,X_\mu)_\Kref )} - \source{h_{ab}} - \tilde{\gref}_{ab} + (\sBdel, \gref_{ab})_\Kref } \bigg\}
  \\
  \Dwv_a \SF{\mathscr{F}}_{\theta\thb} 
  &=  \Dwv_a \SF{\mathscr{F}}_{\thb\theta} 
  =  \partial_a\phizT + \thb \theta \, 
  \bigg\{ (\sAt{a}, \phizT)_\Kref 	+ i \,\phizT\, \partial_a (\source{h_{\theta\thb}} + \tilde{\gref}_{\theta\thb} )  \\
  &\qquad + \sAt{b} \Big[ 
   \frac{i}{2} \, e_\mu^b \prn{\mathfrak{J}^\mu - (\sBdel,X^\mu)_\Kref }\; \partial_a X^\nu (\tilde{X}_\nu+ \mathfrak{J}_\nu)   \Big] \\
  &\qquad - (\sAt{b} + \partial_b \phizT) \Big[ 
    \frac{i}{2} \, e_\mu^b \prn{\tilde{X}^\mu + (\sBdel,X^\mu)_\Kref }\; \partial_a X^\nu (\tilde{X}_\nu + \mathfrak{J}_\nu) 
  \Big] \\
 &\qquad + (\source{\mathcal{F}^c} -   i\,\phizT\, \tilde{X}^c) \, \nabla_{[a}  \brk{\partial_{c]}X^\mu \prn{\mathfrak{J}_{\mu} - (\sBdel, X_\mu)_\Kref} } \\
 &\qquad + (\source{\mathcal{F}^c} + \partial^c \phizT- i\,\phizT\, \mathfrak{J}^c) \, \nabla_{[a}  \brk{\partial_{c]}X^\mu \prn{\tilde{X}_{\mu} + (\sBdel, X_\mu)_\Kref} }\\
&\qquad + \brk{ - \frac{1}{2} \, \source{\mathcal{F}^c} + \frac{i}{2} \, \phizT \, e^c_\nu \prn{ \tilde{X}^\nu + (\sBdel,X^\nu)_\Kref } }\prn{ \source{h_{ac}} + \tilde{\gref}_{ac} - (\phizT-\sBdel, \gref_{ac})_\Kref }\\ 
&\qquad + \brk{ \frac{1}{2} \,(\source{\mathcal{F}^c} + \partial^c \phizT) - \frac{i}{2} \, \phizT \, e^c_\nu \prn{ \mathfrak{J}^\nu - (\sBdel,X^\nu)_\Kref }}\prn{ \source{h_{ac}} + \tilde{\gref}_{ac} + (\sBdel, \gref_{ac})_\Kref }\\ 
&\qquad +\phizT\prn{
	i\, \mathfrak{J}^d \Kref_d \, \sAt{a}
	-i \, \tilde{X}^d \Kref_d  (\sAt{a} + \partial_a \phizT)
	}
\Big]
\bigg\}
\\
  \Dwv_\thb \SF{\mathscr{F}}_{\theta a} &= - (\sAt{a} + \partial_a \phizT) 
   - \frac{i}{2}\, (\phizT)^2 \Kref_a  
  + \thb \theta \bigg\{ (\sBdel,\sAt{a} + \partial_a \phizT)_\Kref -  (\tilde{X}^\mu - \mathfrak{J}^\mu + 2(\sBdel,X^\mu)_\Kref) e_\mu^b \, \partial_{[a} \source{\mathcal{F}}_{b]}
   \\
  &\quad 
  + \frac{i}{2} \, (\sAt{a} + \partial_a \phizT)\brk{\prn{\tilde{X}^\mu + (\sBdel, X^\mu)_\Kref } \prn{ \tilde{X}_\mu - \mathfrak{J}_\mu + 2 (\sBdel, X_\mu)_\Kref }  + 2 \, (\source{h_{\theta\thb}} + \tilde{\gref}_{\theta\thb} ) - 2 \, \phizT \, \tx^d \Kref_d} \\
 &\quad+ \frac{i}{2} \, \sAt{a} \brk{-\prn{\mathfrak{J}^\mu - (\sBdel, X^\mu)_\Kref } \prn{ \tilde{X}_\mu - \mathfrak{J}_\mu + 2 (\sBdel, X_\mu)_\Kref }  + 2 \, (\source{h_{\theta\thb}} + \tilde{\gref}_{\theta\thb} ) - 2\, \phizT \, \mathfrak{J}^d \Kref_d} \\
 &\quad - (\sAt{c} + \partial_c \phizT) \bigg[ 
 \frac{1}{2} \, \gref^{cd} \prn{ \source{h_{da}} + \tilde{\gref}_{da} + (\sBdel, \gref_{da})_\Kref} + \gref^{cd} \, \partial_{[a} \brk{\partial_{d]}X^\mu \prn{\tilde{X}_{\mu} + (\sBdel, X_\mu)_\Kref}  }\\
&\qquad\qquad\qquad\qquad - \frac{i}{2} \, e_\mu^c \prn{\tilde{X}^\mu + (\sBdel,X^\mu)_\Kref }\; \partial_a X^\nu (\tilde{X}_\nu + \mathfrak{J}_\nu) - \brk{\source{\mathcal{F}^c} - i\, \tx^c \, \phizT} \Kref_a 
 \bigg]
 + \phizT \, \Cwvt^\thb{}_{\thb a}
 \bigg\}
  \\
  \Dwv_\theta \SF{\mathscr{F}}_{\thb a} &= \sAt{a} 
  + \frac{i}{2}\, (\phizT)^2 \Kref_a + \thb \theta \bigg\{ (\phizT-\sBdel,\sAt{a} )_\Kref
  + (\tilde{X}^\mu - \mathfrak{J}^\mu + 2(\sBdel,X^\mu)_\Kref) e_\mu^b \, \partial_{[a} \source{\mathcal{F}}_{b]}
  \\
  &\quad 
  - \frac{i}{2} \,\sAt{a} \brk{-\prn{\mathfrak{J}^\mu - (\sBdel, X^\mu)_\Kref } \prn{ \tilde{X}_\mu - \mathfrak{J}_\mu + 2 (\sBdel, X_\mu)_\Kref }  + 2 \, (\source{h_{\theta\thb}} + \tilde{\gref}_{\theta\thb} ) -2 \, \phizT \, \mathfrak{J}^d \Kref_d} \\
 &\quad- \frac{i}{2} \, (\sAt{a} + \partial_a \phizT) \brk{\prn{\tilde{X}^\mu + (\sBdel, X^\mu)_\Kref } \prn{ \tilde{X}_\mu - \mathfrak{J}_\mu + 2 (\sBdel, X_\mu)_\Kref }  + 2 \, (\source{h_{\theta\thb}} + \tilde{\gref}_{\theta\thb} ) - 2\, \phizT \, \tilde{X}^d \Kref_d} \\
 &\quad - \sAt{c}  \bigg[ 
- \frac{1}{2} \, \gref^{cd} \prn{ \source{h_{da}} + \tilde{\gref}_{da} - (\phizT-\sBdel, \gref_{da})_\Kref} + \gref^{cd} \, \partial_{[a} \brk{\partial_{d]}X^\mu \prn{\mathfrak{J}_{\mu} - (\sBdel, X_\mu)_\Kref}  }\\
&\qquad\qquad\quad + \frac{i}{2} \, e_\mu^c \prn{\mathfrak{J}^\mu - (\sBdel,X^\mu)_\Kref }\; \partial_a X^\nu (\tilde{X}_\nu + \mathfrak{J}_\nu) + \brk{\source{\mathcal{F}^c}+\partial^c \phizT - i\, \mathfrak{J}^c \, \phizT} \Kref_a 
 \bigg]
  - \phizT \, \Cwvt^\theta{}_{\theta a}
 \bigg\}
 \end{split}
 \end{equation}
 \begin{equation}
 \begin{split}
  \Dwv_\thb \SF{\mathscr{F}}_{\theta \thb} &= \theta \,\bigg\{ (\sBdel,\phizT)_\Kref - \frac{1}{2} \,  \sAt{a} \, e^a_\mu \prn{ \tilde{X}^\mu - \mathfrak{J}^\mu + 2 (\sBdel, X^\mu)_\Kref }\\
  &\qquad  + \frac{i}{2} \,  \phizT \brk{ \prn{\tilde{X}^\mu + (\sBdel, X^\mu)_\Kref } \prn{ \tilde{X}_\mu - \mathfrak{J}_\mu + 2 (\sBdel, X_\mu)_\Kref }  + 2 \, (\source{h_{\theta\thb}} + \tilde{\gref}_{\theta\thb} )- 2 \, \phizT \, \tx^d \Kref_d }
  \bigg\}
  \\
  \Dwv_\theta \SF{\mathscr{F}}_{\theta \thb} &=  \thb \,\bigg\{ (\phizT-\sBdel,\phizT)_\Kref - \frac{1}{2} \, ( \sAt{a} + \partial_a \phizT) \, e^a_\mu \prn{ \tilde{X}^\mu - \mathfrak{J}^\mu + 2 (\sBdel, X^\mu)_\Kref }\\
  &\qquad  - \frac{i}{2} \,  \phizT \brk{ -\prn{\mathfrak{J}^\mu + (\sBdel, X^\mu)_\Kref } \prn{ \tilde{X}_\mu - \mathfrak{J}_\mu + 2 (\sBdel, X_\mu)_\Kref }  + 2 \, (\source{h_{\theta\thb}} + \tilde{\gref}_{\theta\thb} ) -2 \, \phizT \, \mathfrak{J}^d \Kref_d }
  \bigg\}
\end{split}
\end{equation}
}\normalsize

\paragraph{Shear tensor:}

Now, we wish to compute the shear tensor and its square as it appears in \eqref{eq:bvterm3}. We start by computing the components of
\begin{equation}
\begin{split}
 \Dwv_I \SF{u}_J \equiv 
 (-)^{K} \, \prn{\Dwv_I \, \SF{u}^K  }\SF{\gref}_{KJ}
\,.
\end{split}
\label{eq:defDulow}
\end{equation}
We find:
{\small
\begin{subequations}
\begin{align}
\Dwv_a \SF{u}_b 
&= \nabla_a u_b + \thb \theta \bigg\{ \nabla_a \tilde{u}_b+\gref_{bc}\, (\sAt{a} , u^c)_\Kref + (\source{h_{bc}}+\tilde{\gref}_{bc}) \, \nabla_a u^c + \prn{ \Kref_{(a} \nabla_b \sAt{c)} - \Kref_{(a} \nabla_{c)} \sAt{b} } \, u^c   \nonumber \\
&\qquad + \frac{1}{2} \, u^c \, \brk{2 \nabla_{(a} (\source{h_{c)b}} + \tilde{\gref}_{c)b} ) + 2 (\source{\mathcal{F}_{(a}},\gref_{c)b})_\Kref -  \nabla_b (\source{h_{ac}} + \tilde{\gref}_{ac} ) - (\sAt{b}, \gref_{ac})_\Kref } \bigg\} \nonumber \\
\Dwv_a \SF{u}_\theta 
&= \thb \; \bigg\{\prn{ \mathfrak{J}_\mu - (\sBdel, X_\mu)_\Kref   } \partial_{c} X^\mu \, \nabla_a u^c + \nabla_{(a} \brk{\prn{ \mathfrak{J}_\mu - (\sBdel, X_\mu)_\Kref   } \partial_{c)} X^\mu \; }\, u^c   \nonumber \\
&\qquad + \frac{1}{2} \, u^d \prn{\source{h_{da}} +\tilde{\gref}_{da} - (\phizT-\sBdel, \gref_{da})_\Kref }
- u^c \prn{ \sAt{c} + \partial_c \phizT} \Kref_a +\frac{1}{T} \, (\sAt{a} + \partial_a \phizT)  \bigg\}\nonumber \\
\Dwv_a \SF{u}_\thb 
&= \theta \; \bigg\{\prn{ \tilde{X}_\mu + (\sBdel, X_\mu)_\Kref   } \partial_{c} X^\mu \, \nabla_a u^c + \nabla_{(a} \brk{\prn{ \tilde{X}_\mu + (\sBdel, X_\mu)_\Kref   } \partial_{c)} X^\mu \; }\, u^c   \nonumber \\
&\qquad - \frac{1}{2} \, u^d \prn{\source{h_{da}} +\tilde{\gref}_{da} + (\sBdel, \gref_{da})_\Kref }
+ u^c  \sAt{c} \, \Kref_a -\frac{1}{T} \, \sAt{a}   \bigg\}\nonumber \\
\Dwv_\theta \SF{u}_a 
&= \thb \; \bigg\{- \tilde{T}\, \Kref_a + \gref_{ac}\, (\phizT - \sBdel, u^c)_\Kref- \frac{1}{2} \, u^d \prn{\source{h_{da}} +\tilde{\gref}_{da} - (\phizT-\sBdel, \gref_{da})_\Kref } \nonumber \\
&\qquad  + \frac{i}{2} \,  \partial_a X^\mu \, \prn{\mathfrak{J}_\mu - (\sBdel,X_\mu)_\Kref }\; u^c\,\partial_c X^\nu (\tilde{X}_\nu + \mathfrak{J}_\nu) + u^c \, \partial_{[c} \brk{ \partial_{a]} X^\mu \; \prn{ \mathfrak{J}_\mu - (\sBdel,X_\mu)_\Kref} }\nonumber \\
&\qquad - \frac{1}{T} \,\brk{\sAt{c} + \partial_c\phizT -\frac{i\phizT}{2} \, \prn{\mathfrak{J}_\mu - (\sBdel,X_\mu)_\Kref} \, \partial_c X^\mu }
 \bigg\}\nonumber \\
\Dwv_\thb \SF{u}_a 
&= \theta \; \bigg\{ \tilde{T}\, \Kref_a + \gref_{ac}\, (\sBdel, u^c)_\Kref+ \frac{1}{2} \, u^d \prn{\source{h_{da}} +\tilde{\gref}_{da} + (\sBdel, \gref_{da})_\Kref } \nonumber \\
&\qquad  - \frac{i}{2} \,  \partial_a X^\mu \, \prn{\tilde{X}_\mu + (\sBdel,X_\mu)_\Kref }\; u^c\,\partial_c X^\nu (\tilde{X}_\nu + \mathfrak{J}_\nu) + u^c \, \partial_{[c} \brk{ \partial_{a]} X^\mu \; \prn{ \tilde{X}_\mu + (\sBdel,X_\mu)_\Kref} }\nonumber \\
&\qquad + \frac{1}{T} \,\brk{\sAt{a}  -\frac{ i\phizT}{2} \, \prn{\tilde{X}_\mu + (\sBdel,X_\mu)_\Kref} \, \partial_a X^\mu }
 \bigg\}\nonumber \\
\end{align}
\begin{align}
\Dwv_\theta \SF{u}_\thb 
& =   \frac{\phizT}{2T} +\thb\theta \, \bigg\{ - \Kref^a e_a^\mu \prn{\tilde{X}_\mu + (\sBdel,X_\mu)_\Kref} \, \tilde{T} - \frac{\phizT}{2T^2} \, \tilde{T} + \prn{\phizT-\sBdel, \, \prn{ \tilde{X}_\mu +  (\sBdel, X_\mu)_\Kref  }  \partial_c X^\mu \; u^c }_\Kref \nonumber \\
&\qquad - \frac{1}{2} \, u^a \brk{ \prn{ \phizT- \sBdel, (\tilde{X}_\mu + (\sBdel,X_\mu)_\Kref)\partial_a X^\mu ) }_\Kref  + \prn{  \sBdel, (\mathfrak{J}_\mu - (\sBdel,X_\mu)_\Kref)\partial_a X^\mu ) }_\Kref } \nonumber \\
&\qquad  + \frac{1}{2} \, (u.\partial)(\source{h_{\theta\thb}}+\tilde{\gref}_{\theta\thb})
 - \frac{\phizT}{T} (\Kref.\source{\mathcal{F}}) + (\Kref.\source{\mathcal{F}}) \, u^a \, e_a^\mu \prn{ \mathfrak{J}_\mu - (\sBdel,X_\mu)_\Kref } 
 \bigg\} \nonumber \\
\Dwv_\thb \SF{u}_\theta 
& =   \frac{\phizT}{2T} +\thb\theta \, \bigg\{- \Kref^a e_a^\mu \prn{\mathfrak{J}_\mu - (\sBdel,X_\mu)_\Kref} \, \tilde{T} - \frac{\phizT}{2T^2} \, \tilde{T} - \prn{\sBdel, \, \prn{ \mathfrak{J}_\mu -  (\sBdel, X_\mu)_\Kref  }  \partial_c X^\mu \; u^c }_\Kref \nonumber \\
&\qquad + \frac{1}{2} \, u^a \brk{ \prn{ \phizT- \sBdel, (\tilde{X}_\mu + (\sBdel,X_\mu)_\Kref)\partial_a X^\mu ) }_\Kref  + \prn{  \sBdel, (\mathfrak{J}_\mu - (\sBdel,X_\mu)_\Kref)\partial_a X^\mu ) }_\Kref } \nonumber \\
&\qquad   -\frac{1}{2} \, (u.\partial)(\source{h_{\theta\thb}}+\tilde{\gref}_{\theta\thb}) - \frac{\phizT}{T} (\Kref. \source{\mathcal{F}})
+ \Kref^c (\source{\mathcal{F}}_c + \partial_c \phizT) \, u^a \, e_a^\mu \prn{ \tilde{X}_\mu + (\sBdel,X_\mu)_\Kref } 
 \bigg\} \nonumber \\
\Dwv_\theta \SF{u}_\theta& = \Dwv_\thb \SF{u}_\thb = 0	 
\end{align}
\end{subequations}
}\normalsize
This implies for $\SF{\Sigma}_{IJ}$ defined as 
\begin{equation}
\SF{\Sigma}_{IJ} = \Dwv_I \SF{u}_J + (-)^{IJ} \, \Dwv_J \SF{u}_I
\end{equation}	
one finds the explicit expressions
\small{
\begin{equation}
\begin{split}
\SF{\Sigma}_{ab} 
&=2\,\nabla_{(a} u_{b)} + \thb \theta \bigg\{2\, \nabla_{(a} \tilde{u}_{b)}+2\,\gref_{(ac}\, (\sAt{b)} , u^c)_\Kref +2\, (\source{h_{(ac}}+\tilde{\gref}_{(ac}) \, \nabla_{b)} u^c + \prn{ \Kref_{a} \nabla_{[b} \sAt{c]} + \Kref_{b} \nabla_{[a} \sAt{c]} } \, u^c   \\
&\qquad + u^c \, \brk{\nabla_{c} (\source{h_{ab}} + \tilde{\gref}_{ab} ) + (\source{\mathcal{F}_{c}},\gref_{ab})_\Kref  } \bigg\}  \\
\SF{\Sigma}_{a\theta} &= 
\thb \; \bigg\{- \tilde{T}\, \Kref_a + \prn{ \mathfrak{J}_\mu - (\sBdel, X_\mu)_\Kref   } \partial_{c} X^\mu \, \nabla_a u^c   + u^c \, \nabla_{c} \brk{ \partial_{a} X^\mu \; \prn{ \mathfrak{J}_\mu - (\sBdel,X_\mu)_\Kref} } \\
&\qquad - u^c \prn{ \sAt{c} + \partial_c \phizT} \Kref_a  + \gref_{ac}\, (\phizT - \sBdel, u^c)_\Kref   \bigg\}
 \\
\SF{\Sigma}_{a\thb} &= 
\theta \; \bigg\{\tilde{T}\, \Kref_a + \prn{ \tilde{X}_\mu + (\sBdel, X_\mu)_\Kref   } \partial_{c} X^\mu \, \nabla_a u^c   + u^c \, \nabla_{c} \brk{ \partial_{a} X^\mu \; \prn{ \tilde{X}_\mu + (\sBdel,X_\mu)_\Kref} } \\
&\qquad + u^c \, \sAt{c} \, \Kref_a  + \gref_{ac}\, (\sBdel, u^c)_\Kref   \bigg\}
\\
 \SF{\Sigma}_{\theta\thb} 
& =  \thb\theta \, \bigg\{
 (u.\partial)(\source{h_{\theta\thb}}+\tilde{\gref}_{\theta\thb})
 + (\Kref.\source{\mathcal{F}}) \, u^a \, e_a^\mu \prn{ \mathfrak{J}_\mu - (\sBdel,X_\mu)_\Kref } 
 - \Kref^c (\source{\mathcal{F}}_c + \partial_c \phizT) \, u^a \, e_a^\mu \prn{ \tilde{X}_\mu + (\sBdel,X_\mu)_\Kref } 
 \\
&\qquad  + \Kref^a e_a^\mu \prn{\mathfrak{J}_\mu-\tilde{X}_\mu +2\, (\sBdel,X_\mu)_\Kref} \, \tilde{T}  \\
&\qquad 
 + \prn{\phizT-\sBdel, u^c }_\Kref \, \prn{ \tilde{X}_\mu +  (\sBdel, X_\mu)_\Kref  }  \partial_c X^\mu + \prn{\sBdel, u^c }_\Kref \, \prn{ \mathfrak{J}_\mu -  (\sBdel, X_\mu)_\Kref  }  \partial_c X^\mu  
\bigg\} \\
\end{split}
\label{eq:}
\end{equation}
}\normalsize
Note that the projected trace of $\SF{\Sigma}_{IJ}$ is the same as the usual (metric) trace in superspace: 
\begin{equation}
\label{eq:strace}
\begin{split}
(-)^{P+Q+PQ} \, \SF{P}^{PQ} \, \SF{\Sigma}_{PQ} 
  &=2 \vartheta + \thb\theta \, \Big\{  (u.\nabla) \left[ P^{ab}(\source{h_{ab}} + \tilde{\gref}_{ab})\right]   + \frac{2}{T}\,(\tilde{T} + u^c \sAt{c}) \, \vartheta  + 2 i \, (u.\partial) \source{h_{\theta\thb}} \Big\} \\
  &=  (-)^{P+Q+PQ} \, \SF{g}^{PQ} \, \SF{\Sigma}_{PQ} \\
  &  = 2\,\Dwv_I \SF{u}^I \equiv 2 \SF{\vartheta}  \,.
\end{split}
\end{equation}
Let us also write down the projector with indices up and down:
\begin{equation}
\begin{split}
\SF{P}_I{}^J &\equiv \SF{\gref}_{IK} \, \SF{P}^{KJ} \\
\SF{P}_a{}^b 
&= P_a^b + \thb \theta \, P_a^c  \,u^d (\source{h_{cd}} + \tilde{\gref}_{cd}) \, u^b\\
\SF{P}_\theta{}^a 
&= \thb \, \prn{ \mathfrak{J}^\mu - (\sBdel, X^\mu)_\Kref } e_\mu^b \, u_b \, u^a  \\
\SF{P}_\thb{}^a 
&= \theta \, \prn{ \tilde{X}^\mu + (\sBdel, X^\mu)_\Kref } e_\mu^b \, u_b \, u^a \\
\SF{P}_a{}^\theta = \SF{P}_a{}^\thb
&= 0 \\
\SF{P}_\theta{}^\theta = \SF{P}_\thb{}^\thb 
&= -1 
 \end{split}
\label{eq:projSFud}
\end{equation}
Note that the Grassmann-odd directions count in the super-trace, for example: the metric trace $\SF{\gref}_I{}^I = d-2$, whereas $\SF{P}_I{}^I = d-3$, as one can readily verify ($d$ denotes the number of spacetime dimensions of the physical fluid).
This finally let's us compute the superfield version of the graded symmetric, transverse shear tensor:
\small{
\begin{equation}
\begin{split}
\SF{\sigma}_{IJ} &\equiv \frac{1}{2}\,(-)^{M(1+J+N) + N} \, \SF{P}_I{}^M\, \SF{P}_J{}^N \prn{ \SF{\Sigma}_{MN} -  \frac{2}{d-1} \, \SF{\gref}_{MN} \, \SF{\vartheta}} \\
\SF{\sigma}_{ab} 
&= \sigma_{ab} + \thb\theta \, P_{(a}^eP_{b)}^f \Big\{\prn{ \source{h_{c(e}} + \tilde{\gref}_{c(e}} \nabla_{f)} u^c +\frac{1}{2} \, u^m u^n (\source{h_{mn}}+\tilde{\gref}_{mn}) \nabla_{e}u_f \\
&\qquad\qquad\qquad\qquad + \frac{1}{2} \, (u.\nabla) (\source{h_{ef}} + \tilde{\gref}_{ef})  
 + u^g (\source{h_{fg}}+\tilde{\gref}_{fg}) u^{d} \, \nabla_d u_e+ \gref_{(ec} \, (\source{\mathcal{F}_{f)}},u^c)_\Kref\\
&\qquad\qquad\qquad\qquad  + \frac{1}{2} \, u^c \, (\sAt{c}, \gref_{ef})_\Kref 
 - \frac{\vartheta}{d-1} \, (\source{h_{ef}}+\tilde{\gref}_{ef}) - \frac{1}{d-1} \, P_{ef} \, \tilde{\vartheta} \Big\}
\\
\SF{\sigma}_{a\theta} &= 
  \frac{1}{2} \,\thb \,P_a^b\, \Big\{   (u.\nabla) \brk{ \prn{\mathfrak{J}_\mu - (\sBdel,X_\mu)_\Kref} \partial_b X^\mu  } +  \prn{\mathfrak{J}_\mu - (\sBdel,X_\mu)_\Kref} \partial_c X^\mu \, \prn{ \nabla_b u^c   + u^c \, \acc_b } \\
&\qquad\qquad    - \frac{2\,\vartheta}{d-1} \, \partial_b X^\mu \prn{ \mathfrak{J}_\mu - (\sBdel,X_\mu)_\Kref } \Big\} \\
\SF{\sigma}_{a\thb} 
	&= 
   \frac{1}{2} \,\theta \,P_a^b\, \Big\{   (u.\nabla) \brk{ \prn{\tilde{X}_\mu + (\sBdel,X_\mu)_\Kref} \partial_b X^\mu  } +  \prn{\tilde{X}_\mu + (\sBdel,X_\mu)_\Kref} \partial_c X^\mu \, \prn{ \nabla_b u^c   + u^c \, \acc_b } \\
&\qquad\qquad    - \frac{2\,\vartheta}{d-1} \, \partial_b X^\mu \prn{ \tilde{X}_\mu + (\sBdel,X_\mu)_\Kref } \Big\} \\
 \SF{\sigma}_{\theta\thb} &= - \SF{\sigma}_{\thb\theta} 
	= -\frac{i}{d-1} \, \vartheta  + \thb \theta\,\frac{1}{2} \,  \bigg\{ (u.\nabla) \brk{ \source{h_{\theta\thb}} + \tilde{\gref}_{\theta\thb}+(u.\tilde{X})(u.\mathfrak{J}) } \\
	&\qquad\qquad\qquad\qquad\qquad\qquad - \frac{1}{d-1} \brk{ 2\vartheta \prn{\source{h_{\theta\thb}} + \tilde{\gref}_{\theta\thb} +(u.\tilde{X})(u.\mathfrak{J})} +2 i \, \tilde{\vartheta}} \bigg\}
\end{split}
\label{eq:supersigma}
\end{equation}
}\normalsize
where $2\tilde{\vartheta}$ can be read off as the top component of \eqref{eq:strace}. 
Note that this definition of shear tensor is not super-traceless. However, its bottom component $\SF{\sigma}_{ab}|$ reduces to the usual shear tensor, which is traceless in ordinary spacetime.


\providecommand{\href}[2]{#2}\begingroup\raggedright\endgroup

\end{document}